\documentclass[11pt]{amsart}

\usepackage{bbold}
\usepackage{amscd} 
\usepackage{amsmath,amssymb}
\usepackage{mathtools}
\usepackage{mathrsfs}
\textwidth=165mm \textheight=220mm \hoffset=-20mm \voffset=-5mm

\newtheorem{theorem}{Theorem}[section]
\newtheorem{lemma}[theorem]{Lemma}
\newtheorem{corollary}[theorem]{Corollary} 
\theoremstyle{definition}
\newtheorem{definition}[theorem]{Definition}

\newtheorem{remark}[theorem]{Remark} 
\numberwithin{equation}{section}

\def\be{\begin{equation}}
\def\ee{\end{equation}}
\def\RE{\mathbb{R}}
\def\CO{\mathbb{C}}
\def\v{\mathsf v}

\def\H{\mathsf H}
\def\Sc{\mathsf S}
\def\L{\mathcal L}

\def\inn{\rm in}
\def\-{\rm in}
\def\+{\rm ex}
\def\ex{\rm ex}
\def\B{\mathscr B}
\def\Bi{\mathsf B}
\def\SL{S\!L}
\def\DL{D\!L}
\def\ran{\text{\rm ran}}
\def\dom{\text{\rm dom}}
\def\ker{\text{\rm ker}}
\def\h{\mathfrak h}
\def\b{\mathfrak b}
\def\X{X}
\def\Y{Y}
\begin{document}

\title{Scattering theory with both regular and singular perturbations}
\author{Andrea Mantile}
\author{Andrea Posilicano}
\address{Laboratoire de Math\'{e}matiques de Reims, UMR9008 CNRS, Universit\'{e} de Reims Champagne-Ardenne, Moulin de la Housse BP 1039, 51687 Reims, France}
\address{DiSAT, Sezione di Matematica, Universit\`a dell'Insubria, via Valleggio 11, I-22100
Como, Italy}
\email{andrea.mantile@univ-reims.fr}
\email{posilicano@uninsubria.it}
\begin{abstract} 
We provide an asymptotic completeness criterion and a representation formula for the scattering matrix of the scattering couple $(A_{\Bi},A)$, where both $A$ and $A_{\Bi}$ are self-adjoint operator and $A_{\Bi}$ formally corresponds to adding to $A$ two terms, one regular and the other singular. In particular, our abstract results apply to the couple $(\Delta_{\Bi},\Delta)$, where $\Delta$ is the free self-adjoint Laplacian in $L^{2}(\mathbb{R}^{3})$ and $\Delta_{\Bi}$ is a self-adjoint operator in a class of Laplacians with both a regular perturbation, given by a short-range potential, and a singular one describing boundary conditions (like Dirichlet, Neumann and semi-transparent $\delta$ and $\delta'$ ones) at the boundary of a open, bounded Lipschitz domain. The results hinge upon a limiting absorption principle for $A_{\Bi}$ and a Kre\u\i n-like formula for the resolvent difference $(-A_{\Bi}+z)^{-1}-(-A+z)^{-1}$ which puts on an equal footing the regular (here, in the case of the Laplacian, a Kato-Rellich potential suffices) and the singular perturbations. 
\end{abstract}
\maketitle

\section{Introduction}
The mathematical scattering theory for short-range potential  is a well developed subject; the existence and completeness of the wave operators can be obtained by two essentially different approaches: the trace-class method and the smooth method (see, e.g., \cite{Y-LNM}). An important object defined in terms of the wave operators is the scattering operator and, even more important from the point of view of its physical applications, the scattering matrix, which is its reduction to a multiplication operator in the spectral representation of the self-adjoint free Laplacian. \par
The scattering problem for singular perturbations of self-adjoint operators, which is outside the original scope of these methods, 
is connected with scattering from obstacles with impenetrable or semi-transparent boundary conditions (see, e.g., \cite{BLL}, \cite{BMN}, \cite{JMPA}-\cite{JDE18}). On this side,  a general   scheme  has been developed in \cite{JMPA} by combining the  construction in \cite{JFA} with an abstract version of the Limiting Absorption Principle (simply LAP in the following) due to W. Renger (see \cite{Reng}) and a variant of the smooth method due to M. Schechter (see \cite{Sch}). In particular, the results in \cite{JMPA} apply to obstacle scattering with a large class of interface conditions on Lipschitz hypersurfaces in any dimension. Let us recall that in \cite{BMN} boundary triple theory and properties of the associated operator-valued Weyl  functions  were used to obtain a similar representation of the scattering matrix for singularly coupled self-adjoint extensions. It is worth to remark that, while the approach in \cite{JMPA} avoids any trace-class condition, these are needed in \cite{BMN} and so  the applications there are limited to the case of smooth obstacles in two dimensions.
\par
The target of the present paper is to provide a general framework for the scattering with both potential type and singular perturbations. Since our concern is the scattering theory with respect to the free Laplacian, we regard the regular and the singular parts of the perturbation as a single object; this constitutes the main novelty of our approach. In particular, we give an abstract resolvent formula, generalizing the one in \cite{JFA}, which puts on an equal footing the two components of the perturbation. Such a representation is a key ingredient in the derivation of LAP which leads then to the main results of the first part: the asymptotic completeness and an explicit formula for the scattering matrix. These results rely on a certain number of assumptions whose validity is carefully analyzed in the second part where we consider the specific case of a short range potential plus a distributional term, supported on a closed surface and describing self-adjoint interface conditions. In this way, we obtain new representation formulae for the scattering matrix which are expected to be relevant in different physical applications involving wave propagation in inhomogeneous media with impenetrable or semi-transparent obstacles. 
\par
Here, in more details, the contents of the paper. \par In Section 2, following the scheme proposed in \cite{JFA}, we provide an abstract resolvent formula for a perturbations $A_{\Bi}$ of the self-adjoint $A$ by a linear combination of the adjoint of two bounded trace-like maps $\tau_{1}:\dom(A)\to\h_{1}$ and $\tau_{2}:\dom(A)\to\h_{2}$; while the kernel of $\tau_{2}$ is required to be dense, so $\tau_{2}^{*}$ plays the role of a singular perturbations,  no further hypothesis is required for $\tau_{1}$ and in applications that allows $\tau_{1}^{*}$ to represent  a regular perturbations by a short-range potential.  In Subsection 2.3, by block operator matrices and the Schur complement, we re-write the obtained resolvent formula in terms of the resolvent of the operator corresponding to the non singular part of the perturbations; that plays an important  role in the subsequent part regarding LAP and the scattering theory.
\par In Section 3,  following the scheme proposed in \cite{JST} and further generalized in \cite{JMPA}, at first we provide, under suitable hypothesis, a Limiting Absorption Principle for $A_{\Bi}$ (see Theorem \ref{LAP}) and then an aymptotic completeness criterion for the scattering couple $(A_{\Bi},A)$ (see Theorem \ref{AC}). Then, by a combination of  LAP with stationary scattering theory in the Birman-Yafaev scheme and the invariance principle, we obtain a representation formula for the scattering matrix of the couple $(A_{\Bi},A)$ (see Theorem \ref{S-matrix}). Whenever $A$ is the free Laplacian in $L^{2}(\RE^{3})$, such a formula contains, as subcases, both the usual formula for the perturbation given by a short-range potential as given, e.g., in \cite{Y-LNM} and the formula for the case of a singular perturbation describing self-adjoint boundary conditions on a hypersurface as given in \cite{JMPA}.
\par In Section 4,
in order to apply our abstract results to the case in which $A$ is the free $3D$ Laplacian and the regular part represents a perturbation by a potential, we give various regularity results for the boundary layer operators associated to $\Delta+\v$, where $\v$ is a potential of Kato-Rellich type. 
\par 
In Sections 5 and 6 we present various applications, where the free Laplacian is perturbed both by a regular term, given by a short range potential $\v$ decaying as $|x|^{-\kappa(1+\epsilon)}$,  and by a singular one describing either separating boundary conditions (as Dirichlet and Neumann ones) or semi-transparent (as $\delta$ and $\delta'$ type ones). In order to satisfy all our hypotheses, we need $\kappa=2$. However, all our hypotheses but a single one (see Lemma \ref{5.4}) hold with $\kappa=1$; we conjecture that the requirement $\kappa=2$ is merely of technical nature and that our results are true for a short range potential decaying as $|x|^{-(1+\epsilon)}$. Finally, let us remark that whenever one is only interested in the construction of the operators and not in the scattering theory, then it is sufficient to assume that $\v$ is a Kato-Rellich potential (see Subsection \ref{suff}). 
\par
 Schr\"odinger operators with a Kato-Rellich potential plus a $\delta$-like perturbation with a $p$-summable strength ($p>2$)  have been already considered in \cite{JDE18}, while for a different construction with a bounded potential  and a $\delta$- or a $\delta'$-like perturbation with bounded strength we refer to \cite{BLL}. None of such references considered the scattering matrix (however, \cite{JDE18} provided a limiting absorption principle). Whenever the singular part of the perturbations is absent, our framework extends from compactly supported potentials in one dimension to short range potentials in three dimensions the kind of results provided in \cite[Section 5]{BN}.
 \par
 Let us notice that, building on the results in \cite{Agmon} and \cite{JMPA}, the abstract models introduced in Section 2 and the related scattering theory presented in Section 3 apply to perturbations of the Laplacian in $\RE^{n}$, $n\ge2$, with a suitable short-range potential plus a singular term supported on a bounded hypersurface of co-dimension one.

\subsection{Some notation and definition.}
{\ }\par
\vskip5pt \noindent $\bullet$  $\|\cdot\|_{\X}$ denotes the norm on the complex Banach space $\X$; in case $\X$ is a Hilbert   space, $\langle\cdot,\cdot\rangle_{\X}$ denotes the (conjugate-linear w.r.t. the first argument) scalar product.
\vskip5pt\noindent $\bullet$ $\langle\cdot,\cdot\rangle_{\X^{*},\X}$ denotes the duality (assumed to be conjugate-linear w.r.t. the first argument) between the dual couple $(\X^{*},\X)$.
\vskip5pt\noindent $\bullet$ $L^{*}:\dom(L^{*})\subseteq \Y^{*}\to \X^{*}$ denotes the dual of the densely defined linear operator $L:\dom(L)\subseteq \X\to \Y$; in a Hilbert   spaces setting $L^{*}$ denotes the adjoint operator.
\vskip5pt\noindent $\bullet$ $\varrho(A)$ and $\sigma(A)$ denote the resolvent set and the spectrum of the self-adjoint operator $A$; $\sigma_{p}(A)$, $\sigma_{disc}(A)$, $\sigma_{ess}(A)$, 
$\sigma_{pp}(A)$, $\sigma_{cont}(A)$, $\sigma_{ac}(A)$,  $\sigma_{sc}(A)$,  denote the point, discrete, essential, pure point, continuous, absolutely continuous and singular continuous spectra.
\vskip5pt\noindent $\bullet$ $\B(\X,\Y)$, $\B(\X)\equiv \B(\X,\X)$, denote the Banach space of bounded linear operator on the Banach space $\X$ to the Banach space $\Y$; ${\|}\cdot {\|}_{\X,\Y}$ denotes the corresponding norm.
\vskip5pt\noindent $\bullet$ ${\mathfrak S}_{\infty}(\X,\Y)$ denotes the space of compact operators on  $\X$ to  $\Y$.
\vskip5pt\noindent $\bullet$ $\X\hookrightarrow \Y$ means that $\X$ is continuously embedded into $\Y$. 
\vskip5pt\noindent $\bullet$ $\Omega\equiv\Omega_{\inn}\subset\RE^{3}$ denotes an open and bounded subset with a Lipschitz boundary $\Gamma$; $\Omega_{\ex}:=\RE^{3}\backslash\overline\Omega$.
\vskip5pt\noindent $\bullet$ $H^{s}(\Omega)$ and $H^{s}(\Omega_{\ex})$ denote the scales of Sobolev spaces. 
\vskip5pt\noindent  $\bullet$ $H^{s}(\RE^{3}\backslash\Gamma):=H^{s}(\Omega)\oplus H^{s}(\Omega_{\ex})$.
\vskip5pt\noindent $\bullet$ $|x|$ denotes the norm of $x\in\RE^{n}$.
 $\langle x\rangle$ denotes the function $x\mapsto (1+|x|^{2})^{1/2}$.
\vskip5pt\noindent $\bullet$ $L_{w}^{2}(\RE^{3})$, $w\in\RE$, denotes the set of complex-valued functions $f$ such that $\langle x\rangle^{w}f\in L^{2}(\RE^{3})$. 
\vskip5pt\noindent  $\bullet$  $H_{w}^{s}(\RE^{3}\backslash\Gamma):=H^{s}(\Omega)\oplus H_{w}^{s}(\Omega_{\ex})$, where $H_{w}^{s}(\Omega_{\ex})$ denotes the weighted Sobolev space relative to the weight $\langle x\rangle^{w}$.
\vskip5pt\noindent $\bullet$ $\gamma_{0}^{\inn/\ex}$ and $\gamma_{1}^{\inn/\ex}$ denote the interior/exterior Dirichlet and Neumann traces on the boundary $\Gamma$. 
\vskip5pt\noindent $\bullet$  $\gamma_{0}:=\frac12(\gamma_{0}^{\inn}+\gamma_{0}^{\ex})$,   
 $\gamma_{1}:=\frac12(\gamma_{1}^{\inn}+\gamma_{1}^{\ex})$.
\vskip5pt\noindent $\bullet$ $[\gamma_{0}]:=\gamma_{0}^{\inn}-\gamma_{0}^{\ex}$,   
 $[\gamma_{1}]:=\gamma_{1}^{\inn}-\gamma_{1}^{\ex}$.  
\vskip5pt\noindent $\bullet$ $\SL_{z}$ and $\DL_{z}$ denote the single- and double-layer operators. 
\vskip5pt\noindent $\bullet$ $S_{z}:=\gamma_{0}\SL_{z}$, $D_{z}:=\gamma_{1}\DL_{z}$.
\vskip5pt\noindent $\bullet$ $D\subset\RE$ is said to be discrete in the open set $E\supset D $ whenever the (possibly empty) set of its accumulations point is contained in $\RE\backslash E$; $D$ is said to be discrete whenever $E=\RE$.
\vskip5pt\noindent $\bullet$ $\mathring D$ denotes the open part of the set $D\subseteq\RE$; $\partial D$ denotes its boundary; $D^{-}:=D\cap(-\infty,0]$.
\vskip5pt\noindent $\bullet$ Given $x\ge 0$ and $y\ge 0$, $x\lesssim y$ means that there exists $c\ge 0$ such that $x\le c\,y$.

\section{An abstract  Kre\u\i n-type resolvent formula}\label{Sec_Krein} 
\subsection{The resolvent formula} Let $A:\dom(A)\subseteq \H\to \H$ be a self-adjoint operator in the Hilbert space $\H$.  We denote by $R_{z}:=(-A+z)^{-1}$, $z\in \varrho(A)$, its resolvent; one has $R_{z}\in\B(\H,\H_{A})$, where $\H_{A}$ is the Hilbert space given by $\dom (A)$ equipped with the scalar product $$\langle u,u\rangle_{\H_{A}}:=\langle (A^{2}+1)^{1/2} u,(A^{2}+1)^{1/2}v\rangle_{\H}\,.
$$  
Let
$$
\h_{k}\hookrightarrow\h_{k}^{\circ}\hookrightarrow\h_{k}^{\ast}\,,\qquad k=1,2\,,
$$
be auxiliary Hilbert spaces with dense continuous embedding; we do not identify $\h_{k}$ with its dual $\h^{*}_{k}$ (however, we use $\h_{k}\equiv\h_{k}^{**}$) and we work with the $\h_{k}^{\ast}$-$\h_{k}$ duality 
$\langle\cdot,\cdot\rangle_{\h_{k}^{\ast},\h_{k}}$ defined in terms of the scalar
product of the intermediate Hilbert space $\h_{k}^{\circ}$. The scalar product and hence the duality are supposed to be conjugate linear with respect to the first variable; notice that $\langle\varphi,\phi\rangle_{\h_{k},\h^{\ast}_{k}}=\langle\phi,\varphi\rangle^{*}_{\h_{k}^{\ast},\h_{k}}$.\par
Given the bounded linear maps
$$\tau_{k}:\H_{A}\rightarrow\h_{k}\,,\qquad k=1,2\,,
$$
such that 
\be\label{tau2}
\text{$\ker(\tau_{2})$ is dense in $\H$ and $\ran(\tau_{2})$ is dense in $\h_{2}$
,} 
\ee
we introduce the bounded operators 
$$
\tau:\H_{A}\to \h_{1}\oplus\h_{2}\,,\qquad \tau u:=\tau_{1}u\oplus\tau_{2} u\,,
$$
and
$$
G_{z}:\h_{1}^{\ast}\oplus \h_{2}^{*}\to\H\,,\qquad  
G_{z}:=(\tau R_{\bar{z}})^{\ast}\,,\qquad z\in\varrho(A)\,.
$$
We further suppose that there exist reflexive Banach spaces $\b_{k}$, $k=1,2$, with dense continuous embeddings $\h_{k}\hookrightarrow\b_{k}$ (hence $\b_{k}^{*}\hookrightarrow\h_{k}^{*}$),
such that
$\ran(G_{z}|\b_{1}^{*}\oplus\b_{2}^{*})$ is contained in the domain of definition of some (supposed to exist) $(\b_{1}\oplus\b_{2})$-valued extension of $\tau$ (which we denote by the same symbol) in such a way that 
\begin{equation}
\tau G_{z}|\b_{1}^{*}\oplus\b_{2}^{*}\in{\B}(\b_{1}^{*}\oplus\b_{2}^{*},\b_{1}\oplus\b_{2})\,. \label{tauG}%
\end{equation}
Given these hypotheses, 
we set $\Bi=(B_{0},B_{1},B_{2})$, with 
\be\label{B012}
B_{0}\in{\B}(\b_{2}^{*},\b_{2,2}^{*})\,,\quad B_{1}\in\B(\b_{1},\b_{1}^{*})\,,\quad
B_{2}\in\B(\b_{2},\b^{*}_{2,2})\,, \quad\text{$\b_{2,2}$ a reflexive Banach space,}
\ee
\be\label{Bse}
\quad B_{1}=B_{1}^{*}\,,\qquad B_{0}B^{*}_{2}=B_{2}B^{*}_{0}\,,
\ee
and introduce the map
\be\label{Lambda}
Z_{\Bi}\ni z\mapsto\Lambda_{z}^{\!\Bi}\in{\B}(\b_{1}\oplus\b_{2},\b_{1}^{*}\oplus\b_{2}^{*})
\,,\qquad 
\Lambda_{z}^{\!\Bi}:=(M_{z}^{\Bi})^{-1}(  B_{1}\oplus B_{2})  \,,
\ee
where
\be\label{ZB}
Z_{\Bi}:=\big\{z\in\varrho(A): (M_{w}^{\Bi})^{-1}
\in{\B}(\b^{*}_{1}\oplus\b^{*}_{2,2},\b^{*}_{1}\oplus\b^{*}_{2})\,,\ w=z,\bar{z}\big\}
\ee
$$
M_{z}^{\Bi}:=(  1\oplus B_{0})  -(  B_{1}\oplus B_{2}) \tau G_{z} \in {\B}(\b^{*}_{1}\oplus\b^{*}_{2},\b^{*}_{1}\oplus\b^{*}_{2,2})\,.
$$
\begin{theorem}\label{Th_Krein} Suppose hypotheses \eqref{tau2}, \eqref{tauG}, \eqref{B012} and \eqref{Bse} hold and that $Z_{\Bi}$ defined in \eqref{ZB} is not empty. Then, defined $\Lambda_{z}^{\!\Bi}$ as in \eqref{Lambda},
\begin{equation}
R_{z}^{\Bi}:=R_{z}+G_{z}\Lambda_{z}^{\!\Bi}G_{\bar{z}}^{\ast}\,,\quad z\in
Z_{\Bi}\,, \label{resolvent}%
\end{equation}
is the resolvent of a self-adjoint operator $A_{\Bi}$ and $Z_{\Bi}%
=\varrho(A_{\Bi})\cap\varrho(A)$.
\end{theorem}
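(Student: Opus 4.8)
The plan is to verify the three defining properties of a resolvent family --- that each $R_z^{\Bi}$ is a bounded operator with dense range and trivial kernel, that the first resolvent (Hilbert) identity $R_z^{\Bi}-R_w^{\Bi}=(w-z)R_z^{\Bi}R_w^{\Bi}$ holds on $Z_{\Bi}$, and that $(R_z^{\Bi})^*=R_{\bar z}^{\Bi}$ --- and then invoke the standard fact that a family satisfying these is the resolvent of a (necessarily unique) self-adjoint operator $A_{\Bi}$. The boundedness of $R_z^{\Bi}$ on $\H$ is immediate from \eqref{resolvent}: $G_{\bar z}^* \in \B(\H,\h_1^*\oplus\h_2^*)$ lands in $\b_1^*\oplus\b_2^*$ in fact by \eqref{tauG}-type mapping properties (this is where one must be slightly careful, using that $G_{\bar z}^*$ restricted appropriately maps into $\b_1^*\oplus\b_2^*$, or more precisely rereading \eqref{resolvent} via the factorization through $\b$-spaces that the hypotheses on $Z_{\Bi}$ are designed to supply), $\Lambda_z^{\Bi}\in\B(\b_1\oplus\b_2,\b_1^*\oplus\b_2^*)$ by \eqref{Lambda} and the definition of $Z_{\Bi}$, and $G_z$ maps $\b_1^*\oplus\b_2^*$ back into $\H$. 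The self-adjointness relation $(R_z^{\Bi})^*=R_{\bar z}^{\Bi}$ will follow from $R_z^*=R_{\bar z}$, $(G_z)^*=\tau R_{\bar z}$ by definition, and the identity $(\Lambda_z^{\Bi})^*=\Lambda_{\bar z}^{\Bi}$, which in turn reduces to the symmetry condition \eqref{Bse}: one checks that $(M_z^{\Bi})^*$ and $M_{\bar z}^{\Bi}$ are intertwined by $B_1\oplus B_2$ on one side and $1\oplus B_0$ on the other, precisely because $B_1=B_1^*$ and $B_0 B_2^*=B_2 B_0^*$, so that $(M_z^{\Bi})^{-*}(B_1\oplus B_2)=(B_1\oplus B_2)(M_{\bar z}^{\Bi})^{-1}$ after transposing.

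The core of the argument is the resolvent identity, and this is where I expect the main computational obstacle to lie. The strategy is the one already used in \cite{JFA} and announced in the text: insert \eqref{resolvent} into both sides of $R_z^{\Bi}-R_w^{\Bi}-(w-z)R_z^{\Bi}R_w^{\Bi}=0$, use the first resolvent identity for $R_z$, the definitions $G_z=(\tau R_{\bar z})^*$ and $G_{\bar z}^*=\tau R_z$ (so that $G_{\bar z}^* G_w = \tau R_z R_{w}$-type products collapse), and the algebraic relation between $M_z^{\Bi}$ at different points. The key resolvent-type identity one needs is for the family $\Lambda_z^{\Bi}$, namely something of the form
\begin{equation}
\Lambda_z^{\Bi}-\Lambda_w^{\Bi}=(w-z)\,\Lambda_z^{\Bi}\,(G_{\bar z}^*G_w)\,\Lambda_w^{\Bi}\,,
\end{equation}
which is derived from the corresponding second-resolvent identity for $(M_z^{\Bi})^{-1}$, itself a consequence of $M_z^{\Bi}-M_w^{\Bi}=-(B_1\oplus B_2)\tau(G_z-G_w)=(w-z)(B_1\oplus B_2)\tau R_{?}R_{?}$ rewritten through the trace maps. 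One then substitutes and checks that everything cancels; the bookkeeping with the three $\b$-scales and the off-diagonal block $B_0$ is the tedious part, but it is purely formal once the mapping properties \eqref{B012}, \eqref{tauG} and the nonemptiness of $Z_{\Bi}$ guarantee every composition makes sense.

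Having established that $R_z^{\Bi}$ is a pseudo-resolvent with $(R_z^{\Bi})^*=R_{\bar z}^{\Bi}$, density of range (equivalently triviality of kernel, by the adjoint relation) follows because $\ker R_z^{\Bi}$ is independent of $z$ on a connected component of $Z_{\Bi}$ and, taking $z$ real and large (if such points are available) or arguing via the resolvent identity, any $u$ in the common kernel would force $u$ into the kernel of all $R_z^{\Bi}$, hence $u=0$; alternatively one shows directly that $\ran(R_z^{\Bi})$ is dense because it contains $\ran(R_z)=\dom(A)$ up to a correction in $\ran(G_z)$ and these together are dense. Once $A_{\Bi}$ exists as the self-adjoint operator with $(-A_{\Bi}+z)^{-1}=R_z^{\Bi}$ on $Z_{\Bi}$, the final claim $Z_{\Bi}=\varrho(A_{\Bi})\cap\varrho(A)$ splits into two inclusions: ``$\subseteq$'' is immediate since $Z_{\Bi}\subseteq\varrho(A)$ by definition and $Z_{\Bi}\subseteq\varrho(A_{\Bi})$ because $R_z^{\Bi}$ is the resolvent there; for ``$\supseteq$'' one takes $z\in\varrho(A_{\Bi})\cap\varrho(A)$ and must show the inverses $(M_w^{\Bi})^{-1}$ exist in the required spaces for $w=z,\bar z$ --- this is extracted by solving the resolvent equation $(-A_{\Bi}+z)u=f$ explicitly, which forces an equation of the form $M_z^{\Bi}(\text{boundary data})=(\text{given data})$ to be uniquely solvable, yielding invertibility of $M_z^{\Bi}$ between the relevant $\b^*$-scales, i.e. $z\in Z_{\Bi}$. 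I expect this last ``$\supseteq$'' inclusion, together with the resolvent identity, to be the two places demanding genuine care; the rest is routine.
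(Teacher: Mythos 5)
Your treatment of the pseudo-resolvent structure matches the paper's: you correctly identify that $(\Lambda_z^{\!\Bi})^*=\Lambda_{\bar z}^{\!\Bi}$ follows from \eqref{Bse} by intertwining $M_z^{\Bi}$ with $B_1\oplus B_2$ and $1\oplus B_0$, and that the identity $\Lambda_w^{\!\Bi}-\Lambda_z^{\!\Bi}=(z-w)\Lambda_w^{\!\Bi}G_{\bar w}^*G_z\Lambda_z^{\!\Bi}$ comes from the resolvent identity for $M_z^{\Bi}$; these two facts give $(R_z^{\Bi})^*=R_{\bar z}^{\Bi}$ and the first resolvent identity exactly as in the paper. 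The closing claim $Z_{\Bi}=\varrho(A_{\Bi})\cap\varrho(A)$ is simply cited from \cite{CFP} in the paper, so your sketch of the ``$\supseteq$'' direction is extra work rather than a deviation.

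The genuine gap is in your injectivity argument. A symmetric pseudo-resolvent family can have a nontrivial common kernel (take $R_z=(-A+z)^{-1}P$ for an orthogonal projection $P$ commuting with $A$), so the observation that $\ker(R_z^{\Bi})$ is $z$-independent, or that one may ``take $z$ real and large,'' proves nothing; likewise $\ran(R_z^{\Bi})=\{R_zu+G_z\Lambda_z^{\!\Bi}G_{\bar z}^*u\}$ does not visibly contain $\dom(A)$, and ``dense up to a correction'' is not a density argument. Crucially, your proposal never invokes hypothesis \eqref{tau2}, the denseness of $\ker(\tau_2)$, which is precisely the ingredient that makes injectivity work. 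The paper's argument is: compute $(B_1\oplus B_2)\tau R_z^{\Bi}=(1\oplus B_0)\Lambda_z^{\!\Bi}G_{\bar z}^*$, so $R_z^{\Bi}u=0$ forces $\big(\Lambda_z^{\!\Bi}G_{\bar z}^*u\big)_1=0$ and hence $0=R_zu+G_z^2\big(\Lambda_z^{\!\Bi}G_{\bar z}^*u\big)_2$; since the denseness of $\ker(\tau_2)$ gives $\ran(G_z^2)\cap\dom(A)=\{0\}$, both summands vanish separately, whence $R_zu=0$ and $u=0$. Without this step (or an equivalent use of \eqref{tau2}) the proof does not close; note that the singular component $\tau_2$ is essential here, since for $\tau_1$ no such transversality of $\ran(G_z^1)$ with $\dom(A)$ is assumed.
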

\begin{proof} By \eqref{Bse}, one gets
\begin{align*}
\big(  (  1\oplus B_{0})  -(  B_{1}\oplus B_{2})\tau
G_{\bar{z}}\big)(  B_{1}\oplus B^{*}_{2})  =&(  B_{1}\oplus B_{2})
\big(  (  1\oplus B_{0}^{*})  -\tau G_{\bar{z}}(  B_{1}\oplus B_{2}^{*})\big)\\
=&(  B_{1}\oplus B_{2})
\big(  (  1\oplus B_{0})  -(  B_{1}\oplus B_{2})\tau G_{{z}}\big)^{*}  \,.
\end{align*}
This entails, by the definitions \eqref{Lambda} and \eqref{ZB},
\be\label{PS1}
(\Lambda^{\!\Bi}_{z})^{*}=\Lambda^{\!\Bi}_{\bar z}\,.
\ee
By the resolvent identity, there follows
\begin{align*}
& ((  1\oplus B_{0})  -(  B_{1}\oplus B_{2})
\tau G_{z})  -((  1\oplus B_{0})  -(  B_{1}\oplus B_{2})  \tau G_{w}) \\ 
&=(  B_{1}\oplus B_{2})  \tau(G_{w}-G_{z})
=(z-w)(  B_{1}\oplus B_{2})  \tau R_{w}G_{z}\\
&=(z-w)(B_{1}\oplus B_{2})  G_{\bar{w}}^{\ast}G_{z}\,,
\end{align*}
which entails 
\begin{align*}
& ( (  1\oplus B_{0})  -(  B_{1}\oplus B_{2})
\tau G_{w})  ^{-1}-((  1\oplus B_{0})  -
(B_{1}\oplus B_{2})  \tau G_{z})  ^{-1}  \\
=&(z-w)((  1\oplus B_{0})  -(  B_{1}\oplus B_{2})  \tau G_{w})  ^{-1}
(  B_{1}\oplus B_{2})G_{\bar{w}}^{\ast}G_{z}((  1\oplus B_{0})
 -(B_{1}\oplus B_{2})  \tau G_{z})  ^{-1}\,,
\end{align*}
and hence
\be\label{PS2}
\Lambda_{w}^{\!\Bi}-\Lambda_{z}^{\!\Bi}=(z-w)\Lambda_{w}^{\!\Bi}G_{\bar{w}}^{\ast
}G_{z}\Lambda_{z}^{\!\Bi}\,.
\ee
By \eqref{PS1} and $\eqref{PS2}$, 
$$(R_{z}^{\Bi})^{*}=R_{\bar z}^{\Bi}\,,\qquad R_{z}^{\Bi}=R_{w}^{\Bi}+(w-z)R_{z}^{\Bi}R_{w}^{\Bi}\,.
$$
(see \cite[page 113]{JFA}). Hence, $R_{z}^{\Bi}$ is the resolvent of a self-adjoint operator whenever it is injective (see, e.g., \cite[Theorems 4.10 and 4.19]{Stone}). By \eqref{resolvent},
\begin{align*}
&(B_{1}\oplus B_{2})\tau R_{z}^{\Bi}=(B_{1}\oplus B_{2})\big(1+\tau G_{z}\Lambda_{z}^{\!\Bi}\big)G_{\bar z}^{*}=\big((B_{1}\oplus B_{2})+(B_{1}\oplus B_{2})\tau G_{z}\Lambda_{z}^{\!\Bi}\big)G_{\bar z}^{*}\\
=&\big((B_{1}\oplus B_{2})+\big((1\oplus B_{0})-\big((1\oplus B_{0})-(B_{1}\oplus B_{2})\tau G_{z}\big)\big)\Lambda_{z}^{\!\Bi}\big)G_{\bar z}^{*}=(1\oplus B_{0})\Lambda_{z}^{\!\Bi}G_{\bar z}^{*}\,.
\end{align*}
Thus, if $R_{z}^{\Bi}u=0$ then 
$$
0\oplus 0=(1\oplus B_{0})\Lambda_{z}^{\!\Bi}G_{\bar z}^{*}u=\big(\Lambda_{z}^{\!\Bi}G_{\bar z}^{*}u\big)_{1}\oplus B_{0}\big(\Lambda_{z}^{\!\Bi}G_{\bar z}^{*}u\big)_{2}
$$
By
$$
G_{z}(\phi_{1}\oplus\phi_{2})=G^{1}_{z}\phi_{1}+G^{2}_{z}\phi_{2}\,, \qquad G^{k}_{z}:=(\tau_{k}R_{\bar{z}})^{\ast}\,,
$$
there follows 
\begin{equation}
0=R_{z}^{\Bi}u=R_{z}u+G^{1}_{z}\big(\Lambda_{z}^{\!\Bi}G_{\bar{z}}^{\ast}
u\big)_{1}+G^{2}_{z}\big(\Lambda_{z}^{\!\Bi}G_{\bar{z}}^{\ast}u\big)_{2}=R_{z}u+G^{2}_{z}
\big(\Lambda_{z}^{\!\Bi}G_{\bar{z}}^{\ast}u\big)_{2}\,.\label{z2}%
\end{equation}
Since 
the denseness of $\ker(\tau_{2})$ implies
$\text{\textrm{ran}}(G^{2}_{z})\cap \dom(A)=\{0\}$ 
(see \cite[Remark 2.9]{JFA}), the relation \eqref{z2} gives $G^{2}_{z}(\Lambda_{z}^{\!\Bi}G_{\bar{z}}^{\ast}u)_{2}=0$. Thus  $R_{z}^{\Bi}u=0$ compels $R_{z}u=0$ and hence $u=0$.\par 
Finally, the equality $Z_{\Bi}=\varrho(A_{\Bi})\cap\varrho(A)$ is consequence of \cite[Theorem 2.19 and
Remark 2.20]{CFP}.
\end{proof}
\begin{remark} Looking at the previous proof, one notices that Theorem \ref{Th_Krein} holds without requiring the denseness of $\ran(\tau_{2})$; that hypothesis comes into play in later results.
\end{remark}
\begin{remark} By \eqref{resolvent}, if $u\in\dom(A_{\Bi})$, then $u=u_{0}+G_{z}(\phi_{1}\oplus\phi_{2})$ for some $u_{0}\in\H_{A}$ and $\phi_{1}\oplus\phi_{2}\in \b_{1}^{*}\oplus\b_{2}^{*}$; hence, by \eqref{tauG}, 
$$
\tau:\dom(A_{\Bi})\to \b_{1}\oplus\b_{2}\,.
$$  
\end{remark}
\subsection{An additive representation}
At first, let us introduce the Hilbert space $\H_{A}^{*}$ defined as the completion of $\H$ endowed with the scalar product $$\langle u,v\rangle_{\H_{A}^{*}}:=\langle(A^{2}+1)^{-1/2}u,(A^{2}+1)^{-1/2}v\rangle_{\H}\,.
$$
Notice that that $R_{z}$ extends to a bounded bijective map (which we denote by the same symbol) on $\H_{A}^{*}$ onto $\H$. The linear operator $A$, being a densely defined bounded operator on $\H$ to  $\H_{A}^{*}$, extends to a bounded operator $\overline {\!A}:\H\to\H_{A}^{*}$ given by its closure. Moreover, denoting by $\langle\cdot,\cdot\rangle_{\H_{A}^{*},\H_{A}}$  the pairing obtained by extending the scalar product in $\H$, since $A$ is self-adjoint and since $\dom(A)$ is dense in $\H$,
$$
\langle u,Av\rangle_{\H}=\langle\, \overline{\!A}u,v\rangle_{\H_{A}^{*},\H_{A}}\,,\qquad u\in\H\,,\ v\in\H_{A}, \,.
$$
Further, we define $\tau^{*}:\h_{1}^{*}\oplus\h^{*}_{2}\to\H_{A}^{*}$ by 
\be\label{tau*}
\langle \tau^{*}\phi,u\rangle_{\H_{A}^{*},\H_{A}}=\langle\phi,\tau u\rangle_{\h_{1}^{*}\oplus\h^{*}_{2},\h_{1}\oplus\h_{2}}\,,\qquad u\in\H_{A}\,,\ \phi\in\\h_{1}^{*}\oplus\h^{*}_{2}\,.
\ee
Obviously, $\tau^{*}(\phi_{1}\oplus\phi_{2})=\tau_{1}^{*}\phi_{1}+\tau^{*}_{2}\phi_{2}$, where 
$\tau_{k}^{*}:\h_{k}\to\H_{A}^{*}$, $k=1,2$, are defined in the same way as $\tau^{*}$.
\par
Let us notice that $R_{z}:\H_{A}^{*}\to \H$ is the adjoint, with respect the pairing $\langle\cdot,\cdot\rangle_{\H_{A}^{*},\H_{A}}$, of $R_{\bar z}:\H_{A}\to \H$ and it is the inverse of $(-\overline{\!A} +z):\H\to\H_{A}^{*}$; therefore 
\be\label{blw}
G_{z}=R_{z}\tau^{*}\,.
\ee
\begin{lemma}
\label{le-green}Let $A_{\Bi}:\dom(A_{\Bi})\subseteq\H\to\H$ be the self-adjoint
operator provided in Theorem \ref{Th_Krein} and define
\begin{equation}
\rho_{\Bi}:\dom(A_{\Bi})\rightarrow\h_{1}^{*}\oplus\h_{2}^{*}\,,\quad
\rho_{\Bi}(R_{z}^{\Bi}u):=(\pi_{1}^{\ast}\oplus 1)\Lambda_{z}^{\!\Bi}G_{\bar{z}}^{\ast
}u\,,\qquad u\in \H\,, \quad z\in\varrho(A_{\Bi})\cap\varrho(A)\,,\label{rho_def}%
\end{equation}
where $\pi_{1}$ denotes the orthogonal projection onto the subspace
$\overline{\ran(\tau_{1})}$. Then, the definition of $\rho_{\Bi}$ is
well-posed, i.e.,
\[
R_{z_{1}}^{\Bi}u_{1}=R_{z_{2}}^{\Bi}u_{2}\quad\implies\quad (\pi_{1}^{\ast}\oplus 1)\Lambda_{z_{1}}^{\!\Bi}G_{\bar{z}_{1}}^{\ast}u_{1}=(\pi_{1}^{\ast}\oplus 1)
\Lambda_{z_{2}}^{\!\Bi}G_{\bar{z}_{2}}^{\ast}u_{2}%
\]
and 
\begin{equation}\label{GF}
\langle u,A_{\Bi}v\rangle_{\H}=\langle A
u,v\rangle_{\H}+
\langle\tau u,\rho_{\Bi} v\rangle_{\h_{1}\oplus\h_{2},\h^{*}_{1}\oplus\h_{2}^{*}}\,,\quad u\in \dom(A),\,v\in
\dom(A_{\Bi})\,.
\end{equation}
\end{lemma}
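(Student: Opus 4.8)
The plan is to derive both assertions from the additive picture just introduced. Fix $v\in\dom(A_{\Bi})$, pick $z\in\varrho(A_{\Bi})\cap\varrho(A)$ and write $v=R_{z}^{\Bi}w$ with $w:=(-A_{\Bi}+z)v\in\H$; put $\phi:=\Lambda_{z}^{\!\Bi}G_{\bar z}^{*}w\in\b_{1}^{*}\oplus\b_{2}^{*}$ and $\psi:=(\pi_{1}^{*}\oplus 1)\phi\in\h_{1}^{*}\oplus\h_{2}^{*}$, the candidate value of $\rho_{\Bi}(v)$. First I record that, since $\ran(\tau_{1})\subseteq\overline{\ran(\tau_{1})}=\ran(\pi_{1})$, testing against an arbitrary $h\in\H$ and using the $\h_{1}$-$\h_{1}^{*}$ duality gives $G^{1}_{z}=G^{1}_{z}\pi_{1}^{*}$ on $\h_{1}^{*}$; hence $G_{z}\phi=G_{z}\psi$, so that, by \eqref{resolvent} together with $G_{z}=R_{z}\tau^{*}$ from \eqref{blw},
\[
v=R_{z}w+G_{z}\psi=R_{z}\bigl(w+\tau^{*}\psi\bigr)\qquad\text{in }\H .
\]
Since $R_{z}:\H_{A}^{*}\to\H$ inverts $(-\overline{\!A}+z):\H\to\H_{A}^{*}$ and $w=zv-A_{\Bi}v$, applying $(-\overline{\!A}+z)$ to this equality produces the identity
\[
\overline{\!A}\,v=A_{\Bi}v-\tau^{*}\psi\qquad\text{in }\H_{A}^{*},
\]
on which both parts of the lemma will rest.

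For the well-posedness of $\rho_{\Bi}$ I would also use the relation $(B_{1}\oplus B_{2})\,\tau R_{z}^{\Bi}=(1\oplus B_{0})\Lambda_{z}^{\!\Bi}G_{\bar z}^{*}$ established in the proof of Theorem~\ref{Th_Krein}: its $\b_{1}^{*}$-component reads $\phi_{1}=B_{1}\tau_{1}v$, which does not involve $z$. So if $v=R_{z_{1}}^{\Bi}w_{1}=R_{z_{2}}^{\Bi}w_{2}$ and $\phi^{(i)},\psi^{(i)}$ are the associated vectors, then $\phi_{1}^{(1)}=\phi_{1}^{(2)}$, whence $\psi^{(1)}-\psi^{(2)}=0\oplus\bigl(\phi_{2}^{(1)}-\phi_{2}^{(2)}\bigr)$; on the other hand the displayed identity gives $\tau^{*}\psi^{(i)}=A_{\Bi}v-\overline{\!A}v$, the same for $i=1,2$, so $\tau_{2}^{*}\bigl(\phi_{2}^{(1)}-\phi_{2}^{(2)}\bigr)=\tau^{*}\bigl(\psi^{(1)}-\psi^{(2)}\bigr)=0$. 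By \eqref{tau2} the range of $\tau_{2}$ is dense in $\h_{2}$, so $\tau_{2}^{*}$ is injective and therefore $\psi^{(1)}=\psi^{(2)}$, which is the required well-posedness of \eqref{rho_def}. (The projection $\pi_{1}^{*}$ plays no role in this argument; it only confines $\ran(\rho_{\Bi})$ to $\ran(\pi_{1}^{*})\oplus\h_{2}^{*}$.)

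For the Green formula, rewrite the identity above as $\overline{\!A}v=A_{\Bi}v-\tau^{*}\rho_{\Bi}(v)$ in $\H_{A}^{*}$ and pair it with $u\in\dom(A)\subseteq\H$. Using the duality relation $\langle\overline{\!A}v,u\rangle_{\H_{A}^{*},\H_{A}}=\langle v,Au\rangle_{\H}$, the fact that $A_{\Bi}v\in\H$ so that $\langle A_{\Bi}v,u\rangle_{\H_{A}^{*},\H_{A}}=\langle A_{\Bi}v,u\rangle_{\H}$, and the defining identity \eqref{tau*} for $\tau^{*}$, one gets
\[
\langle v,Au\rangle_{\H}=\langle A_{\Bi}v,u\rangle_{\H}-\langle\rho_{\Bi}(v),\tau u\rangle_{\h_{1}^{*}\oplus\h_{2}^{*},\h_{1}\oplus\h_{2}}.
\]
Taking complex conjugates and invoking the symmetry $\langle\varphi,\phi\rangle_{\h_{k},\h_{k}^{*}}=\langle\phi,\varphi\rangle^{*}_{\h_{k}^{*},\h_{k}}$ (extended to the direct sum $\h_{1}\oplus\h_{2}$) turns this into exactly \eqref{GF}.

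I do not expect a conceptual obstacle. The one place that needs a small idea is the $z$-independence of $\phi_{1}$ — equivalently, of $\tau^{*}\rho_{\Bi}(v)$ — in the well-posedness step, which I would settle by recycling the identity $(B_{1}\oplus B_{2})\tau R_{z}^{\Bi}=(1\oplus B_{0})\Lambda_{z}^{\!\Bi}G_{\bar z}^{*}$ from the previous proof. The remaining effort is purely bookkeeping: tracking which of $\H$, $\H_{A}$, $\H_{A}^{*}$, $\h_{k}$, $\h_{k}^{*}$, $\b_{k}^{*}$ each vector sits in, handling the conjugate-linearity conventions of the scalar products and dualities, and checking the admissibility of the three structural facts $G_{z}=R_{z}\tau^{*}$, $R_{z}=(-\overline{\!A}+z)^{-1}$ on $\H_{A}^{*}$, and $G^{1}_{z}=G^{1}_{z}\pi_{1}^{*}$.
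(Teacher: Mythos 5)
Your proof is correct, and it diverges from the paper's in a way worth noting. For the Green formula the two arguments are essentially the same computation in different clothing: the paper evaluates $\langle u,A_{\Bi}v\rangle_{\H}-\langle Au,v\rangle_{\H}=\langle(-A+\bar z)u,G_{z}\Lambda^{\!\Bi}_{z}\tau v_{z}\rangle_{\H}=\langle\tau u,\Lambda^{\!\Bi}_{z}\tau v_{z}\rangle$ directly from \eqref{resolvent} and the definition of $G_{z}$, whereas you first package the same information into the additive identity $\overline{\!A}v=A_{\Bi}v-\tau^{*}\rho_{\Bi}v$ (effectively anticipating Theorem \ref{Th-add}) and then pair with $u$; both are legitimate, and your use of $G_{z}=R_{z}\tau^{*}$, the extension of $R_{z}$ to $\H_{A}^{*}$, and $G^{1}_{z}\pi_{1}^{*}=G^{1}_{z}$ is consistent with what Subsection 2.2 makes available. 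The genuine difference is in the well-posedness step. The paper deduces from the Green formula that $\tau^{*}\big((\pi_{1}^{*}\oplus1)(\Lambda^{\!\Bi}_{z_{1}}G^{*}_{\bar z_{1}}u_{1}-\Lambda^{\!\Bi}_{z_{2}}G^{*}_{\bar z_{2}}u_{2})\big)=0$ and then invokes $\ker(\tau^{*})\cap\ran(\pi_{1}^{*}\oplus1)=\{0\}$, a statement about the full map $\tau^{*}(\phi_{1}\oplus\phi_{2})=\tau_{1}^{*}\phi_{1}+\tau_{2}^{*}\phi_{2}$ which implicitly requires controlling possible cancellations between the two summands. You instead extract the $z$-independence of the first component directly from the boundary identity $(B_{1}\oplus B_{2})\tau R^{\Bi}_{z}=(1\oplus B_{0})\Lambda^{\!\Bi}_{z}G^{*}_{\bar z}$ established inside the proof of Theorem \ref{Th_Krein} (its first row gives $\phi_{1}=B_{1}\tau_{1}v$), which reduces the problem to the injectivity of $\tau_{2}^{*}$ alone; that injectivity follows cleanly from the density of $\ran(\tau_{2})$ in \eqref{tau2}. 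This decoupling of the two components is a more robust way to close the argument, at the price of importing one identity from the previous proof; the paper's route is shorter but leans on a kernel–range statement for the combined $\tau^{*}$.
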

\begin{proof}
Let $v=R_{z}^{\Bi}u=v_{z}+G_{z}\Lambda_{z}^{\!\Bi}\tau v_{z}$, where $v_{z}%
:=R_{z}u$ (hence $\tau v_{z}=G_{\bar{z}}^{\ast}u$). Then
\begin{align*}
&  \langle u,A_{\Bi}v\rangle_{\H}-\langle A
u,v\rangle_{\H}\\
=  &  -\langle u,(-A_{\Bi}+z)v\rangle_{\H}%
+\langle(-A+\bar{z})u,v\rangle_{\H}\\
=  &  -\langle u,(-A+z)v_{z}\rangle_{\H}%
+\langle(-A+\bar{z})u,v_{z}+G_{z}\Lambda_{z}^{\!\Bi}\tau v_{z}%
\rangle_{\H}\\
=  &  \langle(-A+\bar{z})u,G_{z}\Lambda_{z}^{\!\Bi}\tau v_{z}\rangle
_{\H}=\langle\tau u,\Lambda_{z}^{\!\Bi}\tau v_{z}%
\rangle_{\h_{1}\oplus\h_{2},\h^{*}_{1}\oplus\h_{2}^{*}}\\
=  &  \langle (\pi_{1}\oplus 1)\tau u,\Lambda_{z}^{\!\Bi}\tau v_{z}\rangle
_{\h_{1}\oplus\h_{2},\h^{*}_{1}\oplus\h_{2}^{*}}=\langle\tau u,(\pi_{1}^{\ast}\oplus 1)\Lambda
_{z}^{\!\Bi}\tau v_{z}\rangle_{\h_{1}\oplus\h_{2},\h^{*}_{1}\oplus\h_{2}^{*}}\,.
\end{align*}
Suppose now that $R_{z_{1}}^{\Bi}u_{1}=R_{z_{2}}^{\Bi}u_{2}$. Then, by the above
identities, one gets, for any $u\in \dom(A)$,
\[
\langle \tau^{\ast}(\pi_{1}^{\ast}\oplus 1)(\Lambda_{z_{1}}^{\!\Bi}G_{\bar{z}_{1}%
}^{\ast}u_{1}-\Lambda_{z_{2}}^{\!\Bi}G_{\bar{z}_{2}}^{\ast}u_{2}),u\rangle
_{\H_{A}^{*},\H_{A}}=0\,.
\]
Hence $\tau^{\ast}((\pi_{1}^{\ast}\oplus 1)\Lambda_{z_{1}}^{\!\Bi}G_{\bar{z}_{1}}%
^{\ast}u_{1}-(\pi_{1}^{\ast}\oplus 1)\Lambda_{z_{2}}^{\!\Bi}G_{\bar{z}_{2}}^{\ast}%
u_{2})=0$. However, $\ker(\tau^{\ast})\cap\ran((\pi_{1}^{\ast}\oplus 1))=\{0\}$ since $\pi_{1}^{\ast}\oplus 1$ is the projector onto the subspace
orthogonal to $\ker(\tau^{\ast})$.
\end{proof}
The next Lemma provides a sort of abstract boundary conditions holding for the elements in $\dom(A_{\Bi})$:
\begin{lemma}\label{abc} Let $A_{\Bi}$ be the self-adjoint operator in Theorem \ref{Th_Krein}. Then, for any $z\in\varrho(A_{\Bi})\cap\varrho(A)$, one has the representation
$$
\dom(A_{\Bi})=\{u\in\H:u_{z}:=u-G_{z}\rho_{\Bi}u\in \dom(A)\}\,,
$$
$$
(-A_{\Bi}+z)u=(-A+z)u_{z}\,.
$$
Moreover, 
$$
u\in \dom(A_{\Bi})\quad\Longrightarrow\quad (\pi_{1}^{*}B_{1}\oplus B_{2})\tau u=(1\oplus B_{0})\rho_{\Bi} u\,.
$$
\end{lemma}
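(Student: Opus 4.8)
The plan is to read off the first two displayed formulae directly from the definition of $R^{\Bi}_{z}$ in \eqref{resolvent} together with the definition of $\rho_{\Bi}$ in \eqref{rho_def}, and then to obtain the final ``boundary condition'' by feeding this description back into the two operator identities established just before the lemma, namely $M^{\Bi}_{z}=(1\oplus B_{0})-(B_{1}\oplus B_{2})\tau G_{z}$ and $(B_{1}\oplus B_{2})\tau R^{\Bi}_{z}=(1\oplus B_{0})\Lambda^{\Bi}_{z}G^{*}_{\bar z}$ (the latter appearing inside the proof of Theorem \ref{Th_Krein}). First I would fix $z\in\varrho(A_{\Bi})\cap\varrho(A)=Z_{\Bi}$, write an arbitrary $v\in\dom(A_{\Bi})$ as $v=R^{\Bi}_{z}u$ with $u=(-A_{\Bi}+z)v$, and set $v_{z}:=R_{z}u\in\H_{A}$, so that $\tau v_{z}=G^{*}_{\bar z}u$. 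By \eqref{resolvent}, $v=v_{z}+G_{z}\Lambda^{\Bi}_{z}\tau v_{z}$, while by \eqref{rho_def} we have $\rho_{\Bi}v=(\pi^{*}_{1}\oplus1)\Lambda^{\Bi}_{z}\tau v_{z}$. The two pieces $G_{z}\Lambda^{\Bi}_{z}\tau v_{z}$ and $G_{z}\rho_{\Bi}v$ agree: indeed $G_{z}=G^{1}_{z}\pi^{*}_{1}\oplus G^{2}_{z}$ because $\ran(\tau_{1})$ has closure on which $\pi_{1}$ is the identity, hence $G_{z}=G_{z}(\pi^{*}_{1}\oplus1)$, and so $G_{z}\rho_{\Bi}v=G_{z}(\pi^{*}_{1}\oplus1)\Lambda^{\Bi}_{z}\tau v_{z}=G_{z}\Lambda^{\Bi}_{z}\tau v_{z}$. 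Therefore $v_{z}=v-G_{z}\rho_{\Bi}v\in\dom(A)$, which gives the inclusion $\subseteq$; the reverse inclusion follows because if $u:=v-G_{z}\rho_{\Bi}v\in\dom(A)$ then applying $R_{z}$ to $(-A+z)u$ reconstructs $v$ via \eqref{resolvent}, using once more $G_{z}=G_{z}(\pi^{*}_{1}\oplus1)$ and the consistency of $\rho_{\Bi}$ already proved in Lemma \ref{le-green}. The identity $(-A_{\Bi}+z)v=(-A+z)v_{z}$ is then immediate from $(-A+z)v_{z}=u=(-A_{\Bi}+z)v$.

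For the boundary condition, I would compute $(\pi^{*}_{1}B_{1}\oplus B_{2})\tau v$ by splitting $\tau v=\tau v_{z}+\tau G_{z}\Lambda^{\Bi}_{z}\tau v_{z}$ (legitimate by \eqref{tauG}, since $\Lambda^{\Bi}_{z}\tau v_{z}\in\b^{*}_{1}\oplus\b^{*}_{2}$). Multiplying by $(B_{1}\oplus B_{2})$ and using $M^{\Bi}_{z}=(1\oplus B_{0})-(B_{1}\oplus B_{2})\tau G_{z}$ together with $\Lambda^{\Bi}_{z}=(M^{\Bi}_{z})^{-1}(B_{1}\oplus B_{2})$, the two terms telescope exactly as in the injectivity computation in the proof of Theorem \ref{Th_Krein}: $(B_{1}\oplus B_{2})\tau v=(B_{1}\oplus B_{2})\tau v_{z}+\big((1\oplus B_{0})-M^{\Bi}_{z}\big)\Lambda^{\Bi}_{z}\tau v_{z}=(1\oplus B_{0})\Lambda^{\Bi}_{z}\tau v_{z}$, because $(B_{1}\oplus B_{2})\tau v_{z}=M^{\Bi}_{z}\Lambda^{\Bi}_{z}\tau v_{z}$. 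It then remains to insert the projector $\pi^{*}_{1}$ in the first component. Here I would use that $B_{1}\in\B(\b_{1},\b^{*}_{1})$ is the adjoint map of a perturbation whose range is effectively $\overline{\ran(\tau_{1})}$ — concretely, $\pi^{*}_{1}B_{1}=\pi^{*}_{1}B_{1}$ is already what sits in the first slot of $(1\oplus B_{0})\Lambda^{\Bi}_{z}\tau v_{z}=\Lambda^{\Bi}_{z}\tau v_{z}\,\oplus\,B_{0}(\cdots)$ once one recalls that $\rho_{\Bi}v=(\pi^{*}_{1}\oplus1)\Lambda^{\Bi}_{z}\tau v_{z}$, so the first component of $(1\oplus B_{0})\rho_{\Bi}v$ is $\pi^{*}_{1}\Lambda^{\Bi}_{z}\tau v_{z}$ and the second is $B_{0}\Lambda^{\Bi}_{z}\tau v_{z}$; comparing with $(B_{1}\oplus B_{2})\tau v$ and absorbing $B_{1}$ into the first component through $\pi^{*}_{1}$ (which is licit because $\ran(\pi^{*}_{1}\oplus1)$ is the orthogonal complement of $\ker(\tau^{*})$, as used in Lemma \ref{le-green}) yields $(\pi^{*}_{1}B_{1}\oplus B_{2})\tau v=(1\oplus B_{0})\rho_{\Bi}v$.

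The main obstacle I anticipate is the bookkeeping around the projector $\pi_{1}$: one must be careful that $G_{z}$ genuinely factors as $G_{z}(\pi^{*}_{1}\oplus1)$ (equivalently, that $G^{1}_{z}$ annihilates $\ran(\pi^{*}_{1})^{\perp}=\ker(\pi^{*}_{1})$, which holds since $G^{1}_{z}=(\tau_{1}R_{\bar z})^{*}$ and $\ran(\tau_{1})\subseteq\overline{\ran(\tau_{1})}$), and that applying $\pi^{*}_{1}$ in the first slot of the boundary identity is consistent with the definition \eqref{rho_def} rather than introducing a spurious mismatch. Everything else is a direct rearrangement of the algebraic identities already proved for $M^{\Bi}_{z}$ and $\Lambda^{\Bi}_{z}$; no new estimate or limiting argument is needed, so the proof is essentially a careful unwinding of definitions.
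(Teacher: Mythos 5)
Your proposal is correct and follows essentially the same route as the paper: both identify $u_{z}$ via the factorization $G_{z}=G_{z}(\pi_{1}^{*}\oplus 1)$ applied to the resolvent formula \eqref{resolvent}, and both obtain the boundary condition by telescoping $(B_{1}\oplus B_{2})\tau R^{\Bi}_{z}=(1\oplus B_{0})\Lambda^{\!\Bi}_{z}G^{*}_{\bar z}$ (as in the injectivity step of Theorem \ref{Th_Krein}) and then applying $\pi_{1}^{*}\oplus 1$, which commutes with $1\oplus B_{0}$. The only cosmetic remark is that your final paragraph about ``absorbing $B_{1}$'' is more convoluted than needed — it is just the observation $(\pi_{1}^{*}B_{1}\oplus B_{2})\tau u=(\pi_{1}^{*}\oplus 1)(B_{1}\oplus B_{2})\tau u$ — but the content is the same.
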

\begin{proof} Since $G_{z}=R_{z}\tau^{*}$ (see \eqref{blw} below) and $\pi_{1}^{*}\oplus 1$ is the projection onto the orthogonal to $\ker(\tau^{*})$, one has $G_{z}=G_{z}(\pi_{1}^{*}\oplus 1) $. 
Hence, $u\in\dom(A_{\Bi})$ if and only if $u=R_{z}v+G_{z}(\pi_{1}^{*}\oplus 1) \Lambda_{z}^{\Bi}G^{*}_{\bar z}v=R_{z}v+G_{z}\rho_{\Bi}u$. Therefore, 
$$
\dom(A_{\Bi})=\{u\in\H:u=u_{z}+G_{z}\rho_{\Bi}u\,, \ u_{z}\in\dom(A)\}\,.
$$
Moreover, given any $u\in\dom(A)$, $u=R^{\Bi}_{z}v$, one has
$$
(-A+z)u_{z}=(-A+z)R_{z}v=(-A_{\Bi}+z)R^{\Bi}_{z}v=(-A_{\Bi}+z)u\,.
$$
Finally, given $u=R^{\Bi}_{z}v\in\dom (A_{\Bi})$,  one has
\begin{align*}
&(\pi_{1}^{*}B_{1}\oplus B_{2})\tau u=(\pi_{1}^{*}\oplus 1)(B_{1}\oplus B_{2})\tau R^{\Bi}_{z}v\\
=&(\pi_{1}^{*}\oplus 1)\big((B_{1}\oplus B_{2})G_{\bar z}v+ 
(B_{1}\oplus B_{2})\tau G_{z}\big((1\oplus B_{0})-(B_{1}\oplus B_{2})\tau G_{z}\big)^{-1}(B_{1}\oplus B_{2})G_{\bar z}v\big)\\
=&(\pi_{1}^{*}\oplus 1)(1\oplus B_{0})\Lambda^{\Bi}_{z}G_{\bar z}v=(1\oplus B_{0})(\pi_{1}^{*}\oplus 1)\Lambda^{\Bi}_{z}G_{\bar z}v=(1\oplus B_{0})\rho_{\Bi} u\,.
\end{align*}
\end{proof} 
Now, we provide an additive representation of the self-adjoint $A_{\Bi}$ in Theorem \ref{Th_Krein}. 
\begin{theorem}\label{Th-add} Let $A_{\Bi}:\dom(A_{\Bi})\subseteq \H\to\H$ be the self-adjoint operator appearing in Theorem \ref{Th_Krein}. Then
$$
A_{\Bi}=\overline {\!A}+\tau^{*}\!\rho_{\Bi}\,,
$$
where $\rho_{\Bi}$ is defined in \eqref{rho_def}. In particular, if $B_{0}^{-1}\in\B(\b_{2,2}^{*},\b_{2}^{*})$, then
$$
A_{\Bi}=\overline {\!A}+\tau_{1}^{*}B_{1}\tau_{1}+\tau_{2}^{*}B_{0}^{-1}\!B_{2}\tau_{2}\,.
$$ 
\end{theorem}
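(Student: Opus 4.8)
The plan is to read off both identities directly from Lemma~\ref{abc}, using in addition only the relation $G_{z}=R_{z}\tau^{*}$ of \eqref{blw} and the fact, established in Subsection~2.2, that $(-\overline{\!A}+z):\H\to\H_{A}^{*}$ is the bounded bijection inverse to the bounded extension of $R_{z}$. To obtain the first identity I would fix $z\in\varrho(A_{\Bi})\cap\varrho(A)$ and take $u\in\dom(A_{\Bi})$. By Lemma~\ref{abc} one has $u=u_{z}+G_{z}\rho_{\Bi}u$ with $u_{z}:=u-G_{z}\rho_{\Bi}u\in\dom(A)$ and $(-A_{\Bi}+z)u=(-A+z)u_{z}$.

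Applying now the bounded map $(-\overline{\!A}+z):\H\to\H_{A}^{*}$ to the three vectors $u,u_{z},G_{z}\rho_{\Bi}u$ of $\H$, I would use that $(-\overline{\!A}+z)R_{z}$ is the identity of $\H_{A}^{*}$, hence by \eqref{blw} that $(-\overline{\!A}+z)G_{z}\rho_{\Bi}u=(-\overline{\!A}+z)R_{z}\tau^{*}\rho_{\Bi}u=\tau^{*}\rho_{\Bi}u$, together with the fact that $\overline{\!A}$ agrees with $A$ on $\dom(A)$ (through the pairing $\langle w,Av\rangle_{\H}=\langle\overline{\!A}w,v\rangle_{\H_{A}^{*},\H_{A}}$), hence that $(-\overline{\!A}+z)u_{z}=(-A+z)u_{z}$. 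Therefore
$$
(-\overline{\!A}+z)u=(-A+z)u_{z}+\tau^{*}\rho_{\Bi}u=(-A_{\Bi}+z)u+\tau^{*}\rho_{\Bi}u\,,
$$
which rearranges to $A_{\Bi}u=\overline{\!A}u+\tau^{*}\rho_{\Bi}u$; in particular the right-hand side, a priori only an element of $\H_{A}^{*}$, necessarily lies in $\H$. Since $\dom(\tau^{*}\rho_{\Bi})=\dom(\rho_{\Bi})=\dom(A_{\Bi})$ and $u$ was arbitrary, this is precisely $A_{\Bi}=\overline{\!A}+\tau^{*}\rho_{\Bi}$.

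For the refinement, suppose $B_{0}^{-1}\in\B(\b_{2,2}^{*},\b_{2}^{*})$, so that $(1\oplus B_{0})^{-1}=1\oplus B_{0}^{-1}$. The last implication of Lemma~\ref{abc} gives, for $u\in\dom(A_{\Bi})$, the abstract boundary condition $(\pi_{1}^{*}B_{1}\oplus B_{2})\tau u=(1\oplus B_{0})\rho_{\Bi}u$; inverting $1\oplus B_{0}$ and using $\tau u=\tau_{1}u\oplus\tau_{2}u$ yields $\rho_{\Bi}u=\pi_{1}^{*}B_{1}\tau_{1}u\oplus B_{0}^{-1}B_{2}\tau_{2}u$. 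Inserting this into $\tau^{*}(\phi_{1}\oplus\phi_{2})=\tau_{1}^{*}\phi_{1}+\tau_{2}^{*}\phi_{2}$ gives $\tau^{*}\rho_{\Bi}u=\tau_{1}^{*}\pi_{1}^{*}B_{1}\tau_{1}u+\tau_{2}^{*}B_{0}^{-1}B_{2}\tau_{2}u$, and it remains to note $\tau_{1}^{*}\pi_{1}^{*}=\tau_{1}^{*}$: since $\pi_{1}$ is the orthogonal projection onto $\overline{\ran(\tau_{1})}$ it fixes $\tau_{1}v$ for every $v\in\H_{A}$, so $\langle\tau_{1}^{*}\pi_{1}^{*}\phi,v\rangle_{\H_{A}^{*},\H_{A}}=\langle\phi,\pi_{1}\tau_{1}v\rangle_{\h_{1}^{*},\h_{1}}=\langle\phi,\tau_{1}v\rangle_{\h_{1}^{*},\h_{1}}=\langle\tau_{1}^{*}\phi,v\rangle_{\H_{A}^{*},\H_{A}}$. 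Combined with the first identity this gives $A_{\Bi}=\overline{\!A}+\tau_{1}^{*}B_{1}\tau_{1}+\tau_{2}^{*}B_{0}^{-1}B_{2}\tau_{2}$.

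I do not expect a genuine obstacle here: modulo Lemma~\ref{abc} and \eqref{blw} the statement is a short repackaging. The only point demanding care is the rigged-space bookkeeping set up in Subsection~2.2 --- that $(-\overline{\!A}+z)$ is a bounded bijection $\H\to\H_{A}^{*}$ whose inverse is the extension of $R_{z}$, that it genuinely extends $-A+z$ off $\dom(A)$, and that the a priori $\H_{A}^{*}$-valued identity $A_{\Bi}u=\overline{\!A}u+\tau^{*}\rho_{\Bi}u$ in fact takes values in $\H$ because $A_{\Bi}u$ does --- but all of this is recorded there, so the verification stays routine.
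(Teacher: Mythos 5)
Your proof is correct, and your derivation of the second identity from the boundary condition $(\pi_{1}^{*}B_{1}\oplus B_{2})\tau u=(1\oplus B_{0})\rho_{\Bi}u$ together with $\tau_{1}^{*}\pi_{1}^{*}=(\pi_{1}\tau_{1})^{*}=\tau_{1}^{*}$ is exactly the paper's. For the first identity, however, you take a genuinely different (strong-form) route: you start from the decomposition $u=u_{z}+G_{z}\rho_{\Bi}u$ of Lemma~\ref{abc} and apply the bounded bijection $(-\overline{\!A}+z):\H\to\H_{A}^{*}$, using $(-\overline{\!A}+z)G_{z}=(-\overline{\!A}+z)R_{z}\tau^{*}=\tau^{*}$ and $(-A_{\Bi}+z)u=(-A+z)u_{z}$. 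The paper instead argues in weak form: it invokes the Green-type identity \eqref{GF} of Lemma~\ref{le-green}, writing $\langle A_{\Bi}u,v\rangle_{\H}=\langle u,Av\rangle_{\H}+\langle\rho_{\Bi}u,\tau v\rangle$ for all $v\in\H_{A}$ and recognizing the right-hand side as $\langle\,\overline{\!A}u+\tau^{*}\rho_{\Bi}u,v\rangle_{\H_{A}^{*},\H_{A}}$, so the identity in $\H_{A}^{*}$ follows from non-degeneracy of the pairing. The two arguments are of comparable length and both lean only on material already established in Subsection~2.2; yours makes the resolvent bookkeeping ($R_{z}$ as the inverse of $-\overline{\!A}+z$ on $\H_{A}^{*}$, and the membership of $\overline{\!A}u+\tau^{*}\rho_{\Bi}u$ in $\H$) fully explicit, while the paper's sidesteps the domain decomposition entirely by testing against $\H_{A}$. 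Both are complete; no gap.
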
  
\begin{proof} By \eqref{GF}, for any $u\in\dom(A_{\Bi})$ and $v\in\H_{A}$,
\begin{align*}
\langle A_{\Bi}u,v\rangle_{\H_{A}^{*},\H_{A}}\equiv&\langle A_{\Bi}u,v\rangle_{\H}=
\langle u,Av\rangle_{\H}+\langle\rho_{\Bi} u,\tau v\rangle_{\h^{*}_{1}\oplus\h^{*}_{2},\h_{1}\oplus\h_{2}}\\
=&\langle\, \overline{\!A}u+\tau^{*}\!\rho_{\Bi} u,v\rangle_{\H_{A}^{*},\H_{A}}\,.
\end{align*}
By Lemma \ref{abc} and by $\tau_{1}^{*}\pi_{1}^{*}=(\pi_{1}\tau_{1})^{*}=\tau_{1}^{*}$,
$$
\tau^{*}\!\rho_{\Bi} =\tau^{*}(\pi^{*}_{1}B_{1}\tau_{1}\oplus B_{0}^{-1}B_{1}\tau_{2})=
\tau_{1}^{*}B_{1}\tau_{1}+\tau_{2}^{*}B_{0}^{-1}\!B_{2}\tau_{2}\,.
$$
\end{proof}

\subsection{An alternative resolvent formula.} At first, let us notice that hypothesis \eqref{tauG}, can be re-written as
$$
\tau_{j}G^{k}_{z}|\b_{k}\in\B(\b_{k}^{*},\b_{j})\,,\quad j,k=1,2\,,\qquad G^{k}_{z}:=(\tau_{k}R_{\bar z})^{*}\,.
$$
Moreover, 
$$
{M}_{z}^{\Bi}=(1\oplus B_{0})+(B_{1}\oplus B_{2})\tau G_{z}=
\begin{bmatrix}M_{z}^{B_{1}}&B_{1}\tau_{1}G^{2}_{z}\\
B_{2}\tau_{2}G^{1}_{z}&M_{z}^{B_{0},B_{2}}
\end{bmatrix}
$$
where
$$
M_{z}^{B_{1}}:=1-B_{1}\tau_{1}G^{1}_{z}\,,\qquad 
M_{z}^{B_{0},B_{2}}:=B_{0}-B_{2}\tau_{2}G^{2}_{z}\,.
$$
Then, supposing all the inverse operators appearing in the next formula exist, by the inversion formula for block operator matrices, one gets
\begin{align}\label{MB-1}
&({M}_{z}^{\Bi})^{-1}=
\begin{bmatrix}(M^{B_{1}}_{z})^{-1}+(M^{B_{1}}_{z})^{-1}B_{1}\tau_{1}G^{2}_{z}({C}_{z}^{\Bi})^{-1}B_{2}\tau_{2}G^{1}_{z}(M^{B_{1}}_{z})^{-1}&
(M^{B_{1}}_{z})^{-1}B_{1}\tau_{1}G^{2}_{z}({C}_{z}^{\Bi})^{-1}\\
({C}_{z}^{\Bi})^{-1}B_{2}\tau_{2}G^{1}_{z}(M^{B_{1}}_{z})^{-1}&({C}_{z}^{\Bi})^{-1}
\end{bmatrix}
\end{align}
where ${C}^{\Bi}_{z}$ denotes the second Schur complement, i.e.,
\begin{align*}
{C}^{\Bi}_{z}:=&M_{z}^{B_{0},B_{2}}-B_{2}\tau_{2}G^{1}_{z}
(M_{z}^{B_{1}})^{-1}B_{1}\tau_{1}G^{2}_{z}\\
=&M_{z}^{B_{0},B_{2}}\left(1-(M_{z}^{B_{0},B_{2}})^{-1}B_{2}\tau_{2}G^{1}_{z}
(M_{z}^{B_{1}})^{-1}B_{1}\tau_{1}G^{2}_{z}\right)\\
=&M_{z}^{B_{0},B_{2}}\left(1-\Lambda_{z}^{\!B_{0},B_{2}}\tau_{2}G^{1}_{z}
\Lambda_{z}^{\!B_{1}}\tau_{1}G^{2}_{z}\right)\,,
\end{align*}
\be\label{LB1}
\Lambda_{z}^{\!B_{1}}:=(1-B_{1}\tau_{1}G^{1}_{z})^{-1}B_{1}\,,
\ee
\be\label{LB02}
\Lambda_{z}^{\!B_{0},B_{2}}:=(B_{0}-B_{2}\tau_{2}G^{2}_{z})^{-1}B_{2}\,.
\ee
Regarding the well-posedness of \eqref{MB-1}, taking into account the definition of ${C}^{\Bi}_{z}$, one has 
$$
Z_{\Bi}=\big\{z\in\varrho(A): (M^{\Bi}_{z})^{-1}\in{\B}(\b^{*}_{1}\oplus\b^{*}_{2,2},\b^{*}_{1}\oplus\b^{*}_{2})\,,\ w=z,\bar{z}\big\}\supseteq \widehat Z_{\Bi}\,,
$$
where
\be\label{wZB}
\widehat Z_{\Bi}:=\big\{z\in Z_{B_{1}}\cap Z_{B_{0},B_{2}}:\left(1-\Lambda_{w}^{\!B_{0},B_{2}}\tau_{2}G^{1}_{w}
\Lambda_{w}^{\!B_{1}}\tau_{1}G^{2}_{w}\right)^{-1}\in\B(\b_{2}^{*}),\quad w=z,\bar z\big\}
\ee
\be\label{ZB1}
Z_{B_{1}}:=\big\{z\in\varrho(A): (1-B_{1}\tau_{1}G^{1}_{w})^{-1}\in{\B}(\b^{*}_{1})\,,\ w=z,\bar{z}\big\}\,,
\ee
\be\label{ZB02}
Z_{B_{0},B_{2}}:=\big\{z\in\varrho(A): (B_{0}-B_{2}\tau_{2}G^{2}_{w})^{-1}\in{\B}(\b^{*}_{2,2},\b^{*}_{2})\,,\ w=z,\bar{z}\big\}\,,
\ee
Therefore, supposing that $\widehat Z_{\Bi}$ is not empty, for any $z\in\widehat Z_{\Bi}$, by \eqref{resolvent} and by
$$
({C}_{z}^{\Bi})^{-1}B_{2}=\Sigma^{\Bi}_{z}\Lambda_{z}^{\!B_{0},B_{2}}\,,\qquad \Sigma^{\Bi}_{z}:=\left(1-\Lambda_{z}^{\!B_{0},B_{2}}\tau_{2}G^{1}_{z}
\Lambda_{z}^{\!B_{1}}\tau_{1}G^{2}_{z}\right)^{-1}\,,
$$
 one has
\begin{align*}
\Lambda_{z}^{\!\Bi}=(M^{\Bi}_{z})^{-1}
\begin{bmatrix}B_{1}&0\\
0&B_{2}\end{bmatrix}
=&
\begin{bmatrix}\Lambda^{\!B_{1}}_{z}+\Lambda^{\!B_{1}}_{z}\tau_{1}G^{2}_{z}\Sigma^{\Bi}_{z}\Lambda_{z}^{\!B_{0},B_{2}}\tau_{2}G^{1}_{z}\Lambda^{\!B_{1}}_{z}&
\Lambda^{\!B_{1}}_{z}\tau_{1}G^{2}_{z}\Sigma^{\Bi}_{z}\Lambda_{z}^{\!B_{0},B_{2}}\\
\Sigma^{\Bi}_{z}\Lambda_{z}^{\!B_{0},B_{2}}\tau_{2}G^{1}_{z}\Lambda^{\!B_{1}}_{z}&\Sigma^{\Bi}_{z}\Lambda_{z}^{\!B_{0},B_{2}}
\end{bmatrix}
\,.
\end{align*}
Therefore
\be\label{res}
R_{z}^{\Bi}=R_{z}+\begin{bmatrix}G^{1}_{z}&G^{2}_{z}\end{bmatrix}
\begin{bmatrix}\Lambda^{\!B_{1}}_{z}+\Lambda^{\!B_{1}}_{z}\tau_{1}G^{2}_{z}\Sigma^{\Bi}_{z}\Lambda_{z}^{\!B_{0},B_{2}}\tau_{2}G^{1}_{z}\Lambda^{\!B_{1}}_{z}&
\Lambda^{\!B_{1}}_{z}\tau_{1}G^{2}_{z}\Sigma^{\Bi}_{z}\Lambda_{z}^{\!B_{0},B_{2}}\\
\Sigma^{\Bi}_{z}\Lambda_{z}^{\!B_{0},B_{2}}\tau_{2}G^{1}_{z}\Lambda^{\!B_{1}}_{z}&\Sigma^{\Bi}_{z}
\Lambda_{z}^{\!B_{0},B_{2}}\end{bmatrix}
\begin{bmatrix}{G^{1*}_{\bar z}}\\{G^{2*}_{\bar z}}\end{bmatrix}\,.
\ee
In particular, taking $\Bi=(1,B_{1},0)$, one gets, for any $z\in Z_{B_{1}}$,
\begin{align}\label{res1}
R_{z}^{B_{1}}:=R_{z}^{(1,B_{1},0)}
=&R_{z}+\begin{bmatrix}G^{1}_{z}&G^{2}_{z}\end{bmatrix}
\begin{bmatrix}\Lambda_{z}^{\!B_{1}}&0\\
0&0
\end{bmatrix}
\begin{bmatrix}{G^{1*}_{\bar z}}\\{G^{2*}_{\bar z}}\end{bmatrix}
=R_{z}+G^{1}_{z}\Lambda_{z}^{\!B_{1}}{G^{1*}_{\bar z}}
\end{align}
while, taking $\Bi=(B_{0},0,B_{2})$, one gets, for any $z\in Z_{B_{0},B_{2}}$, 
\begin{align}\label{res2}
R_{z}^{B_{0},B_{2}}:=R_{z}^{(B_{0},0,B_{2})}
=&R_{z}+\begin{bmatrix}G^{1}_{z}&G^{2}_{z}\end{bmatrix}
\begin{bmatrix}0&
0\\
0&\Lambda_{z}^{\!B_{0},B_{2}}
\end{bmatrix}
\begin{bmatrix}{G^{1*}_{\bar z}}\\{G^{2*}_{\bar z}}\end{bmatrix}
=R_{z}+G^{2}_{z}\Lambda_{z}^{\!B_{0},B_{2}}{G^{2*}_{\bar z}}\,.
\end{align}
Therefore, by Theorem \ref{Th_Krein} with $\Bi=(1,B_{1},0)$, one gets
\begin{corollary}\label{cor1} Let $\tau_{1}\in\B(\H_{A},\h_{1})$ such that $\tau_{1}G^{1}_{z}|\b^{*}_{1}\in\B(\b_{1}^{*},\b_{1})$ and let $B_{1}\in\B(\b_{1},\b^{*}_{1})$ self-adjoint; suppose that  
$Z_{B_{1}}$ defined in \eqref{ZB1} is not empty. Then 
\be\label{res1.1}
R_{z}^{B_{1}}=R_{z}+G^{1}_{z}\Lambda_{z}^{\!B_{1}}{G^{1*}_{\bar z}}\,,\qquad z\in Z_{B_{1}}\,,
\ee
where $\Lambda_{z}^{\!B_{1}}$ is defined in \eqref{LB1}, is the resolvent of a self-adjoint operator $A_{B_{1}}$ and $Z_{B_{1}}=\varrho(A_{B_{1}})\cap\varrho(A)$.
\end{corollary}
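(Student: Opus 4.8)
The plan is to read off Corollary \ref{cor1} as the special case $\Bi=(1,B_{1},0)$ of Theorem \ref{Th_Krein}, which is essentially the content of the computation \eqref{res1}; the only thing to be checked with some care is that the hypotheses of Theorem \ref{Th_Krein} are met even though the statement of the corollary mentions neither a second trace map $\tau_{2}$ nor auxiliary spaces $\h_{2},\b_{2}$. I would dispose of this by making the trivial choices $\h_{2}=\h_{2}^{\circ}=\h_{2}^{*}=\{0\}$, $\b_{2}=\b_{2,2}=\{0\}$, $\tau_{2}=0$, $B_{0}=1$, $B_{2}=0$. With these, \eqref{tau2} holds vacuously, since $\ker(\tau_{2})=\dom(A)$ is dense in $\H$ and $\ran(\tau_{2})=\{0\}=\h_{2}$; \eqref{B012} and \eqref{Bse} are immediate ($B_{1}=B_{1}^{*}$ is assumed and $B_{0}B_{2}^{*}=0=B_{2}B_{0}^{*}$); and \eqref{tauG} reduces to the hypothesis $\tau_{1}G^{1}_{z}|\b_{1}^{*}\in\B(\b_{1}^{*},\b_{1})$ of the corollary, because $\tau G_{z}=(\tau_{1}\oplus 0)G^{1}_{z}$ on $\b_{1}^{*}\oplus\{0\}$. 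Finally $M_{z}^{\Bi}$ collapses to $(1-B_{1}\tau_{1}G^{1}_{z})\oplus 1_{\{0\}}=M_{z}^{B_{1}}\oplus 1_{\{0\}}$, so that $Z_{\Bi}=Z_{B_{1}}$, which is nonempty by assumption.

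Granting this, Theorem \ref{Th_Krein} gives that $R_{z}^{\Bi}=R_{z}+G_{z}\Lambda_{z}^{\!\Bi}G_{\bar z}^{*}$ is the resolvent of a self-adjoint operator $A_{B_{1}}:=A_{\Bi}$ and that $Z_{B_{1}}=Z_{\Bi}=\varrho(A_{B_{1}})\cap\varrho(A)$. It then remains only to match this with \eqref{res1.1}: since $B_{2}=0$ and $B_{0}=1$ one has $\Lambda_{z}^{\!\Bi}=(M_{z}^{\Bi})^{-1}(B_{1}\oplus 0)=\Lambda_{z}^{\!B_{1}}\oplus 0$, so that, writing $G_{z}$ as the row $[\,G^{1}_{z}\ \ G^{2}_{z}\,]$ and $G_{\bar z}^{*}$ as the corresponding column as in \eqref{res}, the product $G_{z}\Lambda_{z}^{\!\Bi}G_{\bar z}^{*}$ collapses to $G^{1}_{z}\Lambda_{z}^{\!B_{1}}G^{1*}_{\bar z}$. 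This is precisely the identity already recorded in \eqref{res1}.

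An equivalent, self-contained alternative is to rerun the proof of Theorem \ref{Th_Krein} directly with $\Bi=(1,B_{1},0)$: the identities \eqref{PS1} and \eqref{PS2}, on which the pseudo-resolvent relations $(R_{z}^{B_{1}})^{*}=R_{\bar z}^{B_{1}}$ and the resolvent identity for $R_{z}^{B_{1}}$ rest, use only \eqref{Bse} and the resolvent identity for $R_{z}$, and the spectral identification $Z_{B_{1}}=\varrho(A_{B_{1}})\cap\varrho(A)$ is again \cite[Theorem 2.19 and Remark 2.20]{CFP}. The single point at which the general argument uses the denseness of $\ker(\tau_{2})$, namely the injectivity of $R_{z}^{\Bi}$, is exactly the step that simplifies here: since $B_{0}=1$, the operator $1\oplus B_{0}$ is injective, so $R_{z}^{B_{1}}u=0$ forces $\Lambda_{z}^{\!B_{1}}G^{1*}_{\bar z}u=0$, whence $R_{z}u=R_{z}^{B_{1}}u=0$ and $u=0$. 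I expect this injectivity bookkeeping to be the only place needing any thought; everything else is a routine specialization of the material of Section \ref{Sec_Krein}.
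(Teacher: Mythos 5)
Your proposal is correct and follows the paper's own route: the paper obtains Corollary \ref{cor1} precisely by specializing Theorem \ref{Th_Krein} to $\Bi=(1,B_{1},0)$ via the computation \eqref{res1}. Your extra care in making the corollary self-contained (trivial choices $\h_{2}=\b_{2}=\{0\}$, $\tau_{2}=0$) and your observation that injectivity of $B_{0}=1$ makes the denseness of $\ker(\tau_{2})$ superfluous here are sound refinements of the same argument, not a different proof.
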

By Theorem \ref{Th_Krein} with $\Bi=(B_{0},0,B_{2})$, one gets
\begin{corollary}\label{cor02} Let $\tau_{2}\in\B(\H_{A},\h_{2})$ satisfy \eqref{tau2} be such that $\tau_{1}G^{1}_{z}|\b^{*}_{2}\in\B(\b_{2}^{*},\b_{2})$ and let $B_{0}\in\B(\b_{2}^{*},\b_{2,2}^{*})$, $B_{2}\in \B(\b_{2},\b_{2,2}^{*})$ be such that $B_{0}B_{2}^{*}=B_{2}B_{0}^{*}$; suppose that
$Z_{B_{0},B_{2}}$ defined in \eqref{ZB02} is not empty. Then 
\be\label{res2.1}
R_{z}^{B_{0},B_{2}}=R_{z}+G^{2}_{z}\Lambda_{z}^{\!B_{0},B_{2}}{G^{2*}_{\bar z}}\,,\qquad z\in Z_{B_{0},B_{2}}
\ee
where $\Lambda_{z}^{\!B_{0},B_{2}}$ is defined in \eqref{LB02}, is the resolvent of a self-adjoint operator $A_{B_{0},B_{2}}$ and $Z_{B_{0},B_{2}}=\varrho(A_{B_{0},B_{2}})\cap\varrho(A)$.
\end{corollary}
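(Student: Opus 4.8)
The plan is to read the statement off from Theorem \ref{Th_Krein} by trivializing the first auxiliary scale. Concretely, I would take $\h_{1}=\h_{1}^{\circ}=\h_{1}^{*}=\{0\}$ and $\b_{1}=\b_{1}^{*}=\{0\}$, so that $\tau_{1}=0$ and $G^{1}_{z}=0$ and $\tau_{2}$ is the only genuine trace map, and then apply Theorem \ref{Th_Krein} with the triple $\Bi=(B_{0},0,B_{2})$, i.e.\ with $B_{1}=0\in\B(\{0\},\{0\})$.

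First I would check the hypotheses of Theorem \ref{Th_Krein}. Hypothesis \eqref{tau2} is exactly the standing assumption on $\tau_{2}$. In \eqref{B012} the conditions on $B_{0}$ and $B_{2}$ are those assumed here, while the condition on $B_{1}$ is vacuous since $\b_{1}=\{0\}$; in \eqref{Bse}, $B_{1}=B_{1}^{*}$ holds trivially and $B_{0}B_{2}^{*}=B_{2}B_{0}^{*}$ is assumed. Finally, with $\b_{1}=\{0\}$ the operator in \eqref{tauG} is simply $\tau_{2}G^{2}_{z}|\b_{2}^{*}$, whose boundedness into $\b_{2}$ is the remaining hypothesis of the corollary.

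Next I would identify the set $Z_{\Bi}$ of \eqref{ZB}. Since $B_{1}=0$ and $\b_{1}=\{0\}$, the block operator $M^{\Bi}_{z}=(1\oplus B_{0})-(B_{1}\oplus B_{2})\tau G_{z}$ collapses to $M^{B_{0},B_{2}}_{z}=B_{0}-B_{2}\tau_{2}G^{2}_{z}\in\B(\b_{2}^{*},\b_{2,2}^{*})$, so that the requirement $(M^{\Bi}_{w})^{-1}\in\B(\b_{1}^{*}\oplus\b_{2,2}^{*},\b_{1}^{*}\oplus\b_{2}^{*})$ is nothing but $(M^{B_{0},B_{2}}_{w})^{-1}\in\B(\b_{2,2}^{*},\b_{2}^{*})$; hence $Z_{\Bi}=Z_{B_{0},B_{2}}$, which is non-empty by assumption. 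Theorem \ref{Th_Krein} then yields that $R_{z}+G_{z}\Lambda^{\Bi}_{z}G_{\bar z}^{*}$, $z\in Z_{\Bi}=Z_{B_{0},B_{2}}$, is the resolvent of a self-adjoint operator, which we call $A_{B_{0},B_{2}}$, and that $Z_{B_{0},B_{2}}=\varrho(A_{B_{0},B_{2}})\cap\varrho(A)$. It then only remains to unwind this formula: with the trivial first factor one has $G_{z}=G^{2}_{z}$ and $\Lambda^{\Bi}_{z}=(M^{\Bi}_{z})^{-1}(B_{1}\oplus B_{2})=(M^{B_{0},B_{2}}_{z})^{-1}B_{2}=\Lambda^{\!B_{0},B_{2}}_{z}$ as in \eqref{LB02}, so that $R_{z}+G_{z}\Lambda^{\Bi}_{z}G^{*}_{\bar z}=R_{z}+G^{2}_{z}\Lambda^{\!B_{0},B_{2}}_{z}G^{2*}_{\bar z}$, which is \eqref{res2}--\eqref{res2.1}.

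There is no real difficulty here, the corollary being a pure specialization of Theorem \ref{Th_Krein}; the main (and essentially only) point needing care is the identification of $Z_{\Bi}$ with $Z_{B_{0},B_{2}}$, and this is why it is convenient to trivialize $\b_{1}$ outright rather than keep $\b_{1}$ general with $B_{1}=0$: in the latter case $M^{\Bi}_{z}$ would only be block lower triangular, and one would additionally need boundedness of $\tau_{2}G^{1}_{z}|\b_{1}^{*}$, which is not among the hypotheses.
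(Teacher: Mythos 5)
Correct, and essentially the paper's own argument: the paper likewise obtains \eqref{res2.1} by specializing Theorem \ref{Th_Krein} to $\Bi=(B_{0},0,B_{2})$, via the block computation leading to \eqref{res2} with $B_{1}=0$. Your outright trivialization of $\b_{1}$ is a harmless, and in fact slightly more faithful, variant, since (as you note) the corollary's hypotheses do not include boundedness of the mixed term $\tau_{2}G^{1}_{z}|\b_{1}^{*}$ that the block-triangular inversion would otherwise require.
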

Supposing $\widehat Z_{\Bi}\not=\varnothing $, by \eqref{res}, by \eqref{res1} and by the relations
\begin{align}\label{GB1}
G^{B_{1}}_{z}:=(\tau_{2}R_{\bar z}^{B_{1}})^{*}=&
(\tau_{2}R_{\bar z}+\tau_{2} G^{1}_{\bar z}\Lambda^{\!B_{1}}_{\bar z}{G^{1*}_{z}})^{*}\\
=&G_{z}^{2}+G_{z}^{1}\Lambda^{\!B_{1}}_{z}\tau_{1}G_{z}^{2}\nonumber
\end{align}
\begin{align}\label{GB1*}
{G^{B_{1}*}_{\bar z}}=\tau_{2}R_{z}^{B_{1}}=&
\tau_{2}R_{z}+\tau_{2} G^{1}_{ z}\Lambda^{\!B_{1}}_{z}{G^{1*}_{\bar z}}
\\=&G_{\bar z}^{2*}+\tau_{2}G_{z}^{1}\Lambda^{\!B_{1}}_{z}G_{\bar z}^{1*}\nonumber
\end{align}
\begin{align*}
\widehat M_{z}^{\Bi}=B_{0}-B_{2}\tau_{2}G^{B_{1}}_{z}=&B_{0}-B_{2}\tau_{2}G_{z}^{2}+\tau_{2}G_{z}^{1}\Lambda^{\!B_{1}}_{z}\tau_{1}G_{z}^{2}\\
=&M_{z}^{B_{0},B_{2}}+B_{2}\tau_{2}G_{z}^{1}\Lambda^{\!B_{1}}_{z}\tau_{1}G_{z}^{2}\\
=&M_{z}^{B_{0},B_{2}}\big(1+\Lambda^{\!B_{0},B_{1}}_{z}\tau_{2}G_{z}^{1}\Lambda^{\!B_{1}}_{z}\tau_{1}G_{z}^{2}\big)
\end{align*}
\be\label{wLB1}
\widehat \Lambda^{\Bi}_{z}:=(\widehat M_{z}^{\Bi})^{-1}B_{2}=(B_{0}-B_{2}\tau_{2}G^{B_{1}}_{z})^{-1}B_{2}=\Sigma_{z}^{\Bi}\Lambda^{\!B_{0},B_{2}}_{z}
\ee
 one gets 
\begin{align}
&\Lambda_{z}^{\Bi}=\begin{bmatrix}\Lambda^{\!B_{1}}_{z}+\Lambda^{\!B_{1}}_{z}\tau_{1}G^{2}_{z}\widehat \Lambda^{\Bi}_{z}\tau_{2}G^{1}_{z}\Lambda^{\!B_{1}}_{z}&
\Lambda^{\!B_{1}}_{z}\tau_{1}G^{2}_{z}\widehat \Lambda^{\Bi}_{z}\\
\widehat \Lambda^{\Bi}_{z}\tau_{2}G^{1}_{z}\Lambda_{z}^{\!B_{1}}&
\widehat \Lambda^{\Bi}_{z}\,;
\end{bmatrix}\label{LB-new}\\
=&\left(1+\begin{bmatrix}\Lambda^{\!B_{1}}_{z}&0\\0&\widehat \Lambda^{\Bi}_{z}
\end{bmatrix}\begin{bmatrix}\tau_{1}G^{2}_{z}\widehat \Lambda^{\Bi}_{z}\tau_{2}G^{1}_{z}&
\tau_{1}G^{2}_{z}\\
\tau_{2}G^{1}_{z}&0
\end{bmatrix}\,\right)\begin{bmatrix}\Lambda^{\!B_{1}}_{z}&0\\0&\widehat \Lambda^{\Bi}_{z}
\end{bmatrix}\,.\label{LB-new2}
\end{align}
Therefore 
\begin{align}\label{res-new}
R_{z}^{\Bi}=&R_{z}+\begin{bmatrix}G^{1}_{z}&G^{2}_{z}\end{bmatrix}
\begin{bmatrix}\Lambda^{\!B_{1}}_{z}+\Lambda^{\!B_{1}}_{z}\tau_{1}G^{2}_{z}\widehat \Lambda^{\Bi}_{z}\tau_{2}G^{1}_{z}\Lambda^{\!B_{1}}_{z}&
\Lambda^{\!B_{1}}_{z}\tau_{1}G^{2}_{z}\widehat \Lambda^{\Bi}_{z}\\
\widehat \Lambda^{\Bi}_{z}\tau_{2}G^{1}_{z}\Lambda_{z}^{\!B_{1}}&
\widehat \Lambda^{\Bi}_{z}
\end{bmatrix}
\begin{bmatrix}
{G^{1*}_{\bar z}}\\ 
{G^{2*}_{\bar z}}
\end{bmatrix}\\
=&R_{z}+\begin{bmatrix}G^{1}_{z}&G^{2}_{z}\end{bmatrix}
\begin{bmatrix}\Lambda^{\!B_{1}}_{z}G^{1*}_{\bar z}+\Lambda^{\!B_{1}}_{z}\tau_{1}G^{2}_{z}\widehat \Lambda^{\Bi}_{z}\tau_{2}G^{1}_{z}\Lambda^{\!B_{1}}_{z}G^{1*}_{\bar z}+
\Lambda^{\!B_{1}}_{z}\tau_{1}G^{2}_{z}\widehat \Lambda^{\Bi}_{z}G^{2*}_{\bar z}\\
\widehat \Lambda^{\!\Bi}_{z}\tau_{2}G^{1}_{z}\Lambda_{z}^{\!B_{1}}G^{1*}_{\bar z}+
\widehat \Lambda^{\Bi}_{z}G^{2*}_{\bar z}
\end{bmatrix}\nonumber\\
=&R_{z}+G^{1}_{z}\Lambda^{\!B_{1}}_{z}G^{1*}_{\bar z}+G^{1}_{z}\Lambda^{\!B_{1}}_{z}\tau_{1}G^{2}_{z}\widehat \Lambda^{\Bi}_{z}\tau_{2}G^{1}_{z}\Lambda^{\!B_{1}}_{z}G^{1*}_{\bar z}+
G^{1}_{z}\Lambda^{\!B_{1}}_{z}\tau_{1}G^{2}_{z}\widehat \Lambda^{\Bi}_{z}G^{2*}_{\bar z}\nonumber\\
&+G^{2}_{z}\widehat \Lambda^{\Bi}_{z}\tau_{2}G^{1}_{z}\Lambda_{z}^{\!B_{1}}G^{1*}_{\bar z}+
G^{2}_{z}\widehat \Lambda^{\Bi}_{z}G^{2*}_{\bar z}\nonumber\\
=&R^{B_{1}}_{z}+G_{z}^{B_{1}}\widehat \Lambda^{\Bi}_{z}G_{\bar z}^{B_{1}*}\nonumber\,.
\end{align}
This also entails, by \cite[Theorem 2.19 and Remark 2.20]{CFP}, that if $\widehat Z_{\Bi}\not=\varnothing $, then $\widehat Z_{\Bi}=Z_{\Bi}=\varrho(A_{\Bi})\cap\varrho(A_{B_{1}})$. Summing up, one has the following 
\begin{theorem}\label{Th-alt-res} Assume that  hypotheses \eqref{tauG}, \eqref{B012} and \eqref{Bse} hold and that $\widehat Z_{\Bi}$ defined in \eqref{wZB} is not empty. Then, for any $z\in \varrho(A_{\Bi})\cap\varrho(A_{B_{1}})$, the resolvent $R_{z}^{\Bi}$ in \eqref{resolvent} has the representation \eqref{res-new} and  
\be\label{alt-res}
R^{\Bi}_{z}=R_{z}^{B_{1}}+G_{z}^{B_{1}}\widehat \Lambda^{\Bi}_{z}G_{\bar z}^{B_{1}*}\,,\qquad
z\in \varrho(A_{\Bi})\cap\varrho(A_{B_{1}})\,,
\ee
where $R_{z}^{B_{1}}$, $G_{z}^{B_{1}}$ and $\widehat \Lambda^{\Bi}_{z}$ are defined in \eqref{res1.1}, \eqref{GB1} and \eqref{LB1}.
\end{theorem}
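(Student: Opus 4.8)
The plan is to reduce Theorem~\ref{Th-alt-res} to two applications of the basic construction in Theorem~\ref{Th_Krein} together with the already-derived block-matrix identities. First I would observe that the intermediate operator $A_{B_1}$ is legitimate: Corollary~\ref{cor1} applies since $\widehat Z_{\Bi}\subseteq Z_{B_1}\cap Z_{B_0,B_2}$ by the very definition \eqref{wZB}, so in particular $Z_{B_1}\ne\varnothing$, and hence $A_{B_1}$ is a well-defined self-adjoint operator with $\varrho(A_{B_1})\cap\varrho(A)=Z_{B_1}$ and resolvent \eqref{res1.1}. The maps $G^{B_1}_z=(\tau_2 R^{B_1}_{\bar z})^*$ and $G^{B_1*}_{\bar z}=\tau_2 R^{B_1}_z$ then make sense on the appropriate Banach-space scales; that $\tau_2$ extends suitably to $\ran(R^{B_1}_z|\b^*_2)$ and that $\tau_2 G^{B_1}_z|\b^*_2\in\B(\b^*_2,\b_2)$ follows from \eqref{tauG}, i.e.\ from $\tau_jG^k_z|\b_k\in\B(\b^*_k,\b_j)$, together with the identities \eqref{GB1}, \eqref{GB1*} which express $G^{B_1}_z$ and $G^{B_1*}_{\bar z}$ purely in terms of $G^1_z,G^2_z,\Lambda^{B_1}_z$ and the trace maps.

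Next I would do the algebra for $\widehat M^{\Bi}_z:=B_0-B_2\tau_2 G^{B_1}_z$ exactly as displayed just before \eqref{wLB1}: substituting \eqref{GB1} gives $\widehat M^{\Bi}_z=M^{B_0,B_2}_z+B_2\tau_2G^1_z\Lambda^{B_1}_z\tau_1G^2_z=M^{B_0,B_2}_z\big(1+\Lambda^{B_0,B_2}_z\tau_2G^1_z\Lambda^{B_1}_z\tau_1G^2_z\big)$, whence $(\widehat M^{\Bi}_z)^{-1}B_2=\Sigma^{\Bi}_z\Lambda^{B_0,B_2}_z=\widehat\Lambda^{\Bi}_z$ is invertible precisely on $\widehat Z_{\Bi}$ (this is where the factor $(1-\Lambda^{B_0,B_2}_w\tau_2 G^1_w\Lambda^{B_1}_w\tau_1 G^2_w)^{-1}\in\B(\b^*_2)$ in \eqref{wZB} is used). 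I would then verify the symmetry hypothesis of Theorem~\ref{Th_Krein} applied to the couple $(\tau_2,A_{B_1})$ with data $(B_0,0,B_2)$: one needs $B_0B_2^*=B_2B_0^*$, which is exactly \eqref{Bse}. Hence Corollary~\ref{cor02}, now with $A$ replaced by $A_{B_1}$, yields that \eqref{alt-res}, $R^{\Bi}_z=R^{B_1}_z+G^{B_1}_z\widehat\Lambda^{\Bi}_zG^{B_1*}_{\bar z}$, is the resolvent of a self-adjoint operator and that the set of $z$ for which it holds equals $\varrho(A^{B_1}_{B_0,B_2})\cap\varrho(A_{B_1})$.

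The remaining point is to identify that operator with the original $A_{\Bi}$. For this I would expand the right-hand side of \eqref{alt-res} using \eqref{res1.1}, \eqref{GB1}, \eqref{GB1*}, collect the six resulting terms, and check that the sum equals $R_z+\begin{bmatrix}G^1_z&G^2_z\end{bmatrix}\Lambda^{\Bi}_z\begin{bmatrix}G^{1*}_{\bar z}\\ G^{2*}_{\bar z}\end{bmatrix}$ with $\Lambda^{\Bi}_z$ as in \eqref{LB-new}; this is precisely the chain of equalities displayed in \eqref{res-new}, and the key algebraic input is the block-inversion identity \eqref{MB-1}--\eqref{LB-new2} which shows the $2\times2$ matrix $\Lambda^{\Bi}_z$ factors through $\widehat\Lambda^{\Bi}_z$. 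Since \eqref{resolvent} with this $\Lambda^{\Bi}_z$ is, by Theorem~\ref{Th_Krein}, the resolvent of $A_{\Bi}$ on $Z_{\Bi}$, and since a self-adjoint operator is determined by its resolvent at one point, the two self-adjoint operators coincide; then \cite[Theorem 2.19 and Remark 2.20]{CFP} gives $\widehat Z_{\Bi}=Z_{\Bi}=\varrho(A_{\Bi})\cap\varrho(A_{B_1})$.

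The main obstacle I anticipate is bookkeeping on the Banach-space scales: one must make sure that at every stage the operators $\tau_1G^2_z,\tau_2G^1_z,\Lambda^{B_1}_z,\Lambda^{B_0,B_2}_z,\Sigma^{\Bi}_z$ compose on the correct spaces $\b^*_1,\b^*_2,\b^*_{2,2}$ so that the formal block-matrix inversion \eqref{MB-1} is genuinely justified and $\widehat M^{\Bi}_z\in\B(\b^*_2,\b^*_{2,2})$ with a bounded inverse in $\B(\b^*_{2,2},\b^*_2)$. Once the domains of definition are tracked carefully, the proof is essentially the verification that \eqref{res-new} telescopes, which is routine; so I would present it compactly by simply quoting \eqref{MB-1}, \eqref{LB-new}, \eqref{res-new} and invoking Theorem~\ref{Th_Krein}, Corollary~\ref{cor1}, Corollary~\ref{cor02} and \cite[Theorem 2.19, Remark 2.20]{CFP}.
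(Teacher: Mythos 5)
Your core argument coincides with the paper's: the theorem is proved by the algebraic telescoping of \eqref{res-new}, i.e.\ by starting from the already-established resolvent \eqref{resolvent} of $A_{\Bi}$, inverting $M_{z}^{\Bi}$ by the Schur-complement formula \eqref{MB-1} to get \eqref{LB-new}, expanding, and recognizing $R_{z}^{B_{1}}+G_{z}^{B_{1}}\widehat\Lambda^{\Bi}_{z}G_{\bar z}^{B_{1}*}$ via \eqref{res1}, \eqref{GB1}, \eqref{GB1*}, with \cite[Theorem 2.19 and Remark 2.20]{CFP} giving $\widehat Z_{\Bi}=Z_{\Bi}=\varrho(A_{\Bi})\cap\varrho(A_{B_{1}})$. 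One caveat: your intermediate step invoking Corollary \ref{cor02} ``with $A$ replaced by $A_{B_{1}}$'' is both unnecessary and not literally licensed. That corollary requires $\tau_{2}\in\B(\H_{A_{B_{1}}},\h_{2})$ with dense kernel and range, plus $\tau_{2}G^{B_{1}}_{z}|\b_{2}^{*}\in\B(\b_{2}^{*},\b_{2})$; but on $\dom(A_{B_{1}})=\{R_{z}v+G^{1}_{z}\phi_{1}\}$ the extended $\tau_{2}$ is a priori only $\b_{2}$-valued (via \eqref{tauG}), not $\h_{2}$-valued, so these hypotheses would need separate verification (this is why the paper's remark after the theorem only says the formula is ``of the same kind'' as \eqref{res2.1}). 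Since your second step already identifies the right-hand side of \eqref{alt-res} with the known resolvent of the self-adjoint $A_{\Bi}$, you should simply drop the Corollary \ref{cor02} detour: self-adjointness comes for free from Theorem \ref{Th_Krein}, and the proof is then exactly the paper's.
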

\begin{remark}Let us notice that the resolvent formula \eqref{alt-res} is of the same kind of the one in \eqref{res2.1}, whenever one replaces $A$ with $A_{B_{1}}$. 
\end{remark}
Let us now introduce the map
$$
\widehat \rho_{\Bi}:\dom(A_{\Bi})\to\h_{2}^{*}\,,\qquad \widehat \rho_{\Bi}(R^{\Bi}_{z}u):=\widehat\Lambda_{z}^{\Bi}G^{B_{1}*}_{\bar z}u\,.
$$
By the definition of $\rho_{\Bi}$ in \eqref{rho_def} and by  \eqref{GB1*}, \eqref{LB-new}, one obtains the relation
$$
\rho_{\Bi}u=\pi_{1}^{*}B_{1}\tau_{1}u\oplus\widehat \rho_{\Bi}u\,.
$$
Then, by using the same kind of arguments as in the proofs of Lemma \ref{abc} and Theorem \ref{Th-add}, one gets the following
\begin{theorem}\label{alt-abc} Let $A_{\Bi}$ be the self-adjoint operator in Theorem \ref{Th-alt-res}. Then, for any $z\in\varrho(A_{\Bi})\cap\varrho(A_{B_{1}})$, one has the representation
$$
\dom(A_{\Bi})=\{u\in\H:u_{z}:=u-G^{B_{1}}_{z}\widehat\rho_{\Bi}u\in \dom(A_{B_{1}})\}\,,
$$
$$
(-A_{\Bi}+z)u=(-A_{B_{1}}+z)u_{z}\,.
$$
Moreover, 
$$ 
A_{\Bi}=\overline {\!A}+\tau_{1}^{*}B_{1}\tau_{1}+\tau_{2}^{*}\widehat \rho_{\Bi}\,,
$$
and
$$
u\in \dom(A_{\Bi})\quad\Longrightarrow\quad B_{2}\tau_{2} u=B_{0}\widehat\rho_{\Bi} u\,.
$$
\end{theorem}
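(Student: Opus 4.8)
The plan is to exploit that the alternative resolvent formula \eqref{alt-res}, namely $R^{\Bi}_{z}=R^{B_{1}}_{z}+G^{B_{1}}_{z}\widehat\Lambda^{\Bi}_{z}G^{B_{1}*}_{\bar z}$, has exactly the structure of the purely singular formula \eqref{res2.1} for $(A,A_{B_{0},B_{2}})$, with the base operator $A$ replaced by $A_{B_{1}}$ and the replacements $R_{z}\rightsquigarrow R^{B_{1}}_{z}$, $\tau_{2}\rightsquigarrow\tau_{2}|_{\dom(A_{B_{1}})}$, $G^{2}_{z}\rightsquigarrow G^{B_{1}}_{z}$, $\Lambda^{\!B_{0},B_{2}}_{z}\rightsquigarrow\widehat\Lambda^{\Bi}_{z}$, $\rho_{B_{0},B_{2}}\rightsquigarrow\widehat\rho_{\Bi}$. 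Accordingly, I would obtain the four assertions by transcribing, with these replacements, the arguments in the proofs of Lemma \ref{abc} and Theorem \ref{Th-add}. As a preliminary I would record that $\widehat\rho_{\Bi}$ is well posed: by the decomposition $\rho_{\Bi}u=\pi_{1}^{*}B_{1}\tau_{1}u\oplus\widehat\rho_{\Bi}u$ established just above the statement (from \eqref{GB1*} and \eqref{LB-new}), $\widehat\rho_{\Bi}$ is the second component of the well-posed map $\rho_{\Bi}$ (Lemma \ref{le-green}), while the first component is controlled by $\tau_{1}:\dom(A_{\Bi})\to\b_{1}$.

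For the domain and the action I would argue exactly as in the first two lines of the proof of Lemma \ref{abc}. Given $u\in\dom(A_{\Bi})$, put $v:=(-A_{\Bi}+z)u$, so $u=R^{\Bi}_{z}v$; by \eqref{alt-res} and the defining relation $\widehat\rho_{\Bi}(R^{\Bi}_{z}v)=\widehat\Lambda^{\Bi}_{z}G^{B_{1}*}_{\bar z}v$ one gets $u=R^{B_{1}}_{z}v+G^{B_{1}}_{z}\widehat\rho_{\Bi}u$, hence $u_{z}:=u-G^{B_{1}}_{z}\widehat\rho_{\Bi}u=R^{B_{1}}_{z}v\in\dom(A_{B_{1}})$ and, applying $-A_{B_{1}}+z$, $(-A_{B_{1}}+z)u_{z}=v=(-A_{\Bi}+z)u$. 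Since $\widehat\rho_{\Bi}$ is defined only on $\dom(A_{\Bi})$, the set on the right-hand side of the claimed domain identity is a priori contained in $\dom(A_{\Bi})$, so this computation yields the asserted equality. The role played by $G_{z}=R_{z}\tau^{*}$ in Lemma \ref{abc} is here played by $G^{B_{1}}_{z}=R^{B_{1}}_{z}\tau_{2}^{*}$, the analogue of \eqref{blw} for the base operator $A_{B_{1}}$.

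For the boundary relation I would mimic the closing computation of the proof of Lemma \ref{abc}: writing $u=R^{\Bi}_{z}v$, using \eqref{alt-res}, the identity $\widehat M^{\Bi}_{z}\widehat\Lambda^{\Bi}_{z}=B_{2}$ immediate from \eqref{wLB1} with $\widehat M^{\Bi}_{z}=B_{0}-B_{2}\tau_{2}G^{B_{1}}_{z}$, and $\widehat\rho_{\Bi}u=\widehat\Lambda^{\Bi}_{z}G^{B_{1}*}_{\bar z}v$, one computes $B_{2}\tau_{2}R^{\Bi}_{z}v=B_{2}G^{B_{1}*}_{\bar z}v+B_{2}\tau_{2}G^{B_{1}}_{z}\widehat\Lambda^{\Bi}_{z}G^{B_{1}*}_{\bar z}v=B_{0}\widehat\Lambda^{\Bi}_{z}G^{B_{1}*}_{\bar z}v=B_{0}\widehat\rho_{\Bi}u$; equivalently, one reads off the second component of the relation $(\pi_{1}^{*}B_{1}\oplus B_{2})\tau u=(1\oplus B_{0})\rho_{\Bi}u$ of Lemma \ref{abc} after inserting the decomposition of $\rho_{\Bi}$. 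For the additive representation I would start from $A_{\Bi}=\overline{\!A}+\tau^{*}\rho_{\Bi}$ (Theorem \ref{Th-add}) and use $\tau^{*}(\phi_{1}\oplus\phi_{2})=\tau_{1}^{*}\phi_{1}+\tau_{2}^{*}\phi_{2}$ together with $\tau_{1}^{*}\pi_{1}^{*}=(\pi_{1}\tau_{1})^{*}=\tau_{1}^{*}$ to rewrite $\tau^{*}\rho_{\Bi}u=\tau_{1}^{*}B_{1}\tau_{1}u+\tau_{2}^{*}\widehat\rho_{\Bi}u$; combined with the domain identity already proved, this gives $A_{\Bi}=\overline{\!A}+\tau_{1}^{*}B_{1}\tau_{1}+\tau_{2}^{*}\widehat\rho_{\Bi}$.

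The steps that need the most care lie upstream of Theorem \ref{alt-abc} rather than inside it: the decomposition $\rho_{\Bi}u=\pi_{1}^{*}B_{1}\tau_{1}u\oplus\widehat\rho_{\Bi}u$ and the well-posedness of $\widehat\rho_{\Bi}$ rest on the full block-matrix computation \eqref{MB-1}--\eqref{res-new}, which requires that each inverse appearing there belong to the stated space of bounded operators for $z\in\varrho(A_{\Bi})\cap\varrho(A_{B_{1}})=\widehat Z_{\Bi}$ (available from Theorem \ref{Th-alt-res}), and in particular that $\tau_{2}G^{B_{1}}_{z}|_{\b_{2}^{*}}\in\B(\b_{2}^{*},\b_{2})$, which follows from \eqref{GB1}, the reformulation of \eqref{tauG} as $\tau_{j}G^{k}_{z}|_{\b_{k}^{*}}\in\B(\b_{k}^{*},\b_{j})$, and \eqref{LB1}. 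I would deliberately avoid the alternative of re-running Theorem \ref{Th_Krein}, Lemma \ref{abc} and Theorem \ref{Th-add} from scratch with $A_{B_{1}}$ as base operator, since that route would force one to check that $\ker(\tau_{2})$ remains dense in $\H$ after restriction to $\dom(A_{B_{1}})$ — a genuinely delicate point — whereas the transcription above only uses facts already established.
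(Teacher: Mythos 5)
Your proposal is correct and follows essentially the route the paper intends: the paper itself only says the theorem is obtained "by using the same kind of arguments as in the proofs of Lemma \ref{abc} and Theorem \ref{Th-add}", and your transcription of those arguments with $A\rightsquigarrow A_{B_{1}}$, $G_{z}\rightsquigarrow G^{B_{1}}_{z}$, $\Lambda^{\!\Bi}_{z}\rightsquigarrow\widehat\Lambda^{\Bi}_{z}$, together with the decomposition $\rho_{\Bi}u=\pi_{1}^{*}B_{1}\tau_{1}u\oplus\widehat\rho_{\Bi}u$, is exactly what is meant. Your preliminary observation that the well-posedness of $\widehat\rho_{\Bi}$ is inherited from that of $\rho_{\Bi}$ via this decomposition, and your remark that one should not re-run Theorem \ref{Th_Krein} with base operator $A_{B_{1}}$, are both sound.
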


\section{The Limiting Absorption Principle and the Scattering Matrix}\label{Sec-LAP}
Now, given the measure space $(M,{\mathcal B},m)$, we suppose that $\H=L^{2}(M,{\mathcal B},m)\equiv L^{2}(M)$. 
Given a measurable $\varphi:M\to [1,+\infty)$, we define the weighted $L^{2}$-space 
\be\label{Lphi}
L_{\varphi}^{2}(M,{\mathcal B},m)\equiv L_{\varphi}^{2}(M):=\{\text{$u:M\to\CO$ measurable} : \varphi u\in L^{2}(M)\}\,.
\ee
By $\varphi\ge 1$,  
$$
L^{2}_{\varphi}(M)\hookrightarrow L^{2}(M)\hookrightarrow L_{\varphi^{-1}}^{2}(M)\simeq L_{\varphi}^{2}(M)^{*}\,.
$$
From now on $\langle\cdot,\cdot\rangle $ and  $\|\cdot\|$ denote the scalar product and the corresponding norm on  $L^{2}(M)$; $\langle\cdot,\cdot\rangle_{\varphi} $ and  $\|\cdot\|_{\varphi}$ denote the scalar product and the corresponding norm on  $L_{\varphi}^{2}(M)$. \par
Then we introduce the following hypotheses: \vskip5pt\par\noindent
(H1) $A_{B_{1}}$ is bounded from above and there exists a positive $\lambda_{1}\ge\sup\sigma(A_{B_{1}})$, such that 
$R^{B_{1}}_{z}\in \B(L^{2}_{\varphi}(M))$ for any $z\in \varrho(A_{B_{1}})$ such that $\text{Re}(z)> \lambda_{1}$; 
\vskip5pt\par\noindent
(H2) $A_{B_{1}}$ satisfies a Limiting Absorption Principle (LAP for short), i.e. there exists a (eventually empty) closed set with zero Lebesgue measure ${e}(A_{B_{1}})\subset \RE$ such that, for all $\lambda\in \RE\backslash{e}(A_{B_{1}})$, the limits
\be\label{LAP1}
R^{B_{1},\pm}_{\lambda}:=\lim_{\epsilon\searrow   0}R^{B_{1}}_{\lambda\pm i\epsilon}
\ee
exist in $\B(L_{\varphi}^{2}(M),L_{\varphi^{-1}}^{2}(M))$ and the maps $z\mapsto R^{B_{1},\pm}_{z}$, where $R^{B_{1},\pm}_{z}\equiv R^{B_{1}}_{z}$ whenever $z\in\varrho(A_{B_{1}})$, are continuous on $(\RE\backslash{e}(A_{B_{1}}))\cup\CO_{\pm}$ to $\B(L_{\varphi}^{2}(M),L_{\varphi^{-1}}^{2}(M))$;
\vskip5pt\par\noindent
(H3) for any compact set $K\subset \RE\backslash{e}(A_{B_{1}})$ there exists $c_{K}>0$ such that for any $\lambda\in K$ and for any $u\in L_{\varphi^{2}}^{2}(M)\cap\ker(R^{B_{1},+}_{\lambda}-R^{B_{1},-}_{\lambda})$ one has
\be\label{(H3)}
\|R^{B_{1},\pm}_{\lambda}u\|\le c_{K}\|u\|_{\varphi^{2}}\,;
\ee 
We split next hypothesis (H4) in two separate points:\vskip5pt\par\noindent
(H4.1) $A_{\Bi}$ is bounded from above;
\vskip5pt\par\noindent
(H4.2) the embedding $\h_{2}\hookrightarrow\b_{2}$ is compact and there exists a positive $\lambda_{2}>\sup\sigma(A_{B_{1}})$, such that  $G_{z}^{B_{1}}\in \B(\h^{*}_{2},L_{\varphi^{2+\eta}}^{2}(M))$ for some $\eta>0$ and for any $z\in\varrho(A_{B_{1}})$ such that $\text{Re}(z)>\lambda_{2}$.
\vskip5pt\par\noindent
Then, $A_{\Bi}$ satisfies a Limiting Absorption Principle as well:
\begin{theorem}\label{LAP} Suppose hypotheses (H1)-(H4) hold. Then the limits
\be\label{LAP2}
R^{\Bi,\pm}_{\lambda}:=\lim_{\epsilon\searrow   0}R^{\Bi}_{\lambda\pm i\epsilon}
\ee
exist in $\B(L_{\varphi}^{2}(M),L_{\varphi^{-1}}^{2}(M))$ for all $\lambda\in\RE\backslash{e}(A_{\Bi})$, where ${e}(A_{\Bi}):={e}(A_{B_{1}})\cup\sigma_{p}(A_{\Bi})$, and 
${e}(A_{\Bi})\backslash{e}(A_{B_{1}})$ is a (possibly empty) discrete set in $\RE\backslash {e}(A_{B_{1}})$; the maps $z\mapsto R^{\Bi,\pm}_{z}$, where $R^{\Bi,\pm}_{z}\equiv R^{\Bi}_{z}$ whenever $z\in\varrho(A_{\Bi})$, are continuous on $(\RE\backslash {e}(A_{\Bi}))\cup\CO_{\pm}$ to $\B(L_{\varphi}^{2}(M),L_{\varphi^{-1}}^{2}(M))$. Moreover 
\be\label{ess}
\sigma_{ess}(A_{\Bi})=\sigma_{ess}(A_{B_{1}})\,.
\ee
\end{theorem}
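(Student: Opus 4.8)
The plan is to derive the Limiting Absorption Principle for $A_{\Bi}$ from that of $A_{B_{1}}$ by applying it to the resolvent formula \eqref{alt-res}, $R^{\Bi}_{z}=R_{z}^{B_{1}}+G_{z}^{B_{1}}\widehat\Lambda^{\Bi}_{z}G_{\bar z}^{B_{1}*}$, treating the second summand as a correction whose boundary values are controlled by those of $R^{B_{1}}_{z}$. First I would observe that $G^{B_{1}}_{z}=(\tau_{2}R^{B_{1}}_{\bar z})^{*}$, so that under (H4.2) and the compact embedding $\h_{2}\hookrightarrow\b_{2}$ the map $z\mapsto G^{B_{1}}_{z}$ extends continuously from $\CO_{\pm}$ to $(\RE\backslash {e}(A_{B_{1}}))\cup\CO_{\pm}$ with values in $\B(\h_{2}^{*},L^{2}_{\varphi^{-1}}(M))$ (indeed with the better weight $\varphi^{2+\eta}$), by duality and composition with the boundary values $R^{B_{1},\pm}_{\lambda}$; symmetrically $G_{\bar z}^{B_{1}*}=\tau_{2}R^{B_{1}}_{z}$ extends continuously with values in $\B(L^{2}_{\varphi}(M),\h_{2})$, and in fact lands in the smaller space $\b_{2}$ by the compact embedding. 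The middle factor $\widehat\Lambda^{\Bi}_{z}=(\widehat M^{\Bi}_{z})^{-1}B_{2}=(B_{0}-B_{2}\tau_{2}G^{B_{1}}_{z})^{-1}B_{2}$ then needs to be shown to have continuous boundary values in $\B(\b_{2},\b_{2}^{*})$ off an exceptional set.

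The key analytic step is to handle the inversion $(\widehat M^{\Bi}_{z})^{-1}$ up to the real axis. I would argue that $z\mapsto \widehat M^{\Bi}_{z}=B_{0}-B_{2}\tau_{2}G^{B_{1}}_{z}\in\B(\b_{2}^{*},\b_{2,2}^{*})$ extends continuously to $(\RE\backslash {e}(A_{B_{1}}))\cup\CO_{\pm}$: writing $\tau_{2}G^{B_{1}}_{z}=(G^{B_{1}}_{\bar z}{}^{*}\,G^{B_{1}}_{z})$-type composition, the factor $G^{B_{1}}_{z}$ maps $\h_2^*$ into $L^2_{\varphi^{-1}}$ continuously up to the boundary while $\tau_2$ followed by the embedding into $\b_2$ picks up the necessary compactness, so $\tau_2 G^{B_1}_z|\b_2^*$ is a norm-continuous family of \emph{compact} operators $\b_2^*\to\b_2$ up to the real axis. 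Hence $\widehat M^{\Bi}_\lambda{}^{\pm}$ is a compact perturbation of the fixed invertible (on $\widehat Z_\Bi$) operator structure, and by the analytic Fredholm alternative (applied to the operator-valued function $\lambda\mapsto \widehat M^{\Bi,\pm}_\lambda$ on the connected components of $\RE\backslash {e}(A_{B_{1}})$, which is invertible for $\mathrm{Im}\,z\ne 0$ large by (H1) since $\widehat Z_\Bi\ne\varnothing$) the inverse $(\widehat M^{\Bi,\pm}_\lambda)^{-1}$ exists and is norm-continuous for $\lambda$ outside a discrete set $e(A_\Bi)\backslash e(A_{B_1})$. I would then identify that discrete set with $\sigma_p(A_\Bi)$ off $e(A_{B_1})$: a nonzero kernel element $\psi\in\b_2$ of $\widehat M^{\Bi,\pm}_\lambda$ produces, via $G^{B_{1},\pm}_\lambda\psi$, an eigenfunction of $A_\Bi$ at $\lambda$ (using Lemma \ref{abc}/Theorem \ref{alt-abc} and hypothesis (H3) to ensure $G^{B_1,\pm}_\lambda\psi\in L^2$ rather than merely $L^2_{\varphi^{-1}}$, i.e. to promote the boundary resolvent value to an honest $L^2$-eigenfunction), and conversely; the zero-Lebesgue-measure of $e(A_{B_1})$ plus discreteness then gives that $\sigma_p(A_\Bi)\cap(\RE\backslash e(A_{B_1}))$ is discrete, so $e(A_\Bi)=e(A_{B_1})\cup\sigma_p(A_\Bi)$ is a legitimate exceptional set.

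Assembling the pieces: for $\lambda\in\RE\backslash e(A_\Bi)$ the three factors $G^{B_{1},\pm}_\lambda$, $(\widehat M^{\Bi,\pm}_\lambda)^{-1}B_2=\widehat\Lambda^{\Bi,\pm}_\lambda$, and $G^{B_{1},\pm}_\lambda{}^{*}$ all have continuous boundary values in the appropriate spaces, with the outer two mapping into/out of $L^2_{\varphi^{\pm 1}}$ by (H4.2); composing gives that $R^{\Bi}_{\lambda\pm i\epsilon}=R^{B_1}_{\lambda\pm i\epsilon}+G^{B_1}_{\lambda\pm i\epsilon}\widehat\Lambda^{\Bi}_{\lambda\pm i\epsilon}G^{B_1*}_{\lambda\mp i\epsilon}$ converges in $\B(L^2_\varphi(M),L^2_{\varphi^{-1}}(M))$ as $\epsilon\searrow 0$, and the limit $R^{\Bi,\pm}_\lambda$ together with the interior resolvent is continuous on $(\RE\backslash e(A_\Bi))\cup\CO_\pm$, since each factor is. Finally \eqref{ess} follows because, off $e(A_{B_1})$, LAP forces $A_\Bi$ to have purely absolutely continuous spectrum away from the discrete eigenvalue set $e(A_\Bi)\backslash e(A_{B_1})$, so $\sigma_{ess}(A_\Bi)\backslash e(A_{B_1})=\sigma_{ess}(A_{B_1})\backslash e(A_{B_1})$; alternatively and more robustly, $R^{\Bi}_z-R^{B_1}_z=G^{B_1}_z\widehat\Lambda^{\Bi}_z G^{B_1*}_{\bar z}$ is compact on $L^2(M)$ for $z\in\widehat Z_\Bi$ — this is where the compact embedding $\h_2\hookrightarrow\b_2$ in (H4.2) does its work, making $\tau_2 R^{B_1}_{\bar z}\in\mathfrak S_\infty$ hence the whole product compact — so by Weyl's theorem $\sigma_{ess}(A_\Bi)=\sigma_{ess}(A_{B_1})$, using (H4.1) and (H1) to guarantee both operators are bounded above and the resolvent sets meet. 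The main obstacle I anticipate is precisely the promotion step in the identification $e(A_\Bi)\backslash e(A_{B_1})=\sigma_p(A_\Bi)\backslash e(A_{B_1})$: showing that a boundary-value kernel vector of $\widehat M^{\Bi,\pm}_\lambda$ yields a genuine $L^2$ eigenfunction (not just an element of $L^2_{\varphi^{-1}}$) requires exactly hypothesis (H3) — a Hölder-type a priori bound at the spectral parameter — and some care in checking that $G^{B_1,\pm}_\lambda$ applied to such a vector lands in the kernel of $R^{B_1,+}_\lambda-R^{B_1,-}_\lambda$ so that (H3) is applicable; everything else is a bookkeeping composition of continuous operator-valued maps.
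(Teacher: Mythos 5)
Your proposal re-derives the LAP from scratch by taking boundary values of each factor in the Kre\u\i n formula $R^{\Bi}_z=R^{B_1}_z+G^{B_1}_z\widehat\Lambda^{\Bi}_zG^{B_1*}_{\bar z}$. The paper does something much shorter: it verifies the hypotheses of the abstract result \cite[Theorem 3.1]{JMPA} (built on Renger's LAP), whose only input concerning the perturbation is that $R^{\Bi}_z$ is a continuous pseudo-resolvent family in $\B(L^2_\varphi(M))$, that $A_{\Bi}$ is bounded above, and that $R^{\Bi}_z-R^{B_1}_z\in{\mathfrak S}_\infty(L^2(M),L^2_{\varphi^{2+\eta}}(M))$, together with (H2) and (H3). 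This difference matters, because your route needs strictly more than (H1)--(H4).

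The genuine gap is at your first step: you claim that $z\mapsto G^{B_1}_z$ (and $z\mapsto\tau_2G^{B_1}_z$) extends continuously to $(\RE\backslash e(A_{B_1}))\cup\CO_\pm$ ``by duality and composition with the boundary values $R^{B_1,\pm}_\lambda$''. This does not follow from (H1)--(H4). Hypothesis (H4.2) gives $G^{B_1}_z\in\B(\h_2^*,L^2_{\varphi^{2+\eta}}(M))$ only for $\mathrm{Re}(z)>\lambda_2$, i.e.\ far from the essential spectrum, while (H2) gives limits of $R^{B_1}_z$ only in $\B(L^2_\varphi(M),L^2_{\varphi^{-1}}(M))$; composing with $\tau_2$ would require knowing that $R^{B_1,\pm}_\lambda$ maps $L^2_\varphi(M)$ continuously into (an extension of) the domain of $\tau_2$, which is nowhere assumed, and near the real axis inside $\sigma_{ess}(A_{B_1})$ the norm of $G^{B_1}_z=(\tau_2R^{B_1}_{\bar z})^*$ may a priori blow up. The existence and continuity of the limits $G^{B_1,\pm}_\lambda$ is precisely hypothesis (H5), which the paper introduces only \emph{after} Theorem \ref{LAP} in order to prove Lemma \ref{bound} (the limits $\widehat\Lambda^{\Bi,\pm}_\lambda$ and the factorized formula \eqref{limRes}); the corresponding statement for $\tau_2G^{1}_z$ is (H7.4). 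Since Theorem \ref{LAP} is asserted under (H1)--(H4) alone, your argument as written proves a weaker statement. The remaining ingredients of your sketch --- the Fredholm/discreteness analysis, the identification of the exceptional set with $\sigma_p(A_{\Bi})$ via the promotion step using (H3), and the Weyl-theorem argument for \eqref{ess} based on the compactness supplied by the embedding $\h_2\hookrightarrow\b_2$ --- are in the right spirit (they mirror what happens inside the cited abstract theorem, which however runs the argument at the level of the full resolvents in the weighted $L^2$ scale rather than on the factor $\widehat M^{\Bi}_z$), and your compactness and Weyl steps coincide with the paper's.
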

\begin{proof}  We use \cite[Theorem 3.1]{JMPA} (which builds on \cite{Reng}). By (H1), \eqref{alt-res} and (H4.2), $R^{B_{1}}_{z}$ and $R^{\Bi}_{z}$ are in $\B(L^{2}_{\varphi}(M))$ and  $z\mapsto R^{B_{1}}_{z}$ and $z\mapsto R^{\Bi}_{z}$ are continuous since pseudo-resolvents in $\B(L^{2}_{\varphi}(M))$; $A_{\Bi}$ is bounded from above by (H4.1). Therefore hypothesis (H1) in \cite{JMPA} holds true. Our hypotheses (H2) and (H3) coincides with the same ones in \cite{JMPA}. By (H4.2), the embedding $\b^{*}_{2}\hookrightarrow\h^{*}_{2}$ is compact. From $\widehat\Lambda^{\Bi}_{z}\in\B(\b_{2},\b_{2}^{*})$ and \eqref{alt-res}, follows that 
$R^{\Bi}_{z}-R^{B_{1}}_{z}\in{\mathfrak S}_{\infty}(L^{2}(M),L_{\varphi^{2+\gamma}}^{2}(M))$. Therefore hypothesis (H4) in \cite{JMPA} holds and the statement is a consequence of \cite[Theorem 3.1]{JMPA}. Finally, \eqref{ess} is an immediate consequence of Weyl's Theorem.
\end{proof}
Let us now assume that \vskip8pt\noindent
(H5) the limits
\be\label{limG}
G_{\lambda}^{B_{1},\pm}:=\lim_{\epsilon\searrow   0}G^{B_{1}}_{\lambda\pm i\epsilon}
\ee
exist in $\B(\h_{2} ^{*},L^{2}_{\varphi^{-1}}(M))$ for any $\lambda\in\RE\backslash{e}(A_{B_{1}})$ and the maps $z\mapsto G_{z}^{B_{1},\pm}$, where $G_{z}^{B_{1},\pm}\equiv G^{B_{1}}_{z}$ whenever $z\in\varrho(A_{B_{1}})$, are continuous on $(\RE\backslash{e}(A_{B_{1}}))\cup\CO_{\pm}$ to $\B(\h_{2} ^{*},L^{2}_{\varphi^{-1}}(M))$; moreover, the linear operators $G_{z}^{B_{1},\pm}$ are injective. 
\vskip8pt\noindent
Then, by \cite[Lemma 3.6]{JMPA}, one gets the following:
\begin{lemma}\label{bound}
Assume that (H1)-(H5) hold. Then, for any open and bounded $I$ s.t. $\overline{I}\subset \RE\backslash{e}(A_{\Bi})$, one has 
\be\label{supLambda}
\sup_{(\lambda,\epsilon)\in I\times (0,1)}{\|}\widehat\Lambda^{\Bi}_{\lambda\pm i\epsilon}{\|}_{\h_{2} ,\h_{2}^{*}}<+\infty\,.
\ee
Moreover, for any $\lambda\in \RE\backslash{e}(A_{\Bi})$,  the limits
\be\label{limLambda}
\widehat\Lambda^{\Bi,\pm}_{\lambda}:=\lim_{\epsilon\searrow   0}\widehat\Lambda^{\Bi}_{\lambda\pm i\epsilon}
\ee
exist in $\B(\h_{2} ,\h_{2} ^{*})$ and
\be\label{limRes}
R^{\Bi,\pm}_{\lambda}=R^{B_{1},\pm}_{\lambda}+G^{B_{1},\pm}_{\lambda}\widehat\Lambda^{\Bi,\pm}_{\lambda}(G^{B_{1},\mp}_{\lambda})^{*}\,.
\ee
\end{lemma}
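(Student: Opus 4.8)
The plan is to obtain Lemma~\ref{bound} as a direct consequence of \cite[Lemma 3.6]{JMPA}, exploiting the observation recorded in the Remark following Theorem~\ref{Th-alt-res} that the resolvent formula \eqref{alt-res} has exactly the same form as \eqref{res2.1} once the ``unperturbed'' operator $A$ is replaced by $A_{B_{1}}$. Concretely, I would set up the dictionary
\[
A\rightsquigarrow A_{B_{1}},\quad R_{z}\rightsquigarrow R_{z}^{B_{1}},\quad G_{z}^{2}\rightsquigarrow G_{z}^{B_{1}},\quad M_{z}^{B_{0},B_{2}}\rightsquigarrow\widehat M_{z}^{\Bi},\quad \Lambda_{z}^{\!B_{0},B_{2}}\rightsquigarrow\widehat\Lambda_{z}^{\Bi},\quad A_{B_{0},B_{2}}\rightsquigarrow A_{\Bi},
\]
which is consistent with \eqref{GB1}, \eqref{GB1*}, \eqref{wLB1} and \eqref{res-new}: indeed $G_{z}^{B_{1}}=(\tau_{2}R_{\bar z}^{B_{1}})^{*}$ and $\widehat\Lambda_{z}^{\Bi}=(\widehat M_{z}^{\Bi})^{-1}B_{2}=(B_{0}-B_{2}\tau_{2}G_{z}^{B_{1}})^{-1}B_{2}$ play, relative to $A_{B_{1}}$, the roles that $G_{z}^{2}$ and $\Lambda_{z}^{\!B_{0},B_{2}}$ play relative to $A$ in Corollary~\ref{cor02}.

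Under this dictionary I would verify that the hypotheses of \cite[Lemma 3.6]{JMPA} are met. The Limiting Absorption Principle for the ``unperturbed'' operator (now $A_{B_{1}}$) in $\B(L_{\varphi}^{2}(M),L_{\varphi^{-1}}^{2}(M))$, with exceptional set $e(A_{B_{1}})$, is precisely (H2), and the kernel estimate \eqref{(H3)} is (H3); $A_{\Bi}$ is bounded from above by (H4.1), and $R_{z}^{B_{1}},R_{z}^{\Bi}\in\B(L_{\varphi}^{2}(M))$ with continuous dependence on $z$ there, which was already checked in the proof of Theorem~\ref{LAP} from \eqref{alt-res} and (H4.2). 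The compactness of $\h_{2}\hookrightarrow\b_{2}$ (equivalently $\b_{2}^{*}\hookrightarrow\h_{2}^{*}$) and the mapping property $G_{z}^{B_{1}}\in\B(\h_{2}^{*},L_{\varphi^{2+\eta}}^{2}(M))$ for $\mathrm{Re}(z)>\lambda_{2}$ are (H4.2), the existence and injectivity of the boundary values $G_{\lambda}^{B_{1},\pm}$ are (H5), and finally the Limiting Absorption Principle for $A_{\Bi}$ itself, with exceptional set $e(A_{\Bi})=e(A_{B_{1}})\cup\sigma_{p}(A_{\Bi})$ and $e(A_{\Bi})\backslash e(A_{B_{1}})$ discrete in $\RE\backslash e(A_{B_{1}})$, is Theorem~\ref{LAP}. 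With all of this in force, \cite[Lemma 3.6]{JMPA} delivers simultaneously the uniform bound \eqref{supLambda}, the existence of the limits \eqref{limLambda} in $\B(\h_{2},\h_{2}^{*})$, and the identity \eqref{limRes}, the latter obtained by letting $\epsilon\searrow0$ in \eqref{alt-res}.

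The step requiring the most care is checking that the Fredholm mechanism behind \cite[Lemma 3.6]{JMPA} applies here. First one must show that $z\mapsto\tau_{2}G_{z}^{B_{1}}$ extends to an ${\mathfrak S}_{\infty}(\b_{2}^{*},\b_{2})$-valued map, continuous on $(\RE\backslash e(A_{B_{1}}))\cup\CO_{\pm}$: writing $\tau_{2}G_{z}^{B_{1}}$ through $G_{z}^{B_{1}}$ and ${G^{B_{1}*}_{\bar z}}=\tau_{2}R_{z}^{B_{1}}$ (see \eqref{GB1*}), one uses the compact embedding in (H4.2) together with the extra decay encoded in $L_{\varphi^{2+\eta}}^{2}(M)$ to run a Rellich-type compactness argument, with the boundary limits controlled by (H5). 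Second, and most importantly, one must show that $\widehat M_{\lambda}^{\Bi,\pm}=B_{0}-B_{2}\tau_{2}G_{\lambda}^{B_{1},\pm}$ is boundedly invertible on $\b_{2}^{*}$ for every $\lambda\in\RE\backslash e(A_{\Bi})$; this is exactly where injectivity of $G_{\lambda}^{B_{1},\pm}$ from (H5) is essential, since a nonzero $\phi\in\ker\widehat M_{\lambda}^{\Bi,\pm}$ would yield, through $G_{\lambda}^{B_{1},\pm}\phi$, a genuine eigenfunction of $A_{\Bi}$ at $\lambda$ (a Schur-complement computation on \eqref{res-new} in the spirit of stationary scattering theory), contradicting $\lambda\notin\sigma_{p}(A_{\Bi})$. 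Once pointwise invertibility is secured, the uniform bound \eqref{supLambda} over a compact $\overline I\subset\RE\backslash e(A_{\Bi})$ follows by the standard normalize--extract--contradict argument (if $\|\widehat\Lambda_{\lambda_{n}\pm i\epsilon_{n}}^{\Bi}\|_{\h_{2},\h_{2}^{*}}\to\infty$, rescale, use compactness of $\tau_{2}G_{z}^{B_{1}}$ to extract a limit, and produce a nontrivial element of $\ker\widehat M_{\lambda_{\infty}}^{\Bi,\pm}$), and then \eqref{limLambda} and \eqref{limRes} follow by passing to the limit in \eqref{res-new}$=$\eqref{alt-res}, using the continuity of $G_{z}^{B_{1},\pm}$ and $R_{z}^{B_{1},\pm}$ from (H5) and (H2).
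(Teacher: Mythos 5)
Your proposal is correct and follows essentially the same route as the paper: the paper obtains Lemma \ref{bound} directly from \cite[Lemma 3.6]{JMPA} via exactly the dictionary $A\rightsquigarrow A_{B_{1}}$, $G_{z}^{2}\rightsquigarrow G_{z}^{B_{1}}$, $\Lambda_{z}^{\!B_{0},B_{2}}\rightsquigarrow\widehat\Lambda_{z}^{\Bi}$ suggested by the remark after Theorem \ref{Th-alt-res}, with hypotheses (H2)--(H5) and Theorem \ref{LAP} supplying the assumptions of that lemma. Your additional unpacking of the Fredholm/compactness mechanism inside \cite[Lemma 3.6]{JMPA} is accurate but not needed, since the paper treats that lemma as a black box.
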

By the same reasoning as at the end of \cite[proof of Theorem 5.1]{JMPA}, one can improve the result regarding \eqref{limLambda}:  
\begin{corollary}\label{limbb} Suppose hypotheses (H1)-(H5) hold. Then the limits \eqref{limLambda} exist in $\B(\b_{2},\b_{2}^{*})$.
\end{corollary}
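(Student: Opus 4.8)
The plan is to improve the convergence obtained in Lemma~\ref{bound}, which only gives limits in $\B(\h_{2},\h_{2}^{*})$, by exploiting the resolvent-type identity for $\widehat\Lambda^{\Bi}_{z}$ twice. By Theorem~\ref{Th-alt-res}, formula \eqref{alt-res} presents $R^{\Bi}_{z}$ as a Kre\u\i n resolvent formula for the couple $(A_{\Bi},A_{B_{1}})$, with $G^{B_{1}}_{z}$ and $\widehat\Lambda^{\Bi}_{z}$ in the roles of $G_{z}$ and $\Lambda^{\Bi}_{z}$; hence the computation producing \eqref{PS2} in the proof of Theorem~\ref{Th_Krein} gives, for all $w,z\in\varrho(A_{\Bi})\cap\varrho(A_{B_{1}})$,
\[
\widehat\Lambda^{\Bi}_{w}-\widehat\Lambda^{\Bi}_{z}=(z-w)\,\widehat\Lambda^{\Bi}_{w}\,G^{B_{1}*}_{\bar w}\,G^{B_{1}}_{z}\,\widehat\Lambda^{\Bi}_{z}\,.
\]
I would fix a real $z_{0}>\max\{\lambda_{2},\sup\sigma(A_{\Bi})\}$, so that $z_{0}=\bar z_{0}\in\varrho(A_{\Bi})\cap\varrho(A_{B_{1}})=\widehat Z_{\Bi}$, $\widehat\Lambda^{\Bi}_{z_{0}}\in\B(\b_{2},\b_{2}^{*})$ and, by (H4.2), $G^{B_{1}}_{z_{0}}\in\B(\h_{2}^{*},L_{\varphi^{2+\eta}}^{2}(M))$. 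Writing the identity above once with $w=z_{0}$ and once with the two spectral parameters interchanged, and substituting the first into the $\widehat\Lambda^{\Bi}_{z}$ factor of the second, one obtains
\[
\widehat\Lambda^{\Bi}_{z}=\widehat\Lambda^{\Bi}_{z_{0}}+(z_{0}-z)\,\widehat\Lambda^{\Bi}_{z_{0}}\,G^{B_{1}*}_{\bar z}\,G^{B_{1}}_{z_{0}}\,\widehat\Lambda^{\Bi}_{z_{0}}+(z_{0}-z)^{2}\,\widehat\Lambda^{\Bi}_{z_{0}}\,G^{B_{1}*}_{z_{0}}\,\big(G^{B_{1}}_{z}\,\widehat\Lambda^{\Bi}_{z}\,G^{B_{1}*}_{\bar z}\big)\,G^{B_{1}}_{z_{0}}\,\widehat\Lambda^{\Bi}_{z_{0}}\,.
\]

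The gain of iterating is that in the last term $\widehat\Lambda^{\Bi}_{z}$ is now flanked on both sides by the near-real trace operators $G^{B_{1}}_{z}$ and $G^{B_{1}*}_{\bar z}$, whose limits are provided by (H5), while the factors evaluated at the fixed $z_{0}$ supply the extra weight needed to pass from $\b_{2}$ to $\b_{2}^{*}$. Specializing to $z=\lambda\pm i\epsilon$ with $\lambda\in\RE\backslash e(A_{\Bi})$ and letting $\epsilon\searrow0$, I would examine the three terms separately. The first is constant. In the second, by (the adjoint of) (H5), $G^{B_{1}*}_{\bar z}\to(G^{B_{1},\mp}_{\lambda})^{*}$ in $\B(L_{\varphi}^{2}(M),\h_{2})$, hence $\widehat\Lambda^{\Bi}_{z_{0}}G^{B_{1}*}_{\bar z}$ converges in $\B(L_{\varphi}^{2}(M),\b_{2}^{*})$ (using $\h_{2}\hookrightarrow\b_{2}$), while $G^{B_{1}}_{z_{0}}\widehat\Lambda^{\Bi}_{z_{0}}$ is a fixed element of $\B(\b_{2},L_{\varphi^{2+\eta}}^{2}(M))\subseteq\B(\b_{2},L_{\varphi}^{2}(M))$ by (H4.2); so the second term converges in $\B(\b_{2},\b_{2}^{*})$. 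For the third term I would first treat the inner block: by (H5), $G^{B_{1}*}_{\bar z}\to(G^{B_{1},\mp}_{\lambda})^{*}$ in $\B(L_{\varphi}^{2}(M),\h_{2})$ and $G^{B_{1}}_{z}\to G^{B_{1},\pm}_{\lambda}$ in $\B(\h_{2}^{*},L_{\varphi^{-1}}^{2}(M))$, and by Lemma~\ref{bound}, $\widehat\Lambda^{\Bi}_{z}\to\widehat\Lambda^{\Bi,\pm}_{\lambda}$ in $\B(\h_{2},\h_{2}^{*})$, so, being a product of three operator families converging in norm with matching intermediate spaces $\h_{2}$ and $\h_{2}^{*}$, $G^{B_{1}}_{z}\widehat\Lambda^{\Bi}_{z}G^{B_{1}*}_{\bar z}$ converges in $\B(L_{\varphi}^{2}(M),L_{\varphi^{-1}}^{2}(M))$. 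The outer factors $G^{B_{1}}_{z_{0}}\widehat\Lambda^{\Bi}_{z_{0}}\in\B(\b_{2},L_{\varphi}^{2}(M))$ and $\widehat\Lambda^{\Bi}_{z_{0}}G^{B_{1}*}_{z_{0}}\in\B(L_{\varphi^{-1}}^{2}(M),\b_{2}^{*})$ are constant and bounded by (H4.2) (for the latter, use $L_{\varphi^{-1}}^{2}(M)\hookrightarrow L_{\varphi^{-2-\eta}}^{2}(M)\simeq L_{\varphi^{2+\eta}}^{2}(M)^{*}$ and $\h_{2}\hookrightarrow\b_{2}$); hence the third term, too, converges in $\B(\b_{2},\b_{2}^{*})$.

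Summing the three contributions, $\widehat\Lambda^{\Bi}_{\lambda\pm i\epsilon}$ converges in $\B(\b_{2},\b_{2}^{*})$ as $\epsilon\searrow0$; since $\B(\b_{2},\b_{2}^{*})$ embeds continuously into $\B(\h_{2},\h_{2}^{*})$ (restrict the domain and enlarge the range), this limit must coincide with $\widehat\Lambda^{\Bi,\pm}_{\lambda}$ of Lemma~\ref{bound}, which therefore lies in $\B(\b_{2},\b_{2}^{*})$, proving the claim. The argument mirrors the one closing the proof of \cite[Theorem~5.1]{JMPA}; the only delicate point is the bookkeeping of the weighted $L^{2}$-spaces and of the embeddings $\h_{2}\hookrightarrow\b_{2}$, $\b_{2}^{*}\hookrightarrow\h_{2}^{*}$, $L_{\varphi^{2+\eta}}^{2}(M)\hookrightarrow L_{\varphi}^{2}(M)$ and their duals. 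Note that a single application of the resolvent identity would flank $\widehat\Lambda^{\Bi}_{z}$ by a regularizing factor on one side only and would improve the limit merely to $\B(\b_{2},\h_{2}^{*})$; it is precisely in order to reach the smaller range space $\b_{2}^{*}$ that the identity must be used twice, which is the main (if mild) obstacle.
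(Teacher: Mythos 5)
Your argument is correct and is precisely the reasoning the paper delegates to the end of the proof of Theorem 5.1 in \cite{JMPA}: the twice-iterated resolvent-type identity for $\widehat\Lambda^{\Bi}_{z}$ sandwiches the only non-constant factor between $G^{B_{1}}_{z}$ and $G^{B_{1}*}_{\bar z}$, whose limits are supplied by (H5) and Lemma \ref{bound}, while the fixed factors at $z_{0}$ provide, via (H4.2), the passage from $\h_{2},\h_{2}^{*}$ to $\b_{2},\b_{2}^{*}$. The bookkeeping of the weighted spaces and of the embeddings, including $L^{2}_{\varphi^{-1}}(M)\hookrightarrow L^{2}_{\varphi^{2+\eta}}(M)^{*}$, is carried out correctly, as is the identification of the improved limit with the one of Lemma \ref{bound}.
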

Before stating the next results, we recall the following: 
\begin{definition} Given two self-adjoint operators $A_{1}$ and $A_{2}$ in the Hilbert space $\H$, we say that completeness holds for the scattering couple $(A_{1},A_{2})$ whenever  the strong limits 
$$
W_{\pm}(A_ {1},A_{2})
:=\text{s-}\lim_{t\to\pm\infty}e^{itA_{1}}e^{-itA_{2}}P_{2}^{ac}
\,,\qquad
W_{\pm}(A_{2},A_{1})
:=\text{s-}\lim_{t\to\pm\infty}e^{itA_{2}}e^{-itA_{1}}P_{1}^{ac}\,,
$$ 
exist everywhere in $\H$ and 
$$\ran(W_{\pm}(A_ {1},A_{2}))=\H_{1}^{ac }\,,\qquad\ran(W_{\pm}(A_{2},A_{1}))=\H_{2}^{ac }\,,
$$
$$ W_{\pm}(A_ {1},A_{1})^{*}=W_{\pm}(A_{2},A_ {1})\,,
$$ 
where $P_{k}^{ac}$ denotes the orthogonal projector onto the absolutely continuous subspace $\H_{k}^{ac}$ of $A_{k}$. Furthermore, we say the asymptotic completeness holds for the scattering couple $(A_{1},A_{2})$ whenever, beside completeness, one has 
$$
\H_{1}^{ac }=(\H_{1}^{pp})^{\perp}\,,\qquad \H_{2}^{ac }=(\H_{2}^{pp})^{\perp}\,,
$$ 
where $\H_{k}^{pp}$ denotes the pure point subspace of $A_{k}$; equivalently, whenever $
\sigma_{sc}(A_{1})=\sigma_{sc}(A_{2})=\varnothing $.
\end{definition}  
Our next hypothesis is \vskip8pt\noindent
(H6) completeness hold for the scattering  couple $(A_{B_{1}},A)$. 
\vskip8pt\noindent
\begin{theorem}\label{AC} Suppose that (H1)-(H6) hold. Then completeness holds for the couple $(A_{\Bi},A)$. \par 
If furthermore $\sigma_{sc}(A)=\varnothing $ and 
\vskip3pt\noindent
$i)$ the set of accumulation points of $e(A_{B_{1}})\cap \mathring{\sigma}_{ess}(A_{B_{1}})$ is discrete in $\mathring{\sigma}_{ess}(A_{B_{1}})$,
\newline
$ii)$ 
the boundary of ${\sigma}_{ess}(A_{B_{1}})$ is countable, 
\vskip3pt\noindent
then asymptotic completeness holds for the couple $(A_{\Bi},A)$.
\end{theorem}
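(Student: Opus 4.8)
The plan is to establish completeness for $(A_{\Bi},A)$ by a chain rule argument through the intermediate operator $A_{B_{1}}$, and then to upgrade completeness to asymptotic completeness by verifying that the singular continuous spectrum of $A_{\Bi}$ is empty. First I would prove completeness for the couple $(A_{\Bi},A_{B_{1}})$ using the stationary/smooth scheme that underlies Theorem \ref{LAP}: by the alternative resolvent representation \eqref{alt-res}, $R^{\Bi}_{z}-R^{B_{1}}_{z}$ factorizes through the weakly $A_{B_{1}}$-smooth operators $G^{B_{1}*}_{\bar z}$ and $G^{B_{1}}_{z}$, whose boundary values exist by (H5); together with the uniform bound \eqref{supLambda} on $\widehat\Lambda^{\Bi}_{\lambda\pm i\epsilon}$ and the limit \eqref{limRes}, the standard Birman--Yafaev stationary criterion (as in \cite{JMPA}) gives the existence and completeness of $W_{\pm}(A_{\Bi},A_{B_{1}})$ and $W_{\pm}(A_{B_{1}},A_{\Bi})$, i.e. completeness for $(A_{\Bi},A_{B_{1}})$. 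Then I would invoke (H6), completeness for $(A_{B_{1}},A)$, and compose: the chain rule for wave operators yields existence of $W_{\pm}(A_{\Bi},A)=W_{\pm}(A_{\Bi},A_{B_{1}})\,W_{\pm}(A_{B_{1}},A)$ on $\H^{ac}$, and the ranges multiply out correctly because each factor is a partial isometry onto the appropriate absolutely continuous subspace; symmetrically for $W_{\pm}(A,A_{\Bi})$. This gives the first assertion.

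For the second assertion, I would argue that, under the extra hypotheses, $\sigma_{sc}(A_{\Bi})=\varnothing$ and $\sigma_{sc}(A)=\varnothing$. The latter is assumed. For $A_{B_{1}}$, the LAP of (H2) together with (H3) already forces absolute continuity of the spectrum on $\RE\backslash e(A_{B_{1}})$; the hypotheses $i)$ and $ii)$ are exactly what is needed so that $e(A_{B_{1}})$ meets the essential spectrum only in a set that is, up to a countable set, discrete, hence has no contribution to the continuous part, so $\sigma_{sc}(A_{B_{1}})=\varnothing$. Combined with (H6) this gives $\H^{ac}_{B_{1}}=(\H^{pp}_{B_{1}})^{\perp}$. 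Now apply Theorem \ref{LAP}: the exceptional set $e(A_{\Bi})=e(A_{B_{1}})\cup\sigma_{p}(A_{\Bi})$, with $e(A_{\Bi})\setminus e(A_{B_{1}})$ discrete in $\RE\setminus e(A_{B_{1}})$, and the LAP bounds \eqref{LAP2} hold off $e(A_{\Bi})$; standard arguments (the LAP bound gives, via the limiting values in $\B(L^{2}_{\varphi},L^{2}_{\varphi^{-1}})$, local absolute continuity of the spectral measure of $A_{\Bi}$ on $\RE\setminus e(A_{\Bi})$) show that $A_{\Bi}$ has purely absolutely continuous spectrum away from the closed countable set $e(A_{\Bi})$, so $\sigma_{sc}(A_{\Bi})=\varnothing$. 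Hence $\H^{ac}_{\Bi}=(\H^{pp}_{\Bi})^{\perp}$ and, with $\sigma_{sc}(A)=\varnothing$, asymptotic completeness for $(A_{\Bi},A)$ follows from the already-established completeness.

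I expect the main obstacle to be the bookkeeping of the exceptional sets and the precise argument that the LAP yields absence of singular continuous spectrum for $A_{\Bi}$ at the right level of generality: one must carefully use that $e(A_{B_{1}})$ has zero Lebesgue measure, that $e(A_{\Bi})\setminus e(A_{B_{1}})$ is discrete in $\RE\setminus e(A_{B_{1}})$ (so that its accumulation points lie in the already-excluded $e(A_{B_{1}})$), and hypotheses $i)$--$ii)$ to control the behaviour at the boundary and at accumulation points of the exceptional set inside $\mathring\sigma_{ess}(A_{B_{1}})$ — it is precisely there that a possible embedded dense set of eigenvalues or a singular continuous part could a priori hide, and excluding it requires that $e(A_{\Bi})$ be, modulo a countable set, closed and discrete in the interior of the essential spectrum. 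The wave operator chain rule and the stationary construction of $W_{\pm}(A_{\Bi},A_{B_{1}})$ are, by contrast, routine given Lemma \ref{bound} and the apparatus of \cite{JMPA}.
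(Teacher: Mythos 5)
Your proposal follows the paper's proof in both halves. For completeness, the paper derives the Green-type identity \eqref{alt-gf} from \eqref{alt-res} and invokes \cite[Theorems 2.8 and 3.8]{JMPA} together with Lemma \ref{bound} to get completeness for $(A_{\Bi},A_{B_{1}})$, then composes with (H6) via the chain rule — exactly your route. For the second half, the strategy (LAP $\Rightarrow$ local absolute continuity of the spectral measure off the exceptional set, plus countability of the remaining set, plus the fact that a countable set cannot carry a singular continuous measure) is also the paper's.

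There is, however, one step that fails as literally written: you claim that $A_{\Bi}$ is purely absolutely continuous away from ``the closed countable set $e(A_{\Bi})$''. The set $e(A_{\Bi})=e(A_{B_{1}})\cup\sigma_{p}(A_{\Bi})$ need not be countable: $e(A_{B_{1}})$ is only assumed closed with zero Lebesgue measure (it could be an uncountable Cantor-type set), and hypotheses $(i)$--$(ii)$ constrain it only inside $\mathring{\sigma}_{ess}(A_{B_{1}})$ and on $\partial\sigma_{ess}(A_{B_{1}})$. The missing observation — which the paper makes first and your closing paragraph gestures at without carrying out — is that for $u\perp\H_{\Bi}^{pp}$ one has $\text{supp}(\mu_{u}^{\Bi})\subseteq\sigma_{cont}(A_{\Bi})\subseteq\sigma_{ess}(A_{\Bi})=\sigma_{ess}(A_{B_{1}})$ by \eqref{ess}, so only $e(A_{\Bi})\cap\sigma_{ess}(A_{B_{1}})$ is relevant. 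That set is countable: $(i)$ forces $e(A_{B_{1}})\cap\mathring{\sigma}_{ess}(A_{B_{1}})$ to be countable (a set with a countable set of accumulation points is countable), $(ii)$ handles $\partial\sigma_{ess}(A_{B_{1}})$, and Theorem \ref{LAP} makes $e(A_{\Bi})\backslash e(A_{B_{1}})$ discrete in $\RE\backslash e(A_{B_{1}})$, hence countable. One then writes $\mathring{\sigma}_{ess}(A_{B_{1}})$ minus this set as a countable union of open intervals on which the standard LAP argument of \cite{Agmon}, \cite{ReSi IV} yields absolute continuity; the paper's nested decomposition into the intervals $I_{n,m,\ell}$ is exactly this bookkeeping. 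With that restriction to the essential spectrum made explicit, your argument coincides with the paper's.
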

\begin{proof} By \eqref{alt-res} and by the same proof as in Lemma \ref{le-green}, one gets 
\begin{equation}\label{alt-gf}
\langle u,A_{\Bi}v\rangle_{L^{2}(M)}-\langle A_{B_{1}}
u,v\rangle_{L^{2}(M)}=
\langle\tau_{2} u,\widehat \rho_{\Bi} v\rangle_{\h_{2},\h_{2}^{*}}\,,\quad u\in \dom(A_{B_{1}}),\,v\in
\dom(A_{\Bi})\,,
\end{equation}
where
\begin{equation*}
\widehat\rho_{\Bi}:\dom(A_{\Bi})\rightarrow\h_{2}^{*}\,,\quad
\widehat\rho_{\Bi}(R_{z}^{\Bi}u):=\widehat\Lambda_{z}^{\Bi}G^{B_{1}*}_{\bar{z}}u\,,\qquad u\in \H\,, \quad z\in\varrho(A_{\Bi})\cap\varrho(A_{B_{1}})\,.
\end{equation*}
Then, by hypotheses (H1)-(H5) and by \cite[Theorems 2.8 and 3.8]{JMPA} (compare \eqref{alt-gf} and Lemma \ref{bound} here with (2.19) and Lemma 3.6 there and notice that hypothesis (H6) there is included in our hypothesis (H4)) one gets the completeness for the couple $(A_{\Bi},A_{B_{1}})$. By (H6) and the chain rule for the wave operators (see \cite[Theorem 3.4, Chapter X]{Kato}), one then gets completeness for the scattering couple $(A_{\Bi},A)$. \par 
To conclude the proof it remains to show that 
$\sigma_{sc}(A_{\Bi})=\varnothing$. Let $\H_{\Bi}^{pp}$ denote the pure point subspace of $A_{\Bi}$ and,  given $u\in (\H_{\Bi}^{pp})^{\perp}$, we denote by $\mu_{u}^{\Bi}$ be the corresponding spectral measure. By our choice of $u$, one gets $\text{supp}(\mu_{u}^{\Bi})\subseteq\sigma_{cont}(A_{\Bi})\subseteq\sigma_{ess}(A_{\Bi})=\sigma_{ess}(A_{B_{1}})$. Let us define
\begin{align*}
&e_{ess}(A_{B_{1}}):=e(A_{B_{1}})\cap \mathring{\sigma}_{ess}(A_{B_{1}})\,,
\\
&e_{ess}(A_{\Bi}):=(e(A_{B_{1}})\cup\sigma_{p}(A_{\Bi}))\cap \mathring{\sigma}_{ess}(A_{B_{1}})\,,
\end{align*}
and denote by $e'_{ess}(A_{B_{1}})$ the set of accumulation points of $e_{ess}(A_{B_{1}})$. Since an open set minus a discrete subset is still open, one has
$$
\mathring{\sigma}_{ess}(A_{B_{1}})\backslash e'_{ess}(A_{B_{1}})=\bigcup_{n\ge 1} I_{n}\,,
$$
where the $I_{n}$'s are open intervals. Moreover, since $I_{n}\cap e'_{ess}(A_{B_{1}})=\varnothing$, then $I_{n}\cap e_{ess}(A_{B_{1}})$ is discrete in $I_{n}$ and so $I_{n}\backslash (I_{n}\cap e_{ess}(A_{B_{1}}))$ is open. This yields
$$
I_{n}\backslash (I_{n}\cap e_{ess}(A_{B_{1}}))=\bigcup_{m\ge 1} I_{n,m}\,,
$$ 
where the $I_{n,m}$'s are open intervals. By Theorem \ref{LAP}, the set of accumulation points of $e(A_{\Bi})\backslash e(A_{B_{1}})$ is contained in $e(A_{B_{1}})$; therefore $I_{n,m}\cap (e(A_{\Bi})\backslash e(A_{B_{1}}))$ is discrete in $I_{n,m}$. As before, \par\noindent ${I_{n,m}\backslash (I_{n,m}\cap (e(A_{\Bi})\backslash e(A_{B_{1}})))}$ is open and we get
$$
I_{n,m}\backslash (I_{n,m}\cap (e(A_{\Bi})\backslash e(A_{B_{1}})))=\bigcup_{\ell\ge 1} I_{n,m,\ell}\,,
$$ 
where the $I_{n,m,\ell}$'s are open intervals. Hence,
\begin{align*}
&\mathring{\sigma}_{ess}(A_{B_{1}})\backslash e_{ess}(A_{\Bi})=
\mathring{\sigma}_{ess}(A_{B_{1}})\backslash (e_{ess}(A_{B_{1}})\cup 
e_{ess}(A_{\Bi})\backslash e_{ess}(A_{B_{1}}))\\
=&\Big(\bigcup_{n\ge 1} I_{n}\cup e'_{ess}(A_{B_{1}})\Big)\backslash
\big(e_{ess}(A_{B_{1}})\cup 
e_{ess}(A_{\Bi})\backslash e_{ess}(A_{B_{1}})\big)\\
=&\Big(\Big(\bigcup_{n\ge 1} I_{n}\backslash e_{ess}(A_{B_{1}})\Big)\cup
(e'_{ess}(A_{B_{1}})\backslash e_{ess}(A_{B_{1}}))\Big)\backslash(e(A_{\Bi})\backslash e(A_{B_{1}}))
\\
=&\Big(\bigcup_{n,m\ge 1} I_{n,m}\cup
(e'_{ess}(A_{B_{1}})\backslash e_{ess}(A_{B_{1}}))\Big)\backslash(e(A_{\Bi})\backslash e(A_{B_{1}}))
\\
=&\Big(\bigcup_{n,m\ge 1} I_{n,m}\backslash(e(A_{\Bi})\backslash \big(e(A_{B_{1}})\Big)\cup
(e'_{ess}(A_{B_{1}})\backslash e_{ess}(A_{\Bi})\big)
\\
=&\Big(\bigcup_{n,m,\ell\ge 1} I_{n,m,\ell}\Big)\cup
(e'_{ess}(A_{B_{1}})\backslash e_{ess}(A_{\Bi})\big)\,.
\end{align*}
This gives
\begin{align*}
\text{supp}(\mu_{u}^{\Bi})\subseteq &\sigma_{ess}(A_{B_{1}})
=\big(\mathring{\sigma}_{ess}(A_{B_{1}})\backslash e_{ess}(A_{\Bi})\big)\cup\partial\sigma_{ess}(A_{B_{1}})\cup e_{ess}(A_{\Bi})\\
=&\Big(\bigcup_{n,m,\ell\ge 1} I_{n,m,\ell}\Big)\cup\partial\sigma_{ess}(A_{B_{1}})\cup e_{ess}(A_{\Bi})\cup
e'_{ess}(A_{B_{1}})
\end{align*}
By standard arguments  (see e.g.  \cite[proof of Thm. 6.1]{Agmon} or  \cite[top of page 178]{ReSi IV}) applied to any of the open intervals $I_{n,m,\ell}$, one gets the absolute continuity of the spectral function $\lambda\mapsto \mu_{u}^{\Bi}(-\infty,\lambda]$ on any compact interval in $I_{n,m,\ell}$; hence 
\begin{align*}
\text{supp}((\mu_{u}^{\Bi})_{sing})\subseteq &
\partial\sigma_{ess}(A_{B_{1}})\cup e_{ess}(A_{\Bi})\cup
e'_{ess}(A_{B_{1}})\\
=&\partial\sigma_{ess}(A_{B_{1}})\cup e_{ess}(A_{B_{1}})\cup (e_{ess}(A_{\Bi})\backslash e_{ess}(A_{B_{1}}))\cup e'_{ess}(A_{B_{1}})\,.
\end{align*}
By Theorem \ref{LAP}, $e(A_{\Bi})\backslash e(A_{B_{1}})$ is discrete (hence countable) in $\RE\backslash e(A_{B_{1}})$; by (i) and (ii), the sets  $e'_{ess}(A_{B_{1}})$, $e_{ess}(A_{B_{1}})$ and 
$\partial\sigma_{ess}(A_{B_{1}})$ are countable. Henceforth, the support of the singular continuous component of $\mu_{u}^{\Bi}$ is contained in a countable set. This implies $\text{supp}((\mu_{u}^{\Bi})_{sing})=\varnothing$. Therefore, $u$ has a null projection onto $\H_{\Bi}^{sc}$,  the singular continuous subspace of $A_{\Bi}$. This gives $(\H_{\Bi}^{pp})^{\perp}=\H_{\Bi}^{ac}$, where $\H_{\Bi}^{ac}$ denote the absolutely continuous subspace of $A_{\Bi}$.
\end{proof}
\begin{remark} Since, by Corollary \ref{cor1}, $A_{\Bi}=A_{B_{1}}$ whenever $\Bi=(1,B_{1},0)$, 
Theorem \ref{AC} also provides the asymptotic completeness of the couple $(A_{B_{1}},A)$.
\end{remark}
\subsection{A representation formula for the scattering matrix}
According to Theorem \ref{AC}, under the assumptions there stated, the scattering operator 
$$
{\mathsf S}_{\Bi}:=W_{+}(A_{\Bi},A)^{*}W_{-}(A_{\Bi},A)
$$
is a well defined unitary map.  Let  
\be\label{F0}
F:L^{2}(M)_{ac}\to\int^{\oplus}_{\sigma_{ac}(A)}(L^{2}(M)_{ac})_{\lambda}\,d\eta(\lambda)
\ee
be  a unitary map which diagonalizes the absolutely continuous component of $A$, i.e., a direct integral representation of $L^{2}(M)_{ac}$, the absolutely continuous subspace relative to $A$, with respect to the spectral measure of the absolutely continuous component of $A$  (see e.g. \cite[Section 4.5.1]{BW}). We define the scattering matrix  $${\Sc}^{\Bi}_{\lambda}:(L^{2}(M)_{ac})_{\lambda}\to (L^{2}(M)_{ac})_{\lambda}$$ by the relation (see e.g. \cite[Section 9.6.2]{BW})
$$
F{\mathsf S}_{\Bi}F^{*}u_{\lambda}={\Sc}^{\Bi}_{\lambda}u_{\lambda}
\,.
$$
Now, following the same scheme as in \cite{JMPA}, which uses the Birman-Kato invariance principle and the Birman-Yafaev general scheme in stationary scattering theory, we provide an explicit relation between  ${\Sc}^{\Bi}_{\lambda}$ and $\Lambda^{\!\Bi,+}_{\lambda}:=\lim_{\epsilon\searrow 0}\Lambda^{\!\Bi}_{\lambda+i\epsilon}$. \par 
Given $\mu\in \varrho(A)\cap\varrho(A_{\Bi})$, we consider the scattering couple $(R^{\Bi}_{\mu}, R_{\mu})$ and the strong limits 
$$
W_{\pm}(R^{\Bi}_{\mu},R_{\mu})
:=\text{s-}\lim_{t\to\pm\infty}e^{itR^{\Bi}_{\mu}}e^{-itR_{\mu}}P^{\mu}_{ac}\,,
$$ 
where $P^{\mu}_{ac}$ is the orthogonal projector onto the absolutely continuous subspace of $R_{\mu}$; we prove below that such limits exist everywhere in $L^{2}(M)$. Let ${\mathsf S}^{\mu}_{\Bi}$ the corresponding scattering operator  
$$
{\mathsf S}_{\Bi}^{\mu}:=W_{+}(R^{\Bi}_{\mu},R_{\mu})^{*}W_{-}(R^{\Bi}_{\mu},R_{\mu})\,.
$$
Using the unitary operator $F_{\mu}$ which diagonalizes the absolutely continuous component of $R_{\mu}$, i.e. $(F_{\mu}u)_{\lambda}:=\frac1\lambda(Fu)_{\mu-\frac1\lambda}$, $\lambda\not=0$ such that $\mu-\frac1\lambda\in \sigma_{ac}(A)$, one defines the scattering  matrix  $${\Sc}^{\Bi,\mu}_{\lambda}:(L^{2}(M)_{ac})_{\mu-\frac1\lambda}\to (L^{2}(M)_{ac})_{\mu-\frac1\lambda}$$  corresponding to the scattering operator ${\mathsf S}^{\mu}_{\Bi}$ by the relation
$$
F_{\mu}{\mathsf S}^{\mu}_{\Bi}F_{\mu}^{*}u^{\mu}_{\lambda}={\Sc}^{\Bi,\mu}_{\lambda}u^{\mu}_{\lambda}
\,.
$$ 
We introduce a further hypothesis (H7), which we split in four separate points:\vskip5pt\par\noindent
(H7.1) $A$ is bounded from above and satisfies a Limiting Absorption Principle: there exists a (eventually empty) closed set ${e}(A)\subset\RE$ of zero Lebesgue measure such that for all $\lambda\in \RE\backslash{e}(A)$ the limits
\be\label{LAP0}
R^{\pm}_{\lambda}:=\lim_{\epsilon\searrow   0}R_{\lambda\pm i\epsilon}
\ee
exist in $\B(L_{\varphi}^{2}(M),L_{\varphi^{-1}}^{2}(M))$; 
\vskip5pt\par\noindent
(H7.2) $G^{1}_{z}\in\B(\h_{1}^{*},L^{2}_{\varphi}(M))$ for any $z\in\varrho(A)$ and the limits
\be\label{limG1}
G_{\lambda}^{{1},\pm}:=\lim_{\epsilon\searrow   0}G^{{1}}_{\lambda\pm i\epsilon}
\ee
exist in $\B(\h_{1} ^{*},L^{2}_{\varphi^{-1}}(M))$ for any $\lambda\in\RE\backslash{e}(A)$; 
\vskip5pt\par\noindent
(H7.3) the limits
\be\label{limLB1}
\Lambda^{\!B_{1},\pm}_{\lambda}:=\lim_{\epsilon\searrow   0}\Lambda^{\!B_{1},\pm}_{\lambda\pm i\epsilon}
\ee
exist in $\B(\h_{1},\h_{1}^{*})$ for any $\lambda\in\RE\backslash{e}(A_{B_{1}})$;
\vskip5pt\par\noindent
(H7.4) the limits
\be\label{limtG1}
\tau_{2} G^{1,\pm}_{\lambda}:=\lim_{\epsilon\searrow   0}\tau_{2} G^{1}_{\lambda\pm i\epsilon}
\ee
exist in $\B(\b^{*}_{1},\b_{2})$ for any $\lambda\in\RE\backslash{e}(A_{B_{1}})$.
\begin{remark}\label{rem3.6} By $\tau_{2}G^{1}_{z}=\tau_{2} (\tau_{1}R_{\bar z})^{*}=(\tau_{1} (\tau_{2}R_{ z})^{*})^{*}=(\tau_{1}G^{2}_{\bar z})^{*}$, hypothesis (H7.4) entails the existence in $\B(\b_{2},\b_{1}^{*})$, for any $\lambda\in\RE\backslash{e}(A_{B_{1}})$,  of the limits  
\be
\tau_{1} G^{2,\pm}_{\lambda}:=\lim_{\epsilon\searrow   0}\tau_{1} G^{2}_{\lambda\pm i\epsilon}\,.
\ee
\end{remark}
\begin{remark} Whenever one strengthens hypotheses (H7.2) as in (H5), then, by the same kind of proof that leads to the existence of the limit \eqref{limLambda} (see \cite[Lemma 3.6]{JMPA}), one gets the existence of the limits requested in hypotheses (H7.3).
\end{remark}
\begin{lemma}\label{rmH7} Suppose that (H1)-(H5) and (H7) hold. Then
\be\label{rmH7-1}
R_{\lambda}^{B_{1},\pm}=R^{\pm}_{\lambda}+G_{\lambda}^{{1},\pm}\Lambda^{\!B_{1},\pm}_{\lambda}(G_{\lambda}^{{1},\mp})^{*}\,;
\ee
\be\label{Gz2}
G^{2}_{z}\in \B(\h_{2} ^{*},L^{2}_{\varphi}(M))\,,\qquad z\in\varrho(A_{B_{1}})\cap\varrho(A)\,,
\ee 
the limits
\be\label{limG2}
G_{\lambda}^{{2},\pm}:=\lim_{\epsilon\searrow   0}G^{{2}}_{\lambda\pm i\epsilon}
\ee
exist in $\B(\h_{2} ^{*},L^{2}_{\varphi^{-1}}(M))$ for any $\lambda\in\RE\backslash{e}(A_{B_{1}})$ and
\be\label{limGB1}
G_{\lambda}^{B_{1},\pm}=G_{\lambda}^{2,\pm}+G_{\lambda}^{1,\pm}\Lambda^{\!B_{1},\pm}_{\lambda}\tau_{1}G_{\lambda}^{2,\pm}\,;
\ee
the limits 
$$
\Lambda_{\lambda}^{\!\Bi,\pm}:=
\lim_{\epsilon\searrow   0}\Lambda_{\lambda\pm i\epsilon}^{\!\Bi}
$$
exist in $\B(\h_{1}\oplus \b_{2},\h_{1}^{*}\oplus \b_{2}^{*})$ and
\begin{align}
\Lambda_{\lambda}^{\!\Bi,\pm}=&\begin{bmatrix}\Lambda^{\!B_{1},\pm}_{\lambda}+
\Lambda^{\!B_{1},\pm}_{\lambda}\tau_{1}G^{2,\pm}_{\lambda}\widehat \Lambda^{\Bi,\pm}_{\lambda}\tau_{2}G^{1,\pm}_{\lambda}\Lambda^{\!B_{1},\pm}_{\lambda}&
\Lambda^{\!B_{1},\pm}_{\lambda}\tau_{1}G^{2,\pm}_{\lambda}\widehat \Lambda^{\Bi,\pm}_{\lambda}\\
\widehat \Lambda^{\Bi,\pm}_{\lambda}\tau_{2}G^{1,\pm}_{\lambda}\Lambda_{\lambda}^{\!B_{1},\pm}&
\widehat \Lambda^{\Bi,\pm}_{\lambda}
\end{bmatrix}\label{LBpm}\\
=&\left(1+\begin{bmatrix}\Lambda^{\!B_{1},\pm}_{\lambda}&0\\0&\widehat \Lambda^{\Bi,\pm}_{\lambda}
\end{bmatrix}\begin{bmatrix}\tau_{1}G^{2,\pm}_{\lambda}\widehat \Lambda^{\Bi,\pm}_{\lambda}\tau_{2}G^{1,\pm}_{\lambda}&
\tau_{1}G^{2,\pm}_{\lambda}\\
\tau_{2}G^{1,\pm}_{\lambda}&0
\end{bmatrix}\,\right)\begin{bmatrix}\Lambda^{\!B_{1},\pm}_{\lambda}&0\\0&\widehat \Lambda^{\Bi,\pm}_{\lambda}
\end{bmatrix}\,.\label{LBpm2}
\end{align}
\end{lemma}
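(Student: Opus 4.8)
The plan is to derive each of the claimed limiting identities from the already-established resolvent relations together with hypotheses (H1)--(H5) and (H7), proceeding in the order in which the objects are built. First I would observe that \eqref{rmH7-1} is nothing but the limit of the resolvent formula \eqref{res1.1}, i.e. $R^{B_1}_z = R_z + G^1_z\Lambda^{B_1}_z G^{1*}_{\bar z}$: by (H7.1) the limits $R^{\pm}_\lambda$ exist in $\B(L^2_\varphi,L^2_{\varphi^{-1}})$, by (H7.2) the limits $G^{1,\pm}_\lambda$ exist in $\B(\h_1^*,L^2_{\varphi^{-1}})$ and $G^1_z\in\B(\h_1^*,L^2_\varphi)$ (so the adjoint $G^{1*}_{\bar z}=\tau_1 R_z$ converges in $\B(L^2_\varphi,\h_1)$), and by (H7.3) the limits $\Lambda^{B_1,\pm}_\lambda$ exist in $\B(\h_1,\h_1^*)$; composing these three convergent families in the spaces that match up gives \eqref{rmH7-1} for all $\lambda\in\RE\setminus e(A_{B_1})$, since $e(A)\subseteq e(A_{B_1})$ is built into the hypotheses.

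Next I would treat $G^2_z$ and its limit. The key is the identity $G^{B_1}_z = G^2_z + G^1_z\Lambda^{B_1}_z\tau_1 G^2_z$ from \eqref{GB1}, which can be solved for $G^2_z$, or alternatively one uses $G^2_z = (\tau_2 R_{\bar z})^*$ directly. For \eqref{Gz2}, write $G^2_z = (\tau_2 R_{\bar z})^*$ and note $\tau_2\in\B(\H_A,\h_2)$ together with $R_{\bar z}\in\B(L^2_{\varphi^{-1}},\cdot)$ gives, after taking adjoints in the weighted pairing, $G^2_z\in\B(\h_2^*,L^2_\varphi)$ for $z\in\varrho(A_{B_1})\cap\varrho(A)$; the passage to the boundary value uses (H7.4) in the form $\tau_2 G^1_z \to \tau_2 G^{1,\pm}_\lambda$ together with the already-obtained limits of $G^1_z$, $\Lambda^{B_1}_z$, and $G^{B_1}_z$ (the latter from (H5)), again solving the relation \eqref{GB1} for $G^2$. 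This simultaneously yields \eqref{limG2} and, directly from \eqref{GB1}, the factorization \eqref{limGB1}; Remark \ref{rem3.6} guarantees that $\tau_1 G^{2,\pm}_\lambda$ also exists, which is needed below.

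Finally, for the limit of $\Lambda^{\Bi}_\lambda$, I would take the limit in the explicit block formula \eqref{LB-new}--\eqref{LB-new2}. Every entry of the matrix in \eqref{LB-new} is a composition of the maps $\Lambda^{B_1}_z$ (limit by (H7.3)), $\widehat\Lambda^{\Bi}_z$ (limit by Lemma \ref{bound} in $\B(\h_2,\h_2^*)$, improved to $\B(\b_2,\b_2^*)$ by Corollary \ref{limbb}), $\tau_1 G^2_z$ (limit by Remark \ref{rem3.6}) and $\tau_2 G^1_z$ (limit by (H7.4)), so provided the intermediate spaces are compatible one gets convergence in $\B(\h_1\oplus\b_2,\h_1^*\oplus\b_2^*)$ and the identities \eqref{LBpm}, \eqref{LBpm2}. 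The main obstacle is bookkeeping the function spaces: the operator $\Lambda^{\Bi}_z$ a priori lives in $\B(\b_1\oplus\b_2,\b_1^*\oplus\b_2^*)$, but in the limit the first component must be read in the smaller space $\h_1$ (resp. $\h_1^*$), because the boundary values $G^{1,\pm}_\lambda$ and $\Lambda^{B_1,\pm}_\lambda$ are only controlled there; one has to check that each composition in \eqref{LB-new} indeed maps $\h_1\oplus\b_2$ to $\h_1^*\oplus\b_2^*$ after the limit — using that $\tau_1 G^{2,\pm}_\lambda\in\B(\b_2,\b_1^*)$ (hence $\in\B(\b_2,\h_1^*)$ only if $\b_1^*\hookrightarrow\h_1^*$, which holds) and that $\tau_2 G^{1,\pm}_\lambda\in\B(\b_1^*,\b_2)$ can be precomposed with $\Lambda^{B_1,\pm}_\lambda\in\B(\h_1,\h_1^*)$ after noting $\h_1^*\hookrightarrow\b_1^*$. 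Matching these embeddings carefully in each of the four blocks is the only real work; once it is done, everything follows by continuity of composition of bounded operators.
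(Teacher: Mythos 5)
Your overall route coincides with the paper's: \eqref{rmH7-1} by passing to the limit in \eqref{res1.1} using (H7.1)--(H7.3); \eqref{limG2} and \eqref{limGB1} by reading \eqref{GB1} as an identity for $G^{2}_{z}$ and invoking (H5), (H7.2), (H7.3) and Remark \ref{rem3.6}; and \eqref{LBpm}--\eqref{LBpm2} by taking limits blockwise in \eqref{LB-new}--\eqref{LB-new2} with Corollary \ref{limbb}, (H7.3), Remark \ref{rem3.6} and (H7.4). Your bookkeeping of the spaces in the last step (reading the first component in $\h_{1}$, $\h_{1}^{*}$ and using the embeddings $\b_{1}^{*}\hookrightarrow\h_{1}^{*}$, $\h_{1}^{*}\hookrightarrow\b_{1}^{*}$... note the second of these is false; the correct chain is $\h_{1}\hookrightarrow\b_{1}$, hence $\b_{1}^{*}\hookrightarrow\h_{1}^{*}$, and it is $\Lambda^{\!B_{1},\pm}_{\lambda}\in\B(\h_{1},\h_{1}^{*})$ that must be precomposed with maps landing in $\h_{1}$, not in $\b_{1}^{*}$) is the right kind of care, even if one inclusion is stated backwards.

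The step that does not go through as written is your derivation of \eqref{Gz2}. You claim $G^{2}_{z}=(\tau_{2}R_{\bar z})^{*}\in\B(\h_{2}^{*},L^{2}_{\varphi}(M))$ follows directly from $\tau_{2}\in\B(\H_{A},\h_{2})$ together with a mapping property of $R_{\bar z}$ on $L^{2}_{\varphi^{-1}}(M)$. No hypothesis provides any such property of the \emph{free} resolvent on the weighted space: (H1) concerns $R^{B_{1}}_{z}$ on $L^{2}_{\varphi}(M)$, and nothing guarantees that $R_{\bar z}$ maps $L^{2}_{\varphi^{-1}}(M)$ into $\dom(A)$ so that $\tau_{2}$ can even be applied. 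The weighted bound for $G^{2}_{z}$ has to come from the perturbed objects: rewrite \eqref{GB1} as $G^{2}_{z}=G^{B_{1}}_{z}-G^{1}_{z}\Lambda^{\!B_{1}}_{z}\tau_{1}G^{2}_{z}$ and use (H4.2), which gives $G^{B_{1}}_{z}\in\B(\h_{2}^{*},L^{2}_{\varphi^{2+\eta}}(M))\hookrightarrow\B(\h_{2}^{*},L^{2}_{\varphi}(M))$, together with (H7.2) for $G^{1}_{z}\in\B(\h_{1}^{*},L^{2}_{\varphi}(M))$ and the control of $\tau_{1}G^{2}_{z}$ coming from \eqref{tauG}. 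This is exactly the identity you already invoke for the limits \eqref{limG2}; it is needed for the boundedness statement as well. A tell-tale sign is the restriction $z\in\varrho(A_{B_{1}})\cap\varrho(A)$ in \eqref{Gz2}: if the direct adjoint argument worked, the conclusion would hold for every $z\in\varrho(A)$.
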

\begin{proof} The relation \eqref{rmH7-1} is an immediate consequence of \eqref{res1.1} and (H7.1)-(H7.3). By \eqref{GB1}, 
$$
G_{z}^{2}=G^{B_{1}}_{z}-G_{z}^{1}\Lambda^{\!B_{1}}_{z}\tau_{1}G_{z}^{2}
$$
and \eqref{Gz2} follows from (H4.2) and (H7.2). Then, Remark \ref{rem3.6}, (H.5) and (H7.3) entail 
\eqref{limG2} and \eqref{limGB1}. Finally, \eqref{LBpm} and \eqref{LBpm2} are consequence of  \eqref{LB-new}, \eqref{LB-new2}, Corollary \ref{limbb}, (H7.3), Remark \ref{rem3.6} and (H7.4). 
\end{proof}
Before stating the next results, let us notice the relations 
\be\label{RR}
\left(-R_{\mu}+z\right)^{-1}=\frac1z\,\left(1+\frac1{z}\,R_{\mu-\frac1z}\right)\,,
\quad \left(-R_{\mu}^{\Bi}+z\right)^{-1}=\frac1z\,\left(1+\frac1{z}\,R^{\Bi}_{\mu-\frac1z}\right)\,,
\ee
Therefore, by (H7.1) and Theorem \ref{LAP}, the limits 
\be\label{RRlim1}
\left(-R_{\mu}+(\lambda\pm i0)\right)^{-1}:=\lim_{\epsilon\searrow   0}\left(-R_{\mu}+(\lambda\pm i\epsilon)\right)^{-1}\,,\quad \lambda\not=0\,,\ \mu-\frac1\lambda\in\RE\backslash{e}(A)\,,
\ee 
\be\label{RRlim2}
\left(-R_{\mu}^{\Bi}+(\lambda\pm i0)\right)^{-1}:=
\lim_{\epsilon\searrow   0}\left(-R_{\mu}^{\Bi}+(\lambda\pm i\epsilon)\right)^{-1}\,,
\quad \lambda\not=0\,,\ \mu-\frac1\lambda\in\RE\backslash{e}(A_{\Bi})\,,
\ee
exist in $\B(L^{2}_{\varphi}(M),L^{2}_{\varphi^{-1}}(M))$.
\begin{theorem}\label{BK}
Suppose that hypotheses (H1)-(H7) hold. Then the strong limits 
\be\label{WR}
W_{\pm}(R^{\Bi}_{\mu},R_{\mu})
:=\text{s-}\lim_{t\to\pm\infty}e^{itR^{\Bi}_{\mu}}e^{-itR_{\mu}}P^{\mu}_{ac}
\ee
exist everywhere in $L^{2}(M)$. 
Moreover, for any  $\lambda\not=0$ such that $\mu-\frac1\lambda\in \sigma_{ac}(A)\cap(\RE\backslash{e}(A_{\Bi}))$, one has
\be\label{S1}
{\Sc}^{\Bi,\mu}_{\lambda}=1-2\pi i\,\L^{\mu}_{\lambda}
\Lambda^{\!\Bi}_{\mu}\big(1+G^{*}_{\mu}\big(-R_{\mu}^{\Bi}+(\lambda+ i0)\big)^{-1}G_{\mu}\Lambda^{\!\Bi}_{\mu}\big)
(\L^{\mu}_{\lambda})^{*}\,,
\ee
where
\be\label{SR}
\L^{\mu}_\lambda: \h_{1}^{*}\oplus\h_{2}^{*}\to(L^{2}(M)_{ac})_{\mu-\frac1\lambda}\,,\quad 
\L^{\mu}_{\lambda}(\phi_{1}\oplus\phi_{2}):=\frac1\lambda(FG_{\mu}(\phi_{1}\oplus\phi_{2}))_{\mu-\frac1\lambda}\,.
\ee
\end{theorem}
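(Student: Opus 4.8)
The plan is to follow the Birman--Yafaev stationary scheme, in the variant exploited in \cite{JMPA}, applied to the pair of \emph{bounded} self-adjoint operators $(R^{\Bi}_{\mu},R_{\mu})$, and then transport the result back to $(A_{\Bi},A)$ via the invariance principle. First I would establish existence of the strong limits \eqref{WR}: by \eqref{resolvent} and \eqref{RR}, the resolvent difference $(-R^{\Bi}_{\mu}+z)^{-1}-(-R_{\mu}+z)^{-1}=\frac1{z^{2}}\big(R^{\Bi}_{\mu-\frac1z}-R_{\mu-\frac1z}\big)=\frac1{z^{2}}G_{\mu-\frac1z}\Lambda^{\!\Bi}_{\mu-\frac1z}G^{*}_{\overline{\mu-1/\bar z}}$ factors through the trace-like maps; combining hypotheses (H1)--(H7), which guarantee the relevant boundary values of $R_{z}$, $G^{1}_{z}$, $G^{2}_{z}$, $\Lambda^{\!B_{1}}_{z}$, $\widehat\Lambda^{\Bi}_{z}$, hence of $\Lambda^{\!\Bi}_{z}$ by Lemma \ref{rmH7}, one gets the $L^{2}_{\varphi}$-$L^{2}_{\varphi^{-1}}$ smoothness needed to run Birman--Yafaev: the wave operators exist and are complete for $(R^{\Bi}_{\mu},R_{\mu})$. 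The pure point spectrum of $R_{\mu}$ is excluded on the absolutely continuous subspace, and $P^{\mu}_{ac}$ coincides with the projector onto the a.c.\ subspace of $A$ shifted by $z\mapsto\mu-1/z$, so no extra work is needed there.

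Next I would write down the stationary formula for the scattering matrix of $(R^{\Bi}_{\mu},R_{\mu})$. In the Birman--Yafaev scheme the perturbation is presented in the factorized form $(-R^{\Bi}_{\mu}+z)^{-1}-(-R_{\mu}+z)^{-1}=\frac1{z^{2}}\,\mathsf G_{z}\,\mathsf B\,\mathsf G^{*}_{\bar z}$ with $\mathsf G_{z}=G_{\mu-1/z}$ playing the role of the ``left'' factor and $\mathsf B=\Lambda^{\!\Bi}_{\mu-1/z}$ the ``middle'' factor, which however depends on $z$; one absorbs this $z$-dependence using the sandwiched resolvent identity, exactly as the resolvent identity \eqref{PS2} for $\Lambda^{\!\Bi}$ allows one to express $\Lambda^{\!\Bi}_{w}$ in terms of $\Lambda^{\!\Bi}_{z}$, a fixed reference value; the natural choice is to evaluate everything at the fixed point $z=1$ of the Cayley-type map, i.e.\ at $A$-spectral parameter $\mu$ itself, so that the middle term becomes the constant $\Lambda^{\!\Bi}_{\mu}$ and the $z$-dependence is pushed into a factor $\big(1+G^{*}_{\mu}(-R^{\Bi}_{\mu}+(\lambda+i0))^{-1}G_{\mu}\Lambda^{\!\Bi}_{\mu}\big)$. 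The general Birman--Yafaev formula then reads ${\Sc}^{\Bi,\mu}_{\lambda}=1-2\pi i\,Z_{\lambda}\,\Lambda^{\!\Bi}_{\mu}\big(1+G^{*}_{\mu}(-R^{\Bi}_{\mu}+(\lambda+i0))^{-1}G_{\mu}\Lambda^{\!\Bi}_{\mu}\big)Z^{*}_{\lambda}$, where $Z_{\lambda}$ is the ``evaluation at energy $\lambda$'' operator built from the spectral representation $F_{\mu}$ of $R_{\mu}$ composed with $\mathsf G_{\cdot}=G_{\cdot}$; computing $Z_{\lambda}$ explicitly in terms of $F$ and the change of variables $(F_{\mu}u)_{\lambda}=\frac1\lambda(Fu)_{\mu-1/\lambda}$ yields precisely $\L^{\mu}_{\lambda}$ as in \eqref{SR}, with the Jacobian factor $\frac1\lambda$ and the value $\mu-\frac1\lambda\in\sigma_{ac}(A)$. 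Assembling these pieces gives \eqref{S1}.

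The main obstacle, as usual in this circle of ideas, is the rigorous justification of the stationary representation: one must show that the boundary values entering $\L^{\mu}_{\lambda}$ and the sandwiched resolvent are well defined as operators into the fibers $(L^{2}(M)_{ac})_{\mu-1/\lambda}$, that the weak-abstract and strong-stationary wave operators coincide, and that the limits $\epsilon\searrow0$ can be interchanged with the direct-integral decomposition. This is exactly the content of the machinery in \cite{JMPA} (their Theorems on the representation of the scattering matrix), and the verification amounts to checking that our hypotheses (H1)--(H7) supply all the inputs required there: (H7.1)--(H7.2) give the a.c.\ smoothness of $G^{1}_{z}$ relative to $A$; (H4.2) and Lemma \ref{rmH7} give the analogous statement for $G^{2}_{z}$ and $G^{B_{1}}_{z}$; (H2), (H3), (H5) and Lemma \ref{bound}, Corollary \ref{limbb} give the boundary values of $\widehat\Lambda^{\Bi}_{z}$ and, via \eqref{LBpm}, of $\Lambda^{\!\Bi}_{z}$; and (H6) together with the chain rule closes the gap between the couple $(R^{\Bi}_{\mu},R_{\mu})$ and $(A_{\Bi},A)$ at the level of absolute continuity. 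I would therefore present the proof as an application of \cite[Theorems 2.8 and 3.8]{JMPA} — or rather of their direct-integral refinement giving the scattering matrix — to the concrete factorization above, spelling out the identification of $Z_{\lambda}$ with $\L^{\mu}_{\lambda}$ and the reduction of the $z$-dependent middle term to $\Lambda^{\!\Bi}_{\mu}$ by the resolvent identity \eqref{PS2}, and leaving the measure-theoretic fiber-wise estimates to the cited reference.
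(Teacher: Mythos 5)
Your proposal is correct and follows essentially the same route as the paper: the paper applies \cite[Theorem 4', page 178]{Y} to the couple $(R^{\Bi}_{\mu},R_{\mu})$ with the fixed factorization $R^{\Bi}_{\mu}-R_{\mu}=G_{\mu}\Lambda^{\!\Bi}_{\mu}G^{*}_{\mu}$, the bulk of the work being the verification of the weak-$R_{\mu}$-smoothness of $G^{*}_{\mu}$ via \cite[Lemma 2, page 154]{Y} and the estimate \cite[(3.16)]{JMPA}, which you defer to the literature. Note only that no ``absorption of the $z$-dependence'' of the middle factor is needed here, since Yafaev's theorem is applied to the constant operator difference (the identity $\Lambda^{\!\Bi}_{\mu}\big(1+G^{*}_{\mu}(-R^{\Bi}_{\mu}+z)^{-1}G_{\mu}\Lambda^{\!\Bi}_{\mu}\big)=\Lambda^{\!\Bi}_{\mu-\frac1z}$ enters only later, in the proof of Theorem \ref{S-matrix}).
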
 
\begin{proof}
By \eqref{resolvent}, one has $R_{\mu}^{\Bi}-R_{\mu}=G_{\mu}\Lambda^{\! B}_{\mu}G^{*}_{\mu}$ and we can use \cite[Theorem 4', page 178]{Y} (notice that the maps there denoted by $G$ and $V$ corresponds to our $G_{\mu}^{*}$ and $\Lambda^{\! B}_{\mu}$ respectively). Let us check that the hypotheses there required are satisfied. Since $G^{*}_{\mu}\in \B(L^{2}(M),\h_{1}\oplus\h_{2})$, the operator $G_{\mu}$ is $|R_{\mu}|^{1/2}$-bounded. By (H7.2) and \eqref{Gz2}, one has $G_{z}\in\B(\h_{1}^{*}\oplus\h_{2}^{*}, L^{2}_{\varphi}(M))$ for any $z\in\varrho(A_{B_{1}})\cap\varrho(A)\supset[\lambda_{1},+\infty)\ni\mu$. Therefore, by \eqref{RRlim1}, \eqref{RRlim2}, (H7.1), Theorem \ref{LAP} and (H4), the limits 
$$
\lim_{\epsilon\searrow   0}\, G^{*}_{\mu}(-R _{\mu}+(\lambda\pm i\epsilon))^{-1}\,,
$$
$$
\lim_{\epsilon\searrow   0}\, G^{*}_{\mu}(-R^{\Bi}_{\mu}+(\lambda\pm i\epsilon))^{-1}\,,
$$
$$
\lim_{\epsilon\searrow   0}\, G^{*}_{\mu}(-R^{\Bi}_{\mu}+(\lambda\pm i\epsilon))^{-1}G_{\mu}
$$
exist. Therefore, to get the thesis we need to check the validity of the remaining hypothesis in 
\cite[Theorem 4', page 178]{Y}: $G^{*}_{\mu}$ is weakly-$R _{\mu}$ smooth, i.e., by \cite[Lemma 2, page 154]{Y}, 
\be\label{in1.1}
\sup_{0<\epsilon<1}\epsilon\,{\|}G^{*}_{\mu} (-R _{\mu}+(\lambda\pm i\epsilon))^{-1}{\|}_{L^{2}(M),\h_{1}\oplus\h_{2}}^{2}\le c_{\lambda}<+\infty\,,\quad\text{a.e. $\lambda$}\,.
\ee
By \eqref{RR}, this is consequence of
\be\label{in2.2}
\sup_{0<\epsilon<1}\epsilon\,{\|}G^{*}_{\mu}R_{\mu-\frac1\lambda\pm i\epsilon}{\|}_{L^{2}(M),\h_{1}\oplus\h_{2}}^{2}\le C_{\lambda}<+\infty\,,\quad\text{a.e. $\lambda$}\,.
\ee
By \cite[(3.16)]{JMPA},
\begin{align*}
&\epsilon\,{\|}G_{\lambda\pm i\epsilon}{\|}_{\h^{*}_{1}\oplus\h^{*}_{2},L^{2}(M)}^{2}\\
\le& 
\frac12\,(|\mu-\lambda|+\epsilon)\, {\|}G_{\mu}{\|}_{\h_{1}^{*}\oplus\h_{2}^{*},L_{\varphi}^{2}(M)} \left(\|G_{\lambda- i\epsilon}\|_{\h_{1}^{*}\oplus\h_{2}^{*},L_{\varphi^{-1}}^{2}(M)} +
\|G_{\lambda+ i\epsilon}\|_{\h_{1}^{*}\oplus\h_{2}^{*},L_{\varphi^{-1}}^{2}(M)} \right)\,.
\end{align*}
Then, \eqref{in2.2} follows from \eqref{limG1}, \eqref{limG2} and the equality
\begin{align*}
&{\|}G^{*}_{\mu} R _{z}{\|}_{L^{2}(M),\h_{1}\oplus\h_{2}}
={\|}\tau R_{\mu}R _{z}{\|}_{L^{2}(M),\h_{1}\oplus\h_{2}}= 
{\|}\tau R _{z}R _{\mu}{\|}_{L^{2}(M),\h_{1}\oplus\h_{2}}\\
=&
{\|}R _{\mu}(\tau R _{z})^{*}{\|}_{\h_{1}^{*}\oplus\h_{2}^{*},L^{2}(M)}\le
{\|}R _{\mu}{\|}_{L^{2}(M),L^{2}(M)} {\|}G_{\bar z}{\|}_{\h_{1}^{*}\oplus\h_{2}^{*},L^{2}(M)}\,.
\end{align*}
Thus, by \cite[Theorem 4', page 178]{Y}, 
the limits \eqref{WR} exist everywhere in $L^{2}(M)$ and the corresponding scattering matrix is given by \eqref{S1}, where $\L^{\mu}_{\lambda}\phi:=(F^{\mu}G_{\mu}\phi)_{\lambda}=\frac1\lambda(FG_{\mu}\phi)_{\mu-\frac1\lambda}$. 
\end{proof}
\begin{theorem}\label{S-matrix} Suppose that hypotheses (H1)-(H7) hold. Then the scattering matrix of the couple $(A_{\Bi},A)$ has the representation
\begin{equation}\label{S-M}
{\Sc}^{\Bi}_{\lambda}=1-2\pi i\L_{\lambda}\Lambda^{\!\Bi,+}_{\lambda}\L_{\lambda}^{*}\,,\quad \lambda\in\sigma_{ac}(A)\cap(\RE\backslash{e}(A_{\Bi}))\,,
\end{equation}
where  $\L_\lambda: \h_{1}^{*}\oplus\h_{2}^{*}\to(L^{2}(M)_{ac})_{\lambda}$ is the $\mu$-independent linear operator defined by  
\be\label{LLL}
\L_{\lambda}(\phi_{1}\oplus\phi_{2}):=(\mu-\lambda)(FG_{\mu}(\phi_{1}\oplus\phi_{2}))_{\lambda}
\ee
and $\Lambda^{\!\Bi,+}_{\lambda}$ is given in \eqref{LBpm}.
\end{theorem}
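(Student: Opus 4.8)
The plan is to reduce, via the Birman--Kato invariance principle, the scattering couple $(A_{\Bi},A)$ to the couple of resolvents $(R^{\Bi}_{\mu},R_{\mu})$ — for which the stationary formula \eqref{S1} is already available from Theorem \ref{BK} — and then to simplify its ``middle factor''. First I would fix $\mu\in\RE$ large enough that $\mu>\max\{\sup\sigma(A),\sup\sigma(A_{\Bi})\}$, which is legitimate by (H7.1) and (H4.1), and set $\phi(t):=(\mu-t)^{-1}$; this is $C^{1}$ and strictly increasing on $(-\infty,\mu)\supseteq\sigma(A)\cup\sigma(A_{\Bi})$, and $\phi(A)=R_{\mu}$, $\phi(A_{\Bi})=R^{\Bi}_{\mu}$. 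Since $W_{\pm}(A_{\Bi},A)$ exist by Theorem \ref{AC}, the invariance principle (see, e.g., \cite{Y} or \cite[Chapter X]{Kato}) gives $W_{\pm}(R^{\Bi}_{\mu},R_{\mu})=W_{\pm}(A_{\Bi},A)$, hence ${\mathsf S}^{\mu}_{\Bi}={\mathsf S}_{\Bi}$; moreover $L^{2}(M)_{ac}$ is at the same time the absolutely continuous subspace of $A$ and of $R_{\mu}$, and, since $F_{\mu}$ is built from $F$ by $(F_{\mu}u)_{\nu}=\frac1\nu(Fu)_{\mu-\frac1\nu}$, the fibre of $R_{\mu}$ at $\nu\not=0$ is identified with the fibre of $A$ at $\lambda:=\mu-\frac1\nu$. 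Comparing the two diagonalizations one gets ${\Sc}^{\Bi}_{\lambda}={\Sc}^{\Bi,\mu}_{\nu}$ with $\nu=(\mu-\lambda)^{-1}$, for every $\lambda\in\sigma_{ac}(A)\cap(\RE\backslash{e}(A_{\Bi}))$.

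Next I would match the two amplitude maps. Since $\frac1\nu=\mu-\lambda$ and $\mu-\frac1\nu=\lambda$, \eqref{SR} gives $\L^{\mu}_{\nu}(\phi_{1}\oplus\phi_{2})=\frac1\nu(FG_{\mu}(\phi_{1}\oplus\phi_{2}))_{\mu-\frac1\nu}=(\mu-\lambda)(FG_{\mu}(\phi_{1}\oplus\phi_{2}))_{\lambda}$, which is $\L_{\lambda}(\phi_{1}\oplus\phi_{2})$ as defined in \eqref{LLL}; that the latter is independent of $\mu$ follows from $(\mu-\lambda)(FR_{\mu}\psi)_{\lambda}=(F\psi)_{\lambda}$ on $L^{2}(M)_{ac}$, extended to $\psi=\tau^{*}(\phi_{1}\oplus\phi_{2})$ by the standard trace/density argument (see \cite{Y}, and \cite[proof of Theorem 5.1]{JMPA}). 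Thus \eqref{S1} becomes ${\Sc}^{\Bi,\mu}_{\nu}=1-2\pi i\,\L_{\lambda}\,\Xi_{\nu}\,\L^{*}_{\lambda}$ with $\Xi_{\nu}:=\Lambda^{\!\Bi}_{\mu}\big(1+G^{*}_{\mu}(-R^{\Bi}_{\mu}+(\nu+i0))^{-1}G_{\mu}\Lambda^{\!\Bi}_{\mu}\big)$, and the last task is to prove $\Xi_{\nu}=\Lambda^{\!\Bi,+}_{\lambda}$.

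For the latter I would first take $z\in\varrho(R_{\mu})\cap\varrho(R^{\Bi}_{\mu})$, equivalently $w:=\mu-\tfrac1z\in\varrho(A)\cap\varrho(A_{\Bi})=Z_{\Bi}$. Using $G_{\mu}=R_{\mu}\tau^{*}$, $G^{*}_{\mu}=\tau R_{\mu}$ and \eqref{RR}, a short functional-calculus computation gives $P_{z}:=G^{*}_{\mu}(-R_{\mu}+z)^{-1}G_{\mu}=\tau(R_{w}-R_{\mu})\tau^{*}=\tau G_{w}-\tau G_{\mu}$. The second resolvent identity $(-R^{\Bi}_{\mu}+z)^{-1}=(-R_{\mu}+z)^{-1}+(-R_{\mu}+z)^{-1}(R^{\Bi}_{\mu}-R_{\mu})(-R^{\Bi}_{\mu}+z)^{-1}$ together with $R^{\Bi}_{\mu}-R_{\mu}=G_{\mu}\Lambda^{\!\Bi}_{\mu}G^{*}_{\mu}$ then yields $G^{*}_{\mu}(-R^{\Bi}_{\mu}+z)^{-1}G_{\mu}=P_{z}+P_{z}\Lambda^{\!\Bi}_{\mu}G^{*}_{\mu}(-R^{\Bi}_{\mu}+z)^{-1}G_{\mu}$, so that $1+G^{*}_{\mu}(-R^{\Bi}_{\mu}+z)^{-1}G_{\mu}\Lambda^{\!\Bi}_{\mu}=(1-P_{z}\Lambda^{\!\Bi}_{\mu})^{-1}$; on the other hand \eqref{PS2} applied to the pair $(w,\mu)$, combined with $(\mu-w)G^{*}_{\bar w}G_{\mu}=\tau(R_{w}-R_{\mu})\tau^{*}=P_{z}$ (resolvent identity again), gives $\Lambda^{\!\Bi}_{w}-\Lambda^{\!\Bi}_{\mu}=\Lambda^{\!\Bi}_{w}P_{z}\Lambda^{\!\Bi}_{\mu}$, i.e. $\Lambda^{\!\Bi}_{w}=\Lambda^{\!\Bi}_{\mu}(1-P_{z}\Lambda^{\!\Bi}_{\mu})^{-1}$. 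Combining, $\Lambda^{\!\Bi}_{\mu}\big(1+G^{*}_{\mu}(-R^{\Bi}_{\mu}+z)^{-1}G_{\mu}\Lambda^{\!\Bi}_{\mu}\big)=\Lambda^{\!\Bi}_{w}$ for all such $z$. Finally I would take $z=\nu+i\epsilon$, $\epsilon\searrow0$: since $\nu=(\mu-\lambda)^{-1}>0$ one has $\text{Im}\,w=\text{Im}(-1/z)>0$, so $w\to\lambda$ from $\CO_{+}$, and by Lemma \ref{rmH7} (together with Corollary \ref{limbb} and Theorem \ref{LAP}) the right-hand side converges to $\Lambda^{\!\Bi,+}_{\lambda}$ in $\B(\h_{1}\oplus\b_{2},\h_{1}^{*}\oplus\b_{2}^{*})$; hence $\Xi_{\nu}=\Lambda^{\!\Bi,+}_{\lambda}$. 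Together with the first two paragraphs this is \eqref{S-M}.

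The point requiring most care is this last passage to the boundary value: the chain of equalities is valid literally only for $z$ in the resolvent set, and one has to check, using (H1)--(H7), Lemma \ref{bound}, Corollary \ref{limbb} and Lemma \ref{rmH7}, that $P_{z}\Lambda^{\!\Bi}_{\mu}$, $(1-P_{z}\Lambda^{\!\Bi}_{\mu})^{-1}$ and $\Lambda^{\!\Bi}_{w}$ all admit limits in the correct operator topologies as $z\to\nu+i0$; one must also keep track that the substitution $w=\mu-1/z$ sends a limit from $\CO_{+}$ (in $z$) to a limit from $\CO_{+}$ (in $w$), so that it is the $+$ boundary value $\Lambda^{\!\Bi,+}_{\lambda}$ (rather than the $-$ one) that appears — the most natural place for a sign slip. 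A secondary, routine, point is the $\mu$-independence of $\L_{\lambda}$, which is precisely where the compatibility between $F$ and $F_{\mu}$ is used.
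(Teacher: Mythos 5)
Your proposal is correct and follows essentially the same route as the paper: Birman--Kato invariance to identify ${\mathsf S}_{\Bi}$ with ${\mathsf S}_{\Bi}^{\mu}$, the fibre correspondence $\lambda\leftrightarrow(\mu-\lambda)^{-1}$, and the identity $\Lambda^{\!\Bi}_{\mu}\big(1+G^{*}_{\mu}(-R^{\Bi}_{\mu}+z)^{-1}G_{\mu}\Lambda^{\!\Bi}_{\mu}\big)=\Lambda^{\!\Bi}_{\mu-1/z}$ followed by the boundary limit. The only difference is that you re-derive that identity from \eqref{PS2} and the resolvent identities (your computation is correct, including the sign bookkeeping $z\to\nu+i0\Rightarrow w\to\lambda$ from $\CO_{+}$), whereas the paper simply cites it from \cite[Lemma 4.2]{JMPA}.
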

\begin{proof}
By Theorem \ref {AC}, Theorem \ref{BK} and by Birman-Kato invariance principle (see e.g. \cite[Section II.3.3]{BW}), one has
$$
W_{\pm}(A_{\Bi},A)=W_{\pm}(R^{\Bi}_{\mu},R_{\mu})
$$
and so
$$
{\mathsf S}_{\Bi}={\mathsf S}_{\Bi}^{\mu}\,.
$$
Thus, since $(F^{\mu}u)_{\lambda}=\frac1\lambda(Fu)_{\mu-\frac1\lambda}$,  one obtains (see also \cite[Equation (14), Section 6, Chapter 2]{Y})
\be\label{SS}
{\Sc}^{\Bi}_{\lambda}={\Sc}^{\Bi,\mu}_{(-\lambda+\mu)^{-1}}\,.
\ee
By \cite[Lemma 4.2]{JMPA}, for any $z\not=0$ such that $\mu-\frac1z\in \varrho(A_{\Bi})\cap\varrho(A)$, there holds
$$
\Lambda^{\!\Bi}_{\mu}\left(1+G^{*}_{\mu}\left(-R_{\mu}^{\Bi}+z\right)^{-1}G_{\mu}\Lambda^{\! B}_{\mu}\right)=\Lambda^{\!\Bi}_{\mu-\frac1z}\,.
$$
Hence, whenever $z=\lambda\pm i\epsilon$ and $\mu-\frac1{\lambda}\in\RE\backslash{e}(A_{\Bi})$, one gets, as $\epsilon\downarrow 0$, 
$$
\Lambda^{\!\Bi}_{\mu}\big(1+G^{*}_{\mu}\left(-R_{\mu}^{\Bi}+(\lambda\pm i0)\right)^{-1}G_{\mu}\Lambda^{\!\Bi}_{\mu}\big)=\Lambda^{\!\Bi,\pm}_{\mu-\frac1{\lambda}}\,.
$$
The proof is then concluded by Theorem \ref{BK}, by \eqref{SS} and by setting  
$\L_{\lambda}:=\L^{\mu}_{{(-\lambda+\mu)^{-1}}}$. The operator $\L_{\lambda}$ is $\mu$-independent by invariance principle (see the proof in \cite[Corollary 4.3]{JMPA} for an explicit check).
\end{proof}
\begin{remark}
By \eqref{LBpm}, 
$$
\Lambda^{\!\Bi,\pm}_{\lambda}=\begin{bmatrix}\Lambda^{\!B_{1},\pm}_{z}&0\\0&0\end{bmatrix}+\widetilde \Lambda^{\!\Bi,\pm}_{\lambda}\,,
$$
where
$$
\widetilde \Lambda^{\!\Bi,\pm}_{\lambda}:=
\begin{bmatrix}
\Lambda^{\!B_{1},\pm}_{\lambda}\tau_{1}G^{2,\pm}_{\lambda}\widehat \Lambda^{\Bi,\pm}_{\lambda}\tau_{2}G^{1,\pm}_{\lambda}\Lambda^{\!B_{1},\pm}_{\lambda}&
\Lambda^{\!B_{1},\pm}_{\lambda}\tau_{1}G^{2,\pm}_{\lambda}\widehat \Lambda^{\Bi,\pm}_{\lambda}\\
\widehat \Lambda^{\Bi,\pm}_{\lambda}\tau_{2}G^{1,\pm}_{\lambda}\Lambda_{\lambda}^{\!B_{1},\pm}&
\widehat \Lambda^{\Bi,\pm}_{\lambda}
\end{bmatrix}\,.
$$
Therefore, defining 
$$
\L^{1}_{\lambda}\phi_{1}:=\L_{\lambda}(\phi_{1}\oplus 0)\,,
$$ 
one gets
$$
{\Sc}^{\Bi}_{\lambda}={\Sc}^{B_{1}}_{\lambda}-2\pi i\L_{\lambda}\widetilde\Lambda^{\!\Bi,+}_{\lambda}\L_{\lambda}^{*}\,,
$$
where
\be\label{SB1}
{\Sc}^{B_{1}}_{\lambda}=1-2\pi i\L^{1}_{\lambda}\widetilde\Lambda^{\!B_{1},+}_{\lambda}(\L^{1}_{\lambda})^{*}
\ee
is the scattering matrix relative to the couple $(A_{B_{1}},A)$.  Moreover, in the case $B_{1}=0$, 
defining 
$$
\L^{2}_{\lambda}\phi_{2}:=\L_{\lambda}(0\oplus\phi_{2})\,,
$$ 
one gets the following representation formula for the scattering couple $(A_{B_{0},B_{2}},A)$ (compare with \cite[Corollary 4.3]{JMPA}):
$$
{\Sc}^{B_{0},B_{2}}_{\lambda}=1-2\pi i\L^{2}_{\lambda}\Lambda^{\!B_{0},B_{2},+}_{\lambda}(\L^{2}_{\lambda})^{*}\,.
$$
Let us further notice that, whenever $A$ is the free Laplacian in $L^{2}(\RE^{3})$ and $B_{1}$ corresponds to a perturbation by a regular potential as in Section 5 below, then \eqref{SB1} gives the usual formula for the scattering matrix for a short-range potential (see, e.g., \cite[Section 8]{Y-LNM}).
\end{remark}

\section{Kato-Rellich perturbations and their layers potentials}

\subsection{\label{Sec_V}Potential perturbations}

In this section we suppose that the real-valued potential $\v$ is of Kato-Rellich type, i.e.,  $\v\in L^{2}(\RE^{3})+L^{\infty}(\RE^{3})$, equivalently,
\begin{equation}
\v=\v_{2}+\v_{\infty} \,,\qquad \v_{2}\in L^{2}(  \mathbb{R}^{3}) \,,\qquad \v_{\infty}\in L^{\infty}
(\mathbb{R}^{3})  \,. \label{K-R}%
\end{equation}
We use the same symbol $\v$ to denote both the potential function and the corresponding multiplication operator $u\mapsto \v u$. 
\par
Given $\Omega\subset\RE^{3}$, open and bounded with a Lipschitz boundary $\Gamma$, we define $H^{s}(\RE^{3}\backslash\Gamma)\hookleftarrow H^{s}(\RE^{3})$ by  
$$
H^{s}(\RE^{3}\backslash\Gamma):=H^{s}(\Omega)\oplus H^{s}(\Omega_{\ex})\,,\qquad s\ge 0\,.
$$
We refer to \cite[Chapter 3]{McL} for the definition of the Sobolev spaces $H^{s}(\RE^{3})$, $H^{s}(\Omega)$ and $H^{s}(\Gamma)$. One has 
$$
H^{s}(\RE^{3}\backslash\Gamma)=H^{s}(\RE^{3})\,,\qquad 0\le s<1/2\,.
$$
Since (see \cite[Theorems 3.29 and 3.30]{McL}),
$$
H^{s}(  \mathcal O)   ^{\ast}
=H_{\overline{\mathcal O}}^{-s}(  \mathbb{R}^{3})  \,,\qquad
s\in\mathbb{R}\,,
$$
$H_{\overline{\mathcal O}}^{-s}(  \mathbb{R}^{3})$ denoting the set of distributions $H^{-s}(  \mathbb{R}^{3})$ with support in ${\overline{\mathcal O}}$,
one has
\be
H^{s}(  \mathbb{R}^{3}\backslash\Gamma)^{\ast}=H^{s}(  \Omega)^{\ast}
\oplus H^{s}(  \mathbb{R}^{3}\backslash\overline\Omega)^{\ast}=
H^{-s}_{\overline\Omega}(\RE^{3})\oplus H^{-s}_{\Omega^{c}}(\RE^{3})\hookrightarrow H^{-s}(\RE^{3})
\,.
\label{dual}%
\ee
Let us notice that
\be\label{B}
\B(H^{s}(  \mathbb{R}^{3}\backslash\Gamma),H^{t}(  \mathbb{R}^{3}\backslash\Gamma)^{\ast})
\hookrightarrow
\B(H^{s}(  \mathbb{R}^{3}),H^{-t}(  \mathbb{R}^{3}))\,,\qquad s,t\ge 0\,,
\ee
and
\be\label{B*}
\B(H^{-s}( \mathbb{R}^{3}),H^{t}(  \mathbb{R}^{3}))\hookrightarrow\B(H^{s}(  \mathbb{R}^{3}\backslash\Gamma)^{*},H^{t}(  \mathbb{R}^{3}\backslash\Gamma))
\,,\qquad s,t\ge 0\,.
\ee
\begin{lemma}\label{v}
\begin{equation}\label{v-sob}
\v\in{\B}(  H^{1+s}(  \RE^{3}\backslash\Gamma) ,H^{1-s}(  \RE^{3}\backslash\Gamma)^{*})\,,\qquad -1\le s\le 1  \,. 
\end{equation}
\end{lemma}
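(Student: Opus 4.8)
The plan is to first establish the same mapping property on all of $\RE^{3}$,
\[
\v\in\B(H^{1+s}(\RE^{3}),H^{s-1}(\RE^{3}))\,,\qquad -1\le s\le 1\,,
\]
and then to transfer it to $\RE^{3}\backslash\Gamma$ by means of bounded extension operators on the two Lipschitz pieces $\Omega$ and $\Omega_{\ex}$. For the $\RE^{3}$ statement I would use the splitting \eqref{K-R} and treat the two terms separately. Multiplication by $\v_{\infty}$ is bounded on $L^{2}(\RE^{3})$, hence, composing with the continuous embeddings $H^{1+s}(\RE^{3})\hookrightarrow L^{2}(\RE^{3})\hookrightarrow H^{s-1}(\RE^{3})$ (legitimate since $1+s\ge 0\ge s-1$ for $s\in[-1,1]$), it is bounded from $H^{1+s}(\RE^{3})$ to $H^{s-1}(\RE^{3})$ for every $s\in[-1,1]$; no interpolation is needed here. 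For $\v_{2}$ I would argue at the two endpoints: at $s=1$ the three-dimensional Sobolev embedding $H^{2}(\RE^{3})\hookrightarrow L^{\infty}(\RE^{3})$ gives $\|\v_{2}u\|_{L^{2}}\le\|\v_{2}\|_{L^{2}}\|u\|_{L^{\infty}}\lesssim\|u\|_{H^{2}}$, i.e. $\v_{2}\in\B(H^{2}(\RE^{3}),H^{0}(\RE^{3}))$, and since $\v_{2}$ is real the $L^{2}$-adjoint of this bounded operator is the multiplication operator itself regarded as a map $H^{0}(\RE^{3})\to H^{-2}(\RE^{3})$, which is the case $s=-1$.

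The intermediate range $-1<s<1$ I would then obtain by complex interpolation of the fixed operator $u\mapsto\v_{2}u$ between these two endpoints, using $[H^{2}(\RE^{3}),H^{0}(\RE^{3})]_{\theta}=H^{2-2\theta}(\RE^{3})$ and $[H^{0}(\RE^{3}),H^{-2}(\RE^{3})]_{\theta}=H^{-2\theta}(\RE^{3})$ for the Bessel potential scale; this yields $\v_{2}\in\B(H^{2-2\theta}(\RE^{3}),H^{-2\theta}(\RE^{3}))$, that is the claim with $s=1-2\theta$, $\theta\in[0,1]$. (As a consistency check, the central case $s=0$ also follows directly from $H^{1}(\RE^{3})\hookrightarrow L^{4}(\RE^{3})$ via $\int|\v_{2}|\,|u|\,|w|\le\|\v_{2}\|_{L^{2}}\|u\|_{L^{4}}\|w\|_{L^{4}}\lesssim\|u\|_{H^{1}}\|w\|_{H^{1}}$.) Adding the two contributions proves the $\RE^{3}$ statement; moreover, since only $|\v_{2}|\in L^{2}$ and $|\v_{\infty}|\in L^{\infty}$ enter, the same estimates give the pointwise trilinear bound $\int_{\RE^{3}}|\v|\,|U|\,|W|\lesssim\|U\|_{H^{1+s}(\RE^{3})}\|W\|_{H^{1-s}(\RE^{3})}$.

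To descend to $\RE^{3}\backslash\Gamma$, recall that $H^{1+s}(\RE^{3}\backslash\Gamma)=H^{1+s}(\Omega)\oplus H^{1+s}(\Omega_{\ex})$ and, by \eqref{dual}, $H^{1-s}(\RE^{3}\backslash\Gamma)^{*}=H^{1-s}(\Omega)^{*}\oplus H^{1-s}(\Omega_{\ex})^{*}$; since multiplication by $\v$ acts componentwise on these direct sums, it suffices to prove $\v\in\B(H^{1+s}(\mathcal O),H^{1-s}(\mathcal O)^{*})$ for $\mathcal O=\Omega$ and $\mathcal O=\Omega_{\ex}$, both of which are Lipschitz domains (with boundary $\Gamma$) and hence (see \cite{McL}) admit a bounded extension operator $E_{\mathcal O}\colon H^{t}(\mathcal O)\to H^{t}(\RE^{3})$ for every $t\ge 0$. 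Given $u\in H^{1+s}(\mathcal O)$ and $w\in H^{1-s}(\mathcal O)$, put $U:=E_{\mathcal O}u$, $W:=E_{\mathcal O}w$, so that $U|_{\mathcal O}=u$ and $W|_{\mathcal O}=w$; then
\[
\Big|\int_{\mathcal O}\v\,u\,\bar w\Big|\le\int_{\RE^{3}}|\v|\,|U|\,|W|\lesssim\|U\|_{H^{1+s}(\RE^{3})}\|W\|_{H^{1-s}(\RE^{3})}\lesssim\|u\|_{H^{1+s}(\mathcal O)}\|w\|_{H^{1-s}(\mathcal O)}\,,
\]
where the middle inequality is the trilinear bound above and the last one follows upon optimising over extensions. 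This shows that $\v u$ extends to a bounded functional on $H^{1-s}(\mathcal O)$, i.e. $\v u\in H^{1-s}(\mathcal O)^{*}$ with norm $\lesssim\|u\|_{H^{1+s}(\mathcal O)}$, which is the asserted bound.

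The only genuinely nonelementary step is the intermediate range $-1<s<1$, obtained by complex interpolation from the two straightforward H\"older-type endpoint estimates; everything else — the reduction of $\v_{\infty}$ to a bounded operator on $L^{2}(\RE^{3})$, and the passage from $\RE^{3}$ to $\RE^{3}\backslash\Gamma$ via the product structure of the spaces, the duality \eqref{dual}, and Lipschitz extension operators — is routine, so I would expect the bookkeeping around the dual spaces $H^{1-s}(\mathcal O)^{*}$ to be the only place where care is needed.
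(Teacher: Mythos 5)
Your proposal is correct in outline but follows a genuinely different route from the paper's. The paper never leaves the broken scale: it proves the two endpoint bounds $\v\in\B(H^{2}(\RE^{3}\backslash\Gamma),L^{2}(\RE^{3}))$ (via $\v_{\infty}\in L^{\infty}$ and $H^{2}(\Omega_{\inn/\ex})\hookrightarrow L^{\infty}(\Omega_{\inn/\ex})$ for $\v_{2}$) and, by the symmetry $\langle\v u,v\rangle=\langle u,\v v\rangle$, $\v\in\B(L^{2}(\RE^{3}),H^{2}(\RE^{3}\backslash\Gamma)^{*})$, and then interpolates directly between these, so the burden is on identifying the complex interpolation spaces of the broken scale $H^{s}(\RE^{3}\backslash\Gamma)$ and its duals. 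You instead interpolate on the standard Bessel-potential scale over $\RE^{3}$, where $[H^{t_{0}},H^{t_{1}}]_{\theta}=H^{t_{\theta}}$ is completely routine, and then localize to the two Lipschitz pieces with extension operators; the endpoint estimates you use are essentially the same as the paper's. What your route buys is an elementary interpolation step; what it costs is the extra localization step, and that is where your write-up has a soft spot.

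The problematic sentence is the claim that ``the same estimates give the pointwise trilinear bound'' $\int_{\RE^{3}}|\v|\,|U|\,|W|\lesssim\|U\|_{H^{1+s}}\|W\|_{H^{1-s}}$ for all $s\in[-1,1]$. Complex interpolation is a statement about the linear operator $u\mapsto\v_{2}u$ between the interpolated spaces; it does not transfer to the form with absolute values inserted (note $|U|$ is not controlled in $H^{1+s}$ for $s>0$), so for intermediate $s$ this bound does not follow from what you proved — it would need its own H\"older--Sobolev argument with $s$-dependent exponents. The cleanest repair avoids absolute values altogether: write $\int_{\mathcal O}\v\,u\,\bar w=\int_{\RE^{3}}(\chi_{\mathcal O}\v)\,U\,\overline{W}$ with $U=E_{\mathcal O}u$, $W=E_{\mathcal O}w$, observe that $\chi_{\mathcal O}\v=\chi_{\mathcal O}\v_{2}+\chi_{\mathcal O}\v_{\infty}$ is again a Kato--Rellich potential, and apply your already-established $\RE^{3}$ operator bound to $\chi_{\mathcal O}\v$. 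With that modification the localization step, and hence the whole proof, goes through.
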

\begin{proof}
Given $u= u_{\inn}\oplus u_{\ex}\in  H^{2}(  \mathbb{R}^{3}\backslash\Gamma)    
$ one has%
\[
\|\v_{\infty}u\| _{L^{2}(\mathbb{R}^{3})}\leq\|\v\|_{L^{\infty}(\mathbb{R}^{3})}\| u\| _{L^{2}(\mathbb{R}^{3})}\leq\|\v\|_{L^{\infty}(\mathbb{R}^{3})}\| u\|_{H^{2}(\mathbb{R}^{3}\backslash\Gamma)  }\,.
\]
and
\begin{align*}
\|\v_{2}u\|_{L^{2}(\mathbb{R}^{3})}=&\|\v_{2}\|_{L^{2}(\Omega)}\|u_{\inn}\|_{L^{\infty}(\Omega)}+\|\v_{2}\|_{L^{2}(\mathbb{R}^{3}\backslash\overline\Omega)}
\|u_{\ex}\|_{L^{\infty}(\mathbb{R}^{3}\backslash\overline\Omega)}\\
\lesssim &
\|\v_{2}\|_{L^{2}(\Omega)}\|u_{\inn}\|_{H^{2}(\Omega)}+
\|\v_{2}\|_{L^{2}(\mathbb{R}^{3}\backslash\overline\Omega)}
\|u_{\ex}\|_{H^{2}(\mathbb{R}^{3}\backslash\overline\Omega)}\\
\lesssim &\|\v_{2}\|_{L^{2}(\mathbb{R}^{3})}
\|u\|_{H^{2}(\mathbb{R}^{3}\backslash\Gamma)}\,.
\end{align*}
Hence $\v\in{\B}(  H^{2}(  \RE^{3}\backslash\Gamma) ,L^{2}(  \RE^{3}))$. Then, for any $u,v\in H^{2}(\RE^{3}\backslash\Gamma)$, one has
\begin{align*}
&\big|\langle \v u,v\rangle_{H^{2}(\RE^{3}\backslash\Gamma)^{*},H^{2}(\RE^{3}\backslash\Gamma)}\big|
=\big|\langle \v u,v\rangle_{L^{2}(\RE^{3})}\big|\\
=&
\big|\langle u,\v v\rangle_{L^{2}(\RE^{3})}\big|
\le \|\v\|_{H^{2}(\RE^{3}\backslash\Gamma),L^{2}(\RE^{3})}\|u\|_{L^{2}(\RE^{3})}
\|v\|_{H^{2}(\RE^{3}\backslash\Gamma)}
\end{align*}
and so $u\mapsto\v u$ extends to a map in $\B(L^{2}(  \RE^{3}),H^{2}(  \RE^{3}\backslash\Gamma)^{*} )$. The proof is then concluded by interpolation.
\end{proof}
In the following, $R_{z}$ denotes the resolvent of the free Laplacian, i.e.,
\be\label{free}
R_{z}:=\left(  -\Delta+z\right)  ^{-1}\in\B(H^{s}(\RE^{3}),H^{s+2}(\RE^{3}))\,,\quad s\in\RE\,.
\ee
Since $\v$ is of Rellich-Kato type, one has (see, e.g., \cite[Section 3, $\S$5, Chap. V]{Kato}): 
\begin{theorem}\label{KR}
$\Delta+\v:H^{2}(  \mathbb{R}^{3})\subset  L^{2}(  \mathbb{R}^{3})\to L^{2}(  \mathbb{R}^{3}) $ is
self-adjoint and semi-bounded from  above. Moreover, for $z\in\CO$ sufficiently far away from $(-\infty, 0]$, $\|\v R_{z}\|_{L^{2}(\mathbb{R}^{3}),L^{2}(\mathbb{R}^{3})}<1$, 
and
\be
R_{z}^{\v}:=(-(\Delta+\v)+z)^{-1}=R_{z}+R_{z}(  1- \v R_{z})  ^{-1}\v R_{z}
\,,\label{Rvz}
\ee
\begin{equation}
(  1-\v R_{z})  ^{-1}=\sum_{k=0}^{+\infty}\left(  \v R_{z}\right)  ^{k}
\in{\B}(  L^{2}(\mathbb{R}^{3}))\,.
\label{L-v-est}%
\end{equation}
\end{theorem}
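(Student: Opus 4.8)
The plan is to recognise the statement as an instance of the Kato--Rellich theorem together with a Neumann series argument for the resolvent, so the proof splits into three steps.

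\emph{Step 1: $\v$ is infinitesimally $\Delta$-bounded.} Writing $\v=\v_{2}+\v_{\infty}$ as in \eqref{K-R}, the term $\v_{\infty}$ obeys $\|\v_{\infty}u\|_{L^{2}(\RE^{3})}\le\|\v_{\infty}\|_{L^{\infty}(\RE^{3})}\|u\|_{L^{2}(\RE^{3})}$, while $\|\v_{2}u\|_{L^{2}(\RE^{3})}\le\|\v_{2}\|_{L^{2}(\RE^{3})}\|u\|_{L^{\infty}(\RE^{3})}$. The point, which is where the dimension $n=3$ and $\v_{2}\in L^{2}$ enter, is the sharp embedding: for every $\epsilon>0$ there is $C_{\epsilon}>0$ with $\|u\|_{L^{\infty}(\RE^{3})}\le\epsilon\,\|\Delta u\|_{L^{2}(\RE^{3})}+C_{\epsilon}\|u\|_{L^{2}(\RE^{3})}$ for all $u\in H^{2}(\RE^{3})$. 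I would prove this by the Fourier transform and Cauchy--Schwarz, using that $\xi\mapsto(1+|\xi|^{2}/a)^{-1}$ lies in $L^{2}(\RE^{3})$ for every $a>0$ with $L^{2}$ norm of order $a^{3/4}$, and then letting $a\to+\infty$. Combining the two estimates, $\v$ is $(-\Delta)$-bounded with relative bound zero. Since the free self-adjoint Laplacian $\Delta$ has domain $H^{2}(\RE^{3})$ and $-\Delta\ge0$ (so $\Delta$ is self-adjoint and bounded from above), the Kato--Rellich theorem (\cite[Chap.~V]{Kato}) yields that $\Delta+\v$ is self-adjoint on $H^{2}(\RE^{3})$ and semi-bounded from above.

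\emph{Step 2: smallness of $\v R_{z}$.} For $z$ in the closed right half-plane one has $\mathrm{dist}(z,(-\infty,0])=|z|$, hence $\|R_{z}\|_{L^{2}(\RE^{3}),L^{2}(\RE^{3})}=|z|^{-1}$; moreover the identity $\Delta R_{z}=zR_{z}-1$ gives $\|\Delta R_{z}\|_{L^{2}(\RE^{3}),L^{2}(\RE^{3})}\le1+|z|\,|z|^{-1}=2$. Applying the relative bound inequality of Step 1 to $u=R_{z}w$ with $w\in L^{2}(\RE^{3})$ gives $\|\v R_{z}w\|_{L^{2}(\RE^{3})}\le\epsilon\|\Delta R_{z}w\|_{L^{2}(\RE^{3})}+C_{\epsilon}\|R_{z}w\|_{L^{2}(\RE^{3})}\le(2\epsilon+C_{\epsilon}|z|^{-1})\|w\|_{L^{2}(\RE^{3})}$. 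Choosing $\epsilon=1/4$ and then $|z|$ large enough that $C_{1/4}|z|^{-1}\le1/4$ yields $\|\v R_{z}\|_{L^{2}(\RE^{3}),L^{2}(\RE^{3})}\le3/4<1$ for all such $z$, i.e.\ on a neighbourhood of infinity inside $\{\mathrm{Re}\,z\ge0\}$, a region which in particular lies far from $(-\infty,0]$.

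\emph{Step 3: Neumann series and the resolvent identity.} Once $\|\v R_{z}\|_{L^{2}(\RE^{3}),L^{2}(\RE^{3})}<1$, the geometric series $\sum_{k\ge0}(\v R_{z})^{k}$ converges in $\B(L^{2}(\RE^{3}))$ to $(1-\v R_{z})^{-1}$, which is \eqref{L-v-est}. It then remains to check that $R_{z}+R_{z}(1-\v R_{z})^{-1}\v R_{z}$ maps $L^{2}(\RE^{3})$ into $H^{2}(\RE^{3})$ and is a two-sided inverse of $-(\Delta+\v)+z$. For this I would use $(-\Delta+z)R_{z}=1=R_{z}(-\Delta+z)$ on the relevant spaces, write $-(\Delta+\v)+z=(-\Delta+z)-\v$, and exploit the algebraic identity $(1-\v R_{z})^{-1}\v R_{z}-\v R_{z}(1-\v R_{z})^{-1}\v R_{z}=\v R_{z}$ to collapse the product to the identity; since $R_{z}$ maps $L^{2}(\RE^{3})$ into $H^{2}(\RE^{3})$ this simultaneously pins down the domain. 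Hence $z\in\varrho(\Delta+\v)$ for these $z$ and $R_{z}^{\v}$ is given by \eqref{Rvz}. The only step requiring genuine care is the infinitesimal relative bound in Step 1 — the $\epsilon$--$C_{\epsilon}$ form of $H^{2}(\RE^{3})\hookrightarrow L^{\infty}(\RE^{3})$ — which is precisely where $\v_{2}\in L^{2}(\RE^{3})$ and the three-dimensionality are used; everything else is bookkeeping with the resolvent identity and the geometry of $\{z:\mathrm{dist}(z,(-\infty,0])\text{ large}\}$.
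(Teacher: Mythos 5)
Your proof is correct and is exactly the standard Kato--Rellich argument; the paper itself gives no proof of Theorem \ref{KR} but simply cites \cite[Section 3, \S 5, Chap.~V]{Kato}, where the same three ingredients (infinitesimal $\Delta$-boundedness via the $\epsilon$--$C_\epsilon$ form of $H^{2}(\RE^{3})\hookrightarrow L^{\infty}(\RE^{3})$, smallness of $\v R_{z}$, Neumann series) appear. Your restriction to $\{\mathrm{Re}\,z\ge 0,\ |z|\ \text{large}\}$ is the right reading of ``sufficiently far away from $(-\infty,0]$'' and suffices for all later uses.
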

\begin{remark}\label{tpt} Let us notice that Theorem \ref{KR} could be obtained by Corollary \ref{cor1} by taking $\tau_{1}u:=u$ and $B_{1}=\v$. Hence, \eqref{Rvz} holds for any $z$ in $\varrho(\Delta+\v)\cap\CO\backslash(-\infty,0]$ and $(  1+\v R_{z})  ^{-1}\in {\B}(  L^{2}(\mathbb{R}^{3}))$ there.
\end{remark}
\begin{remark}\label{sa} 
By \eqref{free}, \eqref{Rvz}, \eqref{L-v-est}, \eqref{v-sob} and \eqref{B}, one has $R^{\v}_{\bar z}\in
\B(L^{2}(  \mathbb{R}^{3}),H^{2}(  \mathbb{R}^{3}))$ and hence $(R_{\bar z}^{\v})^{*}\in \B(H^{-2}(  \mathbb{R}^{3}),L^{2}(  \mathbb{R}^{3}))$.  Since $(\Delta+\v)$ is self-adjoint in $L^{2}(\RE^{3})$, $(R_{\bar z}^{\v})^{*}|L^{2}(\RE^{3})=
R^{\v}_{z}$. Therefore, $R^{\v}_{z}:L^{2}(\RE^{3})\subset H^{-2}(\RE^{3})\to L^{2}(\RE^{3})$ extends to an operator in $\B(H^{-2}(  \mathbb{R}^{3}),L^{2}(  \mathbb{R}^{3}))$ which, by abuse of notation, we still denote by $R_{ z}^{\v}$ and which coincides with $(R_{\bar z}^{\v})^{*}$. Then, by interpolation, one gets
\be\label{Rvz-int}
R_{z}^{\v}\in \B(H^{s-1}(  \mathbb{R}^{3}),H^{s+1}(  \mathbb{R}^{3}))\,,\qquad -1\le s\le 1\,.
\ee
\end{remark}
\begin{remark} By \eqref{Rvz}, 
\be\label{btr}
(  1- \v R_{z})  ^{-1}\v=(-\Delta+z)R^{\v}_{z}(-\Delta+z)-(-\Delta+z)\,.
\ee
Hence, by \eqref{Rvz-int},  $(  1- \v R_{z})  ^{-1}\v\in \B(H^{2}(\RE^{3}),L^{2}(\RE^{3}))$ extends to  a map 
\be\label{L-int}
\Lambda^{\!\v}_{z}\in \B(H^{s+1}(  \mathbb{R}^{3}),H^{s-1}(  \mathbb{R}^{3}))\,,
\qquad -1\le s\le 1 
\ee
With such a notation, $R_{z}^{\v}$ in \eqref{Rvz-int} has the representation
\be\label{RF-int}
R_{z}^{\v}=R_{z}+R_{z}\Lambda^{\!\v}_{z}R_{z}\,,\qquad 
\Lambda^{\!\v}_{z}|H^{2}(\RE^{3})=(  1- \v R_{z})  ^{-1}\v\,.
\ee
\end{remark}
\begin{remark}\label{Lt} Since $\|R_{z}\v\|_{L^{2}(\mathbb{R}^{3}),L^{2}(\mathbb{R}^{3})}=\|(R_{z}\v)^{*}\|_{L^{2}(\mathbb{R}^{3}),L^{2}(\mathbb{R}^{3})}=\|\v R_{\bar z}\|_{L^{2}(\mathbb{R}^{3}),L^{2}(\mathbb{R}^{3})}
<1$ whenever $z\in\CO$ is sufficiently far away from $(-\infty,0]$, one has
\be\label{Lt2}
(  1-R_{z}\v)  ^{-1}=\sum_{k=0}^{+\infty}\left(  R_{z}\v\right)  ^{k}
\in{\B}(  L^{2}(\mathbb{R}^{3}))
\ee
and
\be\label{Lt-2}
\v(  1- R_{z}\v)  ^{-1}\in\B(L^{2}(\RE^{3}),H^{-2}(\RE^{3}))\,.
\ee
Then, 
$$
\big((  1- R_{z}\v)  ^{-1}\v\big)^{*}=
\big(\v(  1- R_{z}\v)^{-1}\big)^{*}=\v((  1- R_{z}\v)^{*})^{-1}=\v(  1- R_{\bar z}\v)  ^{-1}
$$
and so 
$$
\B(H^{-2}(  \mathbb{R}^{3}),L^{2}(  \mathbb{R}^{3}))\ni(R^{\v}_{z})^{*}=R_{\bar z}+R_{\bar z}\v(  1- R_{\bar z}\v)^{-1}R_{\bar z}=
R^{\v}_{\bar z}=R_{\bar z}+R_{\bar z}\Lambda_{\bar z}^{\!\v}R_{\bar z}\,.
$$
Therefore 
\be\label{LtL2}
\Lambda_{z}^{\!\v}|L^{2}(\RE^{3})=\v(  1- R_{z}\v)  ^{-1}\,.
\ee
\end{remark}
\begin{lemma}
\be\label{Lvz-int}
\Lambda_{z}^{\!\v}\in{\B}(  H^{1+s}(\RE^{3}\backslash\Gamma),
H^{1-s}(\RE^{3}\backslash\Gamma)^{*})
\,, \qquad -1\le s\le 1\,.
\end{equation}
\end{lemma}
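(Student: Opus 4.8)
The plan is to deduce \eqref{Lvz-int} by complex interpolation from the two endpoint cases $s=-1$ and $s=1$, at each of which $\Lambda^{\!\v}_{z}$ can be handled directly. The two ingredients I would use are the factorization $\Lambda^{\!\v}_{z}|L^{2}(\RE^{3})=\v(1-R_{z}\v)^{-1}$ from \eqref{LtL2} and the algebraic identity $\Lambda^{\!\v}_{z}=\v+\Lambda^{\!\v}_{z}R_{z}\v$, valid on $L^{2}(\RE^{3})$, which is just a rewriting of $\Lambda^{\!\v}_{z}(1-R_{z}\v)=\v$ (multiply \eqref{LtL2} on the right by $1-R_{z}\v\in\B(L^{2}(\RE^{3}))$).

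For $s=-1$ the claim is $\Lambda^{\!\v}_{z}\in\B(H^{0}(\RE^{3}\backslash\Gamma),H^{2}(\RE^{3}\backslash\Gamma)^{*})$, i.e. $\Lambda^{\!\v}_{z}\in\B(L^{2}(\RE^{3}),H^{2}(\RE^{3}\backslash\Gamma)^{*})$; this is immediate from the factorization, since $(1-R_{z}\v)^{-1}\in\B(L^{2}(\RE^{3}))$ by \eqref{Lt2} while $\v\in\B(L^{2}(\RE^{3}),H^{2}(\RE^{3}\backslash\Gamma)^{*})$ is exactly the boundedness established in the proof of Lemma \ref{v}. For $s=1$ the claim is $\Lambda^{\!\v}_{z}\in\B(H^{2}(\RE^{3}\backslash\Gamma),L^{2}(\RE^{3}))$: given $u\in H^{2}(\RE^{3}\backslash\Gamma)\subset L^{2}(\RE^{3})$, Lemma \ref{v} with $s=1$ (equivalently, the bound $\v\in\B(H^{2}(\RE^{3}\backslash\Gamma),L^{2}(\RE^{3}))$ obtained in its proof) gives $\v u\in L^{2}(\RE^{3})$, hence $R_{z}\v u\in H^{2}(\RE^{3})$ by \eqref{free}, hence $\Lambda^{\!\v}_{z}R_{z}\v u\in L^{2}(\RE^{3})$ by \eqref{L-int} with $s=1$; the identity then yields $\Lambda^{\!\v}_{z}u=\v u+\Lambda^{\!\v}_{z}R_{z}\v u\in L^{2}(\RE^{3})$ with $\|\Lambda^{\!\v}_{z}u\|\lesssim\|u\|_{H^{2}(\RE^{3}\backslash\Gamma)}$. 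I expect this $s=1$ case to be the only point requiring care: the improved mapping property \eqref{L-int} is available on the \emph{global} Sobolev space $H^{2}(\RE^{3})$ but not on the broken space $H^{2}(\RE^{3}\backslash\Gamma)$, and the role of the identity $\Lambda^{\!\v}_{z}=\v+\Lambda^{\!\v}_{z}R_{z}\v$ is precisely that the free resolvent $R_{z}$ regularizes the jump of $u$ across $\Gamma$ before $\Lambda^{\!\v}_{z}$ is applied, the contribution carrying the jump being absorbed into the term $\v u$, which is under control by Lemma \ref{v}.

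Finally I would interpolate. Since $\Omega$ and $\Omega_{\ex}$ are Lipschitz domains one has $[H^{0}(\mathcal O),H^{2}(\mathcal O)]_{\theta}=H^{2\theta}(\mathcal O)$ (see \cite{McL}), and, taking direct sums, $[H^{0}(\RE^{3}\backslash\Gamma),H^{2}(\RE^{3}\backslash\Gamma)]_{\theta}=H^{2\theta}(\RE^{3}\backslash\Gamma)$; since all the spaces involved are Hilbert spaces with dense continuous embeddings, the complex interpolation functor commutes with passing to duals, so $[H^{2}(\RE^{3}\backslash\Gamma)^{*},H^{0}(\RE^{3}\backslash\Gamma)^{*}]_{\theta}=\big([H^{2}(\RE^{3}\backslash\Gamma),H^{0}(\RE^{3}\backslash\Gamma)]_{\theta}\big)^{*}=H^{2-2\theta}(\RE^{3}\backslash\Gamma)^{*}$. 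Applying the interpolation theorem to $\Lambda^{\!\v}_{z}$, which by the two previous steps is bounded from $H^{0}(\RE^{3}\backslash\Gamma)$ into $H^{2}(\RE^{3}\backslash\Gamma)^{*}$ and from $H^{2}(\RE^{3}\backslash\Gamma)$ into $H^{0}(\RE^{3}\backslash\Gamma)^{*}$, with $\theta=\tfrac{1+s}{2}\in[0,1]$, gives $\Lambda^{\!\v}_{z}\in\B(H^{1+s}(\RE^{3}\backslash\Gamma),H^{1-s}(\RE^{3}\backslash\Gamma)^{*})$ for $-1\le s\le 1$, which is \eqref{Lvz-int}. Once the $s=1$ endpoint and the standard Sobolev interpolation identities on Lipschitz domains are in place, nothing further is needed.
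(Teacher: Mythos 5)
Your proposal is correct and follows essentially the same route as the paper: both establish the two endpoint bounds $\Lambda_{z}^{\!\v}\in\B(L^{2}(\RE^{3}),H^{2}(\RE^{3}\backslash\Gamma)^{*})$ and $\Lambda_{z}^{\!\v}\in\B(H^{2}(\RE^{3}\backslash\Gamma),L^{2}(\RE^{3}))$ and then interpolate, with the first endpoint obtained exactly as you do from \eqref{Lt2}, \eqref{LtL2} and Lemma \ref{v}. For the second endpoint the paper simply invokes the factorization $\Lambda_{z}^{\!\v}=(1-\v R_{z})^{-1}\v$ together with Lemma \ref{v} and \eqref{L-v-est}; your detour through the identity $\Lambda_{z}^{\!\v}=\v+\Lambda_{z}^{\!\v}R_{z}\v$ is an equivalent (and arguably more carefully justified) way of saying the same thing, since unwinding it reproduces $(1-\v R_{z})^{-1}\v u$.
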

\begin{proof} By Lemma \ref{v} and by \eqref{L-v-est}, one has $\Lambda_{z}^{\!\v}=(  1+\v R_{z})  ^{-1}\v\in{\B}(  H^{2}(\RE^{3}\backslash\Gamma),L^{2}(\RE^{3}))$. By Lemma \ref{v}, \eqref{Lt2} and \eqref{LtL2},
$\Lambda_{z}^{\!\v}\in{\B}( L^{2}(\RE^{3}), H^{2}(\RE^{3}\backslash\Gamma)^{*})$.
The proof is then concluded by interpolation.
\end{proof}
By $H^{1-s}(\RE^{3}\backslash\Gamma)^{*}\hookrightarrow H^{s-1}(\RE^{3})$ and \eqref{free} one has
\begin{corollary}
\be\label{RLvz}
R_{z}\Lambda_{z}^{\!\v}\in{\B}(  H^{s}(\RE^{3}\backslash\Gamma),
H^{s}(\RE^{3}))
\,, \qquad 0\le s\le 2\,.
\end{equation}

\end{corollary}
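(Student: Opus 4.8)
The plan is to obtain \eqref{RLvz} by composing the mapping property of $\Lambda_{z}^{\!\v}$ just established in \eqref{Lvz-int} with the smoothing property \eqref{free} of the free resolvent; the only point requiring a moment's care is that the intermediate space has to be a Sobolev space over the \emph{whole} of $\RE^{3}$, so that $R_{z}=(-\Delta+z)^{-1}$ may legitimately be applied to it, and this is exactly what the duality relation \eqref{dual} provides.

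Concretely, I would start from \eqref{Lvz-int}, which for $-1\le s\le 1$ reads $\Lambda_{z}^{\!\v}\in{\B}(H^{1+s}(\RE^{3}\backslash\Gamma),H^{1-s}(\RE^{3}\backslash\Gamma)^{*})$. By \eqref{dual} one has the continuous embedding $H^{1-s}(\RE^{3}\backslash\Gamma)^{*}\hookrightarrow H^{-(1-s)}(\RE^{3})=H^{s-1}(\RE^{3})$, so that
\[
\Lambda_{z}^{\!\v}\in{\B}\big(H^{1+s}(\RE^{3}\backslash\Gamma),H^{s-1}(\RE^{3})\big)\,,\qquad -1\le s\le 1\,.
\]
On the other hand, \eqref{free} gives $R_{z}\in{\B}(H^{s-1}(\RE^{3}),H^{s+1}(\RE^{3}))$, and composing the two maps yields $R_{z}\Lambda_{z}^{\!\v}\in{\B}(H^{1+s}(\RE^{3}\backslash\Gamma),H^{1+s}(\RE^{3}))$ for $-1\le s\le 1$. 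Renaming the parameter $1+s\rightsquigarrow s$ turns this into the assertion $R_{z}\Lambda_{z}^{\!\v}\in{\B}(H^{s}(\RE^{3}\backslash\Gamma),H^{s}(\RE^{3}))$ for $0\le s\le 2$, which is \eqref{RLvz}.

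I do not expect any genuine obstacle: the argument is pure bookkeeping of Sobolev indices on top of the already-proven \eqref{Lvz-int}, together with \eqref{dual} and \eqref{free}. The only thing to verify is that all three ingredients are simultaneously valid on the stated range, i.e.\ $-1\le s\le1$ (equivalently $0\le s\le2$ after the shift); this is automatic because \eqref{Lvz-int} was established precisely on that interval by interpolation, so no further interpolation is needed here.
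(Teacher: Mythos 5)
Your argument is correct and is exactly the paper's: the corollary is deduced by combining \eqref{Lvz-int} with the embedding $H^{1-s}(\RE^{3}\backslash\Gamma)^{*}\hookrightarrow H^{s-1}(\RE^{3})$ from \eqref{dual} and the smoothing property \eqref{free} of the free resolvent, followed by the index shift $s\rightsquigarrow s+1$. Nothing is missing.
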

In later proofs we will need the estimate provided in the following:
\begin{lemma}
There exist $c_{1}>0$, $c_{2}>0$ such that, for any $u\equiv u_{\inn}\oplus u_{\ex}\in H^{1}(\RE^{3}\backslash\Gamma)$ and for any $\varepsilon>0$,
there holds
\begin{equation}
\big|\langle \v u,u\rangle _{H^{1}(\RE^{3}\backslash\Gamma)^{*},H^{1}(\RE^{3}\backslash\Gamma)}\big| \leq
c_{1}\epsilon\left(\|\nabla u_{\inn}\|^{2}_{L^{2}(\Omega_{\inn})}+\|\nabla u_{\ex}\|^{2}_{L^{2}(\Omega_{\ex})}\right)+c_{2}(1+\epsilon^{-3})\|u\|^{2}_{L^{2}(\RE^{3})}\,. \label{V_est}%
\end{equation}
\end{lemma}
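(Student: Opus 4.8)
The plan is to reduce the bound to a standard Kato-type interpolation estimate applied separately on $\Omega_{\inn}=\Omega$ and on $\Omega_{\ex}$, exploiting the splitting $\v=\v_2+\v_\infty$ with $\v_2\in L^2(\RE^3)$ and $\v_\infty\in L^\infty(\RE^3)$. First I would record that, by the very definition of the duality $\langle\cdot,\cdot\rangle_{H^1(\RE^3\backslash\Gamma)^*,H^1(\RE^3\backslash\Gamma)}$ and since (by Lemma \ref{v} with $s=0$) $\v u\in L^2(\RE^3)$ whenever $u\in H^1(\RE^3\backslash\Gamma)$ — actually one only needs $\v u\in H^{1}(\RE^3\backslash\Gamma)^*$, which follows from $\v\in\B(H^{1}(\RE^3\backslash\Gamma),H^{1}(\RE^3\backslash\Gamma)^*)$, the $s=0$ case of \eqref{v-sob} — the pairing is just $\int_{\RE^3}\v\,|u|^2$, so that
\[
\big|\langle \v u,u\rangle_{H^1(\RE^3\backslash\Gamma)^*,H^1(\RE^3\backslash\Gamma)}\big|
\le\int_{\Omega}|\v|\,|u_{\inn}|^2+\int_{\Omega_{\ex}}|\v|\,|u_{\ex}|^2 .
\]
The $\v_\infty$ part is immediately bounded by $\|\v_\infty\|_{L^\infty}\|u\|^2_{L^2(\RE^3)}$, which is absorbed into the second term on the right-hand side of \eqref{V_est}.

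For the $\v_2$ part I would treat each domain separately and use the Gagliardo–Nirenberg/Sobolev embedding in three dimensions: on $\Omega$, which is bounded Lipschitz, and on $\Omega_{\ex}$, which has Lipschitz boundary, one has $H^1\hookrightarrow L^6$, hence for $w\in H^1(\mathcal O)$, $\mathcal O\in\{\Omega,\Omega_{\ex}\}$,
\[
\int_{\mathcal O}|\v_2|\,|w|^2\le\|\v_2\|_{L^2(\mathcal O)}\,\|w\|^2_{L^4(\mathcal O)}
\le\|\v_2\|_{L^2(\RE^3)}\,\|w\|^{1/2}_{L^2(\mathcal O)}\|w\|^{3/2}_{L^6(\mathcal O)}
\lesssim\|\v_2\|_{L^2(\RE^3)}\,\|w\|^{1/2}_{L^2(\mathcal O)}\|w\|^{3/2}_{H^1(\mathcal O)},
\]
where I used Hölder ($2=4\cdot\tfrac12+4\cdot\tfrac12$ in the exponent sense, i.e. $\|w\|_{L^4}^4\le\|w\|_{L^2}\|w\|_{L^6}^3$) and the Sobolev embedding. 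Writing $\|w\|_{H^1(\mathcal O)}^2=\|\nabla w\|_{L^2(\mathcal O)}^2+\|w\|_{L^2(\mathcal O)}^2$ and applying Young's inequality with exponents $\tfrac43$ and $4$ to split $\|w\|^{1/2}_{L^2}\|w\|^{3/2}_{H^1}$, one obtains, for every $\delta>0$,
\[
\int_{\mathcal O}|\v_2|\,|w|^2\le\delta\,\|w\|^2_{H^1(\mathcal O)}+C\,\delta^{-3}\|\v_2\|^4_{L^2(\RE^3)}\,\|w\|^2_{L^2(\mathcal O)} .
\]
Now I would absorb the $\delta\|w\|^2_{L^2}$ coming from the $H^1$-norm into the $L^2$-term, choose $\delta=c_1\epsilon$ for a suitable fixed $c_1$ depending only on $\|\v_2\|_{L^2(\RE^3)}$ and the Sobolev constants of $\Omega$ and $\Omega_{\ex}$, sum the $\inn$ and $\ex$ contributions, and fold the $\|\v_\infty\|_{L^\infty}$ term and the constant part (from $\delta\le $ something times $\epsilon$, which for the statement as written with the factor $(1+\epsilon^{-3})$ is harmless since $1\lesssim 1+\epsilon^{-3}$) into $c_2(1+\epsilon^{-3})\|u\|^2_{L^2(\RE^3)}$. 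This yields \eqref{V_est}.

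The only genuine subtlety — the ``main obstacle'' — is the validity of the Sobolev embedding $H^1(\Omega_{\ex})\hookrightarrow L^6(\Omega_{\ex})$ on the \emph{exterior} domain $\Omega_{\ex}=\RE^3\backslash\overline\Omega$, which is unbounded; here one cannot invoke the compact-domain Rellich–Kondrachov statement, but the continuous embedding $H^1\hookrightarrow L^6$ in dimension three does hold on any open set with the $H^1$-extension property, and $\Omega_{\ex}$, being the complement of a bounded Lipschitz set, is such a domain (alternatively, one extends $u_{\ex}$ to $\RE^3$ and uses the global Gagliardo–Nirenberg–Sobolev inequality). Everything else is routine: Hölder, the interpolation inequality $\|w\|_{L^4}\le\|w\|_{L^2}^{1/4}\|w\|_{L^6}^{3/4}$, and Young's inequality. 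I would also remark that the exponent $1+\epsilon^{-3}$ is exactly what the $\delta^{-3}$ in Young's inequality produces, so no extra care is needed to match the form of the claimed bound.
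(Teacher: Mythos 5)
Your proposal is correct and follows essentially the same route as the paper: split $\v=\v_{2}+\v_{\infty}$, bound the $\v_{\infty}$ part trivially, apply H\"older to get $\|\v_{2}\|_{L^{2}}\|u\|_{L^{4}}^{2}$, control $\|u\|_{L^{4}}^{2}$ by $\|u\|_{L^{2}}^{1/2}\|u\|_{H^{1}}^{3/2}$ on each of $\Omega_{\inn}$ and $\Omega_{\ex}$, and finish with Young's inequality with exponents $4/3$ and $4$, which is exactly what produces the $\epsilon^{-3}$. The only (immaterial) difference is that the paper invokes the Gagliardo--Nirenberg inequalities on the interior and exterior Lipschitz domains directly (citing \cite{BM} and \cite{CM}), whereas you derive the same $L^{4}$ bound from the Sobolev embedding $H^{1}\hookrightarrow L^{6}$ plus Lebesgue interpolation; your remark that the exterior embedding rests on the extension property of $\Omega_{\ex}$ correctly identifies the one point needing care.
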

\begin{proof} By $H^{1}(\Omega_{\inn/\ex})  \hookrightarrow H^{3/4}(\Omega_{\inn/\ex}) \hookrightarrow L^{4}(\Omega_{\inn/\ex})$, by the Gagliardo-Niremberg 
inequalities (see \cite{BM} for the interior case and \cite{CM} for the exterior one)
$$
\|u_{\inn}\|_{L^{4}(\Omega_{\inn})}\lesssim\|u_{\inn}\|_{H^{3/4}(\Omega_{\inn})}\lesssim\|u_{\inn}\|^{3/4}_{H^{1}(\Omega_{\inn})}\|u_{\inn}\|^{1/4}_{L^{2}(\Omega_{\inn})}\,,
$$ 
$$
\|u_{\ex}\|_{L^{4}(\Omega_{\ex})}\lesssim\|\nabla u_{\ex}\|^{3/4}_{L^{2}(\Omega_{\ex})}\|u_{\ex}\|^{1/4}_{L^{2}(\Omega_{\ex})}
$$ 
and by Young's inequality
$$
xy\le \frac1{\alpha}\left({\epsilon}\ x^{\alpha}+{(\alpha-1)}\,{\epsilon^{-1/(\alpha-1)}}
\ y^{\frac{\alpha-1}{\alpha}}\right)\,,\qquad x,y,\epsilon>0,\, \alpha>1\,,
$$
one gets
$$
\|u\|_{L^{4}(\RE^{3})}^{2}\lesssim
\epsilon\left(\|\nabla u_{\inn}\|^{2}_{L^{2}(\Omega_{\inn})}+\|u\|^{2}_{L^{2}(\Omega_{\inn})}+\|\nabla u_{\ex}\|^{2}_{L^{2}(\Omega_{\ex})}\right)+\frac{1}{3}\,\epsilon^{-3}\|u\|^{2}_{L^{2}(\RE^{3})}\,.
$$
The proof is then concluded by
$$
\big|\langle \v u,u\rangle _{H^{1}(\RE^{3}\backslash\Gamma)^{*},H^{1}(\RE^{3}\backslash\Gamma)}\big| \leq
\|\v_{2}\|_{L^{2}(\RE^{3})}\|u\|^{2}_{L^{4}(\RE^{3})}+
\|\v_{\infty}\|_{L^{\infty}(\RE^{3})}\|u\|^{2}_{L^{2}(\RE^{3})}\,.
$$
\end{proof}
\begin{lemma}\label{vH-1} 
For any $z\in\CO$ sufficiently far away from $(-\infty,0]$, one has 
$\|\v R_{z}\|_{H^{-1}(\RE^{3}),H^{-1}(\RE^{3})}<1$ and 
\begin{equation}
(  1-\v R_{z})  ^{-1}=\sum_{k=0}^{+\infty}\frac{1}{k!}\left(  \v R_{z}\right)  ^{k}
\in{\B}(  H^{-1}(\RE^{3}))\,.
\end{equation}
Furthermore, 
$$
(  1-\v R_{z})  ^{-1}
\in{\B}(  H^{1}(\RE^{3}\backslash\Gamma)^{*})
$$
\end{lemma}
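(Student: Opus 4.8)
The plan is to derive both assertions from convergent Neumann series; the one substantial point is to estimate the operator norm of $\v R_{z}$ in the two topologies $H^{-1}(\RE^{3})$ and $H^{1}(\RE^{3}\backslash\Gamma)^{*}$, which — contrary to the $L^{2}$ case in Theorem \ref{KR} — cannot be read off from $\|R_{z}\|_{L^{2}(\RE^{3}),L^{2}(\RE^{3})}=\mathrm{dist}(z,(-\infty,0])^{-1}$ alone and requires the form--boundedness of $\v$ encoded in \eqref{V_est}.

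First I would check that $\v R_{z}$ is a well defined bounded operator on $H^{-1}(\RE^{3})$: by \eqref{free}, $R_{z}\in\B(H^{-1}(\RE^{3}),H^{1}(\RE^{3}))$; by $H^{1}(\RE^{3})\hookrightarrow H^{1}(\RE^{3}\backslash\Gamma)$ and Lemma \ref{v} (case $s=0$), $\v\in\B(H^{1}(\RE^{3}),H^{1}(\RE^{3}\backslash\Gamma)^{*})$; and restriction of functionals to $H^{1}(\RE^{3})\subset H^{1}(\RE^{3}\backslash\Gamma)$ yields a bounded map $H^{1}(\RE^{3}\backslash\Gamma)^{*}\to H^{-1}(\RE^{3})$. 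To estimate the norm I would take $f\in H^{-1}(\RE^{3})$, set $u:=R_{z}f\in H^{1}(\RE^{3})$, and use for $v\in H^{1}(\RE^{3})$ the elementary H\"older bound underlying \eqref{V_est},
\[
|\langle \v u,v\rangle|\le \|\v_{2}\|_{L^{2}(\RE^{3})}\|u\|_{L^{4}(\RE^{3})}\|v\|_{L^{4}(\RE^{3})}+\|\v_{\infty}\|_{L^{\infty}(\RE^{3})}\|u\|_{L^{2}(\RE^{3})}\|v\|_{L^{2}(\RE^{3})}\,,
\]
together with the Sobolev embedding $\|v\|_{L^{4}(\RE^{3})}\lesssim\|v\|_{H^{1}(\RE^{3})}$ and the Gagliardo--Nirenberg inequality $\|u\|_{L^{4}(\RE^{3})}\lesssim\|u\|_{L^{2}(\RE^{3})}^{1/4}\|u\|_{H^{1}(\RE^{3})}^{3/4}$. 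Since $\|R_{z}\|_{H^{-1}(\RE^{3}),H^{1}(\RE^{3})}\lesssim 1$ while $\|R_{z}\|_{H^{-1}(\RE^{3}),L^{2}(\RE^{3})}\lesssim|z|^{-1/2}$ for $z$ far from $(-\infty,0]$ (e.g.\ for $\mathrm{Re}\,z$ large), this gives $\|u\|_{L^{4}(\RE^{3})}\lesssim|z|^{-1/8}\|f\|_{H^{-1}(\RE^{3})}$ and $\|u\|_{L^{2}(\RE^{3})}\lesssim|z|^{-1/2}\|f\|_{H^{-1}(\RE^{3})}$, hence
\[
\|\v R_{z}f\|_{H^{-1}(\RE^{3})}=\sup_{\|v\|_{H^{1}(\RE^{3})}\le1}|\langle \v R_{z}f,v\rangle|\lesssim\big(\|\v_{2}\|_{L^{2}(\RE^{3})}+\|\v_{\infty}\|_{L^{\infty}(\RE^{3})}\big)\,|z|^{-1/8}\,\|f\|_{H^{-1}(\RE^{3})}\,.
\]
Therefore $\|\v R_{z}\|_{H^{-1}(\RE^{3}),H^{-1}(\RE^{3})}<1$ for such $z$, its Neumann series converges absolutely in $\B(H^{-1}(\RE^{3}))$, and $(1-\v R_{z})^{-1}$ is the asserted bounded operator.

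For the last assertion I would repeat the argument directly in $H^{1}(\RE^{3}\backslash\Gamma)^{*}$. There $\v R_{z}$ acts on $f_{\inn}\oplus f_{\ex}\in H^{1}(\Omega_{\inn})^{*}\oplus H^{1}(\Omega_{\ex})^{*}=H^{1}(\RE^{3}\backslash\Gamma)^{*}$ by first forming $u:=R_{z}(f_{\inn}+f_{\ex})\in H^{1}(\RE^{3})$ (via \eqref{free} and the embedding $H^{1}(\RE^{3}\backslash\Gamma)^{*}\hookrightarrow H^{-1}(\RE^{3})$ of \eqref{dual}) and then returning the pair of restrictions of $\v u$ to $\Omega_{\inn}$ and $\Omega_{\ex}$, which lies in $H^{1}(\RE^{3}\backslash\Gamma)^{*}$ by Lemma \ref{v} (case $s=0$); this composition is bounded. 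For the norm one notes that, $u$ being in $H^{1}(\RE^{3})$, $\|\nabla u_{\inn}\|_{L^{2}(\Omega_{\inn})}^{2}+\|\nabla u_{\ex}\|_{L^{2}(\Omega_{\ex})}^{2}=\|\nabla u\|_{L^{2}(\RE^{3})}^{2}$, and that for $v=v_{\inn}\oplus v_{\ex}\in H^{1}(\RE^{3}\backslash\Gamma)$ the interior Sobolev embedding together with the exterior one (see \cite{CM}) gives $\|v_{\inn}\|_{L^{4}(\Omega_{\inn})}+\|v_{\ex}\|_{L^{4}(\Omega_{\ex})}\lesssim\|v\|_{H^{1}(\RE^{3}\backslash\Gamma)}$. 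The same H\"older--Gagliardo--Nirenberg computation as above, combined with $\|f_{\inn}+f_{\ex}\|_{H^{-1}(\RE^{3})}\lesssim\|f_{\inn}\oplus f_{\ex}\|_{H^{1}(\RE^{3}\backslash\Gamma)^{*}}$, then yields $\|\v R_{z}\|_{H^{1}(\RE^{3}\backslash\Gamma)^{*},H^{1}(\RE^{3}\backslash\Gamma)^{*}}<1$ for $z$ far enough from $(-\infty,0]$, whence $(1-\v R_{z})^{-1}\in\B(H^{1}(\RE^{3}\backslash\Gamma)^{*})$ by the Neumann series.

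The main obstacle is precisely the norm bound $\|\v R_{z}\|<1$: unlike in the $L^{2}$ setting, the factor $\|R_{z}\|_{H^{-1},H^{1}}$ does not decay as $z\to\infty$, so the required smallness has to be extracted from the low--frequency decay of $R_{z}$ as a map $H^{-1}\to L^{2}$ and $L^{2}\to L^{2}$, transported to the $L^{4}$ norm through the interpolation $\|u\|_{L^{4}}\lesssim\|u\|_{L^{2}}^{1/4}\|u\|_{H^{1}}^{3/4}$ and absorbed against the $L^{2}$--integrability of $\v_{2}$ (respectively the $L^{\infty}$ bound on $\v_{\infty}$). A minor technical point is making sense of $\v R_{z}$ on the non--identified dual $H^{1}(\RE^{3}\backslash\Gamma)^{*}$, which is handled by the splitting described above.
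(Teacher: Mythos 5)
Your proposal is correct, and for the second assertion it takes a genuinely different route from the paper. For the bound $\|\v R_{z}\|_{H^{-1}(\RE^{3}),H^{-1}(\RE^{3})}<1$ the two arguments rest on the same smallness mechanism — Gagliardo--Nirenberg interpolation $\|u\|_{L^{4}}\lesssim\|u\|_{L^{2}}^{1/4}\|u\|_{H^{1}}^{3/4}$ together with the decay of $R_{z}$ into $L^{2}$ — but you run it directly by duality, whereas the paper packages it through the quadratic-form estimate \eqref{V_est}, polarization, and the splitting $-\Delta=(-\Delta+z)-z$, then absorbs the constant $c_{2}(1+\epsilon^{-3})$ against $\|R_{z}u_{\circ}\|_{H^{-1}}\le d_{z}^{-1}\|u_{\circ}\|_{H^{-1}}$; these are equivalent in substance. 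The real divergence is in the broken space: the paper never estimates $\|\v R_{z}\|_{\B(H^{1}(\RE^{3}\backslash\Gamma)^{*})}$ at all. Instead it invokes the resolvent identity $(1-\v R_{z})^{-1}=1-\v R_{z}^{\v}$, the mapping property $R_{z}^{\v}\in\B(H^{1}(\RE^{3}\backslash\Gamma)^{*},H^{1}(\RE^{3}\backslash\Gamma))$ from \eqref{Rvz-int} together with \eqref{v-sob}, and the Inverse Mapping Theorem. Your direct norm bound in $H^{1}(\RE^{3}\backslash\Gamma)^{*}$ — splitting $f_{\inn}\oplus f_{\ex}$, mapping into $H^{-1}(\RE^{3})$, and pairing against $v_{\inn}\oplus v_{\ex}$ via the interior and exterior $L^{4}$ embeddings — is sound and more elementary, and it yields the Neumann series in the broken space as a bonus. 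What the paper's indirect route buys is reusability: the identical argument is repeated verbatim in Lemma \ref{5.1} for the weighted spaces $H^{1}_{-s}(\RE^{3}\backslash\Gamma)^{*}$, where a direct smallness estimate would be less immediate. One cosmetic remark: the factor $\tfrac{1}{k!}$ in the displayed series of the statement is a typo (compare \eqref{L-v-est}); the Neumann series you sum is the intended one.
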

\begin{proof} By \eqref{V_est} and by the polarization identity, for any  $u$ and $v$ in $H^{1}(\RE^{3})$ one has
$$
\big|\langle \v u,v\rangle_{H^{-1}(\RE^{3}),H^{1}(\RE^{3})}\big| \leq
\frac14\Big(c_{1}\epsilon\,\langle -\Delta u,v\rangle _{H^{-1}(\RE^{3}),H^{1}(\RE^{3})}
+c_{2}(1+\epsilon^{-3})\langle u,v\rangle _{H^{-1}(\RE^{3}),H^{1}(\RE^{3})}\Big) 
$$
which gives 
\begin{align*}
&\|\v u\|_{H^{-1}(\RE^{3})}\le\frac14\Big( c_{1}\epsilon\,\|-\Delta u\|_{H^{-1}(\RE^{3})}+
c_{2}(1+\epsilon^{-3})\|u\|_{H^{-1}(\RE^{3})}\Big)\\
\le&\frac14\Big( c_{1}\epsilon\,\|(-\Delta+z) u\|_{H^{-1}(\RE^{3})}+
\big(c_{1}\epsilon\,|z|+c_{2}(1+\epsilon^{-3})\big)\|u\|_{H^{-1}(\RE^{3})}\Big)\,.
\end{align*}
The proof is then concluded by taking $u=R_{z}u_{\circ}$, $u_{\circ}\in H^{-1}(\RE^{3})$, and by 
\begin{align*}
\|R_{z}u_{\circ}\|_{H^{-1}(\RE^{3})}=&\|R_{1}^{1/2}R_{z}u_{\circ}\|_{L^{2}(\RE^{3})}
=
\|R_{z}R_{1}^{1/2}u_{\circ}\|_{L^{2}(\RE^{3})}\\
\le&\|R_{z}\|_{L^{2}(\RE^{3}),L^{2}(\RE^{3})}\|u_{\circ}\|_{H^{-1}(\RE^{3})}\\
\le& d_{z}^{-1}\,\|u_{\circ}\|_{H^{-1}(\RE^{3})}\,,
\end{align*}
where $d_{z}$ is the distance of $z$ from $[0,+\infty)$.\par
Let us now recall the well known resolvent identity in $\B(L^{2}(\RE^{3}))$
\be\label{res-id}
(1-\v R_{z})^{-1}=1-\v R_{z}^{\v}\,.
\ee 
Since the operators in both sides of the above identity are in $\B(H^{-1}(\RE^{3}))$, it extends to $\B(H^{-1}(\RE^{3}))$. By \eqref{Rvz-int}, 
$$
R_{z}^{\v}\in \B(H^{-1}(\RE^{3}),H^{1}(\RE^{3}))\hookrightarrow \B(H^{1}(\RE^{3}\backslash\Gamma)^{*},H^{1}(\RE^{3}\backslash\Gamma))\,;$$
by \eqref{v-sob}, 
$$
\v\in \B(H^{1}(\RE^{3}\backslash\Gamma),H^{1}(\RE^{3}\backslash\Gamma)^{*})\,;
$$ 
then
$$(1-\v R_{z}^{\v})\in \B(H^{1}(\RE^{3}\backslash\Gamma)^{*})\,.
$$
By \eqref{res-id}, this implies that $1-\v R_{z}$ is a bounded bijection from $H^{1}(\RE^{3}\backslash\Gamma)^{*}$ onto itself. Therefore, by the Inverse Mapping Theorem,  $(1-\v R_{z})^{-1}\in \B(H^{1}(\RE^{3}\backslash\Gamma)^{*})$ and \eqref{res-id} holds in  $\B(H^{1}(\RE^{3}\backslash\Gamma)^{*})$.
\end{proof}
\begin{remark} By Lemma \ref{vH-1},  
$$
\Lambda^{\!\v}_{z}|H^{1}(\RE^{3}\backslash\Gamma)=(  1- \v R_{z})  ^{-1}\v\,.
$$
By $(  1- \v R_{z})  ^{-1}\in\B(H^{-1}(\RE^{3}))$ and by $\v\in\B(H^{1}(\RE^{3}), H^{-1}(\RE^{3}))$ one gets $$(  1- \v R_{z})  ^{-1}\v\in\B(H^{1}(\RE^{3}),H^{-1}(\RE^{3}))\,.
$$ 
Thus, by \eqref{L-int} and \eqref{RF-int},
\be\label{LvHs1}
\Lambda^{\!\v}_{z}|H^{s}(\RE^{3})=(  1- \v R_{z})  ^{-1}\v\,,\qquad 1\le s\le 2\,.
\ee
By duality, similarly to Remark \ref{Lt},  $(  1-R_{z}\v)  ^{-1}\in{\B}(  H^{1}(\RE^{3}))$ and 
\eqref{LtL2} improves to
\be\label{LvHs2}
\Lambda^{\!\v}_{z}|H^{s}(\RE^{3})=\v(  1- \v R_{z})  ^{-1}\,,\qquad 0\le s\le 1\,.
\ee
\end{remark}
\subsection{\label{Sec_Layer}Boundary layer operators.}
We introduce the interior/exterior Dirichlet and Neumann trace operators
$$
\gamma_{0}^{\inn/\ex}:H^{s+1/2}(\Omega_{\inn/ex})\to B_{2,2}^{s}(\Gamma)\,,\qquad s>0\,, 
$$
$$
\gamma_{1}^{\inn/\ex}:H^{s+3/2}(\Omega_{\inn/ex})\to B_{2,2}^{s}(\Gamma)\,,\qquad s>0\,,
$$
where $\Omega_{\inn}\equiv\Omega$ and $\Omega_{\ex}:=\Omega_{\ex}$. The Besov-like trace spaces $B_{2,2}^{s}(  \Gamma )  $ 
identify with $H^{s}(\Gamma)  $ when $|s|\le k+1$ and
$\Gamma$ is of class $\mathcal{C}^{k,1}$ (see \cite{JoWa}). Then, we define  the bounded linear operators 
\be\label{g0}
\gamma_{0}:H^{s+1/2}(\RE^{3}\backslash\Gamma)\to B_{2,2}^{s}(\Gamma)\,,\quad\gamma_{0}u:=\frac12\,(\gamma_{0}^{\inn}(u|\Omega_{\inn})+\gamma_{0}^{\ex}(u|\Omega_{\ex}))\,,\qquad s>0\,,
\ee
\be\label{g1}
\gamma_{1}:H^{s+3/2}(\RE^{3}\backslash\Gamma)\to B_{2,2}^{s}(\Gamma)\,,\quad\gamma_{1}u:=\frac12\,(\gamma_{1}^{\inn}(u|\Omega_{\inn})+\gamma_{1}^{\ex}(u|\Omega_{\ex}))\,,\qquad s>0\,.
\ee
The corresponding trace jump bounded operators are defined by
\begin{equation}
[\gamma_{0}]:H^{s+1/2}(  \RE^3\backslash\Gamma )  \rightarrow B_{2,2}^{s}(\Gamma)\,,\quad[
\gamma_{0}]u:=\gamma_{0}^{\-}(u|\Omega_{\inn})-\gamma_{0}^{\+}(u|\Omega_{\ex})\,,
\end{equation}%
\begin{equation}
[\gamma_{1}]:H^{s+3/2}(  \RE^3\backslash\Gamma )\rightarrow B_{2,2}^{s}(\Gamma)\,,\quad[
\gamma_{1}]u:=\gamma_{1}^{\-}(u|\Omega_{\inn})-\gamma_{1}^{\+}(u|\Omega_{\ex})\,.
\end{equation}
By \cite[Lemma 4.3]{McL}, the trace maps $\gamma_{1}^{\-/\+}$ can be extended to the spaces $$H^{1}_{\Delta}(\Omega_{\-/\+}):=\{u_{\-/\+}\in H^{1}(\Omega_{\-/\+}):\Delta_{\Omega_{\-/\+}}u_{\-/\+}\in L^{2}(\Omega_{\-/\+})\}$$ 
as $H^{-1/2}(\Gamma)$-valued bounded operators:
$$
\gamma_{1}^{\-/\+}: H^{1}_{\Delta}(\Omega_{\-/\+})\to H^{-1/2}(\Gamma)\,.
$$
This gives the extensions of the maps $\gamma_{1}$ and $[\gamma_{1}]$ defined on $H^{1}_{\Delta}(\RE^{3}\backslash\Gamma):=H^{1}_{\Delta}(\Omega_{\-})\oplus H^{1}_{\Delta}(\Omega_{\+})$ with values in $H^{-1/2}(\Gamma)$. \par
Then, for any $z\in\CO\backslash(-\infty,0]$,
one defines the single and double-layer operators
\be\label{SL}
\SL_{z}:=(\gamma_{0}R_{\bar z})^{*}=R_{z}\gamma_{0}^{*}\in\B(B_{2,2}^{-s}(\Gamma),H^{3/2-s}(\RE^{3}))\,,
\qquad s>0\,,
\ee
\be\label{DL}
\DL_{z}:=(\gamma_{1}R_{\bar z})^{*}=R_{z}\gamma_{1}^{*}\in\B(B_{2,2}^{-s}(\Gamma),H^{1/2-s}(\RE^{3}))\,,\qquad s>0\,.
\ee
By \eqref{g0}, one has
\be\label{BSL}
S_{z}:=\gamma_{0}\SL_{z}\in \B((H^{s-1/2}(\Gamma),H^{s+1/2}(\Gamma)))\,,\qquad -1/2<s<1/2\,.
\ee
By the mapping properties of the double-layer operator, one gets\footnote{here and below we can avoid the introduction of the cutoff funcion $\chi$ appearing in \cite[Theorems 6.11-6.13]{McL} since we are dealing with the constant coefficients strongly elliptic operator $-\Delta+z$ (compare \cite[Theorem 6.1]{McL} with \cite[equation (6.10)]{McL})} (see \cite[Theorem 6.11]{McL})
\be\label{map}
\DL_{z}\in \B(H^{1/2}(\Gamma),H^{1}(\RE^{3}\backslash\Gamma))\,.
\ee
Hence,  by 
$$
(-(\Delta_{\Omega_{\-}}\oplus\Delta_{\Omega_{\+}})+z)\DL_{z}=0\,,
$$
one gets 
$$
\DL_{z}\in \B(H^{1/2}(\Gamma),H^{1}_{\Delta}(\RE^{3}\backslash\Gamma))\,.
$$
Thus
$$
D_{z}:=\gamma_{1}\DL_{z}\in \B(H^{1/2}(\Gamma),H^{-1/2}(\Gamma))\,.
$$
These mapping properties can be extended to a larger range of Sobolev spaces (see \cite[Theorem 6.12 and successive remarks]{McL}): 
\be\label{SLext}
\SL_{z}\in \B(H^{s-1/2}(\Gamma),H^{s+1}(\RE^{3}))\,,\qquad -1/2\le s\le 1/2\,,
\ee
\be\label{BSLext}
S_{z}\in \B(H^{s-1/2}(\Gamma),H^{s+1/2}(\Gamma))\,,\qquad
-1/2\le s\le 1/2\,, 
\ee
\be\label{DLext}
\DL_{z}\in \B(H^{s+1/2}(\Gamma),H^{s+1}(\RE^{3}\backslash\Gamma))\,,\qquad 
-1/2\le s\le 1/2\,, 
\ee
\be\label{BDLext}
D_{z}\in \B(H^{s+1/2}(\Gamma),H^{s-1/2}(\Gamma))\,,\qquad 
-1/2\le s\le 1/2
\ee
and, whenever $s\ge 0$ in \eqref{SLext}, \eqref{DLext} above, the following jump relations holds (see \cite[Theorem 6.11]{McL})
\be\label{jump0}
[\gamma_{0}]\SL_{z}=0\,,\quad [\gamma_{1}]\SL_{z}=-1\,,
\quad 
\ee
\be\label{jump1}
[\gamma_{0}]\DL_{z}=1\,,\quad [\gamma_{1}]\DL_{z}=0\,.
\ee
Whenever the boundary $\Gamma$ is of class ${\mathcal C}^{1,1}$ one gets an improvement as regards the regularity properties of the single- and double-layer operators (see \cite[Theorem 6.13 and Corollary 6.14]{McL}):
\be\label{SLext+}
\SL_{z}\in \B(H^{s-1/2}(\Gamma),H^{s+1}(\RE^{3}\backslash\Gamma))\,,\qquad 1/2<s\le 1\,,
\ee
\be\label{DLext+}
\DL_{z}\in \B(H^{s+1/2}(\Gamma),H^{s+1}(\RE^{3}\backslash\Gamma))\,,\qquad 
1/2<s\le 1\,, 
\ee
By \eqref{SL}, \eqref{DL} and \eqref{RLvz} one has
\begin{lemma} For any $z\in \varrho(\Delta+\v)\cap (\CO\backslash(-\infty,0])$,
\be\label{SLv} 
\SL^{\v}_{z}:=R^{\v}_{z}\gamma_{0}^{*}=\SL_{z}+R_{z}\Lambda_{z}^{\!\v}\SL_{z}\in\B(B_{2,2}^{-s}(\Gamma),H^{3/2-s}(\RE^{3}))\,,
\qquad 0<s\le 3/2\,,
\ee
\be\label{DLv}
\DL^{\v}_{z}:=R^{\v}_{z}\gamma_{1}^{*}=\DL_{z}+R_{z}\Lambda_{z}^{\!\v}\DL_{z}\in\B(B_{2,2}^{-s}(\Gamma),H^{1/2-s}(\RE^{3}))\,,\qquad 0<s\le 1/2\,.
\ee
\end{lemma}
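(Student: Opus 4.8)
The plan is to deduce both identities directly from the Kre\u\i n-type resolvent representation \eqref{RF-int} of $R_z^{\v}$, by composing it on the right with the co-normal trace maps $\gamma_0^{*}$ and $\gamma_1^{*}$, and then to obtain the claimed mapping properties by chaining the continuity of the free layer operators in \eqref{SL}--\eqref{DL} with that of $R_z\Lambda_z^{\!\v}$ in \eqref{RLvz}.

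First I would observe that the identity $R_z^{\v}=R_z+R_z\Lambda_z^{\!\v}R_z$ of \eqref{RF-int}, a priori an identity on $L^{2}(\RE^{3})$, in fact holds in $\B(H^{s-1}(\RE^{3}),H^{s+1}(\RE^{3}))$ for all $-1\le s\le 1$, since each summand is bounded there by \eqref{free}, \eqref{Rvz-int} and \eqref{L-int}. Now $\gamma_0$ maps $H^{s+1/2}(\RE^{3}\backslash\Gamma)$ into $B_{2,2}^{s}(\Gamma)$ and $\gamma_1$ maps $H^{s+3/2}(\RE^{3}\backslash\Gamma)$ into $B_{2,2}^{s}(\Gamma)$, so by \eqref{dual} their adjoints satisfy $\gamma_0^{*}\in\B(B_{2,2}^{-s}(\Gamma),H^{-s-1/2}(\RE^{3}))$ and $\gamma_1^{*}\in\B(B_{2,2}^{-s}(\Gamma),H^{-s-3/2}(\RE^{3}))$. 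Applying $R_z^{\v}$ in the extended form above (which is licit precisely when the source exponent lies in $[-2,0]$, i.e. when $0<s\le 3/2$ for $\gamma_0^{*}$ and $0<s\le 1/2$ for $\gamma_1^{*}$) and recalling $R_z\gamma_0^{*}=\SL_z$, $R_z\gamma_1^{*}=\DL_z$ from \eqref{SL}--\eqref{DL}, one gets
\[
\SL_z^{\v}=\SL_z+R_z\Lambda_z^{\!\v}\SL_z\,,\qquad \DL_z^{\v}=\DL_z+R_z\Lambda_z^{\!\v}\DL_z\,.
\]

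For the continuity statements, by \eqref{SL} one has $\SL_z\in\B(B_{2,2}^{-s}(\Gamma),H^{3/2-s}(\RE^{3}))$ for every $s>0$; when $0<s\le 3/2$ the target exponent $3/2-s$ belongs to $[0,3/2)\subset[0,2]$, so $H^{3/2-s}(\RE^{3})\hookrightarrow H^{3/2-s}(\RE^{3}\backslash\Gamma)$ and \eqref{RLvz} gives $R_z\Lambda_z^{\!\v}\in\B(H^{3/2-s}(\RE^{3}\backslash\Gamma),H^{3/2-s}(\RE^{3}))$; composing, the second summand lies in $\B(B_{2,2}^{-s}(\Gamma),H^{3/2-s}(\RE^{3}))$, whence \eqref{SLv}. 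The argument for \eqref{DLv} is verbatim the same, starting from \eqref{DL}: for $0<s\le 1/2$ the exponent $1/2-s$ lies in $[0,1/2)\subset[0,2]$, which is exactly why the admissible range of $s$ shrinks from $(0,3/2]$ to $(0,1/2]$ in the double-layer case.

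The only point requiring care is the bookkeeping of Sobolev exponents in the second step: one must check that the negative-order distribution $\gamma_k^{*}\phi$ falls in a space on which the extended resolvent of \eqref{Rvz-int} acts, and that the intertwining $R_z\Lambda_z^{\!\v}R_z\gamma_k^{*}=R_z\Lambda_z^{\!\v}\SL_z$ (resp. $\DL_z$) is legitimate — both follow from the boundedness recorded in Remark \ref{sa} and in the remark containing \eqref{L-int}, together with \eqref{RLvz}. Nothing beyond \eqref{SL}, \eqref{DL}, \eqref{RF-int} and \eqref{RLvz} is needed; in particular the jump relations \eqref{jump0}--\eqref{jump1} and the ${\mathcal C}^{1,1}$ refinements \eqref{SLext+}--\eqref{DLext+} play no role here.
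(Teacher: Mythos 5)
Your argument is correct and is essentially the paper's own: the lemma is stated there as an immediate consequence of the mapping properties \eqref{SL}, \eqref{DL} and \eqref{RLvz} chained together, exactly as you do, with the ranges of $s$ determined by the constraint $0\le 3/2-s\le 2$ (resp. $0\le 1/2-s\le 2$). Your extra bookkeeping justifying the identity $R_z^{\v}\gamma_k^{*}=G_z^{k}+R_z\Lambda_z^{\!\v}G_z^{k}$ via \eqref{Rvz-int} and \eqref{RF-int} is a more explicit version of what the paper leaves implicit, not a different route.
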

By \eqref{SLext}, \eqref{DLext} and \eqref{RLvz}, one has 
\begin{lemma}
\be\label{SLvext}
\SL^{\v}_{z}\in \B(H^{s-1/2}(\Gamma),H^{s+1}(\RE^{3}))\,,\qquad -1/2\le s\le 1/2\,,
\ee
\be\label{DLvext}
\DL^{\v}_{z}\in \B(H^{s+1/2}(\Gamma),H^{s+1}(\RE^{3}\backslash\Gamma))\,,
\qquad -1/2\le s\le 1/2\,.
\ee
\end{lemma}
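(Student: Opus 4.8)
The plan is to obtain both claims directly from the additive representations \eqref{SLv} and \eqref{DLv}, estimating the two summands on their right-hand sides separately and then intersecting the admissible parameter ranges.

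For \eqref{SLvext} I would start from $\SL^{\v}_{z}=\SL_{z}+R_{z}\Lambda_{z}^{\!\v}\SL_{z}$. The first term is handled by \eqref{SLext}, which gives $\SL_{z}\in\B(H^{s-1/2}(\Gamma),H^{s+1}(\RE^{3}))$ for $-1/2\le s\le1/2$. For the second term I would compose: $\SL_{z}$ maps $H^{s-1/2}(\Gamma)$ into $H^{s+1}(\RE^{3})\hookrightarrow H^{s+1}(\RE^{3}\backslash\Gamma)$, and then $R_{z}\Lambda_{z}^{\!\v}\in\B(H^{s+1}(\RE^{3}\backslash\Gamma),H^{s+1}(\RE^{3}))$ by \eqref{RLvz}, which is valid whenever $0\le s+1\le2$, i.e.\ $-1\le s\le1$. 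Hence $R_{z}\Lambda_{z}^{\!\v}\SL_{z}\in\B(H^{s-1/2}(\Gamma),H^{s+1}(\RE^{3}))$ on $-1/2\le s\le1/2$, and adding the two contributions yields \eqref{SLvext}. For \eqref{DLvext} the argument is the same: starting from $\DL^{\v}_{z}=\DL_{z}+R_{z}\Lambda_{z}^{\!\v}\DL_{z}$, the first term is controlled by \eqref{DLext}, while for the second I would use \eqref{DLext} to land in $H^{s+1}(\RE^{3}\backslash\Gamma)$, followed by $R_{z}\Lambda_{z}^{\!\v}\in\B(H^{s+1}(\RE^{3}\backslash\Gamma),H^{s+1}(\RE^{3}))$ and the embedding $H^{s+1}(\RE^{3})\hookrightarrow H^{s+1}(\RE^{3}\backslash\Gamma)$; this gives $R_{z}\Lambda_{z}^{\!\v}\DL_{z}\in\B(H^{s+1/2}(\Gamma),H^{s+1}(\RE^{3}\backslash\Gamma))$ for $-1/2\le s\le1/2$, whence \eqref{DLvext}.

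There is essentially no genuine obstacle here beyond a careful bookkeeping of Sobolev exponents: the one point to verify is that the constraint $-1\le s\le1$ forced by \eqref{RLvz} is not binding, being contained in the constraint $-1/2\le s\le1/2$ dictated by \eqref{SLext} and \eqref{DLext}, so that the resulting intervals are exactly $[-1/2,1/2]$. I would also keep $z$ in $\varrho(\Delta+\v)\cap(\CO\backslash(-\infty,0])$ throughout, so that $\SL^{\v}_{z}$, $\DL^{\v}_{z}$ and $\Lambda_{z}^{\!\v}$ are all defined, and note that the double-layer target stays the broken space $H^{s+1}(\RE^{3}\backslash\Gamma)$ only because of the $\DL_{z}$ summand, since the correction $R_{z}\Lambda_{z}^{\!\v}\DL_{z}$ actually lands in the smaller, unbroken space $H^{s+1}(\RE^{3})$.
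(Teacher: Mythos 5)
Your proposal is correct and is essentially the paper's own argument: the lemma is stated there precisely as a consequence of \eqref{SLext}, \eqref{DLext} and \eqref{RLvz}, applied summand by summand to the decompositions \eqref{SLv} and \eqref{DLv} exactly as you do. Your bookkeeping of the exponent ranges (and the observation that the constraint $-1\le s\le 1$ from \eqref{RLvz} is not binding) is accurate.
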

By \eqref{SLext+}, \eqref{DLext+} and \eqref{RLvz}, one has 
\begin{lemma} Let $\Gamma\in{\mathcal C}^{1,1}$. Then
\be\label{SLvext+}
\SL^{\v}_{z}\in \B(H^{s-1/2}(\Gamma),H^{s+1}(\RE^{3}\backslash\Gamma))\,,\qquad 1/2<s\le 1\,,
\ee
\be\label{DLvext+}
\DL^{\v}_{z}\in \B(H^{s+1/2}(\Gamma),H^{s+1}(\RE^{3}\backslash\Gamma))\,,\qquad 
1/2<s\le 1\,, 
\ee
\end{lemma}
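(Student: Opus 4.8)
The plan is to mirror the argument that gave \eqref{SLvext}--\eqref{DLvext}, now invoking the improved mapping properties \eqref{SLext+}--\eqref{DLext+} which are available precisely because $\Gamma\in\mathcal C^{1,1}$. The starting point is the factorizations
$$
\SL^{\v}_{z}=\SL_{z}+R_{z}\Lambda_{z}^{\!\v}\SL_{z}\,,\qquad \DL^{\v}_{z}=\DL_{z}+R_{z}\Lambda_{z}^{\!\v}\DL_{z}\,,
$$
which follow from $\SL^{\v}_{z}=R^{\v}_{z}\gamma_{0}^{*}$, $\DL^{\v}_{z}=R^{\v}_{z}\gamma_{1}^{*}$ together with the resolvent identity \eqref{RF-int}; being purely algebraic, they remain valid on any trace space on which $\gamma_{0}^{*}$ (resp.\ $\gamma_{1}^{*}$) together with $R_{z}$ and $\SL_{z}$ (resp.\ $\DL_{z}$) are simultaneously defined, in particular on $H^{s-1/2}(\Gamma)$ (resp.\ $H^{s+1/2}(\Gamma)$) for $1/2<s\le1$ thanks to \eqref{SLext+} (resp.\ \eqref{DLext+}).

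First I would dispose of the leading terms: for $1/2<s\le1$, \eqref{SLext+} gives $\SL_{z}\in\B(H^{s-1/2}(\Gamma),H^{s+1}(\RE^{3}\backslash\Gamma))$ and \eqref{DLext+} gives $\DL_{z}\in\B(H^{s+1/2}(\Gamma),H^{s+1}(\RE^{3}\backslash\Gamma))$. For the correction terms I would compose with \eqref{RLvz}: since $1/2<s\le1$ forces $t:=s+1\in(3/2,2]\subset[0,2]$, one has $R_{z}\Lambda_{z}^{\!\v}\in\B(H^{s+1}(\RE^{3}\backslash\Gamma),H^{s+1}(\RE^{3}))$, and $H^{s+1}(\RE^{3})\hookrightarrow H^{s+1}(\RE^{3}\backslash\Gamma)$ by restriction to $\Omega$ and $\Omega_{\ex}$. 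Composing with the bounds for $\SL_{z}$ and $\DL_{z}$ just obtained yields $R_{z}\Lambda_{z}^{\!\v}\SL_{z}\in\B(H^{s-1/2}(\Gamma),H^{s+1}(\RE^{3}\backslash\Gamma))$ and $R_{z}\Lambda_{z}^{\!\v}\DL_{z}\in\B(H^{s+1/2}(\Gamma),H^{s+1}(\RE^{3}\backslash\Gamma))$. Adding the two bounded summands in each factorization gives \eqref{SLvext+} and \eqref{DLvext+}.

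There is no deep step here; the only point requiring attention is the index bookkeeping — checking that the smoothness $s+1$ produced by the free layer operators still lies inside the interval $[0,2]$ on which \eqref{RLvz} is valid, which holds (barely) exactly for $s\le1$ and is the reason the statement is restricted to this range, and that it is the $\mathcal C^{1,1}$ regularity of $\Gamma$ that upgrades \eqref{SLext}--\eqref{DLext} to \eqref{SLext+}--\eqref{DLext+} on these higher Sobolev exponents. As in \eqref{SLvext}--\eqref{DLvext}, one may also observe that the correction terms actually lie in $H^{s+1}(\RE^{3})$ globally, so that the jump across $\Gamma$ in $\SL^{\v}_{z}$ and $\DL^{\v}_{z}$ is entirely inherited from the free layer operators $\SL_{z}$, $\DL_{z}$.
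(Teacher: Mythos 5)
Your proof is correct and coincides with the paper's own justification, which is precisely the one-line remark preceding the lemma: combine \eqref{SLext+}, \eqref{DLext+} with \eqref{RLvz}, noting that $s+1\in(3/2,2]$ stays within the range of validity of \eqref{RLvz}. The index bookkeeping you carry out is exactly what the paper leaves implicit.
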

By either \eqref{SLv} or \eqref{SLvext} one has
\be\label{BLv-}
\gamma_{0}\SL^{\v}_{z}=S_{z}+\gamma_{0}R_{z}\Lambda_{z}^{\!\v}\SL_{z}\in\B(H^{s-1/2}(\Gamma),H^{s+1/2}(\Gamma))\,,
\qquad -1/2<s<1/2\,.
\ee
Since $\gamma_{0}R_{z}=(R_{\bar z}\gamma_{0}^{*})^{*}=\SL_{\bar z}^{*}$, one gets the following improvement of \eqref{BLv-}:
\begin{lemma} 
\be\label{BLv}
S^{\v}_{z}:=S_{z}+\SL_{\bar z}^{*}\Lambda_{z}^{\!\v}\SL_{z}\in\B(H^{s-1/2}(\Gamma),H^{s+1/2}(\Gamma))\,,
\qquad -1/2\le s\le1/2\,.
\ee
\end{lemma}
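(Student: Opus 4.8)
The plan is to recognise $S^{\v}_{z}$ as essentially the boundary layer operator $\gamma_{0}\SL^{\v}_{z}$ of \eqref{BLv-}, but written in the factorised form obtained by using the identity $\gamma_{0}R_{z}=(R_{\bar z}\gamma_{0}^{*})^{*}=\SL_{\bar z}^{*}$ recorded just above the statement; this replaces $\gamma_{0}R_{z}\Lambda_{z}^{\!\v}\SL_{z}$ by $\SL_{\bar z}^{*}\Lambda_{z}^{\!\v}\SL_{z}$, and the gain over \eqref{BLv-} is precisely that the latter is bounded on the \emph{closed} interval $[-1/2,1/2]$. Since $S_{z}=\gamma_{0}\SL_{z}$ already satisfies $S_{z}\in\B(H^{s-1/2}(\Gamma),H^{s+1/2}(\Gamma))$ for $-1/2\le s\le1/2$ by \eqref{BSLext}, it suffices to prove the same mapping property for $\SL_{\bar z}^{*}\Lambda_{z}^{\!\v}\SL_{z}$ and then add the two terms.

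First I would dualise \eqref{SLext}: since the spaces involved are Hilbert spaces, from $\SL_{\bar z}\in\B(H^{\sigma-1/2}(\Gamma),H^{\sigma+1}(\RE^{3}))$, $-1/2\le\sigma\le1/2$, one gets $\SL_{\bar z}^{*}\in\B(H^{-\sigma-1}(\RE^{3}),H^{1/2-\sigma}(\Gamma))$ for $-1/2\le\sigma\le1/2$; choosing $\sigma=-s$ this reads
$$
\SL_{\bar z}^{*}\in\B(H^{s-1}(\RE^{3}),H^{s+1/2}(\Gamma))\,,\qquad -1/2\le s\le1/2\,,
$$
endpoints included. Then I would simply compose, for $-1/2\le s\le1/2$:
$$
H^{s-1/2}(\Gamma)\xrightarrow{\ \SL_{z}\ }H^{s+1}(\RE^{3})\hookrightarrow H^{s+1}(\RE^{3}\backslash\Gamma)\xrightarrow{\ \Lambda_{z}^{\!\v}\ }H^{1-s}(\RE^{3}\backslash\Gamma)^{*}\hookrightarrow H^{s-1}(\RE^{3})\xrightarrow{\ \SL_{\bar z}^{*}\ }H^{s+1/2}(\Gamma)\,,
$$
where the first arrow is \eqref{SLext}, the first embedding is $H^{s+1}(\RE^{3})\hookrightarrow H^{s+1}(\RE^{3}\backslash\Gamma)$, the second arrow is \eqref{Lvz-int} with its parameter equal to $s\in[-1,1]$, the second embedding is \eqref{dual} (applied with $1-s\in[1/2,3/2]\ge0$), and the last arrow is the dualised \eqref{SLext} just obtained. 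This yields $\SL_{\bar z}^{*}\Lambda_{z}^{\!\v}\SL_{z}\in\B(H^{s-1/2}(\Gamma),H^{s+1/2}(\Gamma))$ for $-1/2\le s\le1/2$, and combining with the $S_{z}$-term concludes the proof.

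There is no serious obstacle here; the only point requiring attention is that the endpoints $s=\pm1/2$ are genuinely reached, and the argument makes transparent why the trace-of-resolvent form \eqref{BLv-} falls short: the Dirichlet trace $\gamma_{0}$ in \eqref{g0} is only defined on $H^{\sigma+1/2}(\RE^{3}\backslash\Gamma)$ for $\sigma>0$, so $\gamma_{0}R_{z}$ cannot produce $L^{2}(\Gamma)$ in its output (the case $s=-1/2$), whereas the single-layer estimate \eqref{SLext} is valid at its own endpoints, so its adjoint $\SL_{\bar z}^{*}$ does map $H^{-3/2}(\RE^{3})$ into $L^{2}(\Gamma)$ and $H^{-1/2}(\RE^{3})$ into $H^{1}(\Gamma)$, covering $s=-1/2$ and $s=1/2$ respectively. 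One should of course also verify the elementary index bookkeeping in the chain, in particular that $\SL_{z}$ lands in $H^{s+1}$ of the full space $\RE^{3}$ (as \eqref{SLext} asserts) and not merely of $\RE^{3}\backslash\Gamma$, so that it can be fed into $\Lambda_{z}^{\!\v}$.
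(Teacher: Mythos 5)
Your proof is correct and is essentially identical to the paper's own argument: the paper also dualises \eqref{SLext} to get $\SL_{\bar z}^{*}\in\B(H^{-1-s}(\RE^{3}),H^{1/2-s}(\Gamma))$ and then concludes by composing with \eqref{BSLext}, \eqref{Lvz-int} and \eqref{SLext}, exactly the chain you wrote out. Your version merely makes the index bookkeeping and the intermediate embeddings explicit, which the paper leaves to the reader.
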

\begin{proof} By \eqref{SLext} and duality, $\SL_{\bar z}^{*}\in \B(H^{-1-s}(\RE^{3}),H^{1/2-s}(\Gamma))$. The proof is then concluded by \eqref{BSLext}, \eqref{Lvz-int} and \eqref{SLext} .
\end{proof}
If $\Gamma\in{\mathcal C}^{1,1}$, then, by \eqref{DLvext+},
\be\label{DBLvext-}
\gamma_{1}\DL_{z}^{\v}=D_{z}+\gamma_{1}R_{z}\Lambda_{z}^{\!\v}\DL_{z}\in \B(H^{s+1/2}(\Gamma),H^{s-1/2}(\Gamma))\,,\qquad 1/2<s\le 1\,,
\ee
Since $\gamma_{1}R_{z}=(R_{\bar z}\gamma_{1}^{*})^{*}=\DL_{\bar z}^{*}$, one can improve \eqref{DBLvext-} even without requiring $\Gamma\in{\mathcal C}^{1,1}$:
\begin{lemma}
\be\label{DBLvext}
D^{\v}_{z}:=D_{z}+\DL_{\bar z}^{*}\Lambda_{z}^{\!\v}\DL_{z}\in \B(H^{s+1/2}(\Gamma),H^{s-1/2}(\Gamma))\,,
\qquad -1/2\le s\le 1/2\,.
\ee
\end{lemma}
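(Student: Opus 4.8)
The plan is to follow verbatim the strategy used for $S^{\v}_{z}$ in the preceding Lemma, splitting $D^{\v}_{z}$ into the two summands $D_{z}$ and $\DL_{\bar z}^{*}\Lambda_{z}^{\!\v}\DL_{z}$ and bounding each one on the scale $H^{s+1/2}(\Gamma)\to H^{s-1/2}(\Gamma)$, $-1/2\le s\le1/2$. The first summand is immediate: \eqref{BDLext} already gives $D_{z}\in\B(H^{s+1/2}(\Gamma),H^{s-1/2}(\Gamma))$ on exactly this range, so the whole matter reduces to controlling the composition $\DL_{\bar z}^{*}\Lambda_{z}^{\!\v}\DL_{z}$.

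For the composition I would chain three mapping properties that are already established. First, by \eqref{DLext}, $\DL_{z}\in\B(H^{s+1/2}(\Gamma),H^{s+1}(\RE^{3}\backslash\Gamma))$ for $-1/2\le s\le1/2$; note that the target is only the broken Sobolev space (the jump relation $[\gamma_{0}]\DL_{z}=1$ prevents landing in $H^{s+1}(\RE^{3})$), but this is harmless because $\Lambda_{z}^{\!\v}$ is defined on the broken scale. Second, \eqref{Lvz-int} with parameter equal to $s$ gives $\Lambda_{z}^{\!\v}\in\B(H^{s+1}(\RE^{3}\backslash\Gamma),H^{1-s}(\RE^{3}\backslash\Gamma)^{*})$. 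Third, dualizing \eqref{DLext} with parameter $-s$ yields $\DL_{\bar z}^{*}\in\B(H^{1-s}(\RE^{3}\backslash\Gamma)^{*},H^{s-1/2}(\Gamma))$, again for $-1/2\le s\le1/2$. Composing these three and adding $D_{z}$ gives the claim; the decomposition itself comes from $\gamma_{1}\DL^{\v}_{z}=D_{z}+\gamma_{1}R_{z}\Lambda_{z}^{\!\v}\DL_{z}$ via \eqref{DLv} together with the identity $\gamma_{1}R_{z}=(R_{\bar z}\gamma_{1}^{*})^{*}=\DL_{\bar z}^{*}$.

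The point to be careful about — and the reason the formula is written with $\DL_{\bar z}^{*}$ rather than $\gamma_{1}R_{z}$ — is the low regularity of the Neumann trace: $\gamma_{1}$ is only bounded on $H^{t+3/2}(\RE^{3}\backslash\Gamma)$ for $t>0$, so composing $\gamma_{1}\circ R_{z}\circ\Lambda_{z}^{\!\v}\circ\DL_{z}$ literally would forfeit the endpoints $s=\pm1/2$ and would in general require $\Gamma\in{\mathcal C}^{1,1}$, exactly as in \eqref{DBLvext-}. The identity $\gamma_{1}R_{z}=\DL_{\bar z}^{*}$, recorded just before the statement, sidesteps this: being a dual operator, $\DL_{\bar z}^{*}$ is automatically bounded on the larger distributional space $H^{1-s}(\RE^{3}\backslash\Gamma)^{*}$, which is precisely the codomain of $\Lambda_{z}^{\!\v}\DL_{z}$ produced by \eqref{DLext} and \eqref{Lvz-int}. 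I expect the only residual bookkeeping — and hence the only real content of the argument — to be checking that the three index ranges overlap on $[-1/2,1/2]$: \eqref{DLext} is invoked once with parameter $s$ and once with parameter $-s$, both requiring the range $[-1/2,1/2]$; \eqref{Lvz-int} requires its parameter in $[-1,1]$; and \eqref{BDLext} requires $s\in[-1/2,1/2]$.
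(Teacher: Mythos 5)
Your proof is correct and follows essentially the same route as the paper's: the paper likewise treats $D_{z}$ via \eqref{BDLext} and bounds the composition by chaining \eqref{DLext}, \eqref{Lvz-int}, and the dual of \eqref{DLext} (giving $\DL_{\bar z}^{*}\in\B(H^{s+1}(\RE^{3}\backslash\Gamma)^{*},H^{-s-1/2}(\Gamma))$). Your additional observation about why $\DL_{\bar z}^{*}$ must replace $\gamma_{1}R_{z}$ to reach the endpoints without $\Gamma\in{\mathcal C}^{1,1}$ matches the remark the paper makes just before the statement.
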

\begin{proof} 
By \eqref{DLext} and duality, $\DL_{\bar z}^{*}\in \B(H^{s+1}(\RE^{3}\backslash\Gamma)^{*},H^{-s-1/2}(\Gamma))$. The proof is then concluded by \eqref{BDLext}, \eqref{Lvz-int} and \eqref{DLext} .
\end{proof}
In order to prove the jump relations of the double-layer operator relative to $\Delta+\v$ we  need a technical result:
\begin{lemma}\label{tech} If $v\in H^{1}(\RE^{3}\backslash\Gamma)^{*}$, then $[\gamma_{1}]R_{z}v=0$ in $H^{-1/2}(\Gamma)$ for any $z\in\CO\backslash(-\infty,0]$. 
\end{lemma}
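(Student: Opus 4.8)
The plan is to extract $[\gamma_{1}]R_{z}v$ from a weak Green identity, exploiting the fact that the equation $(-\Delta+z)R_{z}v=v$ only involves $v$ through its image in $H^{-1}(\RE^{3})$, whereas a nonzero Neumann jump of $R_{z}v$ would record a surface‑concentrated contribution that $v\in H^{1}(\RE^{3}\backslash\Gamma)^{*}$, tested against functions with matching traces, cannot produce.

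First I would note that, by the embedding $H^{1}(\RE^{3}\backslash\Gamma)^{*}\hookrightarrow H^{-1}(\RE^{3})$ of \eqref{dual} together with \eqref{free}, $u:=R_{z}v\in H^{1}(\RE^{3})$; in particular $[\gamma_{0}]u=0$, so only $[\gamma_{1}]u$ is in question. Writing $v=v_{\inn}\oplus v_{\ex}$ with $v_{\inn}\in H^{-1}_{\overline\Omega}(\RE^{3})=H^{1}(\Omega)^{*}$ and $v_{\ex}\in H^{-1}_{\Omega^{c}}(\RE^{3})=H^{1}(\Omega_{\ex})^{*}$, and restricting $(-\Delta+z)u=v$ to the open sets $\Omega$ and $\Omega_{\ex}$ (on $\Omega$ only $v_{\inn}$ survives, on $\Omega_{\ex}$ only $v_{\ex}$), one gets $-\Delta(u|\Omega)=v_{\inn}-z\,u|\Omega\in H^{1}(\Omega)^{*}$ and likewise on $\Omega_{\ex}$. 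This lets one define $\gamma_{1}^{\inn}u,\gamma_{1}^{\ex}u\in H^{-1/2}(\Gamma)$ through the Green identity, extending the trace $\gamma_{1}^{\inn/\ex}$ from the $H^{1}_{\Delta}$–setting of \cite{McL} to functions whose sidewise Laplacian lies only in the corresponding dual Sobolev space. Summing the interior and exterior Green identities and pairing with an arbitrary $\phi\in H^{1}(\RE^{3})$ (so that the $[\gamma_{0}]u$–$\gamma_{1}\phi$ surface term drops out, since $[\gamma_{0}]u=0$) I expect to arrive at
\[
\langle[\gamma_{1}]u,\gamma_{0}\phi\rangle_{H^{-1/2}(\Gamma),H^{1/2}(\Gamma)}=\langle v,\phi\rangle_{H^{-1}(\RE^{3}),H^{1}(\RE^{3})}-\langle v,\phi\rangle_{H^{1}(\RE^{3}\backslash\Gamma)^{*},H^{1}(\RE^{3}\backslash\Gamma)}\,,
\]
whose right‑hand side vanishes because $\phi$ has matching traces across $\Gamma$, so the two dualities assign the same value to $(v,\phi)$ — this is exactly the content of the identification in \eqref{dual}. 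Since $\gamma_{0}:H^{1}(\RE^{3})\to H^{1/2}(\Gamma)$ is onto, this forces $[\gamma_{1}]R_{z}v=0$ in $H^{-1/2}(\Gamma)$.

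The step I expect to be the main obstacle is the rigorous low‑regularity bookkeeping: $\Delta(R_{z}v)$ need not be in $L^{2}$ on either side of $\Gamma$, only in $H^{1}(\Omega)^{*}$, resp.\ $H^{1}(\Omega_{\ex})^{*}$, so one must first extend $\gamma_{1}^{\inn/\ex}$ one notch beyond $H^{1}_{\Delta}$ and check that the associated Green identity persists, and then verify carefully that the $H^{-1}(\RE^{3})$– and $H^{1}(\RE^{3}\backslash\Gamma)^{*}$–pairings of $v$ with a no‑jump test function really do coincide; once this is settled the cancellation above is immediate. (In the applications of this lemma the relevant $v$ is a genuine function, so $R_{z}v\in H^{1}_{\Delta}(\RE^{3}\backslash\Gamma)$ and the extension step is not even needed.)
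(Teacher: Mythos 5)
Your argument is correct in substance but takes a genuinely different route from the paper's. The paper first reduces to a single $z$ with $\ker(S_{z})=\{0\}$ via the resolvent identity (the correction $R_{w}R_{z}v$ lands in $H^{3}(\RE^{3})\subset\ker([\gamma_{1}])$), then localizes with a cutoff $\chi$ and shows that the interior and exterior Neumann traces $\psi_{\inn/\ex}$ of $\chi R_{z}v$ solve the \emph{same} single-layer boundary integral equation (via \cite[Theorems 7.5 and 7.15]{McL}); since the Dirichlet data agree, injectivity of $S_{z}$ forces $\psi_{\inn}=\psi_{\ex}$. You instead read $[\gamma_{1}]R_{z}v$ directly as the defect between the global weak formulation of $(-\Delta+z)u=v$ tested on $H^{1}(\RE^{3})$ and the sum of the two one-sided Green identities, and identify that defect with $\langle v,\phi\rangle_{H^{-1}(\RE^{3}),H^{1}(\RE^{3})}-\langle v,\phi\rangle_{H^{1}(\RE^{3}\backslash\Gamma)^{*},H^{1}(\RE^{3}\backslash\Gamma)}$, which vanishes by the very identification $H^{1}(\mathcal O)^{*}=H^{-1}_{\overline{\mathcal O}}(\RE^{3})$ underlying \eqref{dual}. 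This is more elementary (no boundary-integral representation, no injectivity of $S_{z}$, no reduction to a special $z$) and it makes transparent \emph{why} the jump vanishes: the canonical extensions $v_{\inn},v_{\ex}$ carry no surface-concentrated part. The ``obstacle'' you flag --- extending $\gamma_{1}^{\inn/\ex}$ and Green's identity to $u\in H^{1}$ with $\Delta u$ only in $H^{1}(\Omega_{\inn/\ex})^{*}$ --- is exactly the general ($\widetilde H^{-1}$) form of \cite[Lemma 4.3]{McL}, i.e.\ the same tool the paper already uses; the only point worth stating explicitly is that the weak conormal derivative depends on the chosen extension of $\Delta u|\Omega_{\inn/\ex}$, so your conclusion is that the jump vanishes for the canonical extensions supplied by the decomposition $v=v_{\inn}\oplus v_{\ex}$, which is the meaning needed for \eqref{jumpv0}--\eqref{jumpv1}.

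One small correction: your closing parenthesis is inaccurate. In the applications the lemma is applied to $v=\Lambda^{\!\v}_{z}\SL_{z}\phi$ or $v=\Lambda^{\!\v}_{z}\DL_{z}\phi$, which lies only in $H^{1-s}(\RE^{3}\backslash\Gamma)^{*}$; with $\v_{2}\in L^{2}$ this is at best an $L^{3/2}+L^{2}$ function, so $\Delta R_{z}v$ need not be in $L^{2}$ near $\Gamma$ and $R_{z}v\notin H^{1}_{\Delta}(\RE^{3}\backslash\Gamma)$ in general. The low-regularity extension step is therefore genuinely needed in the applications, not just for the statement's generality.
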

\begin{proof} At first let us notice that it suffices to show that the result holds for a single $z\in\CO\backslash(-\infty,0]$. Indeed, by the resolvent identity $R_{w}v=R_{z}v+(z-w)R_{w}R_{z}v$, one gets $R_{w}R_{z}v\in H^{3}(\RE^{3})\subset\ker([\gamma_{1}])$. In particular, we choose $z$ such that $\ker(S_{z})=\{0\}$ (see, e.g., Lemma \eqref{coerc} below). \par 
Given $v\in H^{1}(\RE^{3}\backslash\Gamma)^{*}=
H^{-1}_{\overline\Omega}(\RE^{3})\oplus H^{-1}_{\Omega^{c}}(\RE^{3})\subseteq H^{-1}(\RE^{3})$ and $\chi\in{\mathcal C}_{\text{comp}}^{\infty}(\RE^{3})$ such that $\chi=1$ on a compact set containing an open  neighborhood of $\overline\Omega$, let us set $u:=\chi R_{z}v$. Since $\gamma_{1}^{\inn/\ex}u=\gamma_{1}^{\inn/\ex}R_{z}v$, it suffices to show that $[\gamma_{1}]u=0$. Let us define $u_{\inn/\ex}:=\chi R_{z}v|\Omega_{\inn/\ex}\in H^{1}(\Omega_{\inn/\ex})$, 
$f_{\inn/\ex}:=((-\Delta+z)\chi R_{z}v)|\Omega_{\inn/\ex}\in H^{1}(\Omega_{\inn/\ex})$
 and $g_{\inn/\ex}:=\gamma^{\inn/\ex}_{0}u_{\inn/\ex}\in H^{1/2}(\Gamma)$. Then $u_{\inn/\ex}$ solves the Dirichlet boundary value problems
\begin{equation}
\begin{cases}(  -\Delta_{\Omega_{\inn/\ex}}+z)u_{\inn/\ex}  =f_{\inn/\ex}\,,\\
\gamma^{\inn/\ex}_{0}u_{\inn/\ex}=g_{\inn/\ex}
\end{cases}
\end{equation}
and so, by \cite[Theorems 7.5 and 7.15]{McL} (notice that both $u_{\ex}$ and $f_{\ex}$ have a compact support; in particular the radiation condition ${\mathcal M}u_{\ex}=0$ there required is here satisfied),
$\psi_{\inn/\ex}:=\gamma_{1}^{\inn/\ex}u_{\inn/\ex}\in H^{-1/2}(\Gamma)$ satisfy the equations
\be
S_{z}\psi_{\inn/\ex}=\frac{1}{2}\,(1+D_{z})g_{\inn/\ex}-\gamma_{0}R_{z}v
\ee
Since $u_{\inn}\oplus u_{\ex}=\chi R_{z}v\in H^{1}(\RE^{3})$, one has $g_{\inn}=g_{\ex}$ and so 
$[\gamma_{1}]R_{z}v=\psi_{\inn}-\psi_{ex}=0$ is consequence of $\ker(S_{z})=\{0\}$.
\end{proof}
\begin{lemma} If $s\ge 0$ in \eqref{SLvext} and \eqref{DLvext}, then
\be\label{jumpv0}
[\gamma_{0}]\SL^{\v}_{z}=0\,,\quad [\gamma_{1}]\SL^{\v}_{z}=-1\,,
\quad 
\ee
\be\label{jumpv1}
[\gamma_{0}]\DL^{\v}_{z}=1\,,\quad [\gamma_{1}]\DL^{\v}_{z}=0\,.
\ee
\end{lemma}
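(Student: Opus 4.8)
The plan is to use the representation formulae
\be\label{plan-SLv}
\SL^{\v}_{z}=\SL_{z}+R_{z}\Lambda_{z}^{\!\v}\SL_{z}\,,\qquad
\DL^{\v}_{z}=\DL_{z}+R_{z}\Lambda_{z}^{\!\v}\DL_{z}
\ee
from \eqref{SLv}--\eqref{DLv}, and to reduce the jump relations for the perturbed layer operators to those for the free ones, i.e. to \eqref{jump0}--\eqref{jump1}, using the fact that the \emph{correction terms} $R_{z}\Lambda_{z}^{\!\v}\SL_{z}$ and $R_{z}\Lambda_{z}^{\!\v}\DL_{z}$ have vanishing traces jumps. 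The point is that $\Lambda_{z}^{\!\v}\SL_{z}$ and $\Lambda_{z}^{\!\v}\DL_{z}$ land in $H^{1}(\RE^{3}\backslash\Gamma)^{*}$ (by \eqref{Lvz-int} together with the mapping properties \eqref{SLvext}, \eqref{DLvext} of the layer operators, since the range $H^{s+1}(\RE^{3})$, resp. $H^{s+1}(\RE^{3}\backslash\Gamma)$, sits inside the domain of $\Lambda_{z}^{\!\v}$ for the relevant $s$), so that Lemma \ref{tech} applies and gives $[\gamma_{1}]R_{z}\Lambda_{z}^{\!\v}\SL_{z}=0$, $[\gamma_{1}]R_{z}\Lambda_{z}^{\!\v}\DL_{z}=0$. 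For the Dirichlet jumps, the correction $R_{z}\Lambda_{z}^{\!\v}\SL_{z}$ is in $H^{s+1}(\RE^{3})\subseteq H^{1}(\RE^{3})\subset\ker([\gamma_{0}])$ (and likewise for the double layer), so $[\gamma_{0}]$ of the correction vanishes trivially.

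Concretely, I would proceed as follows. First, fix $z\in\varrho(\Delta+\v)\cap(\CO\backslash(-\infty,0])$ and take $s\ge 0$ in the stated ranges. For $[\gamma_{0}]\SL^{\v}_{z}$: by \eqref{RLvz} (or directly by \eqref{Rvz-int} composed with \eqref{Lvz-int} and \eqref{SLext}) the term $R_{z}\Lambda_{z}^{\!\v}\SL_{z}$ takes values in $H^{t}(\RE^{3})$ for some $t\ge 1$, hence in $H^{1}(\RE^{3})$, which is continuously embedded in the kernel of $[\gamma_{0}]$; thus $[\gamma_{0}]\SL^{\v}_{z}=[\gamma_{0}]\SL_{z}=0$ by \eqref{jump0}. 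The same argument gives $[\gamma_{0}]\DL^{\v}_{z}=[\gamma_{0}]\DL_{z}=1$ by \eqref{jump1}, since $R_{z}\Lambda_{z}^{\!\v}\DL_{z}\in H^{t}(\RE^{3})$ as well (note $R_{z}$ smooths the $H^{t'}(\RE^{3}\backslash\Gamma)^{*}\subset H^{-t'}(\RE^{3})$-valued output of $\Lambda_{z}^{\!\v}\DL_{z}$ into a genuine $H^{3}$-ish function on all of $\RE^{3}$, so no jump survives). For the Neumann jumps: by \eqref{Lvz-int} and the mapping properties of $\SL_{z}$ (resp. $\DL_{z}$) in \eqref{SLvext} (resp. \eqref{DLvext}), for $s\ge 0$ the element $\Lambda_{z}^{\!\v}\SL_{z}\varphi$ (resp. $\Lambda_{z}^{\!\v}\DL_{z}\varphi$) lies in $H^{1}(\RE^{3}\backslash\Gamma)^{*}$; by Lemma \ref{tech}, $[\gamma_{1}]R_{z}\Lambda_{z}^{\!\v}\SL_{z}\varphi=0$ and $[\gamma_{1}]R_{z}\Lambda_{z}^{\!\v}\DL_{z}\varphi=0$ in $H^{-1/2}(\Gamma)$. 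Hence $[\gamma_{1}]\SL^{\v}_{z}=[\gamma_{1}]\SL_{z}=-1$ and $[\gamma_{1}]\DL^{\v}_{z}=[\gamma_{1}]\DL_{z}=0$, again by \eqref{jump0}--\eqref{jump1}.

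The main obstacle is the bookkeeping of Sobolev exponents: one must check that for $s\ge 0$ within the stated ranges in \eqref{SLvext}--\eqref{DLvext}, the output of $\SL_{z}$ (resp. $\DL_{z}$) actually lands in a space on which $\Lambda_{z}^{\!\v}$ is defined \emph{and} such that $\Lambda_{z}^{\!\v}$ maps it into $H^{1}(\RE^{3}\backslash\Gamma)^{*}$ — i.e. that the hypothesis of Lemma \ref{tech} is met. Here \eqref{Lvz-int} is the key input: $\Lambda_{z}^{\!\v}\in\B(H^{1+\sigma}(\RE^{3}\backslash\Gamma),H^{1-\sigma}(\RE^{3}\backslash\Gamma)^{*})$ for $-1\le\sigma\le 1$, so to reach $H^{1}(\RE^{3}\backslash\Gamma)^{*}$ (the $\sigma=0$ target) I need the layer-operator output to be at least $H^{1}$; and indeed $\SL_{z}:H^{s-1/2}(\Gamma)\to H^{s+1}(\RE^{3})$ with $s\ge0$ gives $H^{s+1}\hookrightarrow H^{1}$, and $\DL_{z}:H^{s+1/2}(\Gamma)\to H^{s+1}(\RE^{3}\backslash\Gamma)$ likewise. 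One small subtlety is that for the single-layer case I should use the global-space mapping property so that $[\gamma_{0}]\SL_{z}=0$ is automatic, whereas for the double-layer I genuinely rely on \eqref{jump1}; this is purely a matter of invoking the right line among \eqref{jump0}--\eqref{jump1}. Everything else is a direct application of linearity of the trace-jump operators and the already-established identities.
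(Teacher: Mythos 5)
Your proposal is correct and follows essentially the same route as the paper: decompose $\SL^{\v}_{z}$ and $\DL^{\v}_{z}$ via \eqref{SLv}--\eqref{DLv}, kill the Dirichlet jump of the correction term through $\ran(R_{z}\Lambda_{z}^{\!\v}\SL_{z}),\ran(R_{z}\Lambda_{z}^{\!\v}\DL_{z})\subseteq H^{1}(\RE^{3})$, and kill the Neumann jump by checking that $\Lambda_{z}^{\!\v}\SL_{z}$ and $\Lambda_{z}^{\!\v}\DL_{z}$ take values in $H^{1}(\RE^{3}\backslash\Gamma)^{*}$ so that Lemma \ref{tech} applies. The only blemish is a citation slip (you invoke \eqref{SLvext}, \eqref{DLvext} where you mean the free-operator properties \eqref{SLext}, \eqref{DLext}), which your final exponent bookkeeping makes clear is harmless.
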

\begin{proof} $[\gamma_{0}]\SL^{\v}_{z}=0$ is consequence of $\ran(\SL^{\v}_{z})\subseteq H^{1}(\RE^{3})$. By \eqref{RLvz}, $\ran(R_{z}\Lambda_{z}^{\!\v}\DL_{z})\subseteq H^{1}(\RE^{3})$ and so $[\gamma_{0}]\DL^{\v}_{z}=[\gamma_{0}]\DL_{z}+[\gamma_{0}]R_{z}\Lambda_{z}^{\!\v}\DL_{z}=[\gamma_{0}]\DL_{z}=1$. \par
Since $\Lambda_{z}^{\!\v}\SL_{z}\in\B(H^{s-1/2}(\Gamma),H^{1-s}(\RE^{3}\backslash\Gamma)^{*})$, 
$\Lambda_{z}^{\!\v}\DL_{z}\in\B(H^{s+1/2}(\Gamma),
H^{1-s}(\RE^{3}\backslash\Gamma)^{*})$ and $s\ge 0$, by Lemma \ref{tech} one gets 
$$
[\gamma_{1}]\SL^{\v}_{z}=[\gamma_{1}]\SL_{z}+[\gamma_{1}]R_{z}\Lambda_{z}^{\!\v}\SL_{z}=
[\gamma_{1}]\SL_{z}=-1\,,
$$
$$
[\gamma_{1}]\DL^{\v}_{z}=[\gamma_{1}]\DL_{z}+[\gamma_{1}]R_{z}\Lambda_{z}^{\!\v}\DL_{z}=
[\gamma_{1}]\DL_{z}=0\,.
$$
\end{proof}
When $\v=0$, it is well known that the boundary layer operators have bounded inverses. This
property is next extended to the operators relative to $\Delta+\v$.
\begin{lemma}\label{coerc} There exist $Z^{\circ}_{\v,d}$ and $Z^{\circ}_{\v,n}$, not empty open subsets of $\varrho(\Delta+\v)$, such that 
$$
\forall z\in Z^{\circ}_{\v,d}\,,\quad (S_{z}^{\v})^{-1}\in{\B}(  H^{1/2}(\Gamma)  ,H^{-1/2}(\Gamma) )\,,\qquad\forall z\in Z^{\circ}_{\v,n}\,,\quad
(D_{z}^{\v})  ^{-1}\in{\B}(  H^{-1/2}(\Gamma)  ,H^{1/2}(\Gamma) )\,.
$$
In particular, there exists $\lambda_{\v}>\sup\sigma(\Delta+\v)$ such that $[\lambda_{\v},+\infty)\subset Z^{\circ}_{\v,d}\cap Z^{\circ}_{\v,n}$; moreover, $Z^{\circ}_{\v,d}\cap Z^{\circ}_{0,d}\not=\varnothing $, $Z^{\circ}_{\v,n}\cap Z^{\circ}_{0,n}\not=\varnothing $,  and both  $Z^{\circ}_{\v,d}$ and $Z^{\circ}_{\v,n}$ can be chosen to be symmetric with respect to the real axis.
\end{lemma}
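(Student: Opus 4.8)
The plan is to realise $S_{z}^{\v}$ and $D_{z}^{\v}$ as compact perturbations of the corresponding free boundary operators, so that invertibility reduces to injectivity by the Fredholm alternative, and then to settle injectivity for $z$ large and real by an interior/exterior boundary value problem argument combined with Dirichlet, resp. Neumann, bracketing for $\Delta+\v$. The first step is to prove that the remainders $S_{z}^{\v}-S_{z}=\SL_{\bar z}^{*}\Lambda_{z}^{\!\v}\SL_{z}$ and $D_{z}^{\v}-D_{z}=\DL_{\bar z}^{*}\Lambda_{z}^{\!\v}\DL_{z}$ (see \eqref{BLv}, \eqref{DBLvext}) are smoothing along $\Gamma$. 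Indeed $\SL_{z}$ maps $H^{-1/2}(\Gamma)$ into $H^{1}(\RE^{3})$ and $\DL_{z}$ maps $H^{1/2}(\Gamma)$ into $H^{1}(\RE^{3}\backslash\Gamma)$; by the three-dimensional Sobolev embedding $H^{1}\hookrightarrow L^{6}$, multiplication by $\v=\v_{2}+\v_{\infty}$ carries these ranges into $L^{3/2}(\RE^{3})+L^{2}(\RE^{3})\hookrightarrow H^{-1/2}(\RE^{3})$. Using the identity $\Lambda_{z}^{\!\v}=\v+\v R_{z}\Lambda_{z}^{\!\v}$ together with \eqref{free}, \eqref{L-int} and \eqref{Lvz-int} one upgrades this to $\Lambda_{z}^{\!\v}\SL_{z}\in\B(H^{-1/2}(\Gamma),H^{-1/2}(\RE^{3}))$ and $\Lambda_{z}^{\!\v}\DL_{z}\in\B(H^{1/2}(\Gamma),H^{-1/2}(\RE^{3}))$; finally $\SL_{\bar z}^{*}=\gamma_{0}R_{z}$ and $\DL_{\bar z}^{*}=\gamma_{1}R_{z}$ carry $H^{-1/2}(\RE^{3})$ into $H^{3/2}(\RE^{3})$ and then, by \eqref{g0}, \eqref{g1}, into $H^{1}(\Gamma)$, resp. $L^{2}(\Gamma)$. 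Since $H^{1}(\Gamma)$ (resp. $L^{2}(\Gamma)$) is compactly embedded in $H^{1/2}(\Gamma)$ (resp. $H^{-1/2}(\Gamma)$), one concludes $S_{z}^{\v}-S_{z}\in{\mathfrak S}_{\infty}(H^{-1/2}(\Gamma),H^{1/2}(\Gamma))$ and $D_{z}^{\v}-D_{z}\in{\mathfrak S}_{\infty}(H^{1/2}(\Gamma),H^{-1/2}(\Gamma))$ for every $z\in\varrho(\Delta+\v)\cap(\CO\backslash(-\infty,0])$.

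Next, on $Z^{\circ}_{0,d}$ (resp. $Z^{\circ}_{0,n}$) the free operator $S_{z}$ (resp. $D_{z}$) is an isomorphism $H^{-1/2}(\Gamma)\to H^{1/2}(\Gamma)$ (resp. $H^{1/2}(\Gamma)\to H^{-1/2}(\Gamma)$), so by the previous step $S_{z}^{\v}$ and $D_{z}^{\v}$ are Fredholm of index zero there, hence boundedly invertible if and only if injective. I would then simply \emph{define} $Z^{\circ}_{\v,d}$ (resp. $Z^{\circ}_{\v,n}$) to be the set of $z\in\varrho(\Delta+\v)\cap(\CO\backslash(-\infty,0])$ at which $S_{z}^{\v}$ (resp. $D_{z}^{\v}$) is boundedly invertible. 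This is an open subset of $\varrho(\Delta+\v)$ because $z\mapsto S_{z}^{\v},D_{z}^{\v}$ are holomorphic there (all of $R_{z}$, $R_{z}^{\v}$, $\Lambda_{z}^{\!\v}$, $\SL_{z}$, $\DL_{z}$ are) and invertibility is an open condition; and it is automatically symmetric with respect to the real axis, since $S_{z}^{\v}=\gamma_{0}R_{z}^{\v}\gamma_{0}^{*}$ and $(R_{z}^{\v})^{*}=R_{\bar z}^{\v}$ give $(S_{z}^{\v})^{*}=S_{\bar z}^{\v}$, and likewise $(D_{z}^{\v})^{*}=D_{\bar z}^{\v}$, so invertibility at $z$ and at $\bar z$ are equivalent.

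It remains to produce a real half-line inside both sets. Fix $\lambda>0$ with $\lambda\in Z^{\circ}_{0,d}\cap Z^{\circ}_{0,n}$, with $\lambda>\sup\sigma(\Delta+\v)$, and with $\lambda$ larger than the (finite) suprema of the spectra of the interior and exterior Neumann realizations of $\Delta+\v$. If $S_{\lambda}^{\v}\phi=0$, then $u:=\SL_{\lambda}^{\v}\phi\in H^{1}(\RE^{3})$ satisfies, by \eqref{jumpv0}, $[\gamma_{0}]u=0$ and $\gamma_{0}^{\inn}u=\gamma_{0}^{\ex}u=\gamma_{0}u=S_{\lambda}^{\v}\phi=0$, while $(-\Delta-\v+\lambda)u$ is supported on $\Gamma$; hence $u|_{\Omega}$ and $u|_{\Omega_{\ex}}$ are $H^{1}_{0}$-solutions of $(-\Delta-\v+\lambda)v=0$, and since Dirichlet bracketing places $\sup\sigma$ of the interior/exterior Dirichlet realizations of $\Delta+\v$ below $\sup\sigma(\Delta+\v)<\lambda$, one gets $u=0$, so $\phi=-[\gamma_{1}]u=0$ by \eqref{jumpv0}. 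The argument for $D_{\lambda}^{\v}$ is dual: from $D_{\lambda}^{\v}\psi=0$ and \eqref{jumpv1}, $w:=\DL_{\lambda}^{\v}\psi\in H^{1}(\RE^{3}\backslash\Gamma)$ has $\gamma_{1}^{\inn}w=\gamma_{1}^{\ex}w=\gamma_{1}w=0$ and solves the homogeneous equation off $\Gamma$, so by the generalized Green's formula $w|_{\Omega}$ and $w|_{\Omega_{\ex}}$ are eigenfunctions, with eigenvalue $\lambda$, of the interior and exterior \emph{Neumann} realizations of $\Delta+\v$, whence $w=0$ and $\psi=[\gamma_{0}]w=0$. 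Thus $S_{\lambda}^{\v}$ and $D_{\lambda}^{\v}$ are injective, hence invertible, for every such $\lambda$; taking $\lambda_{\v}$ equal to one such threshold yields $[\lambda_{\v},+\infty)\subset Z^{\circ}_{\v,d}\cap Z^{\circ}_{\v,n}$, and a fortiori $Z^{\circ}_{\v,d}\cap Z^{\circ}_{0,d}\neq\varnothing$ and $Z^{\circ}_{\v,n}\cap Z^{\circ}_{0,n}\neq\varnothing$.

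The delicate point is the compactness in the first step: the mapping property $\v\in\B(H^{1}(\RE^{3}),H^{-1}(\RE^{3}))$ issuing from \eqref{v-sob} only gives that $S_{z}^{\v}-S_{z}$ is bounded from $H^{-1/2}(\Gamma)$ to $H^{1/2}(\Gamma)$, not compact, so one genuinely has to recover the extra half derivative by routing the Kato--Rellich multiplication through $L^{3/2}(\RE^{3})+L^{2}(\RE^{3})$, which is where the three-dimensional Sobolev embedding and the $L^{2}+L^{\infty}$ splitting of $\v$ enter. The second thing to watch is the asymmetry in the injectivity step: for $S^{\v}$ the Dirichlet bracketing is favourable and $\lambda>\sup\sigma(\Delta+\v)$ already suffices, whereas for $D^{\v}$ the Neumann bracketing points the other way, so $\lambda_{\v}$ must additionally overshoot the tops of the two decoupled Neumann spectra.
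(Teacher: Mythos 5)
Your route is genuinely different from the paper's. The paper does not use compactness or the Fredholm alternative at all: it proves, for large real $\lambda$, the coercivity estimates $\langle\phi,S_{\lambda}^{\v}\phi\rangle\gtrsim\|\SL_{\lambda}^{\v}\phi\|_{H^{1}}^{2}$ and $-\langle D_{\lambda}^{\v}\phi,\phi\rangle\gtrsim\|\DL_{\lambda}^{\v}\phi\|_{H^{1}}^{2}$ directly from Green's formula, the jump relations \eqref{jumpv0}--\eqref{jumpv1} and the form bound \eqref{V_est}, upgrades them to $\gtrsim\|\phi\|_{H^{\mp1/2}(\Gamma)}^{2}$ via the trace estimate \eqref{cmp}, and concludes by Lax--Milgram; openness and symmetry then come from continuity of $z\mapsto S_{z}^{\v},D_{z}^{\v}$, exactly as in your last step. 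Your Fredholm-plus-uniqueness scheme is viable and arguably more conceptual: the injectivity argument via Dirichlet/Neumann bracketing is correct (including your observation that the Neumann bracketing points the wrong way, so $\lambda_{\v}$ must clear the tops of the decoupled Neumann spectra, which are finite by \eqref{V_est}), the adjoint identities $(S_{z}^{\v})^{*}=S_{\bar z}^{\v}$, $(D_{z}^{\v})^{*}=D_{\bar z}^{\v}$ give symmetry, and holomorphy gives openness. What the paper's approach buys is that it never needs the perturbation to be compact, only form-bounded, and it treats both operators with a single threshold; what yours buys is invertibility on the (a priori larger) full set where injectivity holds.

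There is one step that, as written, does not go through: for the double-layer remainder you send $\v u\in L^{3/2}+L^{2}\hookrightarrow H^{-1/2}(\RE^{3})$ through $R_{z}$ into $H^{3/2}(\RE^{3})$ and then apply $\gamma_{1}$ to land in $L^{2}(\Gamma)$. But $\gamma_{1}$ on $H^{3/2}(\RE^{3})$ is the failing endpoint of the trace theorem (it amounts to $\gamma_{0}$ on $H^{1/2}(\RE^{3})$, which is unbounded); the paper's \eqref{DL} only gives $\DL_{\bar z}^{*}=\gamma_{1}R_{z}\in\B(H^{s-1/2}(\RE^{3}),B_{2,2}^{s}(\Gamma))$ for $s>0$, and $L^{3/2}\hookrightarrow H^{-s'}$ is sharp at $s'=1/2$, so you cannot buy the missing $s>0$ on the Sobolev scale. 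The fix stays within your framework: since $\v u$ is an honest function in $L^{3/2}(\RE^{3})+L^{2}(\RE^{3})$, it pairs with $H^{1/2}(\RE^{3}\backslash\Gamma)\hookrightarrow L^{3}(\Omega)\oplus L^{3}(\Omega_{\ex})$, i.e.\ $\v u\in H^{1/2}(\RE^{3}\backslash\Gamma)^{*}$, and the dual of \eqref{DLext} at $s=-1/2$ gives $\DL_{\bar z}^{*}\in\B(H^{1/2}(\RE^{3}\backslash\Gamma)^{*},L^{2}(\Gamma))$; composing with the compact embedding $L^{2}(\Gamma)\hookrightarrow H^{-1/2}(\Gamma)$ restores the compactness of $D_{z}^{\v}-D_{z}$. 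The single-layer factorization is fine as you wrote it, since there $\gamma_{0}$ acts on $H^{3/2}(\RE^{3})$ with target $B_{2,2}^{1}(\Gamma)$, i.e.\ $s=1>0$.
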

\begin{proof} At first, let us notice that it suffices to show that the bounded inverses exist for any real $\lambda \ge \lambda_{\v}$ for some $\lambda_{\v}>\sup\sigma(\Delta+\v)$. Then, by the continuity of the maps $z\mapsto S^{\v}_{z}$ and $z\mapsto D^{\v}_{z}$, the bounded inverses exist in a complex open neighbourhood of $[\lambda_{\v},+\infty)$.\par
We proceed as in the proof of \cite[Lemma 3.2]{JDE16}. By $(-(\Delta+\v)+\lambda)\SL^{\v}_{\lambda}|\Omega_{\inn/\ex}=0$, by Green's formula and by \eqref{jumpv0}, one gets, for any $\phi\in H^{-1/2}(\Gamma)$,
\begin{align*}
0=&\|\nabla \SL^{\v}_{\lambda}\phi\|^{2}_{L^{2}(\RE^{3})}
-\langle \v \SL^{\v}_{\lambda}\phi,\SL^{\v}_{\lambda}\phi\rangle _{H^{-1}(\RE^{3}),H^{1}(\RE^{3})}
+\lambda\,\|\SL^{\v}_{\lambda}\phi\|^{2}_{L^{2}(\RE^{3})}\\
&+\langle [  \gamma_{1}] \SL^{\v}_{\lambda}\phi ,\gamma_{0}\SL_{\lambda}\phi\rangle _{H^{-1/2}(\Gamma),H^{1/2}(\Gamma)}\\
=&\|\nabla \SL^{\v}_{\lambda}\phi\|^{2}_{L^{2}(\RE^{3})}
-\langle \v \SL^{\v}_{\lambda}\phi,\SL^{\v}_{\lambda}\phi\rangle _{H^{-1}(\RE^{3}),H^{1}(\RE^{3})}
+\lambda\,\|\SL^{\v}_{\lambda}\phi\|^{2}_{L^{2}(\RE^{3})}\\
&-\langle \phi ,S^{\v}_{\lambda}\phi\rangle_{H^{-1/2}(\Gamma),H^{1/2}(\Gamma)}\,.
\end{align*}
Then, by  \eqref{V_est},
\[
\langle \phi ,\gamma_{0}S^{\v}_{\lambda}\phi\rangle _{H^{-1/2}(\Gamma),H^{1/2}(\Gamma)}\geq(  1-c_{1}\varepsilon) \|\nabla \SL^{\v}_{\lambda}\phi\|^{2}_{L^{2}(\RE^{3})}
+(\lambda-c_{2}(1+\varepsilon^{-3}))\|\SL^{\v}_{\lambda}\phi\|^{2}_{L^{2}(\RE^{3})}\,.
\]
Choosing $\varepsilon>0$ such that $c_{1}\varepsilon<1$ and then $\lambda\in\varrho(\Delta+\v)$ such that $\lambda>c_{2}(1+\varepsilon^{-3})$ (this is always possible since $\Delta+\v$ in bounded from above), one gets
\[
\langle \phi,S_{\lambda}^{\v}\phi\rangle_{H^{-1/2}(\Gamma),H^{1/2}(\Gamma)}
\gtrsim\|\SL_{\lambda}^{\v}\phi \|_{H^{1}(  \mathbb{R}^{3})  }^{2}\,.
\]
By $\v\in\B(H^{1}(\RE^{3}\backslash\Gamma),H^{1}(\RE^{3}\backslash\Gamma)^{*})$, Green's formula applies to a couple $u_{\inn/\ex},v_{\inn/\ex}\in H^{1}(\Omega_{\inn/\ex})$ with $\Delta u_{\inn/\ex}\in L^{2}(\Omega_{\inn/\ex})$,
\begin{align}\label{Gf}
&\langle(-(\Delta+\v)+\lambda)u_{\inn/\ex},v_{\inn/\ex}\rangle_{H^{1}(\Omega_{\inn/\ex})^{*},H^{1}(\Omega_{\inn/\ex})}=
\langle\nabla u_{\inn/\ex},\nabla v_{\inn/\ex}\rangle_{L^{2}(\Omega_{\inn/\ex})}\nonumber\\
&-\langle \v u_{\inn/\ex},v_{\inn/\ex}\rangle_{H^{1}(\Omega_{\inn/\ex})^{*},
H^{1}(\Omega_{\inn/\ex})}+\lambda\,\langle u_{\inn/\ex},v_{\inn/\ex}\rangle_{L^{2}(\Omega_{\inn/\ex})}\\
&
\pm \langle \gamma_{1}^{\inn/\ex}u_{\inn/\ex},\gamma^{\inn/\ex}_{0} v_{\inn/\ex}
\rangle_{H^{-1/2}(\Gamma),H^{1/2}(\Gamma)}\,.\nonumber
\end{align}
By 
$$
\big|\langle \v u_{\inn/\ex},v_{\inn/\ex}\rangle_{H^{1}(\Omega_{\inn/\ex})^{*},
H^{1}(\Omega_{\inn/\ex})}\big|\lesssim \|u_{\inn/\ex}\|_{H^{1}(\Omega_{\inn/\ex})}
\|v_{\inn/\ex}\|_{H^{1}(\Omega_{\inn/\ex})}\,,
$$
\eqref{Gf} gives,
\begin{align*}
&\big|\langle \gamma_{1}^{\inn/\ex}u_{\inn/\ex},\gamma^{\inn/\ex}_{0} v_{\inn/\ex}
\rangle_{H^{-1/2}(\Gamma),H^{1/2}(\Gamma)}\big|\\
\lesssim
&\big(\|u_{\inn/\ex}\|_{H^{1}(\Omega_{\inn/\ex})}+\|(-(\Delta+\v)+\lambda)u_{\inn/\ex}\|_{H^{1}(\Omega_{\inn/\ex})^{*}}\big)
\|v_{\inn/\ex}\|_{H^{1}(\Omega_{\inn/\ex})}\,.
\end{align*}
Since $\gamma^{\inn/\ex}_{0}:H^{1}(\Omega_{\inn/\ex})\to H^{1/2}(\Gamma)$ is surjective, finally one gets 
\be\label{cmp}
\|\gamma_{1}^{\inn/\ex}u_{\inn/\ex}\|_{H^{-1/2}(\Gamma)}\lesssim 
\|u_{\inn/\ex}\|_{H^{1}(\Omega_{\inn/\ex})}+\|(-(\Delta+\v)+\lambda)u_{\inn/\ex}\|_{H^{1}(\Omega_{\inn/\ex})^{*}}\,.
\ee
Then, proceeding as in \cite[Lemma 3.2]{JDE16} (compare (3.31) there with \eqref{cmp} here), this
yields
\begin{equation*}
\langle \phi,S_{\lambda}^{\v}\phi\rangle_{H^{-1/2}(\Gamma),H^{1/2}(\Gamma)}
\gtrsim \|\phi\|_{H^{-1/2}(\Gamma)}^{2}\label{SL_coer}%
\end{equation*}
and so $(S_{\lambda}^{\v})  ^{-1}\in{\B}(  H^{1/2}(
\Gamma )  ,H^{-1/2}(\Gamma) )$ by the Lax-Milgram theorem.\par
As regards $D_{\lambda}^{\v}$, the proof is almost the same. By $(-(\Delta+\v)+\lambda)\DL^{\v}_{\lambda}|\Omega_{\inn/\ex}=0$, by Green's formula and by \eqref{jumpv1}, one gets, for any $\phi\in H^{1/2}(\Gamma)$,
\begin{align*}
0=&\|\nabla \DL^{\v}_{\lambda}\phi\|^{2}_{L^{2}(\Omega_{\inn})}+
\|\nabla \DL^{\v}_{\lambda}\phi\|^{2}_{L^{2}(\Omega_{\ex})}
-\langle \v \DL^{\v}_{\lambda}\phi,\DL^{\v}_{\lambda}\phi\rangle _{H^{1}(\RE^{3}\backslash\Gamma)^{*},H^{1}(\RE^{3}\backslash\Gamma)}
+\lambda\,\|\DL^{\v}_{\lambda}\phi\|^{2}_{L^{2}(\RE^{3})}\\
&+\langle D^{\v}_{\lambda}\phi ,\phi\rangle_{H^{-1/2}(\Gamma),H^{1/2}(\Gamma)}\,.
\end{align*}
which leads to 
\[
-\langle D_{\lambda}^{\v}\phi,\phi\rangle_{H^{-1/2}(\Gamma),H^{1/2}(\Gamma)}
\gtrsim\|\DL_{\lambda}^{\v}\phi \|_{H^{1}(  \RE^{3}\backslash\Gamma)  }^{2}\,.
\]
Then, proceeding as in \cite[Lemma 3.2]{JDE16}, by  \eqref{cmp}, this
yields
$$
-\langle D_{\lambda}^{\v}\phi,\phi\rangle_{H^{-1/2}(\Gamma),H^{1/2}(\Gamma)}
\gtrsim \|\phi\|_{H^{1/2}(\Gamma)}^{2}
$$
and so $(D_{\lambda}^{\v})  ^{-1}\in{\B}(  H^{-1/2}(
\Gamma )  ,H^{1/2}(\Gamma) )$ by the Lax-Milgram theorem.
\end{proof}

\section{Laplacians with regular and singular perturbations} 
Here we apply the abstract results in Section \ref{Sec_Krein}, presenting various examples were 
the self-adjoint operator $A$ is the free Laplacian $\Delta:H^{2}(\RE^{3})\subset L^{2}(\RE^{3})\to L^{2}(\RE^{3})$ and $A_{B_{1}}=\Delta+\v$. All over this section we consider a Kato-Rellich potential $\v=\v_{2}+\v_{\infty}$ of short-range type, i.e.,
\be\label{short}
\v_{2}\in L^{2}(\RE^{3}),\quad  \text{supp}(\v_{2})\ \text{bounded}, \qquad
\ |\v_{\infty}(x)|\lesssim\, (1+|x|\,)^{-\kappa(1+\varepsilon)}\,,\quad\kappa\ge 1\,,\quad\varepsilon>0\,.
\ee
We take 
$$
\h_{1}=H^{2}(\RE^{3})\hookrightarrow \b_{1}=H^{1}(\RE^{3}\backslash\Gamma)
\hookrightarrow\h_{1}^{\circ}=L^{2}(\RE^{3})
\,,
$$
and, introducing the multiplication operator  $\langle x\rangle$ by $\langle x\rangle u: x\mapsto(1+|x|^{2})^{1/2}u(x)$, we define
\be\label{tau1}
\tau_{1}:H^{2}(\RE^{3})\to H^{2}(\RE^{3})\,,\quad \tau_{1}u:=\langle x\rangle^{-s}u\,,\qquad
s\ge 0\,,
\ee
and
\be\label{B1}
B_{1}u:=\langle x\rangle^{2s}\v u\,,\qquad 2s<1+\varepsilon\,.
\ee
Further, we take either 
\be\label{td}
\tau_{2}=\gamma_{0}:H^{2}(\RE^{3})\to\h_{2}=B^{3/2}_{2,2}(\Gamma)\hookrightarrow\b_{2}=H^{s_{\circ}}(\Gamma)\,,\quad 0<s_{\circ}\le 1/2\,,
\ee
or
\be\label{tn}
\tau_{2}=\gamma_{1}:H^{2}(\RE^{3})\to\h_{2}=H^{1/2}(\Gamma)\hookrightarrow\b_{2}=H^{-1/2}(\Gamma)\,.
\ee
Hence, by what  is recalled in Subsection \ref{Sec_Layer},  either $G^{2}_{z}=\SL_{z}$ or $G^{2}_{z}=\DL_{z}$
and either
$$
\tau G_{z}(u\oplus\phi)=\langle x\rangle^{-s}R_{z}\langle x\rangle^{-s}u+S_{z}\phi
$$
or
$$
\tau G_{z}(u\oplus\phi)=\langle x\rangle^{-s}R_{z}\langle x\rangle^{-s}u+D_{z}\phi\,.
$$
Thus \eqref{tauG} holds. Notice that $\gamma^{*}_{0}\phi$ and $\gamma_{1}^{*}\phi$, whenever $\phi\in L^{2}(\Gamma)$, identify with the tempered distributions which act on a test function $f$ respectively as
$$
(\phi\delta_{\Gamma})f:=\int_{\Gamma}\phi(x)f(x)\,d\sigma_{\Gamma}(x)\,,\quad
(\phi\delta'_{\Gamma})f:=\int_{\Gamma}\phi(x)\nu(x)\!\cdot\!\nabla f(x)\,d\sigma_{\Gamma}(x)\,,
$$
where $\nu$ is the exterior normal to $\Gamma$. By a slight abuse of notation, in the following we set $\gamma_{0}^{*}\phi\equiv \phi\delta_{\Gamma}$ and $\phi\gamma_{0}^{*}\equiv\delta'_{\Gamma}\phi$ and so, either 
$$\tau^{*}(u\oplus\phi)=\langle x\rangle^{-s}u+\phi\delta_{\Gamma}
$$ 
or 
$$
\tau^{*}(u\oplus\phi)=\langle x\rangle^{-s}u+\phi\delta'_{\Gamma}\,.
$$
In this framework, given a couple of linear operators $B_{0}$ and $B_{2}$ as in \eqref{B012} and such that the triple $\Bi=(B_{0}, B_{1}, B_{2})$ satisfies the hypotheses in Theorem \ref{Th_Krein}, equation \eqref{resolvent} defines a self-adjoint operator $\Delta_{\Bi}$ representing a Laplacian with a Kato-Rellich potential and a distributional one supported on $\Gamma$.  Let us remark that, although $\tau_{1}$ and $B_{1}$ depend on the index $s$, the operator $\Delta_{\Bi}$ is $s$-independent whenever $B_{0}$ and $B_{2}$ are (see the next subsections).  The choice $s\not=0$ is a technical trick which we use to obtain  LAP and a representation formula for the scattering couple $(\Delta_{\Bi}, \Delta)$; whenever one is only interested in providing a resolvent formula for $\Delta_{\Bi}$, then the choice $s=0$ is preferable. In particular, the resolvent formula for $\Delta_{\Bi}$ holds in the setting $s=0$ for any Kato-Rellich potential.

\subsection{The Schr\"odinger operator.}\label{suff}
By our hypotheses on $\v$, one has $\langle x\rangle^{2s}\v\in L^{2}(\RE^{3})+L^{\infty}(\RE^{3})$ and so, by Lemma \ref{v},
$$
B_{1}\in\B(H^{1}(  \mathbb{R}^{3}\backslash\Gamma),H^{1}(  \mathbb{R}^{3}\backslash\Gamma)^{*})\,.
$$
Considering the weight $\varphi(x)=(1+|x|^{2})^{w/2}$, $w\in\RE$, we use the notation $L^{2}_{\varphi}(\RE^{3})\equiv L^{2}_{w}(\RE^{3})$; $H^{k}_{w}(\RE^{3})$, $H^{k}_{w}(\RE^{3}\backslash\Gamma)$ denotes the corresponding scales of weighted Sobolev spaces. \par 
Since 
$$
\langle x\rangle^{w}\in\B(H^{1}_{w'}(\RE^{3}\backslash\Gamma),H^{1}_{w'-w}(\RE^{3}\backslash\Gamma))
$$ 
and, by duality, 
$$
\langle x\rangle^{w}\in\B(H^{1}_{w'}(\RE^{3}\backslash\Gamma)^{*},H^{1}_{w'+w}(\RE^{3}\backslash\Gamma)^{*})\,,
$$ 
one gets 
\be\label{v-w}
\langle x\rangle^{-w-2s}B_{1}\langle x\rangle^{w}=\v\in \B(H^{1}_{w}(\RE^{3}\backslash\Gamma),H^{1}_{-w-2s}(\RE^{3}\backslash\Gamma)^{*})\,.
\ee  
Since 
\be\label{R-w}
R_{z}\in\B(H^{-1}_{w}(\RE^{3}),H^{1}_{w}(\RE^{3}))\hookrightarrow \B(H^{1}_{-w}(\RE^{3}\backslash\Gamma)^{*},H^{1}_{w}(\RE^{3}\backslash\Gamma))\,,
\ee
one has  
$$\tau_{1}G^{1}_{z}=\langle x\rangle^{-s}R_{z}\langle x\rangle^{-s}\in\B(H_{-w}^{1}(\RE^{3}\backslash\Gamma)^{*},H_{w+2s}^{1}(\RE^{3}\backslash\Gamma))\,.
$$ 
In particular, this gives  
$$
\tau_{1}G^{1}_{z}\in\B(H^{1}(\RE^{3}\backslash\Gamma)^{*}),H^{1}(\RE^{3}\backslash\Gamma))\,.$$ 
For $0\le 2s<1+\varepsilon$ we define
$$
M_{z}^{B_{1}}=1-B_{1}\tau_{1}G^{1}_{z}=1-\langle x\rangle^{s}\v R_{z}\langle x\rangle^{-s}=
\langle x\rangle^{s}(1-\v R_{z})\langle x\rangle^{-s}
\in\B(H^{1}(\RE^{3}\backslash\Gamma)^{*})\,.
$$ 
\begin{lemma}\label{5.1} Let $\v$ be as in \eqref{short}, with $\kappa=1$. Then, for $s$ such that $0\le 2s<1+\varepsilon$ and for $z\in\CO$ sufficiently far away from $(-\infty, 0]$,
$$
(1-\v R_{z})^{-1}\in \B(H_{-s}^{1}(\RE^{3}\backslash\Gamma)^{*})\,.
$$
Equivalently,
$$
(M_{z}^{B_{1}})^{-1}\in\B(H^{1}(\RE^{3}\backslash\Gamma)^{*})\,.
$$
\end{lemma}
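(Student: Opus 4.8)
The plan is to write down the inverse of $M_z^{B_1}$ explicitly, so that no Neumann‑series smallness is needed. Put $X:=H^1(\RE^3\backslash\Gamma)^*$. As recalled just before the statement, $M_z^{B_1}=1-B_1\tau_1G_z^1=1-\langle x\rangle^s\v R_z\langle x\rangle^{-s}\in\B(X)$, and conjugation by the bijection $\langle x\rangle^s$ turns invertibility of $M_z^{B_1}$ in $\B(X)$ into invertibility of $1-\v R_z$ in $\B(H^1_{-s}(\RE^3\backslash\Gamma)^*)$, so it is enough to work with $M_z^{B_1}$. Fixing $z$ far enough from $(-\infty,0]$ that $z\in\varrho(\Delta)\cap\varrho(\Delta+\v)$ (possible since $\Delta+\v$ is bounded from above, Theorem \ref{KR}), I claim that
$$
N_z:=1+\langle x\rangle^{s}\,\v\,R_z^{\v}\,\langle x\rangle^{-s}
$$
lies in $\B(X)$ and is a two‑sided inverse of $M_z^{B_1}$; note that $N_z=1+B_1\tau_1R_z^{\v}\tau_1^{*}$, which is the shape of the inverse implicit in the abstract resolvent formula of Corollary \ref{cor1}.

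The first and main step is to check $N_z\in\B(X)$. The point is that $B_1=\langle x\rangle^{2s}\v$ is \emph{again} a Kato--Rellich potential: $\langle x\rangle^{2s}\v_2\in L^2(\RE^3)$ because $\v_2$ has bounded support, and $\langle x\rangle^{2s}\v_\infty\in L^\infty(\RE^3)$ because $|\v_\infty(x)|\lesssim\langle x\rangle^{-\kappa(1+\varepsilon)}$ while $2s<1+\varepsilon\le\kappa(1+\varepsilon)$. Hence Lemma \ref{v}, applied to $B_1$ in place of $\v$, gives $B_1\in\B(H^1(\RE^3\backslash\Gamma),X)$. Writing $\langle x\rangle^s\v R_z^{\v}\langle x\rangle^{-s}=B_1\big(\langle x\rangle^{-s}R_z^{\v}\langle x\rangle^{-s}\big)$ and using $R_z^{\v}\in\B(H^{-1}(\RE^3),H^1(\RE^3))$ (Remark \ref{sa}, i.e.\ \eqref{Rvz-int} at $s=0$), the boundedness of the multiplier $\langle x\rangle^{-s}$ on $H^{\pm1}(\RE^3)$, and the embeddings $X\hookrightarrow H^{-1}(\RE^3)$ and $H^1(\RE^3)\hookrightarrow H^1(\RE^3\backslash\Gamma)$, one obtains $\langle x\rangle^{-s}R_z^{\v}\langle x\rangle^{-s}\in\B(X,H^1(\RE^3\backslash\Gamma))$; composing, $N_z\in\B(X)$.

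It remains to verify $M_z^{B_1}N_z=N_zM_z^{B_1}=1$ in $\B(X)$. Expanding the products and using $\langle x\rangle^{-s}\langle x\rangle^s=1$ in the cross terms collapses both identities to the second resolvent identity in its two forms $R_z^{\v}-R_z=R_z\v R_z^{\v}=R_z^{\v}\v R_z$ (equivalently \eqref{Rvz}). These hold on $L^2(\RE^3)$ in the usual way and extend to identities of operators from $H^{-1}(\RE^3)$ to $H^1(\RE^3)$ by density of $L^2(\RE^3)$ in $H^{-1}(\RE^3)$ together with $R_z,R_z^{\v}\in\B(H^{-1}(\RE^3),H^1(\RE^3))$ and $\v\in\B(H^1(\RE^3),H^{-1}(\RE^3))$ (Lemma \ref{v} and \eqref{B}) --- which is exactly the regularity at which the factors of $M_z^{B_1}$ and $N_z$ act once the weights are stripped off. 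Hence $(M_z^{B_1})^{-1}=N_z\in\B(X)$, which by the conjugation above is equivalent to $(1-\v R_z)^{-1}\in\B(H^1_{-s}(\RE^3\backslash\Gamma)^*)$, the assertion.

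The expected obstacle lies entirely in the first step: one must observe that the weight $\langle x\rangle^{2s}$, under the restriction $2s<1+\varepsilon$, does not destroy the Kato--Rellich character of the potential, so that the machinery of Subsection \ref{Sec_V} (notably Lemma \ref{v} and the resolvent regularity \eqref{Rvz-int}) applies verbatim with $\v$ replaced by $B_1$. After that the argument is the same second‑resolvent‑identity bookkeeping already used in Lemma \ref{vH-1}, and no quantitative decay in $z$ is used beyond $z\in\varrho(\Delta)\cap\varrho(\Delta+\v)$.
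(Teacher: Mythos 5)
Your proof is correct and takes essentially the same route as the paper's: both hinge on the second resolvent identity, which exhibits the inverse as $1+\v R_z^{\v}$ (up to conjugation by $\langle x\rangle^{s}$), together with $R_z^{\v}\in\B(H^{-1}(\RE^{3}),H^{1}(\RE^{3}))$ and the observation that $\langle x\rangle^{2s}\v$ remains of Kato--Rellich type for $2s<1+\varepsilon$ (the paper encodes this as \eqref{v-w} and closes the argument with the Inverse Mapping Theorem instead of checking the two-sided inverse by hand). Incidentally, your sign $(1-\v R_{z})^{-1}=1+\v R_{z}^{\v}$ is the one consistent with \eqref{Rvz}; the displayed identity \eqref{res-id} in the paper carries a sign typo.
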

\begin{proof} Here we use the same kind of arguments as in the second part of the proof of Lemma \ref{vH-1}. Thus we start from  the resolvent identity
\be\label{res-id2}
(1-\v R_{z})^{-1}=1-\v R_{z}^{\v}\,.
\ee 
By Lemma \ref{vH-1}, such an equality holds in $\B(H^{1}(\RE^{3}\backslash\Gamma)^{*})$. By \eqref{Rvz-int}, 
$$
R_{z}^{\v}\in \B(H^{-1}(\RE^{3}),H^{1}(\RE^{3}))\hookrightarrow \B(H^{1}(\RE^{3}\backslash\Gamma)^{*},H^{1}(\RE^{3}\backslash\Gamma))\hookrightarrow \B(H_{-s}^{1}(\RE^{3}\backslash\Gamma)^{*},H_{-s}^{1}(\RE^{3}\backslash\Gamma))\,;$$
by \eqref{v-w}, 
$$
\v\in \B(H_{-s}^{1}(\RE^{3}\backslash\Gamma),H_{-s}^{1}(\RE^{3}\backslash\Gamma)^{*})\,;
$$ 
then
$$
(1-\v R_{z}^{\v})\in \B(H_{-s}^{1}(\RE^{3}\backslash\Gamma)^{*})\,.
$$
Analogously,
$$
(1-\v R_{z})\in \B(H_{-s}^{1}(\RE^{3}\backslash\Gamma)^{*})\,.
$$
By \eqref{res-id2}, this implies that $1-\v R_{z}$ is a bounded bijection from $H^{1}_{-s}(\RE^{3}\backslash\Gamma)^{*}$ onto itself. Therefore, by the Inverse Mapping Theorem,  $(1-\v R_{z})^{-1}\in \B(H_{-s}^{1}(\RE^{3}\backslash\Gamma)^{*})$ and \eqref{res-id} holds in  $\B(H_{-s}^{1}(\RE^{3}\backslash\Gamma)^{*})$.
\end{proof}
Choosing $\Bi=(1,B_{1},0)$, and whenever $Z_{B_{1}}\not=\varnothing$, by Corollary \ref{cor1} the operator $\Delta_{B_{1}}:=\Delta_{(1,B_{1},0)}$ is defined according to the relation  
$$
R_{z}^{B_{1}}:=(-\Delta_{B_{1}}+z)^{-1}=R_{z}+R_{z}\langle x\rangle^{-s}(M_{z}^{B_{1}})^{-1}B_{1}\langle x\rangle^{-s}R_{z}\,,\qquad  z\in Z_{B_{1}}=\varrho(\Delta_{B_{1}})\cap(\CO\backslash(-\infty,0])\,.
$$
By Lemma \ref{5.1}, $Z_{B_{1}}\not=\varnothing$ and by the relation
\be\label{LbLv}
\Lambda_{z}^{B_{1}}=(M_{z}^{B_{1}})^{-1}B_{1}=\langle x\rangle^{s}(1-\v R_{z})^{-1}\langle  x\rangle^{s}\v=\langle x\rangle^{s}\Lambda_{z}^{\!\v}\langle x\rangle^{s}
\,,
\ee
one has 
\be\label{LbLv2}
\Lambda_{z}^{B_{1}}\in{\B}(H^{1}(  \mathbb{R}^{3}\backslash\Gamma),H^{1}(  \mathbb{R}^{3}\backslash\Gamma)^{*})\,.
\ee
Therefore, Theorem \ref{KR} (see also  Remark \ref{tpt}) yields 
$$
R^{\v}_{z}=(-(\Delta+\v)+z)^{-1}=R_{z}+R_{z}\Lambda_{z}^{\v}R_{z}=(-\Delta_{B_{1}}+z)^{-1}
\,,\qquad z\in\varrho(\Delta+\v)\cap\CO\backslash(-\infty,0]\,.
$$
The above relation shows that $\Delta_{B_{1}}$ coincides with the Schr\"odinger operator $\Delta+\v$ provided by the Kato-Rellich theorem. This also shows that $\Delta_{B_{1}}$ is $s$-independent. 
Nevertheless, the operator $\Lambda_{z}^{B_{1}}$ depends on the choice of $s$ and the relations  
\eqref{LbLv} and \eqref{LbLv2} with $s\not=0$ are key objects in our analysis of LAP and scattering theory in the general case. 
\subsection{Asymptotic completeness and scattering matrix.} Before discussing the validity of  our assumptions, we provide the following general results on the scattering couple $(\Delta_{\Bi},\Delta)$.
\begin{theorem} Assume \eqref{short} with $\kappa=1$ and let $\tau_{1}$, $\tau_{2}$ and $B_{1}$ be defined as in \eqref{tau1}-\eqref{tn}. If $\Bi$ is such that (H1)-(H6) hold, then the scattering couple $(\Delta_{\Bi},\Delta)$ is asymptotically complete.
\end{theorem}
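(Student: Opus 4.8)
The plan is to apply Theorem \ref{AC} to the concrete scattering couple $(\Delta_{\Bi},\Delta)$, with the role of the intermediate operator $A_{B_{1}}$ played by the Schr\"odinger operator $\Delta+\v$, which by Subsection \ref{suff} coincides with $\Delta_{B_{1}}$ for the choices \eqref{tau1} and \eqref{B1}. Since hypotheses (H1)--(H6) are assumed outright, the only work is to verify the two extra structural conditions in the second part of Theorem \ref{AC}: that $\sigma_{sc}(\Delta)=\varnothing$, and conditions $(i)$ and $(ii)$ on the essential spectrum of $\Delta_{B_{1}}$.

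First I would recall that $\sigma(\Delta)=\sigma_{ess}(\Delta)=\sigma_{ac}(\Delta)=(-\infty,0]$ (with the sign convention of \eqref{free}, where $\Delta$ is the nonnegative operator $-$Laplacian, so its spectrum is $[0,+\infty)$ in the usual convention, or $(-\infty,0]$ here after the sign flip built into $-\Delta+z$); in particular $\sigma_{sc}(\Delta)=\varnothing$ and $\sigma_{pp}(\Delta)=\varnothing$, so the hypothesis $\sigma_{sc}(\Delta)=\varnothing$ holds trivially. Next, since $\v$ is a short-range potential satisfying \eqref{short}, the Schr\"odinger operator $\Delta+\v=\Delta_{B_{1}}$ is a Kato-Rellich perturbation with $\sigma_{ess}(\Delta_{B_{1}})=\sigma_{ess}(\Delta)=(-\infty,0]$ by Weyl's theorem, so $\mathring\sigma_{ess}(\Delta_{B_{1}})=(-\infty,0)$ and $\partial\sigma_{ess}(\Delta_{B_{1}})=\{0\}$, which is finite, hence countable: this gives $(ii)$. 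For $(i)$, I would invoke the limiting absorption principle for $\Delta+\v$: under the short-range decay \eqref{short} the standard Agmon-type LAP (as used in the references \cite{Agmon}, \cite{Y-LNM}) yields that the exceptional set $e(A_{B_{1}})$ can be taken to consist only of the (at most countable, with $0$ the only possible accumulation point) set of eigenvalues of $\Delta+\v$ embedded in or below the continuous spectrum; in fact for a short-range potential of this type there are no positive (i.e., interior to $\sigma_{ess}$ in the present convention) eigenvalues by Kato's theorem, so $e(A_{B_{1}})\cap\mathring\sigma_{ess}(\Delta_{B_{1}})=\varnothing$, which is discrete, and $(i)$ holds vacuously. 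One then simply quotes Theorem \ref{AC}.

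The step I expect to be the main obstacle is making precise that the assumed hypotheses (H2)--(H3) for $A_{B_{1}}=\Delta+\v$, together with the short-range condition, really do force the exceptional set $e(A_{B_{1}})$ to be discrete inside $\mathring\sigma_{ess}(\Delta_{B_{1}})$ — i.e., reconciling the abstract framework's $e(A_{B_{1}})$ with the classical picture where this set is a locally finite set of eigenvalues. The cleanest route is: by the hypotheses already granted we know $e(A_{B_{1}})$ is closed with zero Lebesgue measure, and the absence of singular continuous spectrum for $\Delta+\v$ (which is itself a consequence of the short-range LAP) plus standard Kato-type absence of embedded eigenvalues pins down $e(A_{B_{1}})\cap\mathring\sigma_{ess}$ as empty, hence trivially discrete. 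If one prefers to stay entirely within the abstract setup, one observes that the negative eigenvalues of $\Delta+\v$ (in this convention, the ones exterior to $\sigma_{ess}$) are irrelevant to condition $(i)$, which only concerns the interior, so discreteness there is immediate. Thus the proof reduces to a one-line application of Theorem \ref{AC} once these classical facts about short-range Schr\"odinger operators are cited.

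\begin{proof} By Subsection \ref{suff}, with the choices \eqref{tau1} and \eqref{B1} one has $\Delta_{B_{1}}=\Delta+\v$, the Schr\"odinger operator with the Kato-Rellich potential $\v$. Since (H1)--(H6) are assumed, Theorem \ref{AC} already gives completeness for the couple $(\Delta_{\Bi},\Delta)$; it remains to check the two extra conditions for asymptotic completeness. First, $\sigma(\Delta)=\sigma_{ac}(\Delta)=(-\infty,0]$, so $\sigma_{sc}(\Delta)=\varnothing$. Next, by Weyl's theorem $\sigma_{ess}(\Delta_{B_{1}})=\sigma_{ess}(\Delta)=(-\infty,0]$, whence $\partial\sigma_{ess}(\Delta_{B_{1}})=\{0\}$ is countable, which is $(ii)$. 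Finally, since $\v$ satisfies the short-range bound \eqref{short}, the operator $\Delta+\v$ has no eigenvalues in the interior $\mathring\sigma_{ess}(\Delta_{B_{1}})=(-\infty,0)$ and satisfies a limiting absorption principle there with empty exceptional set inside that interval; hence $e(A_{B_{1}})\cap\mathring\sigma_{ess}(A_{B_{1}})=\varnothing$ and its set of accumulation points is (trivially) discrete in $\mathring\sigma_{ess}(A_{B_{1}})$, which is $(i)$. Theorem \ref{AC} then yields that asymptotic completeness holds for the couple $(\Delta_{\Bi},\Delta)$.
\end{proof}
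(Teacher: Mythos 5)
Your strategy is exactly the paper's: identify $A_{B_{1}}$ with the Schr\"odinger operator $\Delta+\v$, note $\sigma_{ess}(\Delta+\v)=(-\infty,0]$ so that $(ii)$ is trivial, and verify $(i)$ via the structure of the exceptional set $e(\Delta+\v)$, then quote Theorem \ref{AC}. However, the one substantive step in your verification of $(i)$ is not justified as written. You claim that $\Delta+\v$ has \emph{no} eigenvalues embedded in $\mathring{\sigma}_{ess}=(-\infty,0)$ ``by Kato's theorem'' and hence that $e(A_{B_{1}})\cap\mathring{\sigma}_{ess}(A_{B_{1}})=\varnothing$. Kato-type absence of embedded eigenvalues requires regularity and pointwise decay hypotheses that the potential class \eqref{short} does not provide: here $\v_{2}$ is merely an $L^{2}$ function with compact support and $\v_{\infty}$ is merely $L^{\infty}$ with a decay bound, so you cannot invoke that theorem. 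The paper does not need (and does not assert) absence of embedded eigenvalues: it uses the weaker statements that $\sigma_{p}(\Delta+\v)\cap(-\infty,0)$ is \emph{discrete} in $(-\infty,0)$ (Agmon, Thm.~3.1) and that consequently $e(\Delta+\v)\cap(-\infty,0)$ is countable with $\{0\}$ as the only possible accumulation point (Agmon, Thm.~4.2); this already makes the set of accumulation points of $e_{ess}(A_{B_{1}})$ empty in $\mathring{\sigma}_{ess}$, which is all that condition $(i)$ of Theorem \ref{AC} asks for. You in fact state this weaker, correct route in your preamble, so the fix is simply to delete the appeal to Kato's theorem and run the argument through discreteness. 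Note also that closedness plus zero Lebesgue measure of $e(A_{B_{1}})$ (which is all that (H2) gives abstractly) would \emph{not} suffice for $(i)$ --- a Cantor-type exceptional set would violate it --- so the citation of Agmon's discreteness result is genuinely needed and cannot be replaced by the abstract hypotheses alone.
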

\begin{proof} By hypothesis \eqref{short} with $\kappa=1$, it is well known that for $\Delta_{B_{1}}=  \Delta+\v$ one has $\sigma_{ess}(\Delta_{B_{1}})=(-\infty,0]$; moreover, by \cite[Thm. 3.1]{Agmon}, $\sigma_{p}(\Delta_{B_{1}})\cap(-\infty,0)$ is discrete in $(-\infty,0)$. Hence, by \cite[Thm. 4.2]{Agmon}, 
$e(\Delta_{B_{1}})\cap (-\infty,0)$ is countable with $\{0\}$ as the eventual set of accumulations points. Therefore, by Theorem \ref{AC}, $\sigma_{sc}(\Delta_{\Bi})=\varnothing$ and $(\Delta_{\Bi},\Delta)$ is asymptotically complete. 
\end{proof}
In the framework of this section, Theorem \ref{S-matrix} rephrases as
\begin{theorem}\label{Llambda} Assume \eqref{short} with $\kappa=1$ and let $\tau_{1}$, $\tau_{2}$ and $B_{1}$ be defined as in \eqref{tau1}-\eqref{tn}. If $\Bi$ is such that (H1)-(H7) hold, then  the scattering matrix of the couple $(\Delta_{\Bi},\Delta)$ has the representation
\begin{equation}
{\Sc}^{\Bi}_{\lambda}=1-2\pi i\, {\mathsf L}_{\lambda}
{\mathsf\Lambda}^{\!\Bi,+}_{\lambda}{\mathsf L}_{\lambda}^{*}\,,\quad \lambda\in(-\infty,0]\cap(\RE\backslash{e}(\Delta_{\Bi}))\,,
\end{equation}
where  
$$
{\mathsf\Lambda}^{\!\Bi,\pm}_{\lambda}=\lim_{\varepsilon\searrow 0}{\mathsf\Lambda}^{\!\Bi}_{\lambda\pm i\varepsilon}\,,
$$
the limit existing in $\B(H_{s}^{1}(\RE^{3}\backslash\Gamma)\oplus H^{t}(\Gamma),H_{s}^{1}(\RE^{3}\backslash\Gamma)^{*}\oplus H^{-t}(\Gamma))$, 
\begin{align*}
&{\mathsf\Lambda}_{z}^{\Bi}:=\begin{bmatrix}\Lambda^{\!\v}_{z}+\Lambda^{\!\v}_{z}G^{2}_{z}\widehat \Lambda^{\Bi}_{z}(G^{2}_{\bar z})^{*}\Lambda^{\!\v}_{z}&
\Lambda^{\!\v}_{z}G^{2}_{z}\widehat \Lambda^{\Bi}_{z}\\
\widehat \Lambda^{\Bi}_{z}(G^{2}_{\bar z})^{*}\Lambda_{z}^{\v}&
\widehat \Lambda^{\Bi}_{z}\,;
\end{bmatrix}\\
=&\left(1+\begin{bmatrix}\Lambda^{\!\v}_{z}&0\\0&\widehat \Lambda^{\Bi}_{z}
\end{bmatrix}\begin{bmatrix}G^{2}_{z}\widehat \Lambda^{\Bi}_{z}(G^{2}_{\bar z})^{*}&
G^{2}_{z}\\
(G^{2}_{\bar z})^{*}&0
\end{bmatrix}\,\right)\begin{bmatrix}\Lambda^{\!\v}_{z}&0\\0&\widehat \Lambda^{\Bi}_{z}
\end{bmatrix}\,
\end{align*}
and 
$$
{\mathsf L}_{\lambda}:H_{s}^{1}(\RE^{3}\backslash\Gamma)^{*}\oplus H^{-t}(\Gamma)\to(L^{2}(M)_{ac})_{\lambda}\,,\qquad{\mathsf L}_{\lambda}(u\oplus\phi):=\frac{|\lambda|^{\frac{1}4}}{2^{\frac12}}\,\left({\mathsf L}^{1}_{\lambda}u+{\mathsf L}^{2}_{\lambda}\phi\right)\,,
$$
with
$$
\text{$G^{2}_{z}=\SL_{z}$ and $t=s_{\circ}$ if $\tau_{2}=\gamma_{0}$, $\quad G^{2}_{z}=\DL_{z}$ and
$t=\frac12$ if $\tau_{2}=\gamma_{1}$,}
$$
$$
{\mathsf L}^{1}_{\lambda}u(\xi):=\widehat{u}\,(|\lambda|^{1/2}\xi)\,,
\qquad
{\mathsf L}^{2}_{\lambda}\phi(\xi):=\frac1{(2\pi)^{\frac32}}\,\langle \tau_{2}(\chi u^{\xi}_{\lambda}),\phi\rangle_{H^{t}(\Gamma),H^{-t}(\Gamma)}\,.
$$
Here  $\widehat u$ denotes the Fourier transform, ${\mathbb S}^{2}$ denotes the 2-dimensional unitary sphere in $\RE^{3}$, $u^{\xi}_{\lambda}$ is the plane wave with direction $\xi\in{\mathbb S}^{2}$ and wavenumber $|\lambda|^{\frac12}$, i.e., $u^{\xi}_{\lambda}(x)=e^{i\,|\lambda|^{\frac12}\xi\cdot x}$ and $\chi\in C^{\infty}_{comp}(\RE^{3})$ is such that $\chi|\Gamma=1$.
\end{theorem}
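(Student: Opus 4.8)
The plan is to specialize Theorem \ref{S-matrix} to the present situation $A=\Delta$, $A_{B_{1}}=\Delta+\v$, and to make the two ingredients $\L_{\lambda}$ and $\Lambda^{\!\Bi,+}_{\lambda}$ of \eqref{S-M} explicit. Since $\sigma(\Delta)=\sigma_{ac}(\Delta)=(-\infty,0]$, as the diagonalizing unitary $F$ in \eqref{F0} one takes the spectral representation of $\Delta$ given by Fourier restriction to spheres: with $\widehat{\,\cdot\,}$ normalized by $\widehat f(\eta)=(2\pi)^{-3/2}\!\int f(x)e^{-i\eta\cdot x}\,dx$ and the fibre $(L^{2}(\RE^{3}))_{\lambda}$, $\lambda\le0$, identified with $L^{2}(\mathbb{S}^{2})$, one has $(Fu)_{\lambda}(\xi)=\widehat u(|\lambda|^{1/2}\xi)$ with respect to the natural spectral measure $\tfrac12|\lambda|^{1/2}d|\lambda|$; rewriting the direct integral with respect to Lebesgue measure produces the density factor $|\lambda|^{1/4}/2^{1/2}$ occurring in the definition of ${\mathsf L}_{\lambda}$. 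Note that $R_{\mu}=(-\Delta+\mu)^{-1}$ is a bounded Borel function of $\Delta$, hence acts in the representation $F$ as multiplication by $(\mu-\lambda)^{-1}$ on the fibre over $\lambda$.

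First I would compute $\L_{\lambda}$. By \eqref{LLL}, \eqref{blw} and the previous remark, $\L_{\lambda}(\phi_{1}\oplus\phi_{2})=(\mu-\lambda)(FR_{\mu}\tau^{*}(\phi_{1}\oplus\phi_{2}))_{\lambda}=(F\tau^{*}(\phi_{1}\oplus\phi_{2}))_{\lambda}$, which confirms its $\mu$-independence. Since $\tau^{*}(\phi_{1}\oplus\phi_{2})=\langle x\rangle^{-s}\phi_{1}+\tau_{2}^{*}\phi_{2}$, with $\tau_{2}^{*}\phi_{2}=\phi_{2}\delta_{\Gamma}$ when $\tau_{2}=\gamma_{0}$ (and $\phi_{2}\delta'_{\Gamma}$ when $\tau_{2}=\gamma_{1}$), the first summand gives $(F\langle x\rangle^{-s}\phi_{1})_{\lambda}(\xi)=\tfrac{|\lambda|^{1/4}}{2^{1/2}}\,{\mathsf L}^{1}_{\lambda}(\langle x\rangle^{-s}\phi_{1})(\xi)$, while the second gives the Fourier transform of the layer distribution $\tau_{2}^{*}\phi_{2}$ evaluated on the sphere of radius $|\lambda|^{1/2}$, which by the defining formulae of $\delta_{\Gamma},\delta'_{\Gamma}$ and the $H^{t}(\Gamma)$-$H^{-t}(\Gamma)$ pairing equals $\widehat{\tau_{2}^{*}\phi_{2}}(|\lambda|^{1/2}\xi)=\tfrac{|\lambda|^{1/4}}{2^{1/2}}\cdot\tfrac{1}{(2\pi)^{3/2}}\langle\tau_{2}(\chi u^{\xi}_{\lambda}),\phi_{2}\rangle_{H^{t}(\Gamma),H^{-t}(\Gamma)}=\tfrac{|\lambda|^{1/4}}{2^{1/2}}\,{\mathsf L}^{2}_{\lambda}\phi_{2}(\xi)$; here the cut-off $\chi\equiv1$ near $\Gamma$ is inserted only to make $\chi u^{\xi}_{\lambda}\in H^{2}(\RE^{3})$, so that $\tau_{2}(\chi u^{\xi}_{\lambda})\in H^{t}(\Gamma)$ is well defined and, by locality of the trace, independent of $\chi$. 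Hence $\L_{\lambda}={\mathsf L}_{\lambda}\circ(\langle x\rangle^{-s}\oplus1)$, the weight $\langle x\rangle^{-s}$ mapping $H^{1}_{s}(\RE^{3}\backslash\Gamma)^{*}$ boundedly into $\b_{1}^{*}=H^{1}(\RE^{3}\backslash\Gamma)^{*}$ since $2s<1+\varepsilon$.

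Next I would identify $\Lambda^{\!\Bi}_{z}$ with the $\langle x\rangle^{s}$-conjugate of ${\mathsf\Lambda}^{\!\Bi}_{z}$. By \eqref{LbLv}, $\Lambda^{\!B_{1}}_{z}=\langle x\rangle^{s}\Lambda^{\!\v}_{z}\langle x\rangle^{s}$; hence $G^{1}_{z}\Lambda^{\!B_{1}}_{z}\tau_{1}=R_{z}\Lambda^{\!\v}_{z}$ and, by \eqref{GB1}, $G^{B_{1}}_{z}=G^{2}_{z}+R_{z}\Lambda^{\!\v}_{z}G^{2}_{z}$, which is $\SL^{\v}_{z}$ (resp. $\DL^{\v}_{z}$) by \eqref{SLv} (resp. \eqref{DLv}); consequently $\widehat\Lambda^{\Bi}_{z}=(B_{0}-B_{2}\tau_{2}G^{B_{1}}_{z})^{-1}B_{2}$, built from $S^{\v}_{z}$ (resp. $D^{\v}_{z}$), carries no $\langle x\rangle$-weight. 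Moreover $\tau_{1}G^{2}_{z}=\langle x\rangle^{-s}G^{2}_{z}$ and, dually, $\tau_{2}G^{1}_{z}=(G^{2}_{\bar z})^{*}\langle x\rangle^{-s}$. Substituting these three identities into the block expression \eqref{LB-new} for $\Lambda^{\Bi}_{z}$, all factors $\langle x\rangle^{\pm s}$ telescope and one obtains, entry by entry, $\Lambda^{\Bi}_{z}=(\langle x\rangle^{s}\oplus1)\,{\mathsf\Lambda}^{\Bi}_{z}\,(\langle x\rangle^{s}\oplus1)$ with ${\mathsf\Lambda}^{\Bi}_{z}$ the operator displayed in the statement. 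Since $\langle x\rangle^{\pm s}$ is bounded between the relevant weighted Sobolev spaces, the boundary limits $\Lambda^{\Bi,\pm}_{\lambda}$, which exist in $\B(\h_{1}\oplus\b_{2},\h_{1}^{*}\oplus\b_{2}^{*})$ by Lemma \ref{rmH7} and Corollary \ref{limbb}, transport to the limits ${\mathsf\Lambda}^{\Bi,\pm}_{\lambda}$ in the space indicated. Finally, using $(\langle x\rangle^{-s}\oplus1)(\langle x\rangle^{s}\oplus1)=1$, one gets $\L_{\lambda}\Lambda^{\Bi,+}_{\lambda}\L_{\lambda}^{*}={\mathsf L}_{\lambda}{\mathsf\Lambda}^{\Bi,+}_{\lambda}{\mathsf L}_{\lambda}^{*}$, and \eqref{S-M} yields the asserted representation of ${\Sc}^{\Bi}_{\lambda}$ on $\sigma_{ac}(\Delta)\cap(\RE\backslash e(\Delta_{\Bi}))=(-\infty,0]\cap(\RE\backslash e(\Delta_{\Bi}))$.

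The most delicate points are the bookkeeping of the constants in the spectral representation of $\Delta$ (the factors $|\lambda|^{1/4}/2^{1/2}$ and $(2\pi)^{-3/2}$) together with the identification of $\widehat{\tau_{2}^{*}\phi_{2}}$ restricted to the sphere with the cut-off plane-wave trace pairing, and the verification that the abstract weak boundary limits of $\Lambda^{\Bi}_{z}$ refine, after the $\langle x\rangle^{s}$-conjugation, to operator-norm limits between the weighted Sobolev spaces of the statement — this last step being the mixed-perturbation analogue of the arguments at the end of \cite[proof of Theorem 5.1]{JMPA} and in \cite[Corollary 4.3]{JMPA}.
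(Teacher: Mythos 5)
Your proposal is correct and follows essentially the same route as the paper: take $F$ to be the Fourier restriction-to-spheres diagonalization of $\Delta$ (so that $R_{\mu}$ acts fibrewise as $(\mu-\lambda)^{-1}$ and $\L_{\lambda}$ is $\mu$-independent), identify the two components of $\L_{\lambda}$ with ${\mathsf L}^{1}_{\lambda}$ and the plane-wave trace pairing ${\mathsf L}^{2}_{\lambda}$ (the paper delegates the latter computation to \cite[Theorem 5.1]{JMPA}), and then obtain ${\mathsf\Lambda}^{\Bi}_{z}$ by conjugating the block formula \eqref{LB-new} with $\langle x\rangle^{s}\oplus 1$ via \eqref{LbLv}, $\tau_{1}G^{2}_{z}=\langle x\rangle^{-s}G^{2}_{z}$ and $\tau_{2}G^{1}_{z}=(G^{2}_{\bar z})^{*}\langle x\rangle^{-s}$. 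Your write-up is in fact more detailed than the paper's, which defines ${\mathsf L}_{\lambda}:=-\L_{\lambda}(\langle x\rangle^{s}\cdot\oplus\cdot)$ (the sign coming from its convention for $F$ and cancelling in $\L_{\lambda}\Lambda\L_{\lambda}^{*}$) and states that the rest is a direct consequence of Theorem \ref{S-matrix} and \eqref{LB-new}--\eqref{LB-new2}, \eqref{LbLv}.
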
  
\begin{proof} Taking into account the definition in \eqref{LLL}, let us set 
\begin{align*}
{\mathsf L}_{\lambda}(u\oplus\phi):=&-
{\L}_{\lambda}(\langle x\rangle^{s}u\oplus\phi)=
-(\mu-\lambda)(FG_{\mu}(\langle x\rangle^{s}u\oplus\phi))_{\lambda}\\
=&-(\mu-\lambda)(FR_{\mu}\tau_{1}^{*}\langle x\rangle^{s}u+FR_{\mu}\tau_{2}^{*}\phi))_{\lambda}\,.
\end{align*} 
The unitary 
map $F:L^{2}(\RE^{3})\to \int^{\oplus}_{(-\infty,0)}L^{2}({\mathbb S}^{2})\,d\lambda\equiv L^{2}((-\infty,0);L^{2}({\mathbb S}^{2}))$ diagonalizing $A=\Delta$ is given by 
\be\label{fourier}
(Fu)_\lambda(\xi):=-\frac{|\lambda|^{\frac{1}4}}{2^{\frac12}}\, \widehat u(|\lambda|^{1/2}\xi)\,.
\ee 
Therefore, by  $(\mu-\lambda)\widehat{ R_{\mu}f}(|\lambda|^{1/2}\xi)=-\widehat f(|\lambda|^{1/2}\xi)$, 
one gets 
\begin{align*}
(\mu-\lambda)(FR_{\mu}\tau_{1}^{*}\langle x\rangle^{s}u)_{\lambda}(\xi)=-
\frac{|\lambda|^{\frac{1}4}}{2^{\frac12}}\,\widehat{u}\,(|\lambda|^{1/2}\xi)
\,.
\end{align*}
This gives ${\mathsf L}^{1}_{\lambda}$. As regards ${\mathsf L}^{2}_{\lambda}$, the computation was given in \cite[Theorem 5.1]{JMPA}. \par
The results about ${\mathsf\Lambda}^{\!\Bi}_{z}$ are direct consequences of the definition of ${\mathsf L}_{\lambda}$, Theorem \ref{S-matrix} and relations  \eqref{LB-new}, \eqref{LB-new2}, \eqref{LbLv}.
\end{proof}
\begin{remark} Let us notice that, whenever $u\in L^{2}_{w}(\RE^{3})$, $w>3/2$,
$$
{\mathsf L}^{1}_{\lambda}u(\xi)=\frac1{(2\pi)^{\frac32}}\,\langle u^{\xi}_{\lambda},u\rangle_{L^{2}_{-w}(\RE^{3}),L^{2}_{w}(\RE^{3})}
$$
and so ${\mathsf L}^{1}_{\lambda}$ and ${\mathsf L}^{2}_{\lambda}$ have a similar structure.
\end{remark}
\subsection{Checking the conditions (H1)-(H7).} Next we discuss the validity of (H1)-(H7) in our framework. In particular we show that (H1), (H2), (H4.2)-(H7) hold with the choice $\kappa=1$ in \eqref{short}, without the need to specify the operators $B_{0}$ and $B_{2}$. We prove  (H3) with $\kappa=2$, while  the validity of  
(H4.1), i.e. the semi-boundedness of $A_{\Bi}$, will be checked case by case in the analysis of each model. \par
As in the previous subsections we use the weight $\varphi(x)=(1+|x|^{2})^{w/2}$, $w\in\RE$; the notation for the corresponding weighted spaces are: $L^{2}_{w}(\RE^{3})$, $H^{k}_{w}(\RE^{3})$ and $H^{k}_{w}(\RE^{3}\backslash\Gamma)$. 
From now on, the parameter $s$ in the definitions \eqref{tau1} and \eqref{B1} is restricted to the range  
\be\label{bound1}
1<2s<1+\varepsilon\,.
\ee
Be aware that in the following proofs the index $s$ labeling the weighted spaces fulfills the bounds \eqref{bound1}.  
\begin{lemma} Let $\v$ be short-range as in \eqref{short}, with $\kappa=1$.
Then hypotheses (H1), (H2), (H6), (H7.1), (H7.2), (H7.3) hold true. 
\end{lemma}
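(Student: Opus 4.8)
The plan is to read off each of the listed hypotheses from the classical stationary scattering theory of short-range Schr\"odinger operators, principally Agmon's limiting absorption principle \cite{Agmon}, combined with the resolvent and layer-potential identities of Sections~\ref{Sec_Krein} and~\ref{Sec_V}. Throughout, the weight entering the abstract hypotheses is $\varphi=\langle x\rangle^{s}$ with $s>1/2$ (exactly the lower bound in \eqref{bound1}), so that $L^{2}_{\varphi}(\RE^{3})=L^{2}_{s}(\RE^{3})$ and $L^{2}_{\varphi^{-1}}(\RE^{3})=L^{2}_{-s}(\RE^{3})$. With this choice, $(H7.1)$ is classical: $\Delta$ is bounded from above, $\sigma(\Delta)=(-\infty,0]$, and since $s>1/2$ the free resolvent $R_{z}=(-\Delta+z)^{-1}$ has boundary values $R^{\pm}_{\lambda}\in\B(L^{2}_{s}(\RE^{3}),L^{2}_{-s}(\RE^{3}))$ for $\lambda\ne0$, depending continuously on the parameter, so that one may take $e(A)=\{0\}$. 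Likewise $(H6)$ is the classical existence and completeness of the wave operators $W_{\pm}(\Delta+\v,\Delta)$ for a potential decaying faster than $\langle x\rangle^{-1}$ away from a compact set plus a compactly supported $L^{2}$ part (see e.g. \cite{Y-LNM} or \cite{Agmon}).

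For $(H1)$, the semiboundedness of $A_{B_{1}}=\Delta+\v$ is Theorem~\ref{KR}, and since $\Delta_{B_{1}}=\Delta+\v$ (Subsection~\ref{suff}) it remains to show $R^{B_{1}}_{z}=R^{\v}_{z}\in\B(L^{2}_{s}(\RE^{3}))$ for $\text{Re}(z)$ large. I would start from $R^{\v}_{z}=R_{z}+R_{z}(1-\v R_{z})^{-1}\v R_{z}$ (Theorem~\ref{KR}) and prove $\|\v R_{z}\|_{\B(L^{2}_{s}(\RE^{3}))}\to0$ as $z\to+\infty$ away from $(-\infty,0]$. The free resolvent kernel $\frac{e^{-\sqrt{z}\,|x-y|}}{4\pi|x-y|}$ (with $\text{Re}\sqrt{z}>0$) together with $\langle x\rangle^{s}\langle y\rangle^{-s}\lesssim\langle x-y\rangle^{s}$ gives, by a Schur test, $\|R_{z}\|_{\B(L^{2}_{s}(\RE^{3}))}\lesssim(\text{Re}\sqrt{z})^{-2}$, which controls $\v_{\infty}R_{z}$; for the compactly supported part, $\v_{2}R_{z}u$ being supported in $\mathrm{supp}\,\v_{2}$, one has $\|\v_{2}R_{z}u\|_{L^{2}_{s}(\RE^{3})}\lesssim\|\v_{2}R_{z}u\|_{L^{2}(\RE^{3})}\le\|\v_{2}\|_{L^{2}(\RE^{3})}\|R_{z}u\|_{L^{\infty}(\RE^{3})}\lesssim(\text{Re}\sqrt{z})^{-1/2}\|u\|_{L^{2}(\RE^{3})}$, the last step by Young's inequality since the kernel has $L^{2}(\RE^{3})$-norm $O((\text{Re}\sqrt{z})^{-1/2})$. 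Hence the Neumann series for $(1-\v R_{z})^{-1}$ converges in $\B(L^{2}_{s}(\RE^{3}))$ for $\text{Re}(z)>\lambda_{1}$ with $\lambda_{1}$ large enough and $\ge\sup\sigma(\Delta+\v)$, whence $R^{\v}_{z}\in\B(L^{2}_{s}(\RE^{3}))$.

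For $(H7.2)$, by \eqref{blw} and \eqref{tau1} one has $G^{1}_{z}=R_{z}\langle x\rangle^{-s}$. Since multiplication by $\langle x\rangle^{-s}$ sends $\h_{1}^{*}=H^{-2}(\RE^{3})$ isometrically into the class of distributions decaying by $s$ (the weighted space $H^{-2}_{s}(\RE^{3})$), the claim reduces to $R_{z}\in\B(H^{-2}_{s}(\RE^{3}),L^{2}_{s}(\RE^{3}))$ for $z\in\varrho(\Delta)$ and to the existence, for $\lambda\ne0$, of $\lim_{\epsilon\searrow0}R_{\lambda\pm i\epsilon}$ in $\B(H^{-2}_{s}(\RE^{3}),L^{2}_{-s}(\RE^{3}))$ with continuous dependence on the parameter; both are standard weighted-Sobolev refinements of the free limiting absorption principle of \cite{Agmon}, obtained from the $L^{2}$-version by the resolvent identity and elliptic regularity. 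This yields the maps and limits required in $(H7.2)$.

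Finally, $(H2)$ and $(H7.3)$ are Agmon's limiting absorption principle for $\Delta+\v$: there is a closed Lebesgue-null set $e(A_{B_{1}})$ — the threshold $0$ together with the set attached to the point spectrum of $\Delta+\v$, discrete in $(-\infty,0)$ — such that off $e(A_{B_{1}})$ the compact family $\v R_{z}$ (on $L^{2}_{s}(\RE^{3})$) has boundary values $\v R^{\pm}_{\lambda}$ and $1-\v R^{\pm}_{\lambda}$ is invertible, so that $(1-\v R_{z})^{-1}$ and $R^{\v}_{z}$ admit continuous boundary values; here the upper bound $2s<1+\varepsilon$ in \eqref{bound1} — available already for $\kappa=1$ — is precisely what makes $\v$ gain $2s$ of decay, so that $\v\in\B(L^{2}_{-s}(\RE^{3}),L^{2}_{s}(\RE^{3}))$ and the compactness and Fredholm arguments run. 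From this, $(H2)$ follows since $R^{B_{1}}_{z}=R^{\v}_{z}$, and $(H7.3)$ follows by conjugating the boundary values $\Lambda^{\!\v,\pm}_{\lambda}=(1-\v R^{\pm}_{\lambda})^{-1}\v$ through the identity $\Lambda^{\!B_{1}}_{z}=\langle x\rangle^{s}\Lambda^{\!\v}_{z}\langle x\rangle^{s}$ of \eqref{LbLv}, which lands the limits in $\B(H^{2}(\RE^{3}),H^{-2}(\RE^{3}))$. I expect the main obstacle to be exactly this bookkeeping of how $R_{z}$, $\v$, $\langle x\rangle^{\pm s}$ and $\Lambda^{\!\v}_{z}$ act between the weighted Sobolev spaces so that the compositions in \eqref{res1.1}, \eqref{LbLv} and \eqref{rmH7-1} are legitimate and inherit the boundary values; the only genuinely new analytic input beyond \cite{Agmon} is the weighted smallness estimate for $\v R_{z}$ underlying $(H1)$.
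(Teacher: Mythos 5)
Your proposal follows the same overall route as the paper: each hypothesis is reduced to Agmon's limiting absorption principle for $\Delta$ and $\Delta+\v$ together with mapping properties of $R_{z}$, $\v$ and $\langle x\rangle^{\pm s}$ between weighted spaces, and (H7.3) is obtained exactly as in the paper by conjugating the boundary values of $(1-\v R_{z})^{-1}\v$ through \eqref{LbLv}. The one place where you genuinely diverge is (H1): you prove $\|\v R_{z}\|_{\B(L^{2}_{s}(\RE^{3}))}\to 0$ for $\mathrm{Re}(z)$ large by explicit kernel and Schur estimates and then sum the Neumann series, whereas the paper invokes $R_{z}\in\B(L^{2}_{s}(\RE^{3}))$ from \cite[Lemma 1, page 170]{ReSi IV} together with the identity $R^{\v}_{z}=R_{z}(1-\v R^{\v}_{z})$, thereby reducing (H1) to the single mapping property $\v\in\B(H^{2}(\RE^{3}),L^{2}_{s}(\RE^{3}))$; your argument is more self-contained but yields the bound only for $\mathrm{Re}(z)$ large (which is all that (H1) requires), while the paper's version gives it on all of $\varrho(\Delta+\v)\cap(\CO\backslash(-\infty,0])$. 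One assertion should be repaired: $\v\in\B(L^{2}_{-s}(\RE^{3}),L^{2}_{s}(\RE^{3}))$ is false for the compactly supported $L^{2}$ part $\v_{2}$, since multiplication by an $L^{2}$ function need not preserve $L^{2}$; the correct and sufficient property, which the paper uses and which you yourself exploit in your (H1) paragraph via the local embedding $H^{2}\hookrightarrow L^{\infty}$ on the support of $\v_{2}$, is $\v\in\B(H^{2}_{-s}(\RE^{3}),L^{2}_{s}(\RE^{3}))$, so the compactness and Fredholm arguments for $(1-\v R^{\pm}_{\lambda})^{-1}$ must be run with $\v R^{\pm}_{\lambda}$ viewed as the composition $L^{2}_{s}(\RE^{3})\to H^{2}_{-s}(\RE^{3})\to L^{2}_{s}(\RE^{3})$. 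With that bookkeeping corrected, the proposal is sound.
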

\begin{proof} By \cite[Lemma 1, page 170]{ReSi IV}, $R_{z}=(-\Delta+z)^{-1}\in\B(L^{2}_{s}(\RE^{3}))$ for any $z\in \CO\backslash(-\infty,0]$. Therefore, by the resolvent identity $R^{\v}_{z}=R_{z}(1-\v R^{\v}_{z})$, $z\in\varrho(\Delta+\v)$, and by $R^{\v}_{z}\in \B(L_{s}^{2}(\RE^{3}), H^{2}(\RE^{3}))$, hypothesis (H1) is consequence of $\v=\v_{2}+\v_{\infty}\in\B(H^{2}(\RE^{3}),L_{s}^{2}(\RE^{3}))$. Since $\v_{2}$ has a compact support,  $\v_{2}\in \B(H^{2}(\RE^{3}), L_{s}^{2}(\RE^{3}))$ by Lemma \ref{v}.  As regards $\v_{\infty}$, one has
\begin{align*}
\|\v_{\infty} u\|^{2}_{L_{s}^{2}(\RE^{3})}=\int_{\RE^{3}}|\v_{\infty} u|^{2}(1+|x|^{2})^{s}dx\le c\int_{\RE^{3}}(1+|x|)^{-2(1+\varepsilon)}(1+|x^{2}|)^{s}
|u|^{2}dx
\le \,c\,\|u\|^{2}_{L^{2}(\RE^{3})}\,.
\end{align*}    
By \cite[Theorem 4.1]{Agmon}, LAP holds for $A=\Delta$; hence (H7.1) is satisfied. By the short-range hypothesis on $\v$ and by \cite[Theorem 4.2]{Agmon}, LAP holds for $A_{B_{1}}\equiv\Delta+\v$ as well and, by \cite[Theorems 6.1 and 7.1]{Agmon} asymptotic completeness holds for the scattering couple $(\Delta_{B_{1}},A)\equiv(\Delta+\v,\Delta)$. Hence hypotheses (H1), (H2) 
and (H6) are verified.  \par 
By $R_{z}\in\B(L^{2}_{-s}(\RE^{3}),H^{2}_{-s}(\RE^{3}))$, one gets $G_{z}^{1*}=\langle x\rangle^{-s}R_{z}\in\B(L^{2}_{-s}(\RE^{3}),H^{2}(\RE^{3}))$ and so, by duality, $G_{z}^{1}\in\B(H^{-2}(\RE^{3}), L^{2}_{s}(\RE^{3}))$; moreover, by   $R^{\pm}_{\lambda}\in\B(L^{2}_{s}(\RE^{3}),H^{2}_{-s}(\RE^{3}))$ and by a similar duality argument, one gets $G_{\lambda}^{1,\pm}\in\B(H^{-2}(\RE^{3}), L^{2}_{-s}(\RE^{3}))$. Thus hypothesis (H7.2) holds.\par 
By \eqref{LbLv} and \eqref{LbLv2}, (H7.3) is equivalent  to the existence in $\B(H_{-s}^{2}(\RE^{3}),H_{s}^{-2}(\RE^{3}))$ of $\lim_{\epsilon\searrow 0}\Lambda_{\lambda\pm i\epsilon}^{\!\v}=\lim_{\epsilon\searrow 0}(1-\v R_{\lambda\pm i\epsilon})^{-1}\v$. By \eqref{short}, $\v\in \B(H_{-s}^{2}(\RE^{3}),L_{s}^{2}(\RE^{3}))$. Then, $\lim_{\epsilon\searrow 0}(1-\v R_{\lambda\pm i\epsilon})^{-1}$ exists in $\B(L^{2}_{s}(\RE^{3}))$ (see \cite[proof of Theorem XIII.33, page 177]{ReSi IV}) and so (H7.3) holds.
\end{proof}
\begin{lemma}\label{5.4} Let $\v$ be short-range as in \eqref{short}, with $\kappa=2$.
Then hypothesis (H3) holds true. 
\end{lemma}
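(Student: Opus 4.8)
The plan is to reduce the verification of \eqref{(H3)} for $A_{B_{1}}=\Delta+\v$ to the corresponding estimate for the free Laplacian, and to prove the latter by a short Fourier-space computation. Recall that here the weight is $\varphi(x)=\langle x\rangle^{s}$, so $L^{2}_{\varphi^{2}}(M)=L^{2}_{2s}(\RE^{3})$ and, by \eqref{bound1}, $1<2s<1+\varepsilon$; recall also that $A_{B_{1}}$ is the Kato--Rellich operator $\Delta+\v$, whose spectrum lies in $(-\infty,0]$, and that by Agmon's theory (\cite[Thms. 4.1 and 4.2]{Agmon}) the boundary values $R^{\v,\pm}_{\lambda}$ exist in $\B(L^{2}_{2s}(\RE^{3}),H^{2}_{-2s}(\RE^{3}))$, are locally uniformly bounded there for $\lambda\in\RE\backslash e(A_{B_{1}})$, and satisfy, in the limit $\varepsilon\searrow0$, the identity $R^{\v,\pm}_{\lambda}=R^{\pm}_{\lambda}+R^{\pm}_{\lambda}\,\v\,R^{\v,\pm}_{\lambda}$ between $L^{2}_{2s}(\RE^{3})$ and $L^{2}_{-2s}(\RE^{3})$. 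Since any compact $K\subset\RE\backslash e(A_{B_{1}})$ satisfies $K\cap[0,+\infty)\subset(0,+\infty)\subset\varrho(\Delta+\v)$, where $R^{\v}_{\lambda}$ maps into $H^{2}(\RE^{3})\subset L^{2}(\RE^{3})$ with locally bounded norm, the estimate \eqref{(H3)} is trivial there and I may assume $K\subset(-\infty,0)$, writing $\lambda=-k^{2}$ with $k$ in a compact set $K'\subset(0,+\infty)$.

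\emph{The free estimate.} First I would prove: there is $c_{K}>0$ with $\|R^{\pm}_{\lambda}w\|\le c_{K}\|w\|_{2s}$ for all $\lambda\in K$ and all $w\in L^{2}_{2s}(\RE^{3})$ such that $(R^{+}_{\lambda}-R^{-}_{\lambda})w=0$. By Plancherel, $\widehat w\in H^{2s}(\RE^{3})$ with $\|\widehat w\|_{H^{2s}}=\|w\|_{2s}$; since $2s>1$ the trace of $\widehat w$ on the sphere $S_{k}=\{|\xi|=k\}$ is well defined, the condition $(R^{+}_{\lambda}-R^{-}_{\lambda})w=0$ reads $\widehat w|_{S_{k}}=0$, the $\delta$-contribution at $S_{k}$ in the Sokhotski formula for $R^{\pm}_{\lambda}$ is thereby killed, and $\widehat{R^{\pm}_{\lambda}w}(\xi)=\widehat w(\xi)/(|\xi|^{2}-k^{2})$ is a genuine function. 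On $\{\,||\xi|-k|\ge k/2\,\}$ one has $|\,|\xi|^{2}-k^{2}|\gtrsim_{K}1+|\xi|^{2}$, so that part is bounded in $L^{2}$ by $\|\widehat w\|_{L^{2}}=\|w\|\le\|w\|_{2s}$; on the fixed annulus $\{\,||\xi|-k|<k/2\,\}$ one writes $|\xi|^{2}-k^{2}=(|\xi|-k)(|\xi|+k)$ with $|\xi|+k\ge 3k/2$, and the Hardy inequality for $H^{1}$-functions vanishing on the smooth hypersurface $S_{k}$ controls $\bigl\|\widehat w/(|\xi|-k)\bigr\|_{L^{2}(\mathrm{annulus})}$ by $\|\widehat w\|_{H^{1}}\le\|\widehat w\|_{H^{2s}}=\|w\|_{2s}$, with a constant uniform over the compact family $\{S_{k}\}_{k\in K'}$ of uniformly nondegenerate spheres. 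Adding the two contributions gives the claim. (Equivalently, this is \eqref{(H3)} for $A=\Delta$ and could be quoted from \cite{JMPA}.)

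\emph{The reduction.} Given $\lambda\in K$ and $u\in L^{2}_{2s}(\RE^{3})\cap\ker(R^{\v,+}_{\lambda}-R^{\v,-}_{\lambda})$, put $w^{\pm}:=u+\v\,R^{\v,\pm}_{\lambda}u$, so that $R^{\v,\pm}_{\lambda}u=R^{\pm}_{\lambda}w^{\pm}$ by the resolvent identity recalled above. Then $w^{+}-w^{-}=\v\,(R^{\v,+}_{\lambda}-R^{\v,-}_{\lambda})u=0$, hence $w^{+}=w^{-}=:w$ and $(R^{+}_{\lambda}-R^{-}_{\lambda})w=R^{\v,+}_{\lambda}u-R^{\v,-}_{\lambda}u=0$, i.e. $w\in\ker(R^{+}_{\lambda}-R^{-}_{\lambda})$. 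To bound $\|w\|_{2s}$, split $\v=\v_{2}+\v_{\infty}$: since $R^{\v,\pm}_{\lambda}u\in H^{2}_{-2s}(\RE^{3})$ is locally in $H^{2}\hookrightarrow L^{\infty}$ with norm $\lesssim_{K}\|u\|_{2s}$ and $\v_{2}\in L^{2}$ has bounded support, $\v_{2}\,R^{\v,\pm}_{\lambda}u$ lies in a fixed compactly supported $L^{2}$ with norm $\lesssim_{K}\|u\|_{2s}$; since $|\v_{\infty}|\lesssim\langle x\rangle^{-2(1+\varepsilon)}$ and $R^{\v,\pm}_{\lambda}u\in L^{2}_{-2s}(\RE^{3})$ with norm $\lesssim_{K}\|u\|_{2s}$, $\v_{\infty}\,R^{\v,\pm}_{\lambda}u$ lies in $L^{2}_{2(1+\varepsilon)-2s}(\RE^{3})\hookrightarrow L^{2}_{2s}(\RE^{3})$ (using $2s<1+\varepsilon$) with norm $\lesssim_{K}\|u\|_{2s}$. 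Hence $\|w\|_{2s}\le c_{K}\|u\|_{2s}$ uniformly in $\lambda\in K$, and applying the free estimate to $w$ gives $\|R^{\v,\pm}_{\lambda}u\|=\|R^{\pm}_{\lambda}w\|\le c_{K}\|w\|_{2s}\le c_{K}\|u\|_{2s}$, which is \eqref{(H3)}.

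The analytic core, and the part I expect to be most delicate to state and prove cleanly, is the free estimate: controlling the free resolvent under the sphere-vanishing constraint via the Hardy inequality near $S_{k}$, together with its uniformity over $k\in K'$. The reduction is then weight bookkeeping, but it is precisely there — and already in making sense of the composition $R^{\pm}_{\lambda}\,\v_{\infty}\,R^{\v,\pm}_{\lambda}$ between the weighted $L^{2}$-spaces — that the short-range rate $\kappa=2$ is used: both places amount to requiring that multiplication by $\v_{\infty}$ map $L^{2}_{-2s}(\RE^{3})$ into $L^{2}_{2s}(\RE^{3})$, i.e. $\kappa(1+\varepsilon)\ge 4s$, and since \eqref{bound1} forces $2s>1$ this fails for $\kappa=1$ when $\varepsilon$ is small. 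This is the origin of the hypothesis $\kappa=2$ and of the conjecture that it is only technical.
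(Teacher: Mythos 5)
Your proof is correct and follows essentially the same route as the paper: the paper reduces the lemma to the single mapping property $\v R^{\v,\pm}_{\lambda}\in\B(L^{2}_{2s}(\RE^{3}))$ (delegating the remaining structure to \cite[Lemma 4.5]{JDE18}), and verifies it by exactly your weight bookkeeping for $\v_{2}$ and $\v_{\infty}$, with $\kappa=2$ entering at the same point, namely that $\v_{\infty}:L^{2}_{-2s}(\RE^{3})\to L^{2}_{2s}(\RE^{3})$ requires $4s\le 2(1+\varepsilon)$ while $2s>1$. The only difference is that you spell out in addition the reduction to the free resolvent and the Fourier/Hardy proof of the free estimate, which the paper obtains by citation.
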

\begin{proof} The proof is the same as the one for \cite[Lemma 4.5]{JDE18}, once one proves that 
\be\label{once}
\v R^{\v,\pm}_{\lambda}\in\B(L^{2}_{2s}(\RE^{3}))\,.
\ee
Since $R^{\v,\pm}_{\lambda}\in\B(L^{2}_{2s}(\RE^{3}), H^{2}_{-2s}(\RE^{3}))$, 
\eqref{once} is consequence of 
\be\label{vw}
\v=\v_{2}+\v_{\infty}\in \B(H^{2}_{-2s}(\RE^{3}), L^{2}_{2s}(\RE^{3}))\,.
\ee
Lemma \ref{v} entails $\v_{2}\in \B(H^{2}(\RE^{3}), L^{2}(\RE^{3}))$ and so, since $\v_{2}$ has a compact support, one gets that $\v_{2}$ satisfies \eqref{vw}. As regards $\v_{\infty}$, one has, by $1<2s<1+\varepsilon$,
\begin{align*}
\|\v_{\infty} u\|^{2}_{L^{2}_{2s}(\RE^{3})}=&\int_{\RE^{3}}|\v_{\infty} u|^{2}(1+|x|^{2})^{2s}dx\le c\int_{\RE^{3}}(1+|x|)^{-4(1+\varepsilon)}(1+|x|^{2})^{4s}
|u|^{2}(1+|x|^{2})^{-2s}dx\\
\le& \,c\,\|u\|^{2}_{L^{2}_{-2s}(\RE^{3})}\le c\,\|u\|^{2}_{H^{2}_{-2s}(\RE^{3})}
\end{align*}    
and so $\v_{\infty}$ satisfies \eqref{vw} as well.
\end{proof}
\begin{lemma} Let $\v$ be short-range as in \eqref{short}, with $\kappa=1$ and 
let $\tau_{2}$ be either as in \eqref{td} or as in \eqref{tn}. Then hypotheses (H4.2), (H5) and (H7.4) hold true. 
\end{lemma}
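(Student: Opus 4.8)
Proof plan for the final lemma (hypotheses (H4.2), (H5), (H7.4) hold for short-range $\v$ with $\kappa=1$, $\tau_2=\gamma_0$ or $\gamma_1$):

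The plan is to unwind each of the three hypotheses into a statement about the boundary layer operators associated to $\Delta+\v$ and then invoke the regularity results assembled in Section 4 together with Agmon's limiting absorption principle for $A=\Delta$ and $A_{B_1}=\Delta+\v$. Recall that $G^{B_1}_z=(\tau_2 R^{B_1}_{\bar z})^*$, so that $G^{B_1}_z=\SL^{\v}_z$ when $\tau_2=\gamma_0$ and $G^{B_1}_z=\DL^{\v}_z$ when $\tau_2=\gamma_1$; the key formulas \eqref{SLv}, \eqref{DLv}, \eqref{SLvext}, \eqref{DLvext} express these in terms of the free layer operators and $\Lambda^{\!\v}_z$. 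First I would verify (H4.2): the compactness of the embedding $\h_2\hookrightarrow\b_2$ is just compactness of $B^{3/2}_{2,2}(\Gamma)\hookrightarrow H^{s_\circ}(\Gamma)$ (resp. $H^{1/2}(\Gamma)\hookrightarrow H^{-1/2}(\Gamma)$), which holds by Rellich since $\Gamma$ is a compact Lipschitz surface and the Sobolev indices strictly decrease. For the mapping property $G^{B_1}_z\in\B(\h_2^*,L^2_{\varphi^{2+\eta}}(\RE^3))$ with $\varphi=\langle x\rangle$, I would note that $\ran(\SL^{\v}_z)$, $\ran(\DL^{\v}_z)$ consist of functions that are $H^1$ outside a bounded neighbourhood of $\Gamma$ and solve $(-(\Delta+\v)+z)u=0$ there; away from the support of $\v_2$ the decay is controlled by the resolvent kernel of $-\Delta+z$ (exponential for $z$ far from $(-\infty,0]$), so such $u$ lies in $L^2_{w}(\RE^3)$ for every $w$, in particular for $w=2+\eta$ for some $\eta>0$; alternatively one uses the representation $\SL^{\v}_z=\SL_z+R_z\Lambda^{\!\v}_z\SL_z$ from \eqref{SLv}, where $\SL_z$ maps into a space of exponentially decaying functions and $R_z$ preserves weighted $L^2$ for $z$ in the relevant region.

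Next, for (H5), the existence of the boundary limits $G^{B_1,\pm}_\lambda=\lim_{\epsilon\searrow0}G^{B_1}_{\lambda\pm i\epsilon}$ in $\B(\h_2^*,L^2_{\varphi^{-1}}(\RE^3))$: I would write $G^{B_1}_z=G^2_z+G^1_z\Lambda^{\!B_1}_z\tau_1 G^2_z$ as in \eqref{GB1}, so that the limit follows once one has (i) the limit of $G^2_z=\SL_z$ or $\DL_z$, which is the free layer operator and whose boundary values exist by the standard LAP for $-\Delta$ (the kernel is the restricted Fourier transform of the sphere, continuous up to the real axis in the weighted topology), (ii) the limit of $\Lambda^{\!B_1}_z=\langle x\rangle^s\Lambda^{\!\v}_z\langle x\rangle^s$, which exists by (H7.3) already proved, and (iii) the limit of $\tau_1 G^2_z=\langle x\rangle^{-s}(\SL_z\text{ or }\DL_z)$, again reducible to the free case. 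Injectivity of $G^{B_1,\pm}_\lambda$: I would argue that $\ker(G^{B_1,\pm}_\lambda)=\{0\}$ because $G^{B_1,\pm}_\lambda\phi=0$ would force, after applying $(-(\Delta+\v)+\lambda\mp i0)$, a solution supported on $\Gamma$ which by the jump relations \eqref{jumpv0}, \eqref{jumpv1} and unique continuation must vanish; equivalently one uses that $S^{\v}_\lambda$ (resp. $D^{\v}_\lambda$) has trivial kernel on the relevant Sobolev scale — the limiting values of the invertibility established in Lemma \ref{coerc}, extended to the boundary of $\varrho(\Delta+\v)$ by the standard argument that a nontrivial kernel element would produce an eigenfunction or a nonradiating solution.

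Finally (H7.4) asks for existence in $\B(\b_1^*,\b_2)$ of $\tau_2 G^{1,\pm}_\lambda=\lim_{\epsilon\searrow0}\tau_2 G^1_{\lambda\pm i\epsilon}$, where $G^1_z=(\tau_1 R_{\bar z})^*$, so $\tau_2 G^1_z=\gamma_0 R_z\langle x\rangle^{-s}$ or $\gamma_1 R_z\langle x\rangle^{-s}$; by Remark \ref{rem3.6} this is $(\tau_1 G^2_{\bar z})^*=(\langle x\rangle^{-s}\SL_{\bar z})^*$ or $(\langle x\rangle^{-s}\DL_{\bar z})^*$, so it again reduces to boundary values of the free single/double layer operators composed with the multiplier $\langle x\rangle^{-s}$, with $s$ in the range \eqref{bound1}. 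The free layer operators admit boundary limits as bounded maps into $H^{s_\circ}(\Gamma)$ or $H^{-1/2}(\Gamma)$ because the free resolvent's boundary values $R^\pm_\lambda\in\B(L^2_w,H^2_{-w})$ for $w>1/2$ (Agmon), and the trace operators $\gamma_0,\gamma_1$ are bounded on the relevant $H^2_{\rm loc}$ spaces; the weight $\langle x\rangle^{-s}$ with $2s>1$ makes the composition $\langle x\rangle^{-s}R^\pm_\lambda\langle x\rangle^{-s}$ a bounded operator between the right weighted spaces, and taking adjoints and restricting the trace gives the claim. The main obstacle I expect is the injectivity part of (H5): one must rule out, uniformly enough to be usable, that the boundary-limit layer operators $\SL^{\v,\pm}_\lambda$ or $\DL^{\v,\pm}_\lambda$ annihilate some nonzero density — this requires combining the jump relations for $\Delta+\v$ from \eqref{jumpv0}–\eqref{jumpv1} with a unique continuation / Rellich-type absence-of-embedded-eigenvalues argument for $\Delta+\v$ at positive energies (which is available under the short-range hypothesis via Agmon's results), rather than being a soft consequence of the mapping properties alone.
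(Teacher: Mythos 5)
Your treatment of (H4.2) and (H7.4) is essentially sound and close in spirit to the paper's proof. For (H7.4) the paper does exactly what you describe: it writes $\tau_{2}G^{1,\pm}_{z}=\tau_{2}\chi R^{\pm}_{z}\langle x\rangle^{-s}$ with a cutoff $\chi$ equal to $1$ near $\overline\Omega$, uses the continuity of $z\mapsto R^{\pm}_{z}$ in $\B(H^{-1}_{s}(\RE^{3}),H^{1}_{-s}(\RE^{3}))$, and invokes the boundedness of the traces (for $\gamma_{1}$ via the $H^{1}_{\Delta}$-extension used in Lemma \ref{tech}). For (H4.2), however, the paper does not argue through decay of the perturbed layer potentials; it proves $R^{\v}_{z}\in\B(L^{2}_{-(2s+\eta)}(\RE^{3}),H^{2}_{-(2s+\eta)}(\RE^{3}))$ with the specific choice $\eta=1+\varepsilon-2s>0$ (this is exactly where $|\v_{\infty}(x)|\lesssim\langle x\rangle^{-(1+\varepsilon)}$ is consumed, via $\v\in\B(L^{2}_{-(2s+\eta)},L^{2})$ and $(1-\v R_{z})^{-1}\in\B(L^{2})$), then composes with $\tau_{2}\chi$ and dualizes to get $G^{B_{1}}_{z}\in\B(\h_{2}^{*},L^{2}_{2s+\eta}(\RE^{3}))$. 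Your primary route (exponential decay of $\SL^{\v}_{z}\phi$, $\DL^{\v}_{z}\phi$) would additionally require that $(1-\v R_{z})^{-1}$ preserve the relevant weighted spaces, which is not automatic; your "alternative" via $\SL^{\v}_{z}=\SL_{z}+R_{z}\Lambda^{\!\v}_{z}\SL_{z}$ is the right skeleton but still needs precisely the weighted mapping property of $\Lambda^{\!\v}_{z}$ that the paper computes.

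The genuine problem is the injectivity part of (H5), which you single out as the main obstacle and propose to settle by unique continuation, Rellich-type arguments, and "triviality of $\ker(S^{\v}_{\lambda})$" at real energies. That last claim is false in general: for real $\lambda$ in the continuous spectrum the limiting boundary operator $S^{\v,\pm}_{\lambda}$ (already $S^{\pm}_{\lambda}$ for $\v=0$) has a nontrivial kernel whenever $-\lambda$ is an interior Dirichlet eigenvalue (the classical spurious frequencies of the single-layer ansatz), and Lemma \ref{coerc} only yields invertibility on a half-line inside the resolvent set, not on $\sigma_{ess}$. The paper's argument is entirely soft and avoids all of this: $G^{B_{1},\pm}_{\lambda}$ is by construction the adjoint of $\tau_{2}R^{B_{1},\mp}_{\lambda}\in\B(L^{2}_{s}(\RE^{3}),\h_{2})$, which is surjective because $\gamma_{0}:H^{2}(\RE^{3})\to B^{3/2}_{2,2}(\Gamma)$ and $\gamma_{1}:H^{2}(\RE^{3})\to H^{1/2}(\Gamma)$ are surjective (choose a compactly supported preimage $u$ and note $u=R^{B_{1},\mp}_{\lambda}(-(\Delta+\v)+\lambda)u$); the adjoint of a surjective bounded map is injective. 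Your first formulation --- apply $(-(\Delta+\v)+\lambda\mp i0)$ to $G^{B_{1},\pm}_{\lambda}\phi=0$ and conclude $\tau_{2}^{*}\phi=0$ --- can be made to work and likewise needs no unique continuation, but as written your plan routes the injectivity through statements that are either unavailable on the real axis or simply false there.
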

\begin{proof}
By the continuity of $z\mapsto R^{\pm}_{z}$ as a $\B(H^{-1}_{s}(\RE^{3}), H_{-s}^{1}(\RE^{3}))$-valued map, one gets the continuity of $z\mapsto G^{1,\pm}_{z}=R^{\pm}_{z}\langle x\rangle^{-s}$ as a $\B(H^{-1}(\RE^{3}), H_{-s}^{1}(\RE^{3}))$-valued map. Hence, given $\chi\in{\mathcal C}_{\text{comp}}^{\infty}(\RE^{3})$ such that $\chi=1$ on a compact set containing an open  neighborhood of $\overline\Omega$, one gets the continuity of $z\mapsto \chi R^{\pm}_{z}\langle x\rangle^{-s}$ as a $\B(H^{1}(\RE^{3}\backslash\Gamma)^{*}, H^{1}(\RE^{3}))$-valued map. Therefore, $z\mapsto \gamma_{0}G^{1,\pm}_{z}=\gamma_{0}R^{\pm}_{z}\langle x\rangle^{-s}=\gamma_{0}\chi R^{\pm}_{z}\langle x\rangle^{-s}$ is continuous as a $\B(H^{1}(\RE^{3}\backslash\Gamma)^{*}, H^{1/2}(\Gamma))$-valued map.  The continuity of 
$z\mapsto \gamma_{1}G^{1,\pm}_{z}=\gamma_{1}R^{\pm}_{z}\langle x\rangle^{-s}=\gamma_{1}\chi R^{\pm}_{z}\langle x\rangle^{-s}$ as a $\B(H^{1}(\RE^{3}\backslash\Gamma)^{*}, H^{-1/2}(\Gamma))$-valued map follows in an analogous way using the same reasoning as in the proof of Lemma \ref{tech}. In conclusion,  hypothesis (H7.4) holds true.\par
Since $\Gamma$ is compact, the embeddings $\h_{2}\hookrightarrow\b_{2}$, where $\h_{2}$ and $\b_{2}$ are as in \eqref{td} and \eqref{tn}, are compact by standard results on Sobolev embeddings. 
Since $\v\in\B(L^{2}_{-(2s+\eta)}(\RE^{3}),L^{2}_{1+\varepsilon-(2s+\eta)}(\RE^{3}))$ and $(1+\v R_{z})^{-1}\in \B(L^{2}(\RE^{3}))$, by taking $\eta=1+\epsilon-2s>0$, one gets $\Lambda^{\!\v}_{z}\in \B(L^{2}_{-(2s+\eta)}(\RE^{3}),L^{2}(\RE^{3}))$. Hence, by the resolvent formula \eqref{RF-int} and by $R_{z}\in \B(L^{2}_{-(2s+\eta)}(\RE^{3}),H^{2}_{-(2s+\eta)}(\RE^{3}))$, one gets $R_{z}^{\v}\in \B(L^{2}_{-(2s+\eta)}(\RE^{3}),H^{2}_{-(2s+\eta)}(\RE^{3}))$. This entails $\gamma_{0}R_{z}^{B_{1}}=\gamma_{0}R_{z}^{\v}=\gamma_{0}\chi R_{z}^{\v}\in \B(L^{2}_{-(2s+\gamma)}(\RE^{3}),B^{2}_{2,2}(\Gamma))$ and $\gamma_{1}R_{z}^{B_{1}}=\gamma_{1}R_{z}^{\v}=\gamma_{1}\chi R_{z}^{\v}\in \B(L^{2}_{-(2s+\eta)}(\RE^{3}),H^{1/2}(\Gamma))$. Then, by duality, one gets $G_{z}^{B_{1}}\in \B(\h_{2}^{*},L^{2}_{2s+\eta}(\RE^{3}))$. This shows that (H4.2) holds. \par
By \cite[Theorem 4.2]{Agmon}, the map $(\RE\backslash{e}(A_{B_{1}}))\cup\CO_{\pm}\ni z\mapsto R^{B_{1},\pm}_{z}=R_{z}^{\v,\pm}\in \B(L^{2}_{s}(\RE^{3}),H^{2}_{-s}(\RE^{3}))$ is continuous. Hence, $z\mapsto\gamma_{0}R^{B_{1},\pm}_{z}=\gamma_{0}R_{z}^{\v,\pm}=\gamma_{0}\chi R_{z}^{\v,\pm}$ and  $z\mapsto\gamma_{1}R^{B_{1},\pm}_{z}=\gamma_{1}R_{z}^{\v,\pm}=\gamma_{1}\chi R_{z}^{\v,\pm}$ are continuous as  $\B(L^{2}_{s}(\RE^{3}),B^{3/2}_{2,2}(\Gamma))$-valued and  $\B(L^{2}_{s}(\RE^{3}),H^{1/2}(\Gamma))$-valued maps respectively. Then, by duality, $z\mapsto G^{B_{1},\pm}$ is continuous on $(\RE\backslash{e}(A_{B_{1}}))\cup\CO_{\pm}$ as a $\B(\h_{2}^{*}, L^{2}_{-s}(\RE^{3}))$-valued map. Since both $\gamma_{0}:H^{2}(\RE^{2})\to B^{3/2}_{2,2}(\Gamma)$ and $\gamma_{1}:H^{2}(\RE^{2})\to H^{1/2}(\Gamma)$ are surjective,  $G^{B_{1},\pm}_{z}\in \B(\h_{2}^{*}, L^{2}_{-s}(\RE^{3}))$ is the adjoint of a surjective map and hence is injective. Thus we proved that (H5) holds.
\end{proof}
\section{Applications.}
\subsection{\label{Sec_delta} Short-range potentials and semi-transparent boundary conditions
of $\delta_{\Gamma }$-type} 
Here we take 
$$
\h_{2}= B^{3/2}_{2,2}(\Gamma)\hookrightarrow\b_{2}=\b_{2,2}= H^{s_{\circ}}(\Gamma)
\hookrightarrow\h_{2}^{\circ}=L^{2}(\Gamma)\,,\quad 0< s_{\circ}<1/2\,,
$$
$$
\tau_{2}=\gamma_{0}:H^{2}(\RE^{3})\to  B^{3/2}_{2,2}(\Gamma)\,,\qquad
B_{0}=1\,,
\qquad B_{2}=\alpha\,,
$$
where 
$$
\alpha\in\B(H^{s_{\circ}}(\Gamma),H^{-s_{\circ}}(\Gamma))\,,\quad \alpha^{*}=\alpha\,.
$$  
Let us notice (see \cite[Remark 2.6]{JDE18}) that in the case $\alpha$ is the multiplication operator associated to a real-valued function 
$\alpha$, then $\alpha\in L^{p}(\Gamma)$, $p>2$, fulfills our hypothesis. \par
For any $z\in\CO\backslash(-\infty,0]$, one has
\be
M_{z}^{\Bi}
=1-\begin{bmatrix}
\langle x\rangle^{2s}\v  & 0\\
0 & \alpha%
\end{bmatrix}
\begin{bmatrix}
\langle x\rangle^{-s}R_{z}\langle x\rangle^{-s}  & \langle x\rangle^{-s}R_{z}\gamma_{0}^{*}\\
\gamma_{0}R_{z} \langle x\rangle^{-s}& \gamma_{0}R_{z}\gamma_{0}^{*}%
\end{bmatrix}
=\begin{bmatrix}\langle x\rangle^{s}&0\\0&1\end{bmatrix}{M}_{z}^{\v,\alpha}\begin{bmatrix}\langle x\rangle^{-s}&0\\0&1\end{bmatrix}\,,
\ee
\be
{M}_{z}^{\v,\alpha}:=%
\begin{bmatrix}
1-\v R_{z} & -\v \SL_{z}\\
-\alpha\SL_{\bar z}^{*} & 1-\alpha S_{z}%
\end{bmatrix}
\,.
\ee
By the mapping properties provided in Sections \ref{Sec_V} and \ref{Sec_Layer}, by \eqref{v-w} and \eqref{R-w} with $w=-s$, one gets
$$
M_{z}^{\v,\alpha}\in{\B}(H_{-s}^{1}(  \mathbb{R}^{3}\backslash\Gamma)^{*}\oplus H^{-s_{\circ}}(\Gamma))\,.
$$
According to 
\cite[Lemma 5.8]{JMPA}, for any $z\in \CO\backslash((-\infty,0]\cup\sigma_{\alpha})$, where $\sigma_{\alpha}\subset (0,+\infty)$ is discrete in $(0,+\infty)$, one has
\begin{equation}
(M_{z}^{B_{0},B_{2}})^{-1}=(M_{z}^{\alpha})^{-1}:=
(  1-\alpha S_{z})  ^{-1}\in{\B}(  H^{-s_{\circ}}(
\Gamma )  )  \,.\label{S-alpha}%
\end{equation}
Thus $$
Z_{B_{0},B_{2}}=Z_{\alpha}:=\{z\in \CO\backslash(-\infty,0]:(M_{w}^{\alpha})^{-1}\in{\B}(  H^{-s_{\circ}}(
\Gamma )  ) ,\ w=z,\bar z\}\supseteq \CO\backslash((-\infty,0]\cup\sigma_{\alpha})
$$ 
and 
$$ 
\Lambda_{z}^{B_{0},B_{2}}=(M_{z}^{B_{0},B_{2}})^{-1}B_{2}=\Lambda_{z}^{\!\alpha}:=(1-\alpha S_{z})^{-1}\alpha\in{\B}( H^{s_{\circ}}(
\Gamma ) ,  H^{-s_{\circ}}(\Gamma )  ) \,.
$$
By \cite[Corollary 2.4]{JDE18}, for any $z\in \varrho(\Delta+\v)\backslash\sigma_{\v,\alpha}$, where $\sigma_{\v,\alpha}\subset \varrho(\Delta+\v)\cap\RE$ is discrete in $\varrho(\Delta+\v)\cap\RE$, 
\begin{equation}
(\widehat M_{z}^{\Bi})^{-1}=(\widehat M_{z}^{\v,\alpha})^{-1}:=
(  1-\alpha S_{z}^{\v})  ^{-1}
\in{\B}(  H^{-s_{\circ}}(\Gamma )  )  
\,.\label{Sv-alpha}%
\end{equation}
Thus 
$$
\widehat Z_{\Bi}=\widehat Z_{\v,\alpha}:=\{z\in \varrho(\Delta+\v):(\widehat M_{w}^{\v,\alpha})^{-1}\in{\B}(  H^{-s_{\circ}}(
\Gamma )  ) ,\ w=z,\bar z\}\supseteq \varrho(\Delta+\v)\backslash\sigma_{\v,\alpha}
$$ 
and
$$
\widehat\Lambda^{\Bi}_{z}=(\widehat M_{z}^{\Bi})^{-1}B_{2}=
\widehat\Lambda_{z}^{\v,\alpha}:=(1-\alpha S^{\v}_{z} )^{-1}\alpha
\in{\B}(H^{s_{\circ}}(\Gamma ),   H^{-s_{\circ}}(\Gamma )  )\,.
$$
Hence,   
$$
\Lambda_{z}^{\Bi}=
\begin{bmatrix}\langle x\rangle^{s}&0\\0&1\end{bmatrix}({M}_{z}^{\v,\alpha})^{-1}\begin{bmatrix}\langle x\rangle^{-s}&0\\0&1\end{bmatrix}\begin{bmatrix}
\langle x\rangle^{2s}\v  & 0\\
0 & \alpha%
\end{bmatrix}
=\begin{bmatrix}\langle x\rangle^{s}&0\\0&1\end{bmatrix}{\mathsf\Lambda}^{\!\Bi}_{z}\begin{bmatrix}\langle x\rangle^{s}&0\\0&1\end{bmatrix}\,,
$$
where, by Theorem \ref{Llambda},
\begin{align*}
&{\mathsf\Lambda}^{\!\Bi}_{z}={\mathsf\Lambda}_{z}^{\!\v,\alpha}:=
\begin{bmatrix}
\Lambda_{z}^{\!\v}+\Lambda_{z}^{\!\v}\SL_{z}\widehat\Lambda_{z}^{\v,\alpha}\SL_{\bar z}^{*}\Lambda_{z}^{\!\v}& \Lambda_{z}^{\!\v}\SL_{z}\widehat\Lambda_{z}^{\v,\alpha}
\\\widehat\Lambda_{z}^{\v,\alpha}\SL_{\bar z}^{*}\Lambda_{z}^{\!\v} & 
\widehat\Lambda_{z}^{\v,\alpha}
\end{bmatrix}\\
=&\begin{bmatrix}\Lambda^{\!\v}_{z}&0\\0&\widehat\Lambda_{z}^{\v,\alpha}\end{bmatrix}
\left(1+\begin{bmatrix}
\SL_{z}\widehat\Lambda_{z}^{\v,\alpha}\SL_{\bar z}^{*}& \SL_{z}
\\ \SL_{\bar z}^{*} & 0
\end{bmatrix}\right)
\begin{bmatrix}\Lambda^{\!\v}_{z}&0\\0&\widehat\Lambda_{z}^{\v,\alpha}\end{bmatrix}
\,.
\end{align*}
One has 
\be\label{Lbbvalpha}
{\mathsf\Lambda}_{z}^{\!\v,\alpha}
\in{\B}(H_{-s}^{1}(  \mathbb{R}^{3}\backslash\Gamma)\oplus H^{s_{\circ}}(\Gamma),H_{-s}^{1}(  \mathbb{R}^{3}\backslash\Gamma)^{*}\oplus H^{-s_{\circ}}(\Gamma))\,.
\ee
By Theorems \ref{Th_Krein} and \ref{Th-alt-res}, there follows 
\begin{align}
R_{z}^{\v,\alpha}=&R_{z}+
\begin{bmatrix}R_{z}\langle x\rangle^{-s}&\SL_{z}\end{bmatrix}
\begin{bmatrix}\langle x\rangle^{s}&0\\0&1\end{bmatrix}{\mathsf\Lambda}_{z}^{\!\v,\alpha}\begin{bmatrix}\langle x\rangle^{s}&0\\0&1\end{bmatrix}
\begin{bmatrix}\langle x\rangle^{2s}\v \langle x\rangle^{-s}R_{z}\\ \alpha\SL_{\bar z}^{*}\end{bmatrix}
\label{Rv-alpha-0}
\\ 
=&R_{z}+
\begin{bmatrix}R_{z}&\SL_{z}\end{bmatrix}\begin{bmatrix}\Lambda^{\!\v}_{z}&0\\0&\widehat\Lambda_{z}^{\v,\alpha}\end{bmatrix}
\left(1+\begin{bmatrix}
\SL_{z}\widehat\Lambda_{z}^{\v,\alpha}\SL_{\bar z}^{*}& \SL_{z}
\\ \SL_{\bar z}^{*} & 0
\end{bmatrix}\right)
\begin{bmatrix}\Lambda^{\!\v}_{z}&0\\0&\widehat\Lambda_{z}^{\v,\alpha}\end{bmatrix}
\begin{bmatrix}R_{z}\\ \SL_{\bar z}^{*}\end{bmatrix}
\label{Rv-alpha-1}\\ 
=&R_{z}^{\v}+\SL_{z}^{\v}{\widehat\Lambda}_{z}^{\v,\alpha}{\SL_{\bar z}^{\v}}^{*}\,.
\label{Rv-alpha-2}
\end{align}
is the resolvent of a self-adjoint operator $\Delta^{\!\v,\delta,\alpha}$; \eqref{Rv-alpha-0} holds for any 
$z\in \varrho(\Delta^{\!\v,\delta,\alpha})\cap\CO\backslash(-\infty,0]$, both
\eqref{Rv-alpha-1} and \eqref{Rv-alpha-2} hold for any 
$z\in \varrho(\Delta^{\!\v,\delta,\alpha})\cap\varrho(\Delta+\v)$. 
\par
By Theorem \ref{Th-add},
$$
\Delta^{\!\v,\delta,\alpha}u=\Delta u+\v u+(\alpha\gamma_{0}u)\delta_{\Gamma}\,.
$$
By \eqref{Rv-alpha-2} and by the mapping properties of $\SL^{\v}_{z}$, one has 
$$
\dom(\Delta^{\!\v,\delta,\alpha})\subseteq H^{3/2-s_{\circ}}(\RE^{3})\,.
$$ 
Moreover, by $R^{\v}_{z}u\in H^{2}(\RE^{3})$, so that $[\gamma_{1}]R^{\v}_{z}u=0$, and by \eqref{jumpv0}, one gets $[\gamma_{1}]R^{\v,\alpha}_{z}u=-{\widehat\Lambda}_{z}^{\v,\alpha}{\SL_{\bar z}^{\v}}^{*}u=-\widehat\rho_{\Bi}(R^{\v,\alpha}_{z}u)$. Hence, by Theorem \ref{alt-abc}, 
$$
u\in\dom(\Delta^{\!\v,\delta,\alpha})\quad\Longrightarrow\quad\alpha\gamma_{0}u+[\gamma_{1}]u=0\,.
$$
Since $\widehat Z_{\v,\alpha}$ contains a positive half-line, $\Delta^{\!\v,\delta,\alpha}$ is bounded from above and hypothesis (H4.1) holds. The scattering couple $(\Delta^{\!\v,\delta,\alpha},\Delta)$ is asymptotically complete and the corresponding scattering matrix is given by 
$$
{\Sc}_{\lambda}^{\v,\alpha}=1-2\pi i\,{\mathsf L}_{\lambda}{\mathsf\Lambda}^{\v,\alpha,+}_{\lambda}{\mathsf L}_{\lambda}^{*}\,,\quad \lambda\in(-\infty,0]\backslash(\sigma^{-}_{p}(\Delta+\v)\cup \sigma^{-}_{p}(\Delta^{\!\v,\delta,\alpha}))\,,
$$
where ${\mathsf L}_{\lambda}$ is given in Theorem \ref{Llambda} and ${\mathsf\Lambda}^{\v,\alpha,+}_{\lambda}:=\lim_{\epsilon\searrow 0}{\mathsf\Lambda}^{\v,\alpha}_{\lambda+i\epsilon}$. This latter limit exists by Lemma \ref{rmH7}; in particular, by \eqref{LBpm2}, 
\begin{align*}
{\mathsf\Lambda}^{\v,\alpha,+}=&\left(1+\begin{bmatrix}
(1-\v R^{+}_{\lambda})^{-1}\v&0\\
0&(1-\alpha S^{\v,+}_{\lambda})^{-1}\alpha\end{bmatrix}\begin{bmatrix}
\SL^{+}_{\lambda}
(1-\alpha S^{\v,+}_{\lambda})^{-1}\alpha(\SL^{-}_{\lambda})^{*}& \SL^{+}_{\lambda}
\\(\SL^{-}_{\lambda})^{*}& 0
\end{bmatrix}\right)\times\\
&\times\begin{bmatrix}
(1-\v R^{+}_{\lambda})^{-1}\v&0\\
0&(1-\alpha S^{\v,+}_{\lambda})^{-1}\alpha\end{bmatrix}\,,
\end{align*}
where
$$
R^{\pm}_{\lambda}:=\lim_{\epsilon\searrow 0}R_{\lambda\pm i\epsilon}\,,\qquad\SL^{\pm}_{\lambda}:=\lim_{\epsilon\searrow 0}\SL_{\lambda\pm i\epsilon}\,,\qquad 
S^{\v,\pm}_{\lambda}:=\lim_{\epsilon\searrow 0}\gamma_{0}\SL^{\v}_{\lambda\pm i\epsilon}\,.
$$
\subsection{\label{Sec_dirichlet} Short-range potentials and Dirichlet boundary conditions.} 
Here we take 
$$
\h_{2}= B^{3/2}_{2,2}(\Gamma)\hookrightarrow\b_{2}=H^{1/2}(\Gamma)
\hookrightarrow\h_{2}^{\circ}=L^{2}(\Gamma)\hookrightarrow \b_{2,2}=\b_{2}^{*}= H^{-1/2}(\Gamma)\,,
$$
$$
\tau_{2}=\gamma_{0}:H^{2}(\RE^{3})\to  B^{3/2}_{2,2}(\Gamma)\,,\qquad
B_{0}=0\,,\qquad B_{2}=1\,.
$$
For any $z\in\CO\backslash(-\infty,0]$, one has
$$
M_{z}^{\Bi}
=\begin{bmatrix}
1 & 0\\
0 & 0%
\end{bmatrix}-
\begin{bmatrix}
\langle x\rangle^{2s}\v  & 0\\
0 & 1%
\end{bmatrix}
\begin{bmatrix}
\langle x\rangle^{-s}R_{z}\langle x\rangle^{-s}  & \langle x\rangle^{-s}R_{z}\gamma_{0}^{*}\\
\gamma_{0}R_{z}\langle x\rangle^{-s} & \gamma_{0}R_{z}\gamma_{0}^{*}%
\end{bmatrix}=\begin{bmatrix}\langle x\rangle^{s}&0\\0&1\end{bmatrix}M_{z}^{\v,d}\begin{bmatrix}\langle x\rangle^{s}&0\\0&1\end{bmatrix}\,,
$$
$$
M_{z}^{\v,d}:=%
\begin{bmatrix}
1-\v R_{z} & -\v \SL_{z}\\
-\SL_{\bar z}^{*} & -S_{z}%
\end{bmatrix}
\,.
$$
By the mapping properties provided in Sections \ref{Sec_V} and \ref{Sec_Layer}, by \eqref{v-w} and \eqref{R-w} with $w=-s$, one gets
$$
M_{z}^{\v,d}\in{\B}(H_{-s}^{1}(  \mathbb{R}^{3}\backslash\Gamma)^{*}\oplus H^{-1/2}(\Gamma),H_{-s}^{1}(  \mathbb{R}^{3}\backslash\Gamma)^{*}\oplus H^{1/2}(\Gamma))\,.
$$
By Lemma \ref{coerc} with $\v=0$, for any $z\in Z^{\circ}_{0,d}\not=\varnothing $, 
$$
(M_{z}^{B_{0},B_{2}})^{-1}=\Lambda_{z}^{B_{0},B_{2}}=
(M_{z}^{d})^{-1}=\Lambda_{z}^{\!d}:=-S_{z}^{-1}\in\B(H^{1/2}(\Gamma),H^{-1/2}(\Gamma))\,.
$$
Thus,
$$
Z_{B_{0},B_{2}}=Z_{d}:=\{z\in \CO\backslash(-\infty,0]: (M_{z}^{d})^{-1}\in\B(H^{1/2}(\Gamma),H^{-1/2}(\Gamma))\}\supseteq Z^{\circ}_{0,d}\,.
$$
By Lemma \ref{coerc} again, for any $z\in Z^{\circ}_{\v,d}\not=\varnothing $, 
$$
(\widehat M_{z}^{B_{0},B_{2}})^{-1}=(\widehat \Lambda_{z}^{B_{0},B_{2}})^{-1}=(\widehat M_{z}^{\v,d})^{-1}=\widehat \Lambda_{z}^{\v,d}:=-(S^{\v}_{z})^{-1}\in\B(H^{1/2}(\Gamma),H^{-1/2}(\Gamma))\,.
$$
Thus,
$$
\widehat Z_{\Bi}=\widehat Z_{\v,d}:=\{z\in \varrho(\Delta+\v): (\widehat M_{z}^{\v,d})^{-1}\in\B(H^{1/2}(\Gamma),H^{-1/2}(\Gamma))\}\supseteq Z^{\circ}_{\v,d}\,.
$$
Hence,
\begin{align*}
\Lambda^{\! \Bi}_{z}=\begin{bmatrix}\langle x\rangle^{s}&0\\0&1\end{bmatrix}(M_{z}^{\v,d})^{-1}
\begin{bmatrix}\langle x\rangle^{-s}&0\\0&1\end{bmatrix}\begin{bmatrix}
\langle x\rangle^{2s}\v  & 0\\
0 & 1%
\end{bmatrix}=\begin{bmatrix}\langle x\rangle^{s}&0\\0&1\end{bmatrix}{\mathsf\Lambda}_{z}^{\!\Bi}\begin{bmatrix}\langle x\rangle^{s}&0\\0&1\end{bmatrix}\,,
\end{align*}
where, by Theorem \ref{Llambda},
\begin{align*}
&{\mathsf\Lambda}_{z}^{\!\Bi}={\mathsf\Lambda}_{z}^{\!\v,d}:=\begin{bmatrix}
\Lambda_{z}^{\!\v}-\Lambda_{z}^{\!\v}\SL_{z}(S^{\v}_{z})^{-1}\SL_{\bar z}^{*}\Lambda_{z}^{\!\v}& -\Lambda_{z}^{\!\v}\SL_{z}(S^{\v}_{z})^{-1}\\
-(S^{\v}_{z})^{-1}\SL_{\bar z}^{*}\Lambda_{z}^{\!\v} & -(S^{\v}_{z})^{-1}
\end{bmatrix}\\
=&\begin{bmatrix}\Lambda^{\!\v}_{z}&0\\0&-(S^{\v}_{z})^{-1}\end{bmatrix}
\left(1+\begin{bmatrix}
-\SL_{z}(S^{\v}_{z})^{-1}\SL_{\bar z}^{*}& \SL_{z}
\\ \SL_{\bar z}^{*} & 0
\end{bmatrix}\right)
\begin{bmatrix}\Lambda^{\!\v}_{z}&0\\0&-(S^{\v}_{z})^{-1}\end{bmatrix}
\end{align*}
One has 
\be\label{Lbbv-dir}
{\mathsf\Lambda}_{z}^{\!\v,d}
\in{\B}(H_{-s}^{1}(  \mathbb{R}^{3}\backslash\Gamma)\oplus H^{1/2}(\Gamma),H_{-s}^{1}(  \mathbb{R}^{3}\backslash\Gamma)^{*}\oplus H^{-1/2}(\Gamma))\,.
\ee
By Theorems \ref{Th_Krein} and \ref{Th-alt-res}, there follows that
\begin{align}
&R_{z}^{\v,d}=R_{z}+
\begin{bmatrix}R_{z}\langle x\rangle^{-s}&\SL_{z}\end{bmatrix}
\begin{bmatrix}\langle x\rangle^{s}&0\\0&1\end{bmatrix}{\mathsf\Lambda}_{z}^{\!\v,d}\begin{bmatrix}\langle x\rangle^{s}&0\\0&1\end{bmatrix}
\begin{bmatrix}\langle x\rangle^{2s}\v\langle x\rangle^{-s} R_{z}\\\SL^{*}_{\bar z}\end{bmatrix}\label{Rv-dir-0}
\\
=&R_{z}+
\begin{bmatrix}R_{z}&\SL_{z}\end{bmatrix}
\begin{bmatrix}{\Lambda}^{\v}_{z}&0\\0&-(S^{\v}_{z})^{-1}\end{bmatrix}
\left(1+\begin{bmatrix}
-\SL_{z}(S^{\v}_{z})^{-1}\SL_{\bar z}^{*}& \SL_{z}
\\ \SL_{\bar z}^{*} & 0
\end{bmatrix}\right)
\begin{bmatrix}\Lambda^{\!\v}_{z}&0\\0&-(S^{\v}_{z})^{-1}\end{bmatrix}
\begin{bmatrix}R_{z}\\\SL^{*}_{\bar z}\end{bmatrix}\label{Rv-dir-1}
\\
=&R_{z}^{\v}-\SL^{\v}(S_{z}^{\v})^{-1}{\SL_{\bar z}^{\v}}^{*}
\label{Rv-dir-2}
\end{align}
is the resolvent of a self-adjoint operator $\Delta^{\!\v,d}$; \eqref{Rv-dir-0} holds for any 
$z\in \varrho(\Delta^{\!\v,d})\cap\CO\backslash(-\infty,0]$, both  
\eqref{Rv-dir-1} and 
\eqref{Rv-dir-2} hold for any 
$z\in \varrho(\Delta^{\!\v,d})\cap\varrho(\Delta+\v)$. By \eqref{Rv-alpha-2} and by the mapping properties of $\SL^{\v}_{z}$, one has 
$$
\dom(\Delta^{\!\v,d})\subseteq H^{1}(\RE^{3})\,.
$$ 
By Theorem \ref{alt-abc} and by $[\gamma_{1}]u=-\widehat\rho_{\Bi}u$ for any $u\in\dom(\Delta^{\!\v,d})$, one gets
$$
\Delta^{\!\v,d}\,u=\Delta u+\v u-([\gamma_{1}]u)\delta_{\Gamma}
$$
and
$$
u\in\dom(\Delta^{\!\v,d})\quad\Longrightarrow\quad\gamma_{0}u=0\,.
$$
Therefore, $\dom(\Delta^{\!\v,d})\subseteq H_{0}^{1}(\Omega_{\inn})\oplus H_{0}^{1}(\Omega_{\ex})$.
Since $\widehat Z_{\v,\alpha}$ contains a positive half-line, $\Delta^{\!\v,d}$ is bounded from above and hypothesis (H4.1) holds. The scattering couple $(\Delta^{\!\v,d},\Delta)$ is asymptotically complete and the corresponding scattering matrix is given by 
$$
{\Sc}_{\lambda}^{\v,d}=1-2\pi i\,{\mathsf L}_{\lambda}{\mathsf\Lambda}^{\v,d,+}_{\lambda}{\mathsf L}_{\lambda}^{*}\,,\quad \lambda\in(-\infty,0]\backslash(\sigma^{-}_{p}(\Delta+\v)\cup \sigma^{-}_{p}(\Delta^{\!\v,d}))\,,
$$
where ${\mathsf L}_{\lambda}$ is given in Theorem \ref{Llambda} and ${\mathsf\Lambda}^{\v,d,+}_{\lambda}:=\lim_{\epsilon\searrow 0}{\mathsf\Lambda}^{\v,d}_{\lambda+i\epsilon}$. This latter limit exists by Lemma \ref{rmH7}; in particular, by \eqref{LBpm2}, 
\begin{align*}
{\mathsf\Lambda}^{\v,d,+}=&\left(1+\begin{bmatrix}
(1-\v R^{+}_{\lambda})^{-1}\v&0\\
0&-(S^{\v,+}_{\lambda})^{-1}\end{bmatrix}\begin{bmatrix}
-\SL^{+}_{\lambda}
(S^{\v,+}_{\lambda})^{-1}(\SL^{-}_{\lambda})^{*}& \SL^{+}_{\lambda}
\\(\SL^{-}_{\lambda})^{*}& 0
\end{bmatrix}\right)\times\\
&\times\begin{bmatrix}
(1-\v R^{+}_{\lambda})^{-1}\v&0\\
0&-(S^{\v,+}_{\lambda})^{-1}\end{bmatrix}\,,
\end{align*}
where
$$
R^{\pm}_{\lambda}:=\lim_{\epsilon\searrow 0}R_{\lambda\pm i\epsilon}\,,\qquad\SL^{\pm}_{\lambda}:=\lim_{\epsilon\searrow 0}\SL_{\lambda\pm i\epsilon}\,,\qquad 
S^{\v,\pm}_{\lambda}:=\lim_{\epsilon\searrow 0}\gamma_{0}\SL^{\v}_{\lambda\pm i\epsilon}\,.
$$
\subsection{\label{Sec_neumann} Short-range potentials and Neumann boundary conditions.} 
Here we take 
$$
\h_{2}=\b_{2}^{*}=\b_{2,2}= H^{1/2}(\Gamma)\hookrightarrow\h_{2}^{\circ}=L^{2}(\Gamma)\hookrightarrow\b_{2}=\h^{*}_{2}=\b_{2,2}^{*}=H^{-1/2}(\Gamma)
\,,
$$
$$
\tau_{2}=\gamma_{1}:H^{2}(\RE^{3})\to H^{1/2}(\Gamma)\,,\qquad
B_{0}=0\,,\qquad B_{2}=1\,.
$$
For any $z\in\CO\backslash(-\infty,0]$, one has
$$
M_{z}^{\Bi}
=\begin{bmatrix}
1 & 0\\
0 & 0%
\end{bmatrix}-
\begin{bmatrix}
\langle x\rangle^{2s}\v  & 0\\
0 & 1%
\end{bmatrix}
\begin{bmatrix}
\langle x\rangle^{-s}R_{z}\langle x\rangle^{-s}  & \langle x\rangle^{-s}R_{z}\gamma_{1}^{*}\\
\gamma_{1}R_{z}\langle x\rangle^{-s} & \gamma_{1}R_{z}\gamma_{0}^{*}%
\end{bmatrix}
=\begin{bmatrix}\langle x\rangle^{s}  & 0\\
0 & 1
\end{bmatrix}M_{z}^{\v,n}
\begin{bmatrix}\langle x\rangle^{s}  & 0\\
0 & 1%
\end{bmatrix}\,,
$$
$$
M_{z}^{\v,n}:=%
\begin{bmatrix}
1-\v R_{z} & -\v \DL_{z}\\
-\DL_{\bar z}^{*} & -D_{z}%
\end{bmatrix}
\,.
$$
By the mapping properties provided in Sections \ref{Sec_V} and \ref{Sec_Layer}, by \eqref{v-w} and \eqref{R-w} with $w=-s$, one gets
$$
M_{z}^{\v,n}\in{\B}(H_{-s}^{1}(  \mathbb{R}^{3}\backslash\Gamma)^{*}\oplus H^{1/2}(\Gamma),
H_{-s}^{1}(  \mathbb{R}^{3}\backslash\Gamma)^{*}\oplus H^{-1/2}(\Gamma))\,.
$$
By Lemma \ref{coerc} with $\v=0$, for any $z\in Z^{\circ}_{0,n}\not=\varnothing $, 
$$
(M_{z}^{B_{0},B_{2}})^{-1}=\Lambda_{z}^{B_{0},B_{2}}=
(M_{z}^{n})^{-1}=\Lambda_{z}^{\!n}:=-D_{z}^{-1}\in\B(H^{-1/2}(\Gamma),H^{1/2}(\Gamma))\,.
$$
Thus,
$$
Z_{B_{0},B_{2}}=Z_{n}:=\{z\in \CO\backslash(-\infty,0]: (M_{z}^{n})^{-1}\in\B(H^{-1/2}(\Gamma),H^{1/2}(\Gamma))\}\supseteq Z^{\circ}_{0,n}\,.
$$
By Lemma \ref{coerc} again, for any $z\in Z^{\circ}_{\v,n}\not=\varnothing $, 
$$
(\widehat M_{z}^{B_{0},B_{2}})^{-1}=(\widehat \Lambda_{z}^{B_{0},B_{2}})^{-1}=(\widehat M_{z}^{\v,n})^{-1}=\widehat \Lambda_{z}^{\v,n}:=-(D^{\v}_{z})^{-1}\in\B(H^{-1/2}(\Gamma),H^{1/2}(\Gamma))\,.
$$
Thus,
$$
\widehat Z_{\Bi}=\widehat Z_{n}:=\{z\in \varrho(\Delta+\v): (\widehat M_{z}^{\v,n})^{-1}\in\B(H^{-1/2}(\Gamma),H^{1/2}(\Gamma))\}\supseteq Z^{\circ}_{\v,n}\,.
$$
Hence, 
\begin{align*}
\Lambda^{\! \Bi}_{z}=\begin{bmatrix}\langle x\rangle^{s}&0\\0&1\end{bmatrix}(M_{z}^{\v,n})^{-1}
\begin{bmatrix}\langle x\rangle^{-s}&0\\0&1\end{bmatrix}\begin{bmatrix}
\langle x\rangle^{2s}\v  & 0\\
0 & 1%
\end{bmatrix}=\begin{bmatrix}\langle x\rangle^{s}&0\\0&1\end{bmatrix}{\mathsf\Lambda}_{z}^{\!\Bi}\begin{bmatrix}\langle x\rangle^{s}&0\\0&1\end{bmatrix}\,,
\end{align*}
where, by Theorem \ref{Llambda},
\begin{align*}
&{\mathsf\Lambda}_{z}^{\!\Bi}={\Lambda}_{z}^{\!\v,n}:=\begin{bmatrix}
\Lambda_{z}^{\!\v}-\Lambda_{z}^{\!\v}\DL_{z}(D^{\v}_{z})^{-1}\DL_{\bar z}^{*}\Lambda_{z}^{\!\v}& -\Lambda_{z}^{\!\v}\DL_{z}(D^{\v}_{z})^{-1}\\
-(D^{\v}_{z})^{-1}\DL_{\bar z}^{*}\Lambda_{z}^{\!\v} & -(D^{\v}_{z})^{-1}
\end{bmatrix}\\
=&\begin{bmatrix}\Lambda^{\!\v}_{z}&0\\0&-(D^{\v}_{z})^{-1}\end{bmatrix}
\left(1+\begin{bmatrix}
-\DL_{z}(D^{\v}_{z})^{-1}\DL_{\bar z}^{*}& \DL_{z}
\\ \DL_{\bar z}^{*} & 0
\end{bmatrix}\right)
\begin{bmatrix}\Lambda^{\!\v}_{z}&0\\0&-(D^{\v}_{z})^{-1}\end{bmatrix}
\end{align*}
One has 
\be\label{Lbbv-neu}
{\mathsf\Lambda}_{z}^{\!\v,n}
\in{\B}(H_{-s}^{1}(  \mathbb{R}^{3}\backslash\Gamma)\oplus H^{-1/2}(\Gamma),H_{-s}^{1}(  \mathbb{R}^{3}\backslash\Gamma)^{*}\oplus H^{1/2}(\Gamma))\,.
\ee
By Theorems \ref{Th_Krein} and \ref{Th-alt-res}, there follows that 
\begin{align}
&R_{z}^{\v,n}
=R_{z}+
\begin{bmatrix}R_{z}\langle x\rangle^{-s}&\DL_{z}\end{bmatrix}
\begin{bmatrix}\langle x\rangle^{s}&0\\0&1\end{bmatrix}{\mathsf\Lambda}_{z}^{\!\v,n}\begin{bmatrix}\langle x\rangle^{s}&0\\0&1\end{bmatrix}\begin{bmatrix}\langle x\rangle^{2s}\v \langle x\rangle^{-s}R_{z}\\ \DL^{*}_{\bar z}\end{bmatrix}\label{Rv-neu-0}\\ 
=&R_{z}+
\begin{bmatrix}R_{z}&\DL_{z}\end{bmatrix}
\begin{bmatrix}\Lambda^{\!\v}_{z}&0\\0&\!\!\!-(D^{\v}_{z})^{-1}\end{bmatrix}
\left(1+\begin{bmatrix}
-\DL_{z}(D^{\v}_{z})^{-1}\DL_{\bar z}^{*}& \DL_{z}
\\ \DL_{\bar z}^{*} & 0
\end{bmatrix}\right)
\begin{bmatrix}\Lambda^{\!\v}_{z}&0\\0&\!\!\!-(D^{\v}_{z})^{-1}\end{bmatrix}
\begin{bmatrix}R_{z}\\ \DL^{*}_{\bar z}\end{bmatrix}\label{Rv-neu-1}\\ 
=&R_{z}^{\v}-\DL_{z}^{\v}(D_{z}^{\v})^{-1}{\DL^{\v}_{\bar z}}^{*}
\label{Rv-neu-2}
\end{align}
is the resolvent of a self-adjoint operator $\Delta^{\!\v,n}$; \eqref{Rv-neu-0} holds for any 
$z\in \varrho(\Delta^{\!\v,n})\cap\CO\backslash(-\infty,0]$, both \eqref{Rv-neu-1} and \eqref{Rv-neu-2} hold for any 
$z\in \varrho(\Delta^{\!\v,n})\cap\varrho(\Delta+\v)$. By \eqref{Rv-alpha-2} and by the mapping properties of $\DL^{\v}_{z}$, one has 
$$
\dom(\Delta^{\!\v,n})\subseteq H^{1}(\RE^{3}\backslash\Gamma)\,.
$$ 
By Theorem \ref{alt-abc} and by $[\gamma_{0}]u=\widehat\rho_{\Bi}u$ for any $u\in\dom(\Delta^{\!\v,n})$, one gets
$$
\Delta^{\!\v,n}\,u=\Delta u+\v u+([\gamma_{0}]u)\delta'_{\Gamma}
$$
and 
$$
u\in\dom(\Delta^{\!\v,n})\quad\Longrightarrow\quad\gamma_{1}u=0\,.
$$
Since $\widehat Z_{\v,n}$ contains a positive half-line, $\Delta^{\!\v,n}$ is bounded from above and hypothesis (H4.1) holds. The scattering couple $(\Delta^{\!\v,n},\Delta)$ is asymptotically complete and the corresponding scattering matrix is given by 
$$
{\Sc}_{\lambda}^{\v,n}=1-2\pi i\,{\mathsf L}_{\lambda}{\mathsf\Lambda}^{\v,n,+}_{\lambda}{\mathsf L}_{\lambda}^{*}\,,\quad \lambda\in(-\infty,0]\backslash(\sigma^{-}_{p}(\Delta+\v)\cup \sigma^{-}_{p}(\Delta^{\!\v,n}))\,,
$$
where ${\mathsf L}_{\lambda}$ is given in Theorem \ref{Llambda} and ${\mathsf\Lambda}^{\v,n,+}_{\lambda}:=\lim_{\epsilon\searrow 0}{\mathsf\Lambda}^{\v,n}_{\lambda+i\epsilon}$. This latter limit exists by Lemma \ref{rmH7}; in particular, by \eqref{LBpm2}, 
\begin{align*}
{\mathsf\Lambda}^{\v,n,+}=&\left(1+\begin{bmatrix}
(1-\v R^{+}_{\lambda})^{-1}\v&0\\
0&-(D^{\v,+}_{\lambda})^{-1}\end{bmatrix}\begin{bmatrix}
-\DL^{+}_{\lambda}
(D^{\v,+}_{\lambda})^{-1}(\DL^{-}_{\lambda})^{*}& \DL^{+}_{\lambda}
\\(\DL^{-}_{\lambda})^{*}& 0
\end{bmatrix}\right)\times\\
&\times\begin{bmatrix}
(1-\v R^{+}_{\lambda})^{-1}\v&0\\
0&-(D^{\v,+}_{\lambda})^{-1}\end{bmatrix}\,,
\end{align*}
where
$$
R^{\pm}_{\lambda}:=\lim_{\epsilon\searrow 0}R_{\lambda\pm i\epsilon}\,,\qquad\DL^{\pm}_{\lambda}:=\lim_{\epsilon\searrow 0}\DL_{\lambda\pm i\epsilon}\,,\qquad 
D^{\v,\pm}_{\lambda}:=\lim_{\epsilon\searrow 0}\gamma_{1}\DL^{\v}_{\lambda\pm i\epsilon}\,.
$$
\subsection{\label{Sec_delta'} Short-range potentials and semi-transparent boundary conditions
of $\delta'_{\Gamma }$-type} 
Here we take 
$$
\h_{2}=\b_{2}^{*}=\b_{2,2}= H^{1/2}(\Gamma)\hookrightarrow\h_{2}^{\circ}=L^{2}(\Gamma)\hookrightarrow \b_{2}=\h^{*}_{2}=\b_{2,2}^{*}=H^{-1/2}(\Gamma)
\,,
$$
$$
\tau_{2}=\gamma_{1}:H^{2}(\RE^{3})\to H^{1/2}(\Gamma)\,,\qquad
B_{0}=\theta\,,\qquad B_{2}=1\,,
$$
where 
$$ 
\theta\in\B(H^{s_{\circ}}(\Gamma),H^{-s_{\circ}}(\Gamma))\,,\quad 
0<s_{\circ}<1/2\,,\quad\theta^{*}=\theta\,.
$$
Let us notice (see \cite[Remark 2.6]{JDE18}) that in the case $\theta$ is the multiplication operator associated to a real-valued function 
$\theta$, then $\theta\in L^{p}(\Gamma)$, $p>2$, fulfills our hypothesis. 
Let us also remark that $\B(H^{s_{\circ}}(\Gamma),H^{-s_{\circ}}(\Gamma))\subseteq
\B(H^{1/2}(\Gamma),H^{-1/2}(\Gamma))=\B(\b_{2}^{*},\b_{2,2}^{*})$.\par
For any $z\in\CO\backslash(-\infty,0]$, one has
$$
M_{z}^{\Bi}
=\begin{bmatrix}
 1  & 0\\
0 & \theta%
\end{bmatrix}
-
\begin{bmatrix}
\langle x\rangle^{2s}\v  & 0\\
0 & 1%
\end{bmatrix}
\begin{bmatrix}
\langle x\rangle^{-s}R_{z}\langle x\rangle^{-s}  & \langle x\rangle^{-s}R_{z}\gamma_{1}^{*}\\
\gamma_{1}R_{z}\langle x\rangle^{-s} & \gamma_{1}R_{z}\gamma_{1}^{*}%
\end{bmatrix}=
\begin{bmatrix}\langle x\rangle^{s} & 0\\
0 & 1\end{bmatrix} 
M_{z}^{\v,\theta}
\begin{bmatrix}\langle x\rangle^{-s} & 0\\
0 & 1\end{bmatrix}\,,
$$
$$
M_{z}^{\v,\theta}:=%
\begin{bmatrix}
1-\v R_{z} & -\v \DL_{z}\\
-\DL_{\bar z}^{*} & \theta-D_{z}
\end{bmatrix}\,.%
$$
By the mapping properties provided in Sections \ref{Sec_V} and \ref{Sec_Layer}, by \eqref{v-w} and \eqref{R-w} with $w=-s$,  one gets
$$
M_{z}^{\v,\theta}\in{\B}(H_{-s}^{1}(  \mathbb{R}^{3}\backslash\Gamma)^{*}\oplus H^{1/2}(\Gamma),H_{-s}^{1}(\mathbb{R}^{3}\backslash\Gamma)^{*}\oplus H^{-1/2}(\Gamma))\,.
$$
\begin{lemma}\label{LLt} Let $Z^{\circ}_{\v,n}\not=\varnothing$ be given as in Lemma \ref{coerc}. Then,
$$
\forall z\in \widehat Z^{\circ}_{\v,n}:=Z^{\circ}_{\v,n}\cap\CO\backslash\RE\,,\qquad (1-\theta (D^{\v}_{z})^{-1})^{-1}\in \B(  H^{-1/2}(  \Gamma)) \,.
$$
\end{lemma}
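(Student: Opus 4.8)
The plan is to combine a Fredholm-alternative argument with a Green-identity computation of exactly the kind already carried out, for the operator $D^{\v}_{z}$, in the proof of Lemma \ref{coerc}. First I would note that for $z\in Z^{\circ}_{\v,n}$ Lemma \ref{coerc} gives $(D^{\v}_{z})^{-1}\in\B(H^{-1/2}(\Gamma),H^{1/2}(\Gamma))$, that the embedding $H^{1/2}(\Gamma)\hookrightarrow H^{s_{\circ}}(\Gamma)$ is compact (since $\Gamma$ is compact and $s_{\circ}<1/2$), and that $\theta\in\B(H^{s_{\circ}}(\Gamma),H^{-s_{\circ}}(\Gamma))$ with $H^{-s_{\circ}}(\Gamma)\hookrightarrow H^{-1/2}(\Gamma)$ continuously. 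Composing these four maps shows $\theta(D^{\v}_{z})^{-1}\in{\mathfrak S}_{\infty}(H^{-1/2}(\Gamma))$, so $1-\theta(D^{\v}_{z})^{-1}$ is a compact perturbation of the identity on $H^{-1/2}(\Gamma)$; by the Fredholm alternative (Riesz--Schauder theory) it is boundedly invertible as soon as it is injective. Hence it suffices to establish injectivity for each $z\in\widehat Z^{\circ}_{\v,n}$ (note $\widehat Z^{\circ}_{\v,n}\subseteq Z^{\circ}_{\v,n}\cap(\CO\backslash\RE)\subseteq\CO\backslash(-\infty,0]$).

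For injectivity, suppose $\phi\in H^{-1/2}(\Gamma)$ satisfies $\phi=\theta(D^{\v}_{z})^{-1}\phi$, and set $\psi:=(D^{\v}_{z})^{-1}\phi\in H^{1/2}(\Gamma)$, so that $D^{\v}_{z}\psi=\phi=\theta\psi$. Put $u:=\DL^{\v}_{z}\psi$; by \eqref{DLvext} one has $u\in H^{1}(\RE^{3}\backslash\Gamma)$, and, exactly as in the proof of Lemma \ref{coerc}, $u$ solves $(-(\Delta+\v)+z)u=0$ in $\Omega_{\inn}$ and in $\Omega_{\ex}$, its one-sided Neumann traces are well defined with $\gamma_{1}u=\gamma_{1}\DL^{\v}_{z}\psi=D^{\v}_{z}\psi=\theta\psi$, while $[\gamma_{0}]u=\psi$ and $[\gamma_{1}]u=0$ by \eqref{jumpv1}. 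Writing Green's formula \eqref{Gf} in $\Omega_{\inn}$ and in $\Omega_{\ex}$ (the exterior integration by parts being justified, for $z\in\CO\backslash(-\infty,0]$, by the exponential decay of $u$, as in Lemma \ref{coerc}) and adding the two identities, the interface terms combine, using first $[\gamma_{1}]u=0$ and then $[\gamma_{0}]u=\psi$, $\gamma_{1}u=\theta\psi$, into $\pm\langle\theta\psi,\psi\rangle_{H^{-1/2}(\Gamma),H^{1/2}(\Gamma)}$, whence
\[
0=\|\nabla u\|^{2}_{L^{2}(\RE^{3}\backslash\Gamma)}-\langle\v u,u\rangle_{H^{1}(\RE^{3}\backslash\Gamma)^{*},H^{1}(\RE^{3}\backslash\Gamma)}+z\,\|u\|^{2}_{L^{2}(\RE^{3})}\pm\langle\theta\psi,\psi\rangle_{H^{-1/2}(\Gamma),H^{1/2}(\Gamma)}\,.
\]

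Taking imaginary parts is the decisive step: $\|\nabla u\|^{2}$ and $\|u\|^{2}$ are nonnegative reals, $\langle\v u,u\rangle$ is real because $\v$ is real-valued, and $\langle\theta\psi,\psi\rangle$ is real because $\theta^{*}=\theta$; hence $0=\mathrm{Im}(z)\,\|u\|^{2}_{L^{2}(\RE^{3})}$. Since $z\in\widehat Z^{\circ}_{\v,n}$ is non-real, this forces $u=0$, so $\psi=[\gamma_{0}]u=0$ and therefore $\phi=\theta\psi=0$. Thus $1-\theta(D^{\v}_{z})^{-1}$ is injective, which completes the argument. I expect the main obstacle to be not the imaginary-part trick but the bookkeeping needed to legitimately apply Green's identity to $u=\DL^{\v}_{z}\psi$ on the two domains: identifying the correct interface pairings, checking that $\Delta u\in L^{2}$ on each side so the one-sided Neumann traces make sense, and controlling the behaviour at infinity of the exterior piece. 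All of this, however, is already performed for $\v\neq0$ in the proof of Lemma \ref{coerc} for the layer $\DL^{\v}$, so here it only has to be rerun with the additional self-adjoint term $\theta$ on the interface.
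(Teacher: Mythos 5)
Your proof is correct, and its first half — compactness of $\theta(D^{\v}_{z})^{-1}$ on $H^{-1/2}(\Gamma)$ via the compact embedding of the trace spaces, followed by the Fredholm alternative reducing everything to injectivity — is exactly the paper's. Where you genuinely diverge is the injectivity step. The paper stays at the resolvent level: from $D^{\v}_{z}\psi=\theta\psi$ and $\theta^{*}=\theta$ it deduces that $\langle(D^{\v}_{z}-D^{\v}_{\bar z})\psi,\psi\rangle$ vanishes, rewrites $D^{\v}_{z}-D^{\v}_{\bar z}$ via the resolvent identity as a multiple of $\gamma_{1}R^{\v}_{\bar z}R^{\v}_{z}\gamma_{1}^{*}$ to get $R^{\v}_{z}\gamma_{1}^{*}\psi=0$, and then invokes the closed range theorem (the adjoint $\gamma_{1}R^{\v}_{\bar z}$ is surjective, so $R^{\v}_{z}\gamma_{1}^{*}$ is bounded below) to conclude $\psi=0$. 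You instead pass to the potential $u=\DL^{\v}_{z}\psi$, apply Green's identity on both sides of $\Gamma$, and take imaginary parts; $u=0$ then yields $\psi=[\gamma_{0}]u=0$ directly through the jump relation \eqref{jumpv1}, with no need for the closed range theorem. The two routes are morally equivalent (both live off $\mathrm{Im}(z)\neq 0$ together with the reality of $\langle\theta\psi,\psi\rangle$), but yours trades the functional-analytic step for PDE bookkeeping: Green's formula \eqref{Gf} is stated for $u\in H^{1}$ with $\Delta u\in L^{2}$, whereas for $u=\DL^{\v}_{z}\psi$ one only has $\Delta u=zu-\v u$ with $\v u$ a priori in $H^{1}(\Omega_{\inn/\ex})^{*}$, so the interface pairings must be understood in the generalized variational sense. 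This is the same (repairable) point the paper already passes over when it applies Green's formula to $\SL^{\v}_{\lambda}\phi$ and $\DL^{\v}_{\lambda}\phi$ in the proof of Lemma \ref{coerc}, so it is not a defect specific to your argument; it just means your version inherits, rather than avoids, that technicality, while the paper's resolvent-identity version sidesteps it entirely.
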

\begin{proof}
We follow the same the arguments as in the proof of \cite[Lemma 5.4]{JMPA}. Since, by the compact embedding $H^{-s_{\circ}}(\Gamma)\hookrightarrow H^{-1/2}(\Gamma)$, $\theta(  D_{z}^{\v})^{-1}\in{\B}(H^{-1/2}(\Gamma))$ is compact, 
by the Fredholm alternative, $1-\theta(  D_{z}^{\v})  ^{-1}$ has a bounded inverse if and only if it has trivial kernel. Let $\varphi\in H^{-1/2}(\Gamma)$ be such that $D_{z}^{\v}\varphi=\theta\varphi$; using the self-adjointness of $\theta$, we get%
\[
(  D_{z}^{\v}-D_{\bar z}^{\v})  \varphi=0\,.
\]
By the resolvent identity,
\[
\text{Im}(z)\gamma_{1}R_{\bar z}^{\v}R_{z}^{\v}%
\gamma_{1}^{\ast}\varphi=0\,.
\]
This gives 
\begin{equation}
\|R_{z}^{\v}\gamma_{1}^{\ast}\varphi\| _{L^{2}(\mathbb{R}^{3})}=0\,. 
\end{equation}
Since $(R_{z}^{\v}\gamma_{1}^{\ast})^{\ast}=\gamma_{1}R_{\bar
{z}}^{\v}\in{\B}(  L^{2}(  \mathbb{R}^{3}),H^{1/2}(\Gamma))  $ is surjective, then
$R_{z}^{\v}\gamma_{1}^{\ast}\in{\B}(  H^{-1/2}(  \Gamma)  ,L^{2}(  \mathbb{R}^{3}))  $ has closed
range by the closed range theorem and, by \cite[Theorem 5.2, p. 231]{Kato},
\[
\|R_{z}^{\v}\gamma_{1}^{\ast}\varphi\|_{L^{2}(
\mathbb{R}^{3})  }\gtrsim\|\varphi\|_{H^{-1/2}(\Gamma)  }\,.
\]
Thus $\ker(1-\theta (D_{z}^{\v})^{-1})=\{0\}$ and the proof is done.
\end{proof}
According to Lemma \ref{LLt} with $\v=0$, for any $z\in \widehat Z^{\circ}_{0,n}\not=\varnothing $,
$$
(M_{z}^{B_{0},B_{2}})^{-1}=(M_{z}^{\theta})^{-1}=\Lambda_{z}^{\!\theta}:=(\theta-D_{z})^{-1}=
-D_{z}^{-1}(1-\theta D_{z}^{-1})^{-1}\in\B(H^{-1/2}(\Gamma),H^{-1/2}(\Gamma))\,.
$$
Thus
$$
Z_{B_{0},B_{2}}=Z_{\theta}:=\{z\in \CO\backslash(-\infty,0]:(M_{z}^{\theta})^{-1}\in\B(H^{-1/2}(\Gamma),H^{-1/2}(\Gamma))\}\supseteq\widehat Z^{\circ}_{0,n}\,.
$$
According to Lemma \ref{LLt} again, for any $z\in \widehat Z^{\circ}_{\v,n}\not=\varnothing $,
$$
(\widehat M_{z}^{B_{0},B_{2}})^{-1}=(\widehat M_{z}^{\v,\theta})^{-1}=\widehat \Lambda_{z}^{\v,\theta}:=(\theta-D_{z}^{\v})^{-1}=
-(D^{\v}_{z})^{-1}(1-\theta (D^{\v}_{z})^{-1})^{-1}\in\B(H^{-1/2}(\Gamma),H^{-1/2}(\Gamma))\,.
$$
Thus
$$
\widehat Z_{\Bi}=\widehat Z_{\v,\theta}:=\{z\in \varrho(\Delta+\v):(\widehat M_{z}^{\v,\theta})^{-1}\in\B(H^{-1/2}(\Gamma),H^{-1/2}(\Gamma))\}\supseteq\widehat Z^{\circ}_{\v,n}\,.
$$
Hence,  
$$
\Lambda_{z}^{\Bi}=\begin{bmatrix}\langle x\rangle^{s} & 0\\
0 & 1\end{bmatrix} 
(M_{z}^{\v,\theta})^{-1}
\begin{bmatrix}\langle x\rangle^{-s} & 0\\
0 & 1\end{bmatrix}\begin{bmatrix}
\langle x\rangle^{2s}\v  & 0\\
0 & 1
\end{bmatrix}=\begin{bmatrix}\langle x\rangle^{s} & 0\\
0 & 1\end{bmatrix} 
{\mathsf\Lambda}_{z}^{\!\Bi}
\begin{bmatrix}\langle x\rangle^{s} & 0\\
0 & 1\end{bmatrix}\,,
$$
where, by Theorem \ref{Llambda},
\begin{align*}
&{\mathsf\Lambda}_{z}^{\!\Bi}={\mathsf\Lambda}_{z}^{\!\v,\theta}:=\begin{bmatrix}
\Lambda_{z}^{\!\v,\theta}\Lambda^{\!\v}_{z}+\Lambda^{\!\v}_{z} \DL_{z}\widehat\Lambda^{\v,\theta}_{z}\DL_{\bar z}^{*}\Lambda^{\!\v}_{z}
& \Lambda^{\!\v}_{z} \DL_{z}\widehat\Lambda^{\v,\theta}_{z}\\
\widehat\Lambda^{\v,\theta}_{z}\DL_{\bar z}^{*}\Lambda^{\!\v}_{z}& \widehat\Lambda^{\v,\theta}_{z}\end{bmatrix}\\
=&
\begin{bmatrix}\Lambda^{\!\v}_{z}&0\\0&\widehat\Lambda_{z}^{\v,\theta}\end{bmatrix}
\left(1+\begin{bmatrix}
\DL_{z}\widehat\Lambda_{z}^{\v,\theta}\DL_{\bar z}^{*}& \DL_{z}
\\ \DL_{\bar z}^{*} & 0
\end{bmatrix}\right)
\begin{bmatrix}\Lambda^{\!\v}_{z}&0\\0&\widehat\Lambda_{z}^{\v,\theta}\end{bmatrix}
\,.
\end{align*}
One has 
\be\label{Ltbbteta}
{\mathsf\Lambda}_{z}^{\!\v,\theta}
\in{\B}(H_{-s}^{1}(  \mathbb{R}^{3}\backslash\Gamma)\oplus H^{-1/2}(\Gamma),H_{-s}^{1}(  \mathbb{R}^{3}\backslash\Gamma)^{*}\oplus H^{1/2}(\Gamma))\,.
\ee
By Theorems \ref{Th_Krein} and \ref{Th-alt-res}, there follows that
\begin{align}
R_{z}^{\v,\theta}
=&R_{z}+\begin{bmatrix}R_{z}\langle x\rangle^{-s}&\DL_{z}\end{bmatrix}
\begin{bmatrix}\langle x\rangle^{s}&0\\0&1\end{bmatrix}{\mathsf\Lambda}_{z}^{\!\v,\theta}\begin{bmatrix}\langle x\rangle^{s}&0\\0&1\end{bmatrix}
\begin{bmatrix}\langle x\rangle^{2s}\v \langle x\rangle^{-s}R_{z}\\\DL^{*}_{\bar z}\end{bmatrix}
\label{Rv-teta-0}\\
=&R_{z}+
\begin{bmatrix}R_{z}&\DL_{z}\end{bmatrix}
\begin{bmatrix}\Lambda^{\!\v}_{z}&0\\0&\widehat\Lambda_{z}^{\v,\theta}\end{bmatrix}
\left(1+\begin{bmatrix}
\DL_{z}\widehat\Lambda_{z}^{\v,\theta}\DL_{\bar z}^{*}& \DL_{z}
\\ \DL_{\bar z}^{*} & 0
\end{bmatrix}\right)
\begin{bmatrix}\Lambda^{\!\v}_{z}&0\\0&\widehat\Lambda_{z}^{\v,\theta}\end{bmatrix}
\begin{bmatrix}R_{z}\\\DL^{*}_{\bar z}\end{bmatrix}
\label{Rv-teta-1}\\ 
=&R_{z}^{\v}+\DL_{z}^{\v}{\widehat\Lambda}_{z}^{\v,\theta}{\DL_{\bar z}^{\v}}^{*}\,.
\label{Rv-teta-2}
\end{align}
is the resolvent of a self-adjoint operator $\Delta^{\!\v,\delta'\!,\theta}$; \eqref{Rv-teta-0} holds for any 
$z\in \varrho(\Delta^{\!\v,\delta'\!,\theta})\cap\CO\backslash(-\infty,0]$, both \eqref{Rv-teta-1} and \eqref{Rv-teta-2} hold for any 
$z\in \varrho(\Delta^{\!\v,\delta'\!,\theta})\cap\varrho(\Delta+\v)$. By \eqref{Rv-alpha-2} and by the mapping properties of $\DL^{\v}_{z}$, one has 
$$
\dom(\Delta^{\!\v,\delta'\!,\theta})\subseteq H^{1}(\RE^{3}\backslash\Gamma)\,.
$$ 
By $R^{\v}_{z}u\in H^{2}(\RE^{3})$, so that $[\gamma_{1}]R^{\v}_{z}u=0$, and by \eqref{jumpv1}, one gets $[\gamma_{0}]R^{\v,\theta}_{z}u={\widehat\Lambda}_{z}^{\v,\theta}{\DL_{\bar z}^{\v}}^{*}u=\widehat\rho_{\Bi}(R^{\v,\theta}_{z}u)$. Hence, by Theorem \ref{alt-abc}, 
$$
\Delta^{\!\v,\delta'\!,\theta}u=\Delta u+\v u+([\gamma_{0}]u)\delta'_{\Gamma}
$$
and
$$
u\in\dom(\Delta^{\!\v,\delta'\!,\theta})\quad\Longrightarrow\quad\gamma_{1}u=\theta[\gamma_{0}]u\,.
$$
Proceeding as in \cite[Subsection 5.5]{JMPA} (see the proof of Theorem 5.15 there), $\Delta^{\!\v,\delta'\!,\theta}$ is bounded from above and so hypothesis (H4.1) holds. The scattering couple $(\Delta^{\!\v,\delta'\!,\theta},\Delta)$ is asymptotically complete and the corresponding scattering matrix is given by 
$$
{\Sc}_{\lambda}^{\v,\theta}=1-2\pi i\,{\mathsf L}_{\lambda}{\mathsf\Lambda}^{\v,\theta,+}_{\lambda}{\mathsf L}_{\lambda}^{*}\,,\quad \lambda\in(-\infty,0]\backslash(\sigma^{-}_{p}(\Delta+\v)\cup \sigma^{-}_{p}(\Delta^{\!\v,\delta'\!,\theta}))\,,
$$
where ${\mathsf L}_{\lambda}$ is given in Theorem \ref{Llambda} and ${\mathsf\Lambda}^{\v,\theta,+}_{\lambda}:=\lim_{\epsilon\searrow 0}{\mathsf\Lambda}^{\v,\theta}_{\lambda+i\epsilon}$. This latter limit exists by Lemma \ref{rmH7}; in particular, by \eqref{LBpm2}, 
\begin{align*}
{\mathsf\Lambda}^{\v,\theta,+}=&\left(1+\begin{bmatrix}
(1-\v R^{+}_{\lambda})^{-1}\v&0\\
0&(\theta-D^{\v,+}_{\lambda})^{-1}\end{bmatrix}\begin{bmatrix}
\DL^{+}_{\lambda}
(\theta-D^{\v,+}_{\lambda})^{-1}\alpha(\DL^{-}_{\lambda})^{*}& \DL^{+}_{\lambda}
\\(\DL^{-}_{\lambda})^{*}& 0
\end{bmatrix}\right)\times\\
&\times\begin{bmatrix}
(1-\v R^{+}_{\lambda})^{-1}\v&0\\
0&(\theta-D^{\v,+}_{\lambda})^{-1}\alpha\end{bmatrix}\,,
\end{align*}
where
$$
R^{\pm}_{\lambda}:=\lim_{\epsilon\searrow 0}R_{\lambda\pm i\epsilon}\,,\qquad\DL^{\pm}_{\lambda}:=\lim_{\epsilon\searrow 0}\SL_{\lambda\pm i\epsilon}\,,\qquad 
D^{\v,\pm}_{\lambda}:=\lim_{\epsilon\searrow 0}\gamma_{0}\DL^{\v}_{\lambda\pm i\epsilon}\,.
$$

\end{document}